\newcommand{\blind}{0}
\setlist[enumerate]{itemsep=0mm}
\DeclareMathAlphabet\mathbfcal{OMS}{cmsy}{b}{n}
\newcommand{\mbf}{\mathbf}
\newcommand{\mc}{\mathcal}
\newcommand{\bmx}{\begin{bmatrix}}
\newcommand{\emx}{\end{bmatrix}}
\newcommand{\vep}{\varepsilon}
\renewcommand{\l}{\left}
\renewcommand{\r}{\right}
\newcommand{\lvertiii}[1]{{\left\vert\kern-0.25ex\left\vert\kern-0.25ex\left\vert #1 
    \right\vert\kern-0.25ex\right\vert\kern-0.25ex\right\vert}}
\newcommand{\vertiii}[1]{{\vert\kern-0.25ex\vert\kern-0.25ex\vert #1 
    \vert\kern-0.25ex\vert\kern-0.25ex\vert}}
\def\wh{\widehat}
\def\wt{\widetilde}
\newcommand{\E}[0]{\mathsf{E}}
\newcommand{\Var}[0]{\mathsf{Var}}
\newcommand{\p}{\mathsf{P}}
\newcommand{\R}{\mathbb{R}}
\newcommand{\Z}{\mathbb{Z}}
\newcommand{\C}{\mathbb{C}}
\newcommand{\N}{\mathbb{N}}
\newcommand{\iid}{\mbox{\scriptsize{iid}}}
\newcommand{\nn}{\nonumber}
\newcommand{\cp}{\theta}  
\newcommand{\Cp}{\Theta}  
\renewcommand{\k}{[k]} 
\renewcommand{\lll}{[l]}  
\newcommand{\bbG}{\mathbbm{G}}
\newcommand{\bbg}{\mathbbm{g}}
\theoremstyle{definition}
\newtheorem{thm}{Theorem}[section]
\theoremstyle{definition}
\newtheorem{cor}[thm]{Corollary}
\theoremstyle{definition}
\newtheorem{lem}[thm]{Lemma}
\theoremstyle{definition}
\newtheorem{prop}[thm]{Proposition}
\theoremstyle{definition}
\newtheorem{assum}{Assumption}[section]
\theoremstyle{remark}
\newtheorem{rem}{Remark}[section]
\theoremstyle{definition}
\theoremstyle{definition}
\newtheorem{ex}{Example}[section]
\newif\ifJASA
\newif\ifnotblind
\def\spacingset#1{\renewcommand{\baselinestretch}%
{#1}\small\normalsize} \spacingset{1}
  \title{\bf High-dimensional time series segmentation
via factor-adjusted vector autoregressive modelling}
  \author{  Haeran Cho\\
  School of Mathematics, University of Bristol.
  \\~\\ 
  Hyeyoung Maeng\\
  Department of Mathematical Sciences, Durham University.
  \\~\\
  Idris A.\ Eckley and Paul Fearnhead\\
  Department of Mathematics and Statistics, Lancaster University.
  }
\begin{document}

\maketitle





\begin{abstract}
Vector autoregressive (VAR) models are popularly adopted for modelling high-dimensional time series, and their piecewise extensions allow for structural changes in the data. In VAR modelling, the number of parameters grow quadratically with the dimensionality which necessitates the sparsity assumption in high dimensions. However, it is debatable whether such an assumption is adequate for handling datasets exhibiting strong serial and cross-sectional correlations. We propose a piecewise stationary time series model that simultaneously allows for strong correlations as well as structural changes, where pervasive serial and cross-sectional correlations are accounted for by a time-varying factor structure, and any remaining idiosyncratic dependence between the variables is handled by a piecewise stationary VAR model. We propose an accompanying two-stage data segmentation methodology which fully addresses the challenges arising from the latency of the component processes. Its consistency in estimating both the total number and the locations of the change points in the latent components, is established under conditions considerably more general than those in the existing literature. We demonstrate the competitive performance of the proposed methodology on simulated datasets and an application to US blue chip stocks data.
\end{abstract}

\noindent%
{\it Keywords: data segmentation, vector autoregression, high dimensionality, factor model}  

\ifJASA
\spacingset{1.45} 
\fi

\section{Introduction}

Vector autoregressive (VAR) models 
are popular for modelling cross-sectional and serial correlations in multivariate, possibly 
high-dimensional time series. With, for example, applications in 
finance \citep{barigozzi2017},
biology \citep{shojaie2010discovering} and genomics \citep{michailidis2013autoregressive}.
Within such settings, the importance of data segmentation is well-recognised, 
and several methods exist for detecting change points in VAR models in both fixed \citep{kirch2015eeg} and high dimensions \citep{safikhani2020, wang2019, bai2020, maeng2022}.

\begin{figure}[!htb]
\centering
\begin{tabular}{ccc}
{\scriptsize (a) 18/03/2008--07/07/2009} & {\scriptsize (b) 18/03/2008--07/07/2009} & {\scriptsize (c) 18/03/2008--07/07/2009} 
\\ [-.2pt]
\includegraphics[width = .25\textwidth]{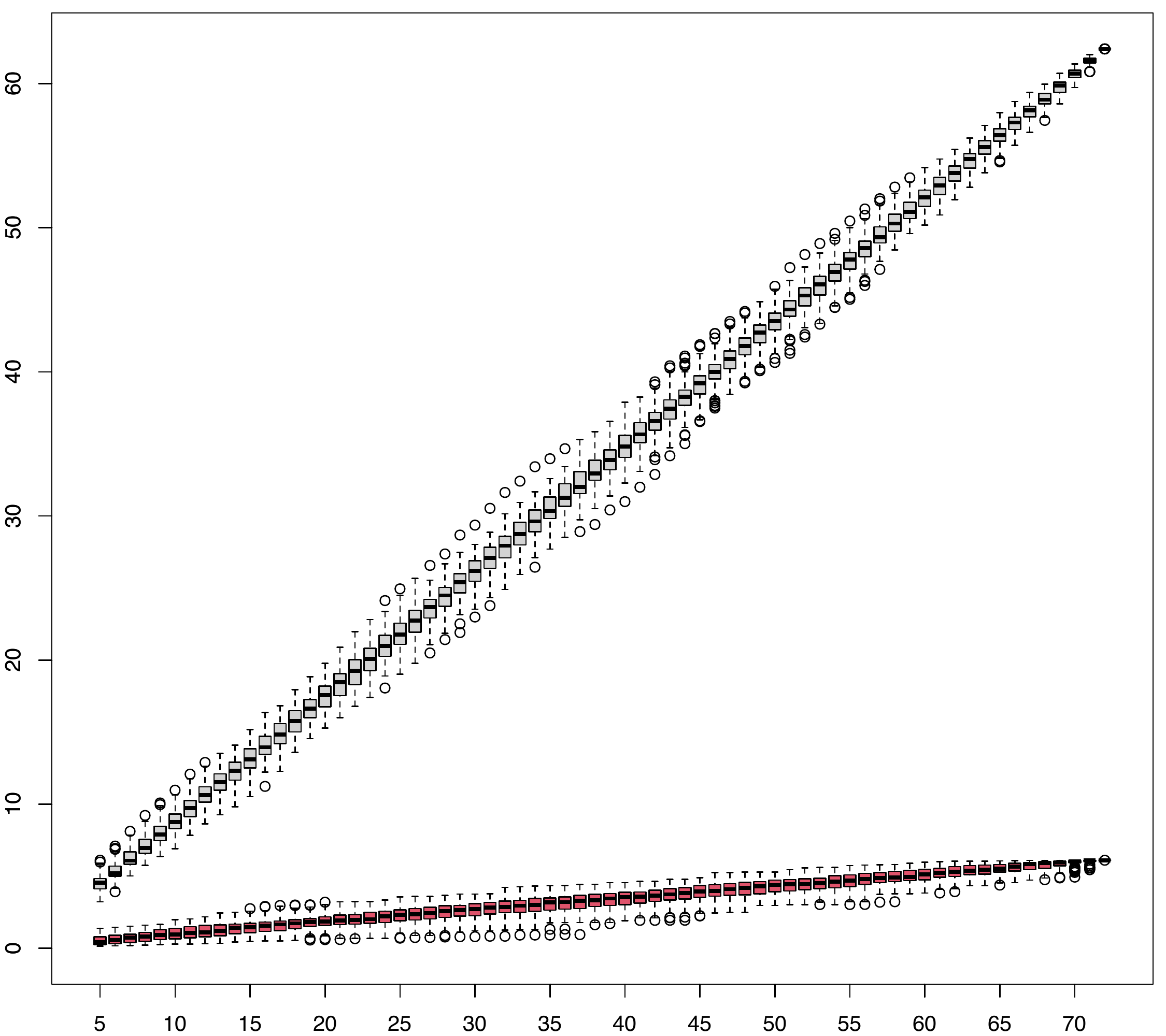} &
\includegraphics[width = .25\textwidth]{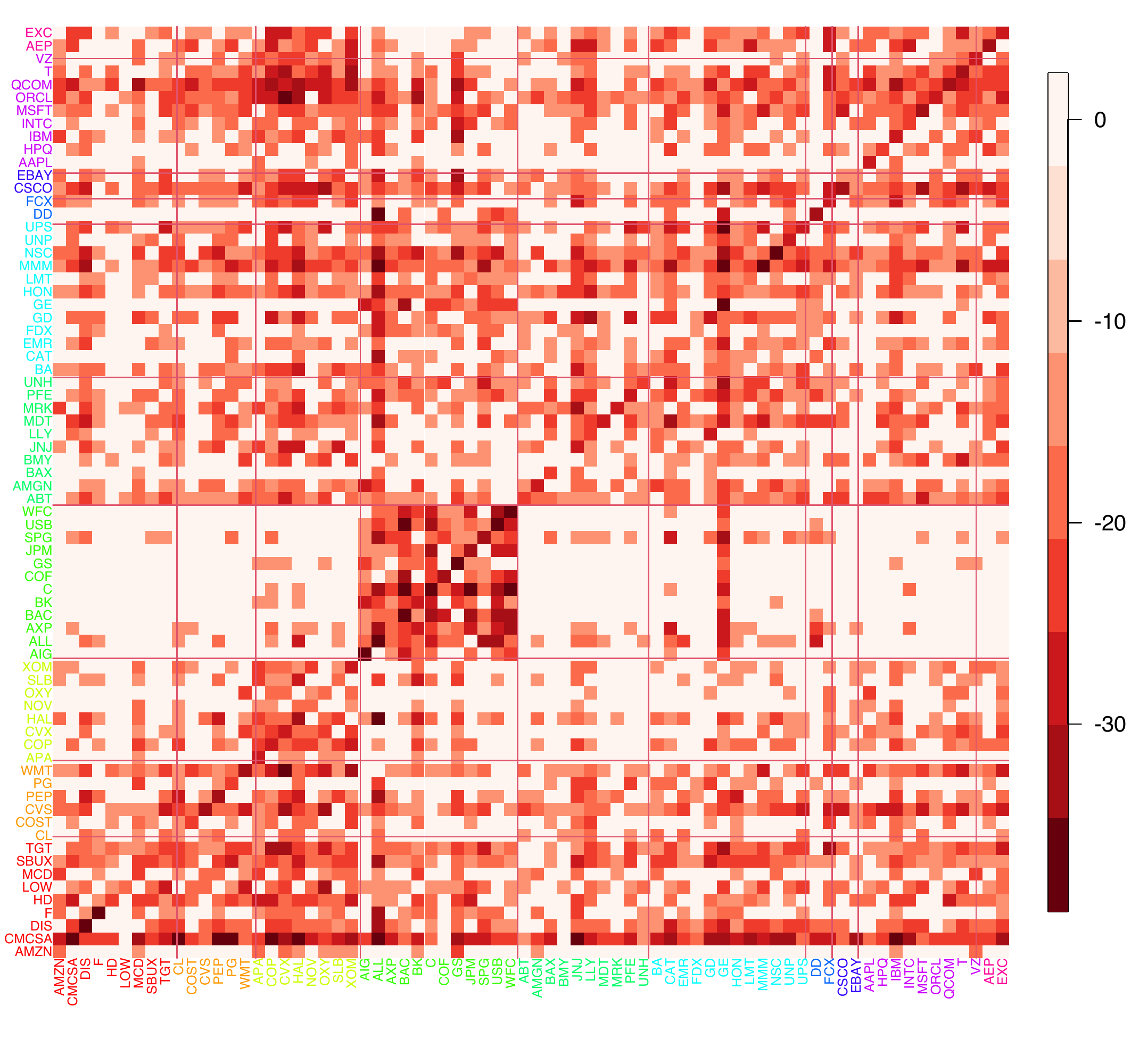} &
\includegraphics[width = .25\textwidth]{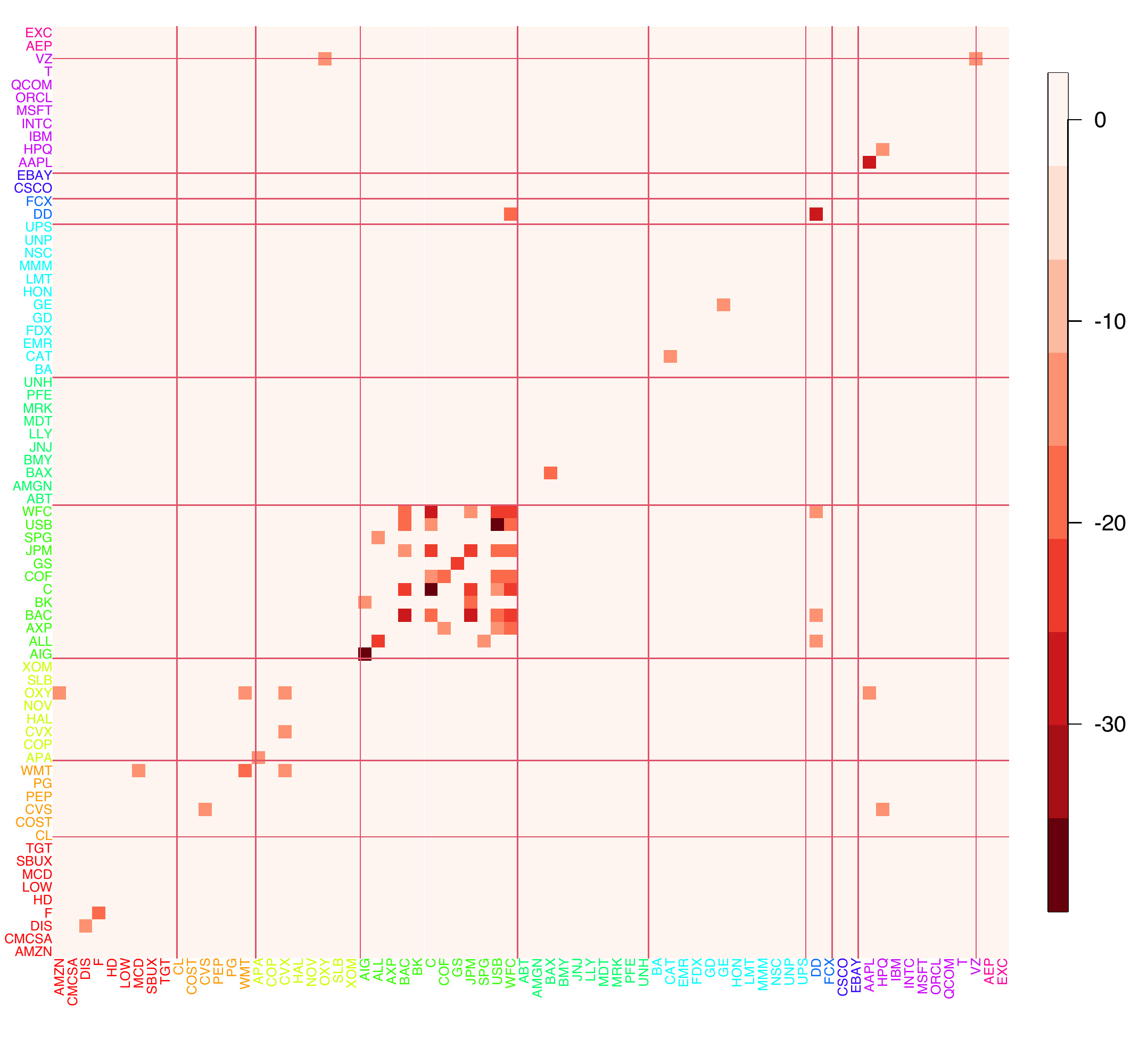} 
\\ [-.2pt]
{\scriptsize (d) 15/11/2003--07/06/2006} &
{\scriptsize (e) 08/06/2006--17/03/2008} &
{\scriptsize (f) 08/07/2009--28/07/2011} 
\\ [-.2pt]
\includegraphics[width = .25\textwidth]{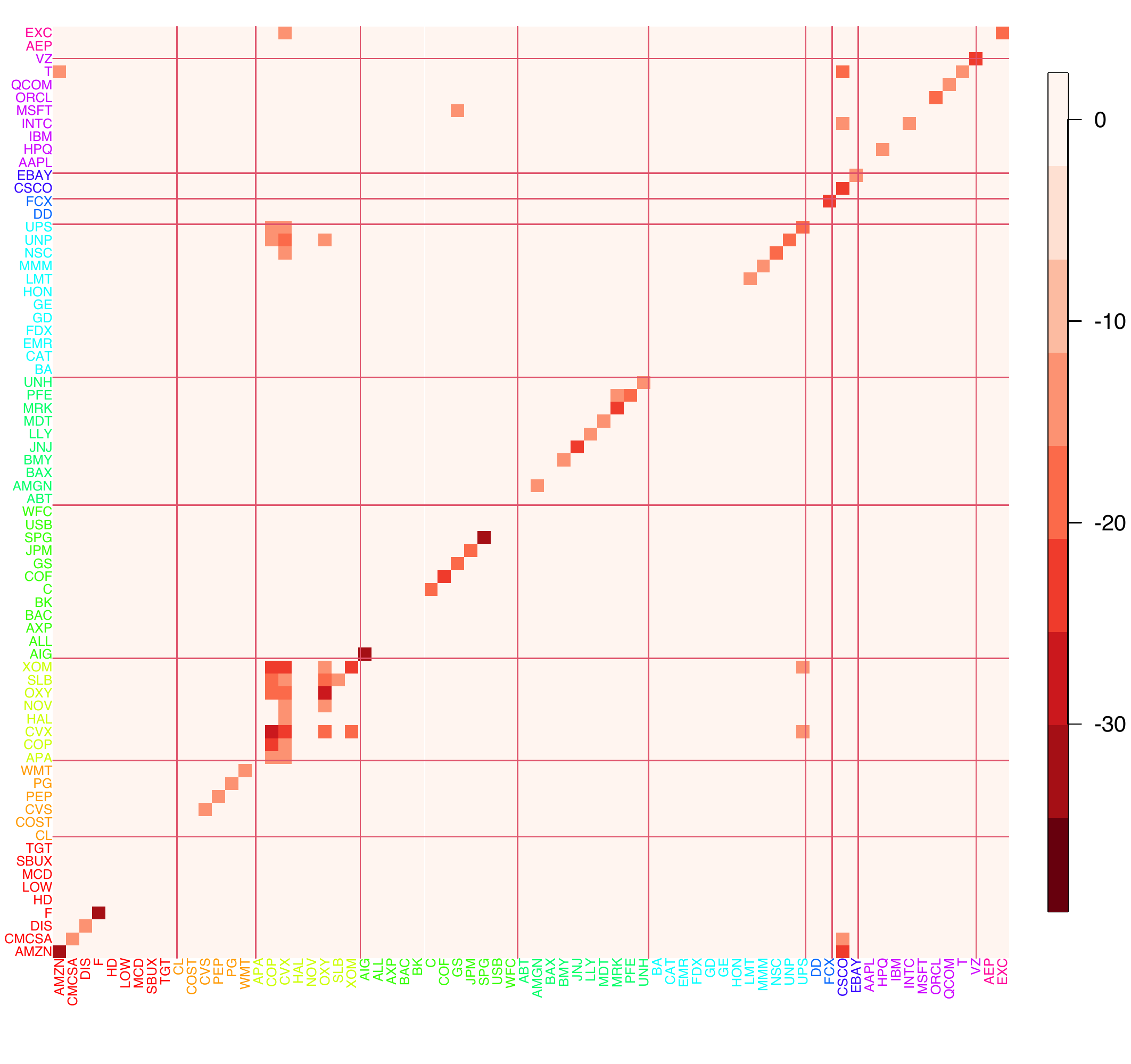} &
\includegraphics[width = .25\textwidth]{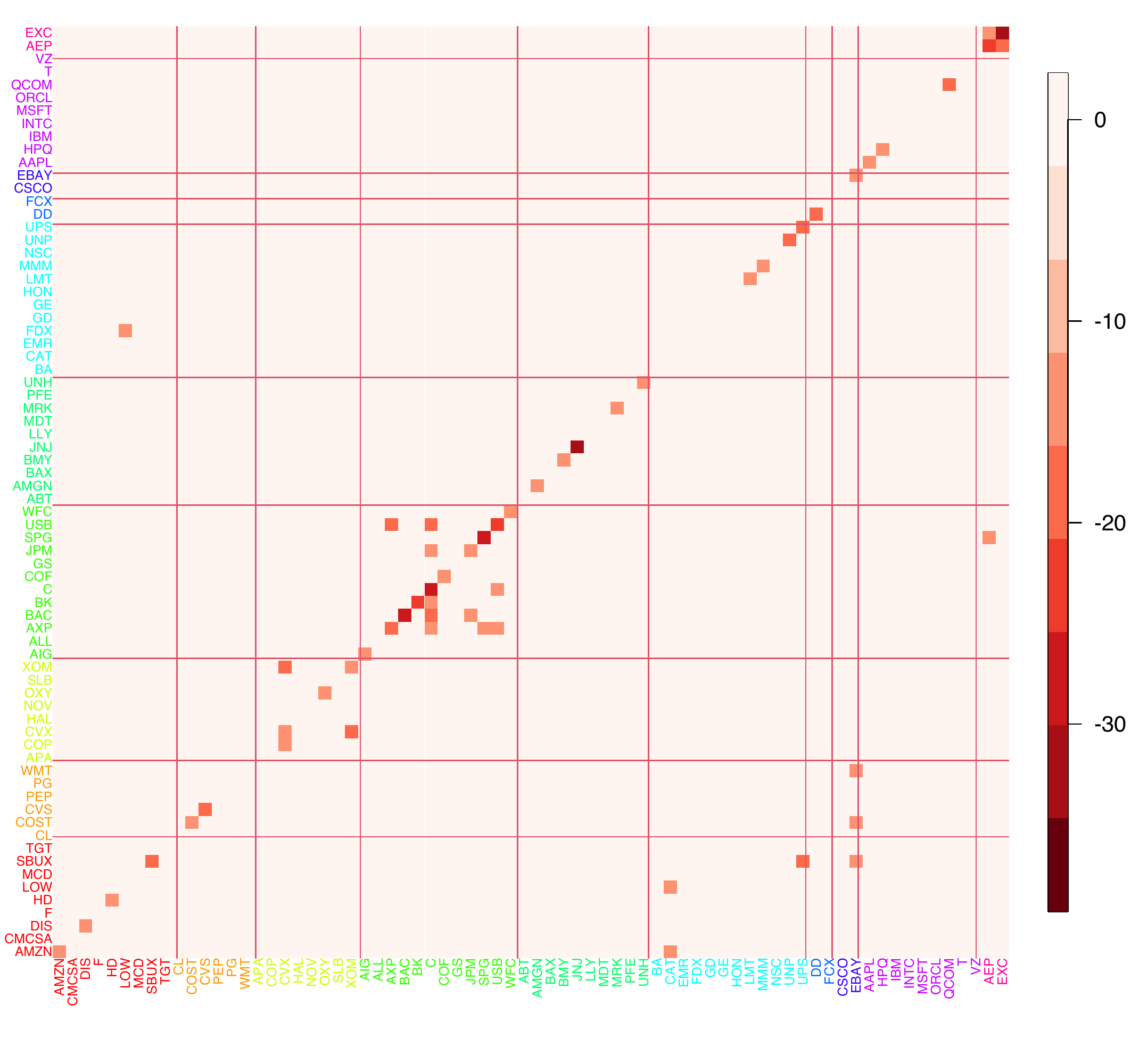} &
\includegraphics[width = .25\textwidth]{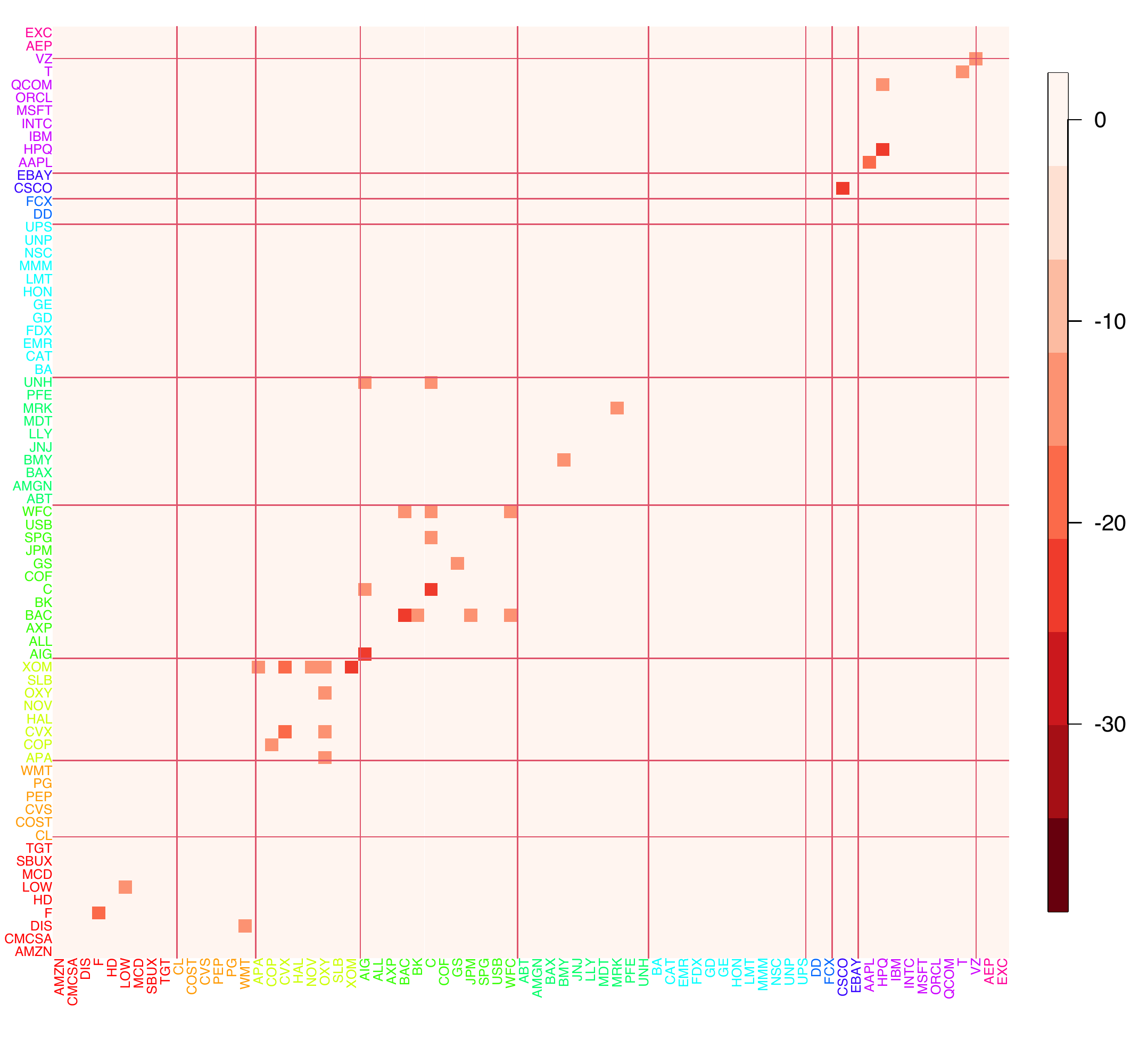} 
\end{tabular}

\caption{\small (a) The two largest eigenvalues of the long-run covariance matrix estimated from the volatility panel analysed in Section~\ref{sec:app}
(18/03/2008--07/07/2009,
$n = 223$)
with subsets of cross-sections randomly sampled $100$ times
for each given dimension $p \in \{5, \ldots, 72\}$ ($x$-axis).
(b) and (c): logged and truncated $p$-values 
from fitting a VAR($5$) model to the same dataset {\it without} and {\it with} factor-adjustment.
(d)--(f): logged and truncated $p$-values similarly obtained {\it with} factor-adjustment from the same variables over different periods.
In (b)--(f), for each pair of variables, the minimum $p$-value over the five lags is reported. Corresponding tickers are given in $x$- and $y$-axes and industrial sectors are indicated by the colours and boundaries drawn.}
\label{fig:ex}
\end{figure}

VAR modelling quickly becomes a high-dimensional problem as the number of parameters grows quadratically with the dimensionality. 
Accordingly, most existing methods for detecting change points in high-dimensional, piecewise stationary VAR processes assumes sparsity \citep{basu2015}.
However, it is debatable whether highly sparse models are appropriate for some applications. 
For example, \cite{giannone2021economic} note the difficulty of identifying sparse predictive representations 
for several macroeconomic applications.

We illustrate the inadequacy of the sparsity assumption on a volatility panel dataset (see Section~\ref{sec:app} for its description).
Figure~\ref{fig:ex}~(a) shows that as the dimensionality increases, the leading eigenvalue of the spectral density matrix at frequency~$0$ (i.e.\ the long-run covariance) estimated from the data also increases linearly. 
This indicates the presence of strong serial and cross-sectional correlations that cannot be accommodated by sparse VAR models. 
In Figure~\ref{fig:ex}~(b), we report the logged and truncated $p$-values obtained from fitting a VAR($5$) model to the same dataset (truncation level chosen at $\log(3.858 \times 10^{-6})$ by Bonferroni correction with the significance level $0.1$) via ridge regression, see \cite{cule2011significance}.
Strong dependence observed from most pairs of the variables further confirms that we cannot infer a sparse pairwise relationship from such data.
On the other hand, Figure~\ref{fig:ex}~(c) shows that once we estimate factors driving the strong correlations and adjust for their presence, there is evidence that the remaining dependence in the data can be modelled as being sparse.
Together, the plots (d), (e), (c) and~(f) display that the relationship between a pair of variables (after factor-adjustment) varies over time, particularly at the level of industrial sectors. Here, the intervals are chosen according to the data segmentation result reported in Section~\ref{sec:app}.
This example highlights the importance of (i)~accounting for the dominant correlations prior to fitting a model under the sparsity assumption, and (ii)~detecting structural changes when analysing time series datasets covering a long period.

Motivated by the aforementioned characteristics of high-dimensional time series data, factor-adjusted regression modelling has increasingly gained popularity \citep{fan2020factor, fan2021bridging, krampe2021}. 
The factor-adjusted VAR model proposed by \citet{barigozzi2022fnets} assumes that a handful of common factors capture strong serial and cross-sectional correlations, such that it is reasonable to assume a sparse VAR model on the remaining component to capture idiosyncratic, variable-specific dependence.
We extend this framework by proposing a new, piecewise stationary factor-adjusted VAR model and develop FVARseg, an accompanying change point detection methodology.
Below we summarise the methodological and theoretical contributions made in this paper.
\vspace{-12pt}

\paragraph{Generality of the modelling framework.}
We decompose the data into two piecewise stationary latent processes: one is driven by factors and accounts for dominant serial and cross-sectional correlations, and the other models sparse pairwise dependence via a VAR model.
We adopt the most general approach to factor modelling and allow both components to undergo changes which, in the case of the latter, are attributed to shifts in the VAR parameters.
To the best of our knowledge, such a general model simultaneously permitting the presence of common factors and change points, has not been studied in the literature previously. 
Accordingly, we are not aware of any method that can comprehensively address the data segmentation problem considered in this paper.
\vspace{-12pt}

\paragraph{Methodological novelty.} 
The idea of scanning the data for changes over moving windows, has successfully been applied to a variety of data segmentation problems \citep{preuss2015, eichinger2018, chen2019}.
We propose FVARseg, a two-stage methodology that combines this idea with statistics carefully designed to have good detection power against different types of changes in the two latent components.
In Stage~1 of FVARseg, motivated by that dominant factor-driven correlations appear as leading eigenvalues in the frequency domain, see e.g.\ Figure~\ref{fig:ex}~(a), we propose a detector statistic that contrasts the local spectral density matrix estimators from neighbouring moving windows in operator norm, which is well-suited to detect changes in the factor-driven component.

In Stage~2 for detecting change points in the latent piecewise stationary VAR process, we deliberately avoid estimating the latent process which may incur large errors.
Instead, we make use of (i)~the Yule-Walker equation that relates autocovariances (ACV) and VAR parameters, and (ii)~the availability of local ACV estimators of the latent VAR process after Stage~1. 
Combining these ingredients, we propose a novel detector statistic that enjoys methodological simplicity as well as statistical efficiency. 
Further, through sequential evaluation of the detector statistic, the second-stage procedure requires the estimation of local VAR parameters at selected locations only. 
Consequently it is highly competitive computationally when both the sample size and the dimensionality are large.
\vspace{-12pt}

\paragraph{Theoretical consistency.} 
FVARseg achieves consistency in estimating the total number and locations of the change points in both of the piecewise stationary factor-driven and VAR processes.
Our theoretical analysis is conducted in a setting considerably more general than those commonly adopted in the literature, permitting dependence across stationary segments and heavy-tailedness of the data.
We also derive the rate of localisation for each stage of FVARseg where we make explicit the influence of tail behaviour and the size of changes.
In particular, under Gaussianity, the estimators from Stage~1 nearly matches the minimax optimal rate derived for the simpler, covariance change point detection problem. 

The rest of the paper is structured as follows.
Section~\ref{sec:model} introduces the piecewise stationary factor-adjusted VAR model. 
Section~\ref{sec:method} describes the two stages of FVARseg, the proposed data segmentation methodology, and Section~\ref{sec:theory} establishes its theoretical consistency.
Section~\ref{sec:numeric} demonstrates the good performance of FVARseg empirically. 
\if0\blind{R code implementing our method is available from \url{https://github.com/haeran-cho/fvarseg}}.\fi
\vspace{-10pt}

\paragraph{Notation.}
Let $\mbf I$ and $\mbf O$ denote an identity matrix and a matrix of zeros whose dimensions depend on the context. 
For a random variable $X$ and $\nu \ge 1$, denote $\Vert X \Vert_\nu = (\E|X|^\nu)^{1/\nu}$.
Given $\mbf A = [a_{ii'}, \, 1 \le i \le m, \, 1 \le i' \le n]$, 
we denote by $\mbf A^*$ its transposed complex conjugate.
We define its element-wise $\ell_\infty$, $\ell_1$ and $\ell_2$-norms
by $\vert \mbf A \vert_\infty = \max_{i, i'} \vert a_{ii'} \vert$,
$\vert \mbf A \vert_1 = \sum_{i, i'} \vert a_{ii'} \vert$
and
$\vert \mbf A \vert_2 = \sqrt{\sum_{i, i'} \vert a_{ii'} \vert^2}$,
and its spectral and induced $L_1$, $L_\infty$-norms by
$\Vert \mbf A \Vert$,
$\Vert \mbf A \Vert_1 = \max_{1\le i' \le n} \sum_{i = 1}^m \vert a_{ii'}\vert$ and $\Vert \mbf A \Vert_\infty = \max_{1\le i \le n} \sum_{i' = 1}^m \vert a_{ii'}\vert$, respectively.
For positive definite $\mbf A$, we denote its minimum eigenvalue by $\Vert \mbf A \Vert_{\min}$.
For two real numbers, $a \vee b = \max(a, b)$ and $a \wedge b = \min(a, b)$. For two sequences $\{a_n\}$ and $\{b_n\}$, we write $a_n \asymp b_n$ if, for some constants $C_1, C_2 > 0$,
there exists $N \in \N$ such that $C_1 \le a_n b_n^{-1} \le C_2$ for all $n \ge N$.

\section{Piecewise stationary factor-adjusted VAR model}
\label{sec:model}

\subsection{Background}

A zero-mean, $p$-variate process $\bm\xi_t$ follows a VAR($d$) model if it satisfies
\begin{align}
\label{eq:var}
\bm\xi_t = \mbf A_1 \bm\xi_{t - 1} + \ldots + \mbf A_d \bm\xi_{t - d} + (\bm\Gamma)^{1/2} \bm\vep_t,
\end{align}
where 
$\mbf A_\ell \in \R^{p \times p}, \, 1 \le \ell \le d$, determine how future values of the series depend on their past. The $p$-variate random vector $\bm\vep_t = (\vep_{1t}, \ldots, \vep_{pt})^\top$ has $\vep_{it}$ which are independently and identically distributed (i.i.d.) for all $i$ and $t$ with $\E(\vep_{it}) = 0$ and $\Var(\vep_{it}) = 1$. 
The positive definite matrix $\bm\Gamma \in \R^{p \times p}$ is the covariance matrix of the innovations for the VAR process.

A factor-driven component exhibits strong cross-sectional and/or serial correlations by `loading' finite-dimensional factors linearly.
Among many, 
the generalised dynamic factor model (GDFM, \citeauthor{FHLR00}, \citeyear{FHLR00}, \citeyear{forni2015}) provides the most general approach (see Appendix~\ref{app:gdfm} for further discussions), and 
defines the $p$-variate factor-driven component $\bm\chi_t$~as
\begin{align}
\label{eq:gdfm}
\bm\chi_t = \mc B(L) \mbf u_t = \sum_{\ell=0}^\infty \mbf B_\ell \mbf u_{t-\ell}.
\end{align}
For fixed $q$, the $q$-variate random vector $\mbf u_t = (u_{1t}, \ldots, u_{qt})^\top$ contains the common factors which are shared across the variables and time, and $u_{jt}$ are assumed to be i.i.d.\ for all $j$ and $t$ with $\E(u_{jt}) = 0$ and $\Var(u_{jt}) = 1$. 
The matrix of square-summable filters $\mc B(L) = \sum_{\ell = 0}^\infty \mbf B_\ell L^\ell$ with the lag-operator $L$ and $\mbf B_\ell \in \R^{p \times q}$, serves the role of loadings under~\eqref{eq:gdfm}.

\cite{barigozzi2022fnets} propose a factor-adjusted VAR model, 
where the observations are assumed to be decomposed as a sum of the two latent components $\bm\xi_t$ and $\bm\chi_t$ in~\eqref{eq:var}--\eqref{eq:gdfm}, with pervasive correlations in the data are accounted for by $\bm\chi_t$ and the remaining dependence captured by $\bm\xi_t$.
In the next section, we introduce its piecewise stationary extension where both the factor-driven and VAR processes are allowed to undergo structural changes.

\subsection{Model}

We observe a zero-mean, $p$-variate piecewise stationary process $\mbf X_t = \bm\chi_t + \bm\xi_t$ where
\begin{align}
\label{eq:model}
& \l\{\begin{array}{ll}
\bm\chi_t = \bm\chi_t^{\k} = \mc B^{\k}(L) \mbf u_t 
& \text{for \ } \cp_{\chi, k} + 1 \le t \le \cp_{\chi, k + 1}, \, 0 \le k \le K_\chi, 
\\
\bm\xi_t = \bm\xi_t^{\k} = \sum_{\ell = 1}^d \mbf A^{\k}_\ell \bm\xi_{t - \ell} + (\bm\Gamma^{\k})^{1/2} \bm\vep_t & \text{for \ } \cp_{\xi, k} + 1 \le t \le \cp_{\xi, k + 1}, \, 0 \le k \le K_\xi.
\end{array}\r.
\end{align}
Here, $\cp_{\chi, k}, \, 1 \le k \le K_\chi$, denote the change points in the piecewise stationary factor-driven component $\bm\chi_t$ such that at each $\cp_{\chi, k}$, the filter of loadings $\mc B^{\k}(L)$ undergoes a change. 
We permit the factor number to vary over time as $q_k \le q$, with the factor $\mbf u^{\k}_t \in \R^{q_k}$ associated with $\bm\chi^{\k}_t$ being a sub-vector of $\mbf u_t \in \R^q$.
Similarly, $\cp_{\xi, k}, \, 1 \le k \le K_\xi$, denote the change points in the piecewise stationary VAR process $\bm\xi_t$ at which the VAR parameters undergo shifts; we permit the VAR innovation covariance matrix to vary as $\bm\Gamma^{\k}$ but our interest lies in detecting changes in VAR parameters, and the VAR order may vary over time as $d_k \le d$ with $\mbf A^{\k}_\ell = \mbf O$ for $\ell \ge d_k + 1$.
By convention, we denote $\cp_{\chi, 0} = \cp_{\xi, 0} = 0$ and $\cp_{\chi, K_\chi + 1} = \cp_{\xi, K_\xi + 1} = n$.
In line with the factor modelling literature, we assume that $\bm\chi_t$ and $\bm\xi_t$ are uncorrelated through having $\E(u_{jt} \vep_{it'})~=~0$ for any $i, j, t$ and $t'$.

The model~\eqref{eq:model} does not require that the change points in $\bm\chi_t$ and $\bm\xi_t$ are aligned, or that $K_\chi = K_\xi$.
Our goal is to estimate the total number and locations of the change points 
for both of the piecewise stationary latent processes.
Importantly, we allow $\{\bm\xi^{\k}_t, \, t \in \Z\}$ (resp. $\{\bm\chi^{\k}_t, \, t \in \Z\}$) to be dependent across $k$ through sharing the innovations $\bm\vep_t$ (resp. $\mbf u_t$).
This makes our model considerably more general than those found in the literature on (high-dimensional) data segmentation under VAR models \citep{wang2019, safikhani2020, bai2021} which assume independence across the segments.
Data segmentation under factor models has been considered by \cite{barigozzi2018} and \cite{li2019detection}
but they adopt a static approach to factor modelling.

\subsection{Assumptions}
\label{sec:assumptions}

We introduce assumptions that ensure the (asymptotic) identifiability of the two latent processes in~\eqref{eq:model} which are framed in terms of spectral properties, as well as controlling the degree of dependence in the data.
Denote by $\bm\Gamma^{\k}_\chi(\ell) = \E(\bm\chi^{\k}_{t - \ell} (\bm\chi^{\k}_t)^\top)$ the ACV matrix of $\bm\chi^{\k}_t$ at lag $\ell \in \Z$, and its spectral density matrix at frequency $\omega \in [-\pi, \pi]$ by $\bm\Sigma^{\k}_\chi(\omega) = (2\pi)^{-1} \sum_{\ell = -\infty}^\infty \bm\Gamma^{\k}_\chi(\ell) e^{-\iota \ell \omega}$ with $\iota = \sqrt{-1}$.
Then, $\mu^{\k}_{\chi, j}(\omega), \, 1 \le j \le q_k$, denote
the real, positive eigenvalues of $\bm\Sigma^{\k}_\chi(\omega)$ ordered by decreasing size.
We similarly define $\bm\Gamma^{\k}_\xi(\ell)$, $\bm\Sigma^{\k}_\xi(\omega)$ and $\mu^{\k}_{\xi, j}(\omega)$ for $\bm\xi^{\k}_t$.

\begin{assum} 
\label{assum:factor}
For each $0 \le k \le K_\chi$, the following holds:
There exist a positive integer $p_0 \ge 1$, pairs of functions
$\omega \mapsto \alpha^{\k}_j(\omega)$
and $\omega \mapsto \beta^{\k}_j(\omega)$
for $\omega \in [-\pi, \pi]$ and $1 \le j \le q_k$,
and $r_{k, j} \in (0, 1]$ satisfying $r_{k, 1} \ge \ldots \ge r_{k, q_k}$ such that
for all $p \ge p_0$,
\begin{align*}
& \beta^{\k}_1(\omega) \ge \frac{\mu^{\k}_{\chi, 1}(\omega)}{p^{r_{k, 1}}} \ge \alpha^{\k}_1(\omega) >
\ldots 
> \beta^{\k}_{q_k}(\omega) \ge \frac{\mu^{\k}_{q_k}(\omega)}{p^{r_{k, q_k}}} \ge \alpha^{\k}_{q_k}(\omega) > 0.
\end{align*}
\end{assum}
If $r_{k, j} = 1$ for all $1 \le j \le q_k$ as frequently assumed in the literature \citep{fan2013large, forni2015}, we are in the presence of $q_k$ factors 
which are equally pervasive for the whole cross-sections of $\bm\chi^{\k}_t$.
If $r_{k, j} < 1$ for some $j$, we permit the presence of `weak' factors.
Since our primary interest lies in change point analysis, 
we later introduce a related but distinct condition 
on the size of change in $\bm\chi_t$ in Assumption~\ref{assum:common:size}.
\begin{assum} 
\label{assum:idio}
\begin{enumerate}[noitemsep, wide, labelwidth=0pt, labelindent=0pt, label = (\roman*)]
\item \label{cond:idio:transition} 
$\det(\sum_{\ell = 1}^d \mbf A^{\k}_\ell z^\ell) \ne 0$ for all $\vert z \vert \le 1$ and $0 \le k \le K_\xi$.

\item \label{cond:idio:innov} 
$m_\vep \le \min_{0 \le k \le K_\xi} \Vert \bm\Gamma^{\k} \Vert_{\min} 
\le \max_{0 \le k \le K_\xi} \Vert \bm\Gamma^{\k} \Vert \le M_\vep$
for some constants $0 < m_\vep \le M_\vep$.

\item \label{cond:idio:coef} Consider the Wold decomposition 
$\bm\xi^{\k}_t = \sum_{\ell = 0}^\infty \mbf D^{\k}_\ell (\bm\Gamma^{\k})^{1/2} \bm\vep_{t - \ell}$
where $\mbf D^{\k}_\ell = [ D^{\k}_{\ell, ij}, \, 1 \le i, j \le p]$.
Then, there exist constants $\Xi > 0$ and $\varsigma > 2$ such that 
we have $C_{ij}, \, 1 \le i, j \le p$, satisfying
$\max\{ \max_{1 \le j \le p} \sum_{i = 1}^p C_{ij}, \max_{1 \le i \le p} \sum_{j = 1}^p C_{ij}, \max_{1 \le i \le p}\sqrt{\sum_{j = 1}^p C_{ij}^2} \} \le \Xi$
with which $\max_{0 \le k \le K_\xi}
\vert D^{\k}_{\ell, ij} \vert \le C_{ij} (1 + \ell)^{-\varsigma}$ for all $\ell \ge 0$.

\item \label{cond:idio:minspec}
$\min_{0 \le k \le K_\xi} \inf_{\omega \in [-\pi, \pi]} \mu^{\k}_{\xi, p}(\omega) \ge m_\xi$
for some fixed constant $m_\xi > 0$.
\end{enumerate}
\end{assum}

\begin{assum} 
\label{assum:common}
There exist constants $\Xi > 0$ and $\varsigma > 2$ such that for all $\ell \ge 0$,
\begin{center}
$\displaystyle{
\max_{0 \le k \le K_\chi} \max_{1 \le i \le p} \vert \mbf B^{\k}_{\ell, i \cdot} \vert_2 \le \Xi (1 + \ell)^{-\varsigma}
\text{ \ and \ }
\max_{0 \le k \le K_\chi}  \sqrt{\sum_{j = 1}^{q_k} \vert \mbf B^{\k}_{\ell, \cdot j} \vert_\infty^2} \le \Xi (1 + \ell)^{-\varsigma}.}$
\end{center}
\end{assum}

Assumption~\ref{assum:idio}~\ref{cond:idio:transition}--\ref{cond:idio:innov}
are standard conditions in the literature \citep{lutkepohl2005, basu2015}.
Under condition~\ref{cond:idio:coef} and Assumption~\ref{assum:common}, we have time-varying serial dependence in $\mbf X_t$ (across all segments) decay at an algebraic rate according to the functional dependence measure of \cite{zhang2021}, which is required for controlling the error in locally estimating spectral density and ACV matrices of $\mbf X_t$.
Assumption~\ref{assum:idio}~\ref{cond:idio:coef} allows for mild cross-correlations in $\bm\xi^{\k}_t$ while ensuring that $\mu^{\k}_{\xi, 1}(\omega)$ is uniformly bounded:

\begin{prop}
\label{prop:idio:eval}
Under Assumption~\ref{assum:idio}, uniformly over all $\omega \in [-\pi, \pi]$,
there exists some $M_\xi > 0$ depending only on $M_\vep$, $\Xi$ and $\varsigma$ 
such that $\max_{0 \le k \le K_\xi} \sup_{\omega \in [-\pi, \pi]} \mu^{\k}_{\xi, 1}(\omega) \le M_\xi$.
\end{prop}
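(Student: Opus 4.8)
The plan is to use the Wold (moving-average) representation of $\bm\xi^{\k}_t$ supplied by Assumption~\ref{assum:idio}~\ref{cond:idio:coef} to write the spectral density matrix in closed form, and then bound its operator norm by a quantity that is free of the dimension $p$. Since $\bm\Sigma^{\k}_\xi(\omega)$ is Hermitian and positive semidefinite for every $\omega$, its largest eigenvalue equals its spectral norm, so that $\mu^{\k}_{\xi, 1}(\omega) = \Vert \bm\Sigma^{\k}_\xi(\omega) \Vert$; it therefore suffices to bound this spectral norm uniformly in $\omega \in [-\pi, \pi]$ and $0 \le k \le K_\xi$.

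First I would record the frequency-domain identity implied by the representation $\bm\xi^{\k}_t = \sum_{\ell = 0}^\infty \mbf D^{\k}_\ell (\bm\Gamma^{\k})^{1/2} \bm\vep_{t - \ell}$. Writing $\mc D^{\k}(z) = \sum_{\ell = 0}^\infty \mbf D^{\k}_\ell z^\ell$, a direct computation of the ACV matrices $\bm\Gamma^{\k}_\xi(\ell)$ followed by summation against $e^{-\iota \ell \omega}$ gives $\bm\Sigma^{\k}_\xi(\omega) = (2\pi)^{-1} \mc D^{\k}(e^{\iota \omega}) \, \bm\Gamma^{\k} \, \mc D^{\k}(e^{\iota \omega})^*$. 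Submultiplicativity of the spectral norm then yields $\Vert \bm\Sigma^{\k}_\xi(\omega) \Vert \le (2\pi)^{-1} \Vert \mc D^{\k}(e^{\iota \omega}) \Vert^2 \, \Vert \bm\Gamma^{\k} \Vert$, and the innovation covariance term is controlled by $\Vert \bm\Gamma^{\k} \Vert \le M_\vep$ via Assumption~\ref{assum:idio}~\ref{cond:idio:innov}. What remains is a uniform bound on the transfer-function norm $\Vert \mc D^{\k}(e^{\iota \omega}) \Vert$.

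The key step is to bound $\Vert \mc D^{\k}(e^{\iota \omega}) \Vert$ by a constant independent of $p$, $\omega$ and $k$. By the triangle inequality and $\vert e^{\iota \ell \omega} \vert = 1$, it is enough to control $\sum_{\ell = 0}^\infty \Vert \mbf D^{\k}_\ell \Vert$. Here I would invoke the interpolation (Schur) inequality $\Vert \mbf A \Vert \le \sqrt{\Vert \mbf A \Vert_1 \, \Vert \mbf A \Vert_\infty}$, which converts element-wise information into a spectral-norm bound. The element-wise decay $\vert D^{\k}_{\ell, ij} \vert \le C_{ij} (1 + \ell)^{-\varsigma}$ from Assumption~\ref{assum:idio}~\ref{cond:idio:coef} gives $\Vert \mbf D^{\k}_\ell \Vert_1 \le (1 + \ell)^{-\varsigma} \max_j \sum_i C_{ij} \le \Xi (1 + \ell)^{-\varsigma}$ and, symmetrically, $\Vert \mbf D^{\k}_\ell \Vert_\infty \le \Xi (1 + \ell)^{-\varsigma}$, using precisely the row- and column-sum constraints imposed on $C_{ij}$; hence $\Vert \mbf D^{\k}_\ell \Vert \le \Xi (1 + \ell)^{-\varsigma}$ uniformly in $k$. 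Summing over $\ell$ and using $\varsigma > 2 > 1$ gives $\sum_{\ell = 0}^\infty \Vert \mbf D^{\k}_\ell \Vert \le \Xi \sum_{\ell \ge 0} (1 + \ell)^{-\varsigma} < \infty$, a finite constant depending only on $\Xi$ and $\varsigma$.

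Combining the displays produces $M_\xi := (2\pi)^{-1} M_\vep \, (\Xi \sum_{\ell \ge 0}(1 + \ell)^{-\varsigma})^2$, which depends only on $M_\vep$, $\Xi$ and $\varsigma$ and bounds $\mu^{\k}_{\xi, 1}(\omega)$ uniformly over $\omega$ and $k$, as required. The only genuine obstacle is the dimension-freeness of the bound: a naive estimate $\Vert \mbf D^{\k}_\ell \Vert \le \vert \mbf D^{\k}_\ell \vert_2$ would introduce a factor growing with $p$, so the whole argument hinges on the fact that the constants $C_{ij}$ are assumed to have uniformly bounded row and column sums, which is exactly what the Schur inequality needs. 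The third constraint on $C_{ij}$ (bounded $\ell_2$ row norms) plays no role here.
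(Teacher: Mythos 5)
Your proof is correct and follows essentially the same route as the paper: both pass through the factorisation $\bm\Sigma^{\k}_\xi(\omega) = (2\pi)^{-1}\mc D^{\k}(e^{-\iota\omega})\bm\Gamma^{\k}(\mc D^{\k}(e^{-\iota\omega}))^*$, bound $\Vert\bm\Gamma^{\k}\Vert$ by $M_\vep$, and control the transfer function via the Schur interpolation bound $\Vert \cdot \Vert \le \sqrt{\Vert \cdot \Vert_1 \Vert \cdot \Vert_\infty}$ together with the row/column-sum constraints on $C_{ij}$ and the summability afforded by $\varsigma > 2$. The only cosmetic difference is that you apply the Schur inequality termwise to each $\mbf D^{\k}_\ell$ and then sum, whereas the paper sums inside the induced $\ell_1$/$\ell_\infty$ norms and applies it once to $\mc D^{\k}(e^{-\iota\omega})$; both orderings yield the identical constant $M_\xi = (2\pi)^{-1}\Xi^2 M_\vep \bigl(\sum_{\ell \ge 0}(1+\ell)^{-\varsigma}\bigr)^2$.
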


\begin{rem}
\label{rem:idio:wold}
Proposition~\ref{prop:idio:eval}, together with Assumption~\ref{assum:idio}~\ref{cond:idio:minspec}, establishes the boundedness of the eigenvalues of $\bm\Sigma^{\k}_\xi(\omega)$,
which is commonly assumed in the high-dimensional VAR literature for the consistency of Lasso estimators. 
Assumption~\ref{assum:idio}~\ref{cond:idio:minspec} holds if there exists some constant $\Xi < \infty$ satisfying $\max( \max_{1 \le j \le p} \sum_{\ell = 1}^d \vert \mbf A^{\k}_{\ell, \cdot j} \vert_1, \max_{1 \le i \le p} \sum_{\ell = 1}^d \vert \mbf A^{\k}_{\ell, i \cdot} \vert_1 ) \le \Xi$ \citep{basu2015}.
When $d = 1$, we have $\mbf D^{\k}_\ell = (\mbf A^{\k}_1)^\ell$ such that if $\vert \mbf A^{\k}_1 \vert_\infty \le \gamma < 1$, Assumption~\ref{assum:idio}~\ref{cond:idio:coef} is readily satisfied with $\max(\Vert \mbf D^{\k}_\ell \Vert_1, \Vert \mbf D^{\k}_\ell \Vert_\infty) \le \Xi \gamma^{\ell - 1}$.
\end{rem}

From Assumption~\ref{assum:factor} and Proposition~\ref{prop:idio:eval},
the latent components in~\eqref{eq:model} are asymptotically identifiable as $p \to \infty$, thanks to the gap between $\mu^{\k}_{\chi, q_k}(\omega)$ diverging with $p$ and $\mu^{\k}_{\xi, 1}(\omega)$ which is uniformly bounded, which agrees with the phenomenon observed in Figure~\ref{fig:ex}~(a).




\section{Methodology}
\label{sec:method}


\subsection{Stage~1: Factor-driven component segmentation}
\label{sec:common:method}

\subsubsection{Change point detection}
\label{sec:common:method:seg}

The spectral density matrix of $\bm\chi_t$ is given by 
$\bm\Sigma^{\k}_\chi(\omega) = (2\pi)^{-1} \mc B^{\k}(e^{-\iota \omega}) (\mc B^{\k}(e^{-\iota \omega}))^*$ for $\cp_{\chi, k} + 1 \le t \le \chi_{\chi, k + 1}$, i.e.\ it varies over time in a piecewise constant manner with change points at $\cp_{\chi, k}, \, 1 \le k \le K_\chi$.
By Weyl's inequality, Assumption~\ref{assum:factor} and Proposition~\ref{prop:idio:eval} jointly indicate a gap in the eigenvalues of (time-varying) spectral density matrix of $\mbf X_t$, i.e.\ those attributed to the factor-driven component diverges with $p$ while the remaining ones are bounded for all $p$.
This suggests an approach that looks for changes in $\bm\chi_t$ from the behaviour of $\mbf X_t$ in the frequency domain which we further justify below.

\begin{ex}
\label{ex:common}
Suppose that $\bm\chi_t$ contains a single change point at $t = \cp_{\chi, 1}$ at which a new factor is introduced, i.e.\
$\bm\chi^{[0]}_t = \mc B^{[0]}(L) \mbf u_t^{[0]}$ and
$\bm\chi^{[1]}_t = \mc B^{[1]}(L) \mbf u_t^{[1]} = \mc B^{[0]}(L) \mbf u_t^{[0]} + \mbf b(L) v_t$
with $\mbf u^{[1]}_t = ((\mbf u_t^{[0]})^\top, v_t)^\top$,
which leads to $\bm\Sigma^{[1]}_\chi(\omega) - \bm\Sigma^{[0]}_\chi(\omega) = \mbf b(e^{-\iota \omega}) \mbf b^*(e^{-\iota \omega})/(2\pi)$.
Then, from the uncorrelatedness between $\bm\chi_t$ and $\bm\xi_t$ and Proposition~\ref{prop:idio:eval}, the time-varying spectral density of $\mbf X_t$, $\bm\Sigma_{x, t}(\omega)$, satisfies
$\Vert \sum_{t = 1}^{\cp_{\chi, 1}} \bm\Sigma_{x, t}(\omega) / \cp_{\chi, 1}
- \sum_{t = \cp_{\chi, 1} + 1}^n \bm\Sigma_{x, t}(\omega)/(n - \cp_{\chi, 1}) \Vert
= \Vert \mbf b(e^{-\iota \omega}) \mbf b^*(e^{-\iota \omega})\Vert/(2\pi) + O(1)$.
That is, the change in the spectral density of $\bm\chi_t$ is detectable
as a change in time-varying spectral density matrix of $\mbf X_t$ in operator norm, with the size of change diverging with $p$
as $\Vert (\mbf b(e^{-\iota \omega}))^* \mbf b(e^{-\iota \omega}) \Vert$ does so under Assumption~\ref{assum:factor}. 
\end{ex}

Thus, we detect changes in $\bm\chi_t$ by scanning for any large change in the spectral density matrix of~$\mbf X_t$ measured in operator norm, and propose the following moving window-based approach.
Given a bandwidth $G$, we estimate the local spectral density matrix of $\mbf X_t$ by
\begin{align}
\label{eq:local:spec}
\wh{\bm\Sigma}_{x, v}(\omega, G) &= \frac{1}{2\pi}
\sum_{\ell = -m}^m K\l(\frac{\ell}{m}\r) \wh{\bm\Gamma}_{x, v}(\ell, G) \exp(-\iota \ell \omega)
\quad \text{for} \quad G \le v \le n,
\end{align}
where $K(\cdot)$ denotes the Bartlett kernel, 
$m = G^\beta$ the kernel bandwidth with $\beta \in (0, 1)$,~and 
\begin{align}
\wh{\bm\Gamma}_{x, v}(\ell, G) = 
\frac{1}{G} \sum_{t = v - G + 1 + \ell}^v \mbf X_{t - \ell} \mbf X_t^\top
\text{ for } \ell \ge 0, \text{ \ and \ }
\wh{\bm\Gamma}_{x, v}(\ell, G) = \wh{\bm\Gamma}_{x, v}^\top(-\ell, G)
\text{ for } \ell < 0.
\label{eq:local:acv}
\end{align}
Then the following statistic
\begin{align}
\label{eq:common:test:stat}
T_{\chi, v}(\omega, G) = \l\Vert \wh{\bm\Sigma}_{x, v}(\omega, G) - \wh{\bm\Sigma}_{x, v + G}(\omega, G) \r\Vert, \quad G  \le v \le n - G,
\end{align}
serves as a good proxy of the difference in local spectral density matrices of $\bm\chi_t$ over $I_v(G) = \{v - G + 1, \ldots, v\}$ and $I_{v + G}(G) = \{v + 1, \ldots, v + G\}$.
To make it more precise, let $\bm\Sigma_{\chi, v}(\omega, G)$ denote a weighted average $\sum_{k = 0}^{K_\chi} w_{\chi, k}(v) \bm\Sigma^{\k}_\chi(\omega)$ with weights $w_{\chi, k}(v)$ corresponding to the proportion of $\bm\chi_t, \, t \in I_v(G)$, belonging to $\bm\chi^{\k}_t$ (see~\eqref{eq:common:tv:spec}). 
Then, $T^*_{\chi, v}(\omega, G) = \Vert \bm\Sigma_{\chi, v}(\omega, G) - 
\bm\Sigma_{\chi, v + G}(\omega, G) \Vert$, as a function of $v$, 
linearly increases and then decreases around the change points 
with a peak of size $\Vert \bm\Sigma^{\k}_\chi(\omega) - \bm\Sigma^{[k + 1]}_\chi(\omega) \Vert$ 
formed at $v = \cp_{\chi, k}$ for all $1 \le k \le K_\chi$,
provided that the bandwidth $G$ is not too large 
(in the sense of Assumption~\ref{assum:common:size}~\ref{cond:common:spacing} below).
The detector statistic $T_{\chi, v}(\omega, G)$ is designed to approximate $T^*_{\chi, v}(\omega, G)$
when $\bm\chi_t$ is not directly observed,
and thus is well-suited to detect and locate the change points therein.
Unlike other methods for detecting changes in the factor structure \cite[e.g.][]{li2019detection}, we do not require the number of factors, either for each segment or for the whole dataset, as an input for the construction of $T_{\chi, v}(\omega, G)$.

Once $T_{\chi, v}(\omega_l, G)$ is evaluated at the Fourier frequencies $\omega_l = 2\pi l/(2m + 1), \, 0 \le l \le m$, we adapt the maximum-check of \cite{eichinger2018} for simultaneous detection of the multiple change points.
Taking the pointwise maximum over the frequencies at each given location $v$, we check if $T_{\chi, v}(\omega(v), G)$ exceeds some threshold $\kappa_{n, p}$ where $\omega(v)$ denotes the frequency at which $T_v(\omega_l, G)$ is maximised, i.e.\ $\omega(v) = \arg\max_{\omega_l: \, 0 \le l \le m} T_v(\omega_l, G)$.
If so, it provides evidence that a change point $\cp_{\chi, k}$ is located near the time point $v$, but some care is needed to avoid detecting duplicate estimators, since the detector statistic is expected to take a large value over an interval containing $\cp_{\chi, k}$.
Therefore, denoting by $\mc I \subset \{G, \ldots, n - G\}$ the set containing all time points at which $T_{\chi, v}(\omega(v), G) > \kappa_{n, p}$, we regard $\wh\cp = \arg\max_{v \in \mc I} T_{\chi, v}(\omega(v), G)$ as a change point estimator if it is a local maximiser of $T_{\chi, v}(\omega(\wh\cp), G)$ within an interval of radius $\eta G$ centred at $\wh\cp$ with some $\eta \in (0, 1)$, i.e.\ $T_{\chi, \wh\cp}(\omega(\wh\cp), G) \ge \max_{\wh\cp - \eta G < v \le \wh\cp + \eta G} T_{\chi, v}(\omega(\wh\cp), G)$.
Once $\wh\cp$ is added to the set of final estimators, say $\wh{\Cp}_\chi$, in order to avoid the risk of duplicate estimators, we remove the interval of radius $G$ centred at $\wh\cp$ from $\mc I$, and repeat the same procedure with the maximiser of $T_{\chi, v}(\omega(v), G)$ at time points $v$ remaining in $\mc I$ until the set $\mc I$ is empty.
Algorithm~\ref{alg:one} in Appendix~\ref{sec:alg} outlines the steps of Stage~1 of FVARseg.



\subsubsection{Post-segmentation factor adjustment}
\label{sec:common:post}

Following the detection of change points in $\bm\chi_t$, we are able to estimate the segment-specific quantities related to $\bm\chi^{\k}_t$.
In view of the second-stage of FVARseg detecting change points in $\bm\xi_t$,
we describe how to estimate $\bm\Gamma^{\k}_\chi(\ell)$ with which we can estimate the ACV of $\bm\xi_t$.

For each $k = 0, \ldots, \wh{K}_\chi$, we first estimate the spectral density of $\mbf X_t$ over the segment $\{\wh\cp_{\chi, k} + 1, \ldots, \wh\cp_{\chi, k + 1}\}$ by $\wh{\bm\Sigma}_x^{\k}(\omega)$
as in~\eqref{eq:local:spec} using the sample ACV computed from the segment
(we use the same kernel bandwidth $m$ for simplicity).
Then noting that the spectral density matrix of $\bm\chi^{\k}_t$ is of rank $q_k$ under~\eqref{eq:model}, we estimate it from the eigendecomposition of $\wh{\bm\Sigma}^{\k}_x(\omega_l)$ by retaining only the $q_k$ largest eigenvalues, say $\wh\mu^{\k}_{x, j}(\omega_l)$, and the associated eigenvectors $\wh{\mbf e}^{\k}_{x, j}(\omega_l)$, and then estimate the ACV of $\bm\chi^{\k}_t$ by inverse Fourier transform, i.e.\
\begin{align}
\wh{\bm\Sigma}_\chi^{\k}(\omega_l) = \sum_{j = 1}^{q_k} \wh\mu^{\k}_{x, j}(\omega_l) \wh{\mbf e}^{\k}_{x, j}(\omega_l) \l(\wh{\mbf e}^{\k}_{x, j}(\omega_l)\r)^*
\text{ and }
\wh{\bm\Gamma}_\chi^{\k}(\ell) = \frac{2\pi}{2m + 1}\sum_{l = -m}^m \wh{\bm\Sigma}^{\k}_\chi(\omega_l) e^{\iota \omega_l \ell}.
\label{eq:common:est}
\end{align}
The estimators in~\eqref{eq:common:est} require the factor number $q_k$ as an input. 
We refer to \cite{hallin2007} for an information criterion (IC)-based estimator of $q_k$ that make use of the postulated eigengap in the spectral density matrix of $\mbf X_t$.

\subsection{Stage~2: Piecewise VAR process segmentation}
\label{sec:idio:method}


Applying the existing VAR segmentation methods in our setting requires estimating the $np$ elements of the latent piecewise stationary VAR process $\bm\xi_t$, 
which introduces additional errors and possibly results in the loss of statistical efficiency.
In addition, as discussed in Appendix~\ref{sec:comparison}, the existing methods tend to be computationally demanding, e.g.\ by evaluating the Lasso estimators $O(n^2)$ times in a dynamic programming algorithm, or solving a large fused Lasso objective function of dimension~$np^2d$. 
Instead, since we can estimate the local AVC of $\bm\xi_t$ from the post-segmentation factor-adjustment in Stage~1, our proposed methodology for segmenting the latent VAR component avoids estimating $\bm\xi_t$ directly. 
Also, as described below, the proposed method evaluates the local VAR parameters at carefully selected locations only, and thus is computationally efficient.

Specifically, our approach makes use of the Yule-Walker equation \citep{lutkepohl2005}.
Let $\bm\beta^{\k} = [\mbf A^{\k}_1, \ldots, \mbf A^{\k}_d]^\top \in \R^{(pd) \times p}$ contain all VAR parameters in the $k$th segment.
Then, it is related to the ACV matrices $\bm\Gamma^{\k}_\xi(\ell) = \E(\bm\xi^{\k}_{t - \ell}(\bm\xi^{\k}_t)^\top)$ as $\bbG^{\k} \bm\beta^{\k} = \bbg^{\k}$, where 
\begin{align}
\label{eq:idio:gamma}
\bbG^{\k} = \bmx 
\bm\Gamma^{\k}_\xi(0) & \bm\Gamma^{\k}_\xi(-1) & \ldots & \bm\Gamma^{\k}_\xi(-d + 1) 
\\
& & \ddots & 
\\
\bm\Gamma^{\k}_\xi(d - 1) & \bm\Gamma^{\k}_\xi(d - 2) & \ldots & \bm\Gamma^{\k}_\xi(0)
\emx 
\quad \text{and} \quad
\bbg^{\k} = \bmx
\bm\Gamma^{\k}_\xi(1)
\\
\vdots
\\
\bm\Gamma^{\k}_\xi(d)
\emx,
\end{align}
with $\bbG^{\k}$ being invertible due to Assumption~\ref{assum:idio}~\ref{cond:idio:minspec}. 
We propose to utilise this estimating equation in combination with the local ACV estimators of $\bm\xi_t$ obtained as described below.

For a given bandwidth $G$ and the interval $I_v(G) = \{v - G + 1, \ldots, v\}$, 
we estimate the ACV of $\bm\xi_t$ for $t \in I_v(G)$, by $\wh{\bm\Gamma}_{\xi, v}(\ell, G) = \wh{\bm\Gamma}_{x, v}(\ell, G) - \wh{\bm\Gamma}_{\chi, v}(\ell, G)$. 
Here, $\wh{\bm\Gamma}_{x, v}(\ell, G)$ is defined in~\eqref{eq:local:acv} and
$\wh{\bm\Gamma}_{\chi, v}(\ell, G)$ is a weighted average of $\wh{\bm\Gamma}^{\k}_\chi(\ell), \, 0 \le k \le \wh{K}_\chi$, the estimators of ACV of $\bm\chi^{\k}_t$ in~\eqref{eq:common:est}, with the weights given by the proportion of $I_v(G)$ covered by the $k$th segment 
(see~\eqref{eq:common:tv:acv:est} for the precise definition).
Replacing $\bm\Gamma^{\k}_\xi(\ell)$ 
with $\wh{\bm\Gamma}_{\xi, v}(\ell, G)$, we obtain $\wh{\bbG}_v(G)$ estimating a weighted average of $\bbG^{\k}$, 
and similarly $\wh{\bbg}_v(G)$. Then, we propose to scan
$T_{\xi, v}(\wh{\bm\beta}, G) = 
\vertiii{ (\wh{\bbG}_v(G) \wh{\bm\beta} - \wh{\bbg}_v(G)) -
(\wh{\bbG}_{v + G}(G) \wh{\bm\beta} - \wh{\bbg}_{v + G}(G) ) }$
with some inspection parameter $\wh{\bm\beta} \in \R^{(pd) \times p}$ and a matrix norm $\vertiii{\cdot}$.
We motivate this statistic by considering
$T^*_{\xi, v}(\wh{\bm\beta}, G) = \vertiii{(\bbG_v(G) \wh{\bm\beta} - \bbg_v(G)) - (\bbG_{v + G}(G) \wh{\bm\beta} - \bbg_{v + G}(G) )}$,
its population counterpart.
With appropriately chosen $G$ (see Assumption~\ref{assum:idio:size}~\ref{cond:idio:spacing} below),
$T^*_{\xi, v}(\wh{\bm\beta}, G) = 0$ if $v$ is far from all the change points in $\bm\xi_t$, i.e.\ $\min_k \vert v - \cp_{\xi, k} \vert \ge G$, while it is tent-shaped near the change points with a local maximum at $v = \cp_{\xi, k}$, provided that
\begin{align}
\label{eq:beta:cond}
\bbG^{[k - 1]}(\wh{\bm\beta} - \bm\beta^{[k - 1]}) \ne \bbG^{\k}(\wh{\bm\beta} - \bm\beta^{\k}).
\end{align}

For the inspection parameter, we adopt an $\ell_1$-regularised Yule-Walker estimator of the VAR parameters first considered by \cite{barigozzi2022fnets} in stationary settings.
At given $v_\circ \in \{G, \ldots, n\}$, we solve the constrained $\ell_1$-minimisation problem
\begin{align}
\label{eq:ds}
\wh{\bm\beta}_{v_\circ}(G) = {\arg\min}_{\bm\beta \in \R^{pd \times p}} \vert \bm\beta \vert_1
\quad \text{subject to} \quad
\l\vert \wh{\bbG}_{v_\circ}(G) \bm\beta - \wh{\bbg}_{v_\circ}(G) \r\vert_\infty \le \lambda_{n, p},
\end{align}
with a tuning parameter $\lambda_{n, p} > 0$.
The $\ell_\infty$-constraint in~\eqref{eq:ds} naturally leads to the choice $\vertiii{\cdot} = \vert \cdot \vert_\infty$, resulting in the following detector statistic:
\begin{align*}
T_{\xi, v}(\wh{\bm\beta}, G) = 
\l\vert \l(\wh{\bbG}_v(G) \wh{\bm\beta} - \wh{\bbg}_v(G)\r) -
\l(\wh{\bbG}_{v + G}(G) \wh{\bm\beta} - \wh{\bbg}_{v + G}(G) \r) \r\vert_\infty.
\end{align*}
For good detection power, the condition in~\eqref{eq:beta:cond} suggests using an estimator of $\bm\beta^{[k - 1]}$ or $\bm\beta^{\k}$ in place of $\wh{\bm\beta}$ for detecting $\cp_{\xi, k}$.
Therefore, we propose to evaluate $T_{\xi, v}(\wh{\bm\beta}_{v_\circ}(G), G)$ for $v \ge G$, with $\wh{\bm\beta}_{v_\circ}(G)$ updated sequentially at locations strategically selected as below.

\begin{figure}[htb]
\centering
\includegraphics[width = 1\textwidth]{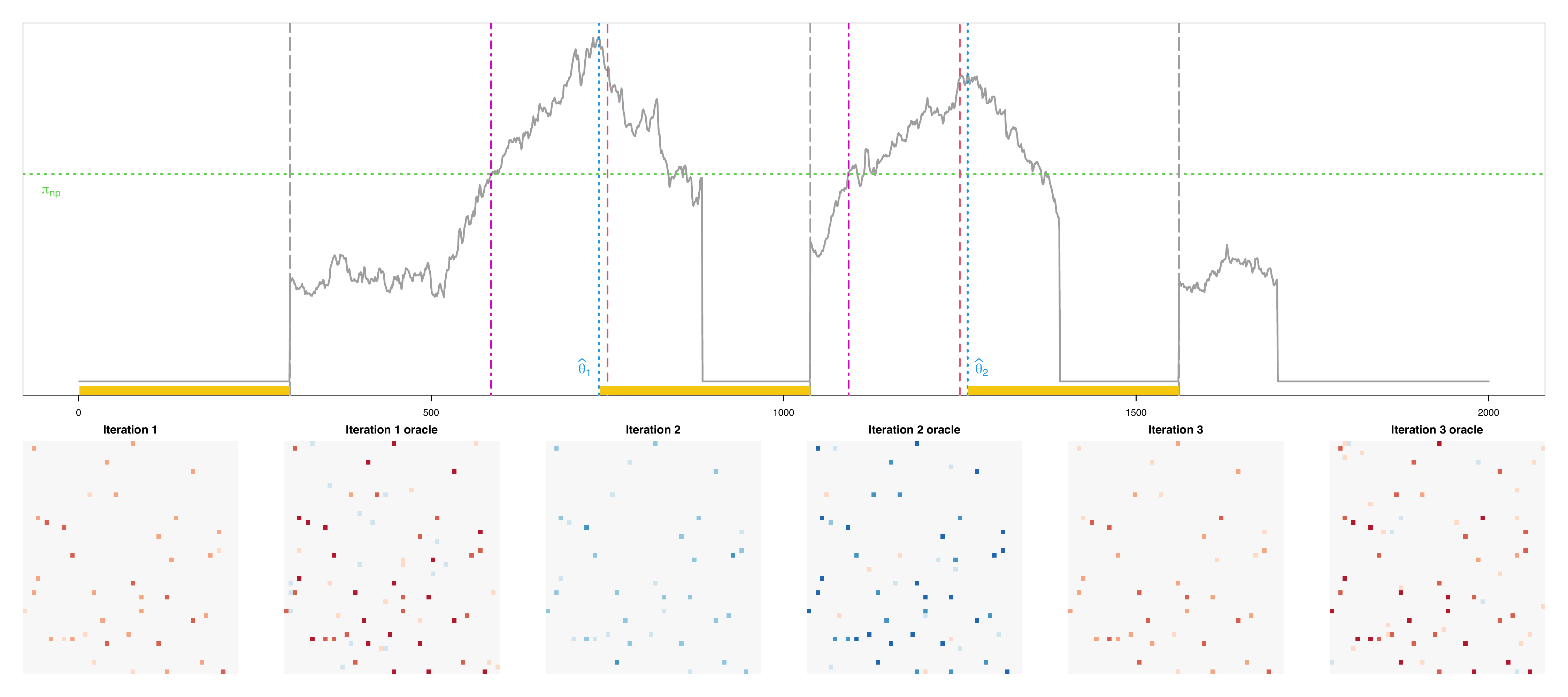}
\caption{\small Illustration of Stage~2 applied to a realisation from (M1) of Section~\ref{sec:sim} with $G = 300$ and $d = 1$.
Top: The solid curve represents $T_{\xi, v}(\wh{\bm\beta}, G), \, v_\circ \le v \le \check{\cp} + G$, computed at the three iterations of Steps~1--3 of Algorithm~\ref{alg:two}.
At each iteration, we use $\wh{\bm\beta} = \wh{\bm\beta}_{v_\circ}(G)$ estimated from each of the subsections the data highlighted in the $x$-axis (left to right); the corresponding estimators are plotted in the bottom panel and for comparison, we also plot the estimators obtained in the oracle setting where $\bm\xi_t$ is observable (all plots have the identical $z$-axis range).
The locations of $v_\circ$, $\check{\cp}$ and $\wh{\cp}$ in Algorithm~\ref{alg:two}, and
$\cp_{\xi, k}$ are denoted by the vertical long-dashed, dot-dashed, dotted and dashed lines, respectively. 
The horizontal line represents $\pi_{n, p}$ chosen as described in Section~\ref{sec:numeric:detail}.}
\label{fig:stage:two}
\end{figure}

First we estimate $\bm\beta^{[0]}$ by $\wh{\bm\beta} = \wh{\bm\beta}_G(G)$ in~\eqref{eq:ds} with $v_\circ = G$ 
and scan the data using $T_{\xi, v}(\wh{\bm\beta}, G), \, v \ge v_\circ$.
When $T_{\xi, v}(\wh{\bm\beta}, G)$ exceeds some threshold, say $\pi_{n, p}$, at $v = \check\cp$ for the first time, it signifies that a change has occurred in the neighbourhood. 
Reducing the search for a change point to $\{\check\cp, \ldots, \check\cp + G\}$, we identify a change point estimator as the local maximiser $\wh\cp_{\xi, 1} = \arg\max_{\check\cp \le v \le \check\cp + G} T_{\xi, v}(\wh{\bm\beta}, G)$.
Then updating $\wh{\bm\beta}$ with $\wh{\bm\beta}_{v_\circ}(G)$ obtained at $v_\circ = \wh{\cp}_{\xi, 1} + (\eta + 1) G$ for some $\eta \in (0, 1]$ (i.e.\ only using an interval of length $G$ located strictly to the right of $\wh\cp_{\xi, 1}$ for its computation), we continue screening $T_{\xi, v}(\wh{\bm\beta}, G), \, v \ge v_\circ$, until it next exceeds $\pi_{n, p}$.
These steps of screening $T_{\xi, v}(\wh{\bm\beta}, G)$ and updating $\wh{\bm\beta}$ are repeated iteratively until the end of the data sequence is reached.
Algorithm~\ref{alg:two} in Appendix~\ref{sec:alg} outlines the steps of the Stage~2 methodology.

Figure~\ref{fig:stage:two} illustrates that although $\bm\xi_t$ is latent, at each iteration, $\wh{\bm\beta}_{v_\circ}(G)$ does as well as its oracle counterpart (obtained as in~\eqref{eq:ds} with the sample ACV of $\bm\xi_t$ replacing $\wh{\bm\Gamma}_{\xi, v}(\ell, G)$).
Computationally, this strategy benefits from that the costly solution to the $\ell_1$-minimisation problem in~\eqref{eq:ds} is required (at most) $K_\xi + 1$ times with an appropriately chosen threshold $\pi_{n, p}$ (see Theorem~\ref{thm:idio} below).
We further demonstrate numerically the competitiveness of Stage~2 as a standalone method for VAR time series segmentation in Section~\ref{sec:sim}, and provide an in-depth comparative study with the existing methods in Appendix~\ref{sec:comparison}.

\section{Theoretical properties}
\label{sec:theory}


\subsection{Consistency of Stage~1 of FVARseg}

We carry out our theoretical investigation under two different regimes with respect to the tail behaviour of $\mbf u_t$ and $\bm\xi_t$; in particular, the weaker condition in Assumption~\ref{assum:innov}~\ref{cond:moment} permits heavy-tailed innovations, while the existing literature on (piecewise stationary) VAR modelling in high dimensions, commonly adopts the Gaussianity as in~\ref{cond:gauss}.

\begin{assum} 
\label{assum:innov} We assume {\it either} of the following conditions.
\vspace{-10pt}
\begin{enumerate}[noitemsep, wide, labelwidth=0pt, labelindent=0pt, label = (\roman*)] 
\item \label{cond:moment} There exists $\nu > 4$ such that
$\max\l\{ \E(\vert u_{jt} \vert^\nu), 
\E(\vert \vep_{it} \vert^\nu) \r\} \le \mu_\nu < \infty$.

\item \label{cond:gauss} $\mbf u_t \sim_{\iid} \mc N_q(\mbf 0, \mbf I)$ and 
$\bm\vep_t \sim_{\iid} \mc N_p(\mbf 0, \mbf I)$.
\end{enumerate}
\end{assum}

In establishing the consistency of Stage~1, 
we opt to measure the size of changes in $\bm\chi_t$ using ${\bm\Delta}_{\chi, k}(\omega) = \bm\Sigma_\chi^{\k}(\omega) - \bm\Sigma_\chi^{[k - 1]}(\omega)$, $1 \le k \le K_\chi$, the difference in spectral density matrices of $\bm\chi_t$ from neighbouring segments.
As ${\bm\Delta}_{\chi, k}(\omega)$ is Hermitian,
we can always find the $j$th largest (in modulus), real-valued eigenvalue of ${\bm\Delta}_{\chi, k}(\omega)$
which we denote by $\mu_j({\bm\Delta}_{\chi, k}(\omega))$, with $\mu_1({\bm\Delta}_{\chi, k}(\omega)) = \Vert {\bm\Delta}_{\chi, k}(\omega) \Vert$.
Recall that $m = G^\beta$ for some $\beta \in (0, 1)$, denotes the bandwidth used in local spectral density estimation, see~\eqref{eq:local:spec}.

\begin{assum}
\label{assum:common:size}
\begin{enumerate}[noitemsep, wide, labelwidth=0pt, labelindent=0pt, label = (\roman*)]
\item \label{cond:common:jump}
For each $1 \le k \le K_\chi$, the following holds:
There exist a positive integer $p_0 \ge 1$
and pairs of functions
$\omega \mapsto a^{\k}_j(\omega)$
and $\omega \mapsto b^{\k}_j(\omega)$
for $\omega \in [-\pi, \pi]$ and $j = 1, 2$,
and $r^\prime_{k, 1} \in (0, 1]$ and $r^\prime_{k, 2} \in [0, 1]$ 
satisfying $r^\prime_{k, 1} \ge r^\prime_{k, 2}$, 
such that 
\begin{align*}
& b^{\k}_1(\omega) \ge \frac{\mu_1({\bm\Delta}_{\chi, k}(\omega))}{p^{r^\prime_{k, 1}}} \ge a^{\k}_1(\omega) >
b^{\k}_2(\omega) \ge \frac{\mu_2({\bm\Delta}_{\chi, k}(\omega))}{p^{r^\prime_{k, 2}}} \ge a^{\k}_2(\omega) \ge 0
\end{align*}
for all $p \ge p_0$.
Besides, we assume that the functions $\omega \mapsto p^{-r^\prime_{k, 1}} \mu_1({\bm\Delta}_{\chi, k}(\omega))$
are Lipschitz continuous with bounded Lipschitz constants.
Then for $\Delta_{\chi, k} = \max_{\omega \in [-\pi, \pi]} \mu_1({\bm\Delta}_{\chi, k}(\omega))$, we have
$\max_{1 \le k \le K_\chi} \Delta_{\chi, k}^{-1} \cdot p(\psi_n \vee m^{-1}) = o(1)$, where 
\begin{align}
& \psi_n = \l\{\begin{array}{ll}
\frac{n^{2/\nu} m \log^{2 + 2/\nu}(G)}{G} \vee \sqrt{\frac{m\log(n)}{G}} & \text{under Assumption~\ref{assum:innov}~\ref{cond:moment}},
\\
\sqrt{\frac{m\log(n)}{G}} & \text{under Assumption~\ref{assum:innov}~\ref{cond:gauss}.}
\end{array}\r.
\label{eq:psi}
\end{align}

\item \label{cond:common:spacing} 
The bandwidth $G = G_n$ satisfies $G_n \to \infty$ as $n \to \infty$ while fulfilling 
\begin{align}
\label{eq:min:spacing}
\min\l\{\min_{0 \le k \le K_\chi} (\cp_{\chi, k + 1} - \cp_{\chi, k}), 
\min_{0 \le k \le K_\xi} (\cp_{\xi, k + 1} - \cp_{\xi, k}) \r\} \ge 2G.
\end{align}
\end{enumerate}
\end{assum}

Assumption~\ref{assum:common:size} specifies the detection lower bound which is determined by $\min_k \Delta_{\chi, k}$ and $\min_k(\cp_{\chi, k + 1} - \cp_{\chi, k})$ (through $G$),
for all $K_\chi$ change points $\bm\chi_t$ to be detectable by Stage~1.
Condition~\ref{cond:common:jump} requires $\mu_1({\bm\Delta}_{\chi, k}(\omega))$
to be distinct from the rest.
In fact, the remaining $\mu_j(\bm\Delta_{\chi, k}(\omega)), \, j \ge 2$, are allowed to be exactly zero, which is the case in Example~\ref{ex:common}; 
here, we have $\Delta_{\chi, 1} = \max_\omega (2\pi)^{-1} \Vert (\mbf b(e^{-\iota \omega}))^* \mbf b(e^{-\iota \omega}) \Vert$ where $\mbf b(z)$ is a $p$-variate vector of factor loading filters.
The rate $p(\psi_n \vee m^{-1})$ represents the bias-variance trade-off when estimating the local spectral density matrix of $\bm\chi_t$ by $\wh{\bm\Sigma}_{x, v}(G, \omega)$ (see Proposition~\ref{prop:loc:spec}).
It is possible to find the rate of kernel bandwidth $m$ that minimises this rate depending on the tail behaviour of $X_{it}$ (e.g.\ $m \asymp (G/\log(n))^{1/3}$ under Gaussianity),
but we choose to explicitly highlight the role of this tuning parameter on our results. 


\begin{thm}
\label{thm:common}
Suppose that Assumptions~\ref{assum:factor}--\ref{assum:common}, \ref{assum:innov} and~\ref{assum:common:size} hold.
Let $\kappa_{n, p}$ satisfy 
\begin{align*}
2 M p\l( \psi_n \vee  \frac{1}{m} \vee \frac{1}{p} \r) < \kappa_{n, p} < 
\frac{1}{2} \min_{1 \le k \le K_\chi} \Delta_{\chi, k} - M p\l( \psi_n \vee  \frac{1}{m} \vee \frac{1}{p} \r)
\end{align*}
for some constant $M > 0$. Then, 
there exists a set $\mc M^\chi_{n, p}$ with $\p(\mc M^\chi_{n, p}) \to 1$ as $n, p \to \infty$,
such that the following holds for $\wh\Cp_\chi = \{\wh\cp_{\chi, k}, \, 1 \le k \le \wh K_\chi: \,
\wh\cp_{\chi, 1} < \ldots < \wh\cp_{\chi, \wh K_\chi}\}$
returned by Stage~1 of FVARseg,
on $\mc M^\chi_{n, p}$ for large enough $n$ and $p$:
\begin{enumerate}[noitemsep, wide, labelwidth=0pt, labelindent=0pt, label = (\alph*)]
\item \label{thm:common:one}
$\wh K_\chi = K_\chi$ and 
$\max_{1 \le k \le K_\chi} \vert \wh\cp_{\chi, k} - \cp_{\chi, k} \vert \le \epsilon_0 G$
for some $\epsilon_0 \in (0, 1/2)$ with $\eta \in (2\epsilon_0, 1]$.

\item \label{thm:common:two}
There exists a constant $c_0 > 0$ such that for all $1 \le k \le K_\chi$, $\vert \wh\cp_{\chi, k} - \cp_{\chi, k} \vert \le c_0\rho^{\k}_{n, p}$~where
\begin{align*}
& \rho^{\k}_{n, p} = 
\l(\frac{\Delta_{\chi, k}}{p}\r)^{-2} \times
\l\{\begin{array}{ll}
m^{\frac{\nu}{\nu - 2}} (G K_\chi)^{\frac{2}{\nu - 2}} & 
\text{under Assumption~\ref{assum:innov}~\ref{cond:moment}},
\\
m\log(G K_\chi) & \text{under Assumption~\ref{assum:innov}~\ref{cond:gauss}}.
\end{array}\r.
\end{align*}
\end{enumerate}
\end{thm}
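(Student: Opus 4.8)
The plan is to condition on a single high-probability event $\mc M^\chi_{n,p}$ on which the random detector $T_{\chi,v}(\omega,G)$ is uniformly close to its population version $T^*_{\chi,v}(\omega,G)$, and then run a deterministic argument that exploits the piecewise-linear (tent) shape of $T^*_{\chi,v}$. Writing the time-varying spectral density of $\mbf X_t$ as $\bm\Sigma_{x,v}(\omega,G) = \bm\Sigma_{\chi,v}(\omega,G) + \bm\Sigma_{\xi,v}(\omega,G)$ and combining the local spectral-estimation bound of Proposition~\ref{prop:loc:spec} with Proposition~\ref{prop:idio:eval} (which bounds $\Vert \bm\Sigma_{\xi,v}(\omega,G)\Vert \le M_\xi = O(1)$ relative to $p$), I would define $\mc M^\chi_{n,p}$ to be the event on which
\[
\max_{G \le v \le n}\ \max_{0 \le l \le m}\ \l\Vert \wh{\bm\Sigma}_{x,v}(\omega_l,G) - \bm\Sigma_{\chi,v}(\omega_l,G)\r\Vert \ \le\ \tfrac{1}{2}\, \delta_{n,p}, \qquad \delta_{n,p} := M p\l(\psi_n \vee m^{-1} \vee p^{-1}\r),
\]
and verify $\p(\mc M^\chi_{n,p}) \to 1$ under either tail regime of Assumption~\ref{assum:innov}. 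The reverse triangle inequality for the operator norm then gives $\vert T_{\chi,v}(\omega_l,G) - T^*_{\chi,v}(\omega_l,G)\vert \le \delta_{n,p}$ uniformly. The minimum-spacing condition (Assumption~\ref{assum:common:size}~\ref{cond:common:spacing}) forces each pair of adjacent windows $I_v(G), I_{v+G}(G)$ to straddle at most one $\cp_{\chi,k}$, so a direct computation of the weighted averages $\bm\Sigma_{\chi,v}(\omega,G)$ shows that $T^*_{\chi,v}(\omega,G)$ vanishes when $\min_k \vert v - \cp_{\chi,k}\vert \ge G$ and is otherwise tent-shaped, rising with slope $\mu_1(\bm\Delta_{\chi,k}(\omega))/G$ to the peak $\mu_1(\bm\Delta_{\chi,k}(\omega))$ attained at $v = \cp_{\chi,k}$; maximising over the Fourier grid and using the Lipschitz continuity in $\omega$ (the grid error being absorbed into the $m^{-1}$ term of $\delta_{n,p}$) turns the peak height into $\Delta_{\chi,k}$.

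Part~\ref{thm:common:one} is then deterministic on $\mc M^\chi_{n,p}$. Where $\min_k\vert v - \cp_{\chi,k}\vert \ge G$ we have $\max_l T_{\chi,v}(\omega_l,G) \le \delta_{n,p} < \kappa_{n,p}$, so no estimator is ever declared away from a true change point; conversely, at $v = \cp_{\chi,k}$ we have $\max_l T_{\chi,\cp_{\chi,k}}(\omega_l,G) \ge \Delta_{\chi,k} - \delta_{n,p} > \kappa_{n,p}$ by the upper bound on $\kappa_{n,p}$ and Assumption~\ref{assum:common:size}~\ref{cond:common:jump} (which forces $\Delta_{\chi,k} \gg \delta_{n,p}$), so the detector crosses $\kappa_{n,p}$ in each neighbourhood. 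Since $T^*$ is tent-shaped, every super-threshold location lies within $G$ of some $\cp_{\chi,k}$, while the local maximiser must lie within $\epsilon_0 G$ of it with $\epsilon_0 \asymp \delta_{n,p}/\min_k \Delta_{\chi,k} \to 0$, hence eventually $\epsilon_0 < 1/2$. The greedy scheme of taking the maximiser and excising a radius-$G$ interval is therefore feasible because the change points are spaced $\ge 2G$ apart, and it returns exactly one estimator per change point; choosing $\eta \in (2\epsilon_0,1]$ guarantees that the local-maximiser check retains $\cp_{\chi,k}$ in the search window. This yields $\wh K_\chi = K_\chi$ and $\max_k\vert\wh\cp_{\chi,k} - \cp_{\chi,k}\vert \le \epsilon_0 G$.

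For the refined rate in part~\ref{thm:common:two} the uniform bound $\delta_{n,p}$ is too crude and here lies the main work. Write $\cp = \cp_{\chi,k}$, $\wh\cp = \wh\cp_{\chi,k}$, $s = \vert\wh\cp - \cp\vert$ (take $\wh\cp \ge \cp$) and $\omega^\star = \omega(\wh\cp)$; by part~\ref{thm:common:one}, $\cp$ lies in the search window of $\wh\cp$, so the local-maximiser property gives $T_{\chi,\wh\cp}(\omega^\star,G) \ge T_{\chi,\cp}(\omega^\star,G)$. Let $\mbf w^\star$ be the norm-attaining (extreme) eigenvector of the Hermitian matrix $\wh{\bm\Sigma}_{x,\wh\cp}(\omega^\star,G) - \wh{\bm\Sigma}_{x,\wh\cp+G}(\omega^\star,G)$. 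The eigengap built into Assumption~\ref{assum:common:size}~\ref{cond:common:jump} ($r^\prime_{k,1} > r^\prime_{k,2}$, so $\mu_1(\bm\Delta_{\chi,k}(\omega^\star))$ separates from $\mu_2$) lets a Davis--Kahan argument align $\mbf w^\star$ with the leading eigenvector of $\bm\Delta_{\chi,k}(\omega^\star)$, whence $\vert (\mbf w^\star)^* \bm\Delta_{\chi,k}(\omega^\star)\, \mbf w^\star\vert \asymp \mu_1(\bm\Delta_{\chi,k}(\omega^\star)) \asymp \Delta_{\chi,k}$. Expanding the two detector values in the direction $\mbf w^\star$ and using the exact tent increment $\bm\Sigma_{\chi,\wh\cp}(\omega^\star) - \bm\Sigma_{\chi,\cp}(\omega^\star) - \{\bm\Sigma_{\chi,\wh\cp+G}(\omega^\star) - \bm\Sigma_{\chi,\cp+G}(\omega^\star)\} = (s/G)\bm\Delta_{\chi,k}(\omega^\star)$, the inequality $T_{\chi,\wh\cp}(\omega^\star,G) \ge T_{\chi,\cp}(\omega^\star,G)$ collapses to
\[
\frac{s}{G}\,\Delta_{\chi,k}\ \lesssim\ \l\vert (\mbf w^\star)^* \mbf R\, \mbf w^\star \r\vert\ \le\ \Vert \mbf R\Vert, \quad \mbf R := \l[\wh{\bm\Sigma}_{x,\wh\cp}(\omega^\star,G) - \wh{\bm\Sigma}_{x,\cp}(\omega^\star,G)\r] - \l[\wh{\bm\Sigma}_{x,\wh\cp+G}(\omega^\star,G) - \wh{\bm\Sigma}_{x,\cp+G}(\omega^\star,G)\r],
\]
where $\mbf R$ is the mean-zero increment of the local spectral estimator over a shift of size $s$; crucially its defining sums differ only over the $O(s)$ time points in the symmetric difference of the windows, so $\mbf R$ concentrates at scale $\sqrt{s}$ rather than $\sqrt{G}$.

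The main obstacle is thus a localised spectral deviation bound for $\Vert\mbf R\Vert$, uniform over the unknown scale $1 \le s \le \epsilon_0 G$, over the $m+1$ frequencies, and over $1 \le k \le K_\chi$. I would prove it by reducing the operator norm to partial sums of $\mbf X_{t-\ell}\mbf X_t^\top$ over the $O(s)$-length block (with the kernel contributing the factor $m$) and controlling these via a concentration inequality under the functional dependence measure of \cite{zhang2021}: under Gaussianity (Assumption~\ref{assum:innov}~\ref{cond:gauss}) a Hanson--Wright/exponential bound gives $\Vert\mbf R\Vert \lesssim p\,\sqrt{s\,m\log(GK_\chi)}/G$, while under the polynomial-moment condition (Assumption~\ref{assum:innov}~\ref{cond:moment}) a Fuk--Nagaev argument produces a heavier tail, the dyadic peeling over $s$ and the union over frequencies and $k$ contributing the $\log(GK_\chi)$ and $(GK_\chi)^{2/(\nu-2)}$ factors respectively. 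Substituting into the displayed inequality, squaring and dividing by $s$ yields $s\,\Delta_{\chi,k}^2 \lesssim p^2 m\log(GK_\chi)$ in the Gaussian case and $s\,\Delta_{\chi,k}^2 \lesssim p^2 m^{\nu/(\nu-2)}(GK_\chi)^{2/(\nu-2)}$ under moments, i.e.\ $\vert\wh\cp_{\chi,k} - \cp_{\chi,k}\vert = s \le c_0\,\rho^{\k}_{n,p}$, which is the claim.
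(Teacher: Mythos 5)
Your proposal is correct and follows essentially the same route as the paper's own proof: part~\ref{thm:common:one} via a uniform spectral-approximation event (the paper's $\mc E^{(1)}_{n, p}$, built from Proposition~\ref{prop:loc:spec} and Proposition~\ref{prop:idio:eval}) plus the deterministic tent-shape argument, and part~\ref{thm:common:two} via the local-maximiser contradiction, Davis--Kahan alignment under the eigengap in Assumption~\ref{assum:common:size}~\ref{cond:common:jump}, and a localised concentration bound at the unknown scale $s$ (the paper's Lemma~\ref{lem:Qs}, stated uniformly over shifts $h$, playing the role of your dyadic peeling). The only caveats are presentational: your $\mbf R$ must be centred for the displayed inequality to be informative (as literally defined it contains the tent increment, so $\Vert \mbf R \Vert \gtrsim (s/G)\Delta_{\chi, k}$ holds trivially), and a complete write-up must also dispose of the kernel-smoothing bias and the idiosyncratic drift contributions (the paper's terms $\mc R_{k4}$ and $\mc R_{k5}$), both of which are lower order under Assumption~\ref{assum:common:size}~\ref{cond:common:jump}.
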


\begin{rem}
\label{rem:thm:common}
\begin{enumerate}[noitemsep, wide, labelwidth=0pt, labelindent=0pt, label = (\roman*)]

\item \label{rem:thm:common:two}
In Theorem~\ref{thm:common}~\ref{thm:common:two},
$\rho^{\k}_{n, p}$ reflects the difficulty associated with estimating the individual change point $\cp_{\chi, k}$
manifested by $(p^{-1}\Delta_{\chi, k})^{-2}$.
In the Gaussian case (Assumption~\ref{assum:innov}~\ref{cond:gauss}),
the localisation rate $\rho^{\k}_{n, p}$ is always sharper than $G$
due to Assumption~\ref{assum:common:size}~\ref{cond:common:jump}.
Considering the problem of covariance change point detection
in independent, sub-Gaussian random vectors in high dimensions,
\cite{wang2021optimal} derive the minimax lower bound on the localisation rate in their Lemma~3.2, 
and $\rho^{\k}_{n, p}$ matches this rate up to $m \log(n)$;
here, the dependence on the kernel bandwidth $m$ is attributed to that we consider a time series segmentation problem, i.e.\ a change may occur in the ACV of $\bm\chi_t$ at lags other than zero.
If heavier tails are permitted (Assumption~\ref{assum:innov}~\ref{cond:moment}), 
$\rho^{\k}_{n, p}$ can be tighter than $\epsilon_0 G$,
e.g.\ when $\Delta_{\chi, k} \asymp p$, $K_\chi$ is fixed and 
$m \asymp G^\beta$ for some $\beta \in (0,  1 - 4/\nu)$.

\item Empirically, replacing $\wh\cp$ with $\wt\cp = \arg\max_{v \in \mc I} \, \text{avg}_l T_{\chi, v}(\omega_l, G)$ returns a more stable location estimator, where $\text{avg}_l$ denotes the average operator over $l = 0, \ldots, m$.
We can derive the localisation rate for $\wt\cp$ similarly as in Theorem~\ref{thm:common}~\ref{thm:common:two} with $\wt\Delta_{\chi, k} = \pi^{-1} \int_0^{\pi} \Vert {\bm\Delta}_{\chi, k}(\omega) \Vert d\omega$ in place of $\Delta_{\chi, k}$. 
Our numerical results in Section~\ref{sec:sim} are based on this estimator.

\end{enumerate}
\end{rem}

Next, we establish the consistency of $\wh{\bm\Gamma}^{\k}_\chi(\ell)$ in~\eqref{eq:common:est} estimating the segment-specific ACV of $\bm\chi^{\k}_t$ under the following assumption on the strength of factors.
\begin{assum}
\label{assum:factor:two}
Assumption~\ref{assum:factor} holds with $r_{k, j} = 1$ for all $1 \le j \le q_k$ and $0 \le k \le K_\chi$.
\end{assum}

\begin{thm}
\label{thm:common:est}
Suppose that Assumption~\ref{assum:factor:two} holds
in addition to the assumptions made in Theorem~\ref{thm:common},
and define $\rho_{n, p} = \max_{1 \le k \le K_\chi} \min(\epsilon_0 G, \rho^{\k}_{n, p})$.
Also let
\begin{align}
\vartheta_{n, p} &= \l\{\begin{array}{ll}
\frac{m (np)^{2/\nu} \log^{7/2}(p)}{G} 
\vee \sqrt{\frac{m\log(np)}{G}} 
& \text{under Assumption~\ref{assum:innov}~\ref{cond:moment}},
\\
\sqrt{\frac{m\log(np)}{G}} & 
\text{under Assumption~\ref{assum:innov}~\ref{cond:gauss}}.
\end{array}\r.
\nn 
\end{align}
Then on $\mc M^\chi_{n, p}$ defined in Theorem~\ref{thm:common}, 
for some finite integer $d \in \N$, we have 
\begin{align*}
\max_{0 \le k \le K_\chi} \max_{0 \le \ell \le d} \l\vert \wh{\bm\Gamma}^{\k}_\chi(\ell) - \bm\Gamma^{\k}_\chi(\ell) \r\vert_\infty
= O_p\l( \vartheta_{n, p} \vee \frac{1}{m} \vee \frac{\rho_{n, p}}{G} \vee \frac{1}{\sqrt p}\r).
\end{align*}
\end{thm}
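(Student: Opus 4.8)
The plan is to carry out all arguments on the event $\mc M^\chi_{n, p}$ of Theorem~\ref{thm:common}, on which $\wh K_\chi = K_\chi$ and $\vert \wh\cp_{\chi, k} - \cp_{\chi, k} \vert \le \rho_{n, p}$ for every $k$; this correctly pairs each estimated segment $\{\wh\cp_{\chi, k} + 1, \ldots, \wh\cp_{\chi, k + 1}\}$ with the true stationary regime $\bm\chi^{\k}_t$ and reduces the claim to a segment-by-segment analysis. Since the inverse Fourier transform in~\eqref{eq:common:est} obeys $\vert \frac{2\pi}{2m + 1} \sum_{l = -m}^m \mbf A(\omega_l) e^{\iota \omega_l \ell} \vert_\infty \le 2\pi \max_{\vert l \vert \le m} \vert \mbf A(\omega_l) \vert_\infty$, it suffices to bound $\vert \wh{\bm\Sigma}^{\k}_\chi(\omega_l) - \bm\Sigma^{\k}_\chi(\omega_l) \vert_\infty$ uniformly over $0 \le k \le K_\chi$ and $\vert l \vert \le m$, and to control separately the deterministic error of replacing $\bm\Gamma^{\k}_\chi(\ell)$ by the inverse transform of $\bm\Sigma^{\k}_\chi$ sampled at the Fourier frequencies; the latter is an aliasing error of order $\sum_{\vert \ell' \vert > m} \vert \bm\Gamma^{\k}_\chi(\ell') \vert_\infty = O(m^{1 - \varsigma}) = o(m^{-1})$ by the algebraic ACV decay implied by Assumption~\ref{assum:common}.

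For the spectral error I write, with $P_{q_k}$ the rank-$q_k$ eigenprojection and $\bm\Sigma^{\k}_x = \bm\Sigma^{\k}_\chi + \bm\Sigma^{\k}_\xi$, $\wh{\bm\Sigma}^{\k}_\chi(\omega_l) - \bm\Sigma^{\k}_\chi(\omega_l) = \{P_{q_k}(\wh{\bm\Sigma}^{\k}_x(\omega_l)) - P_{q_k}(\bm\Sigma^{\k}_x(\omega_l))\} + \{P_{q_k}(\bm\Sigma^{\k}_x(\omega_l)) - \bm\Sigma^{\k}_\chi(\omega_l)\}$. The second, deterministic bracket is the leakage of the idiosyncratic spectrum into the estimated common part. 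Under Assumption~\ref{assum:factor:two} the top $q_k$ eigenvalues of $\bm\Sigma^{\k}_\chi(\omega)$ all diverge at rate $p$ with pervasive eigenvectors whose entries are $O(p^{-1/2})$, while $\mu^{\k}_{\xi, 1}(\omega)$ is uniformly bounded by Proposition~\ref{prop:idio:eval}; a Weyl plus Davis--Kahan argument then places the top-$q_k$ eigenspace of $\bm\Sigma^{\k}_x$ within $O(p^{-1})$ (in sine-angle) of that of $\bm\Sigma^{\k}_\chi$, and combining this with the $O(p^{-1/2})$ eigenvector entries yields the element-wise bound $O(p^{-1/2})$, the origin of the $p^{-1/2}$ term.

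The first bracket is stochastic and is the main obstacle. I first extend Proposition~\ref{prop:loc:spec} to the estimated segments to obtain $\max_{\vert l \vert \le m} \vert \wh{\bm\Sigma}^{\k}_x(\omega_l) - \bm\Sigma^{\k}_x(\omega_l) \vert_\infty = O_p(\vartheta_{n, p} \vee m^{-1} \vee \rho_{n, p}/G)$: the fluctuation term $\vartheta_{n, p}$ follows from a Nagaev-type inequality under Assumption~\ref{assum:innov}~\ref{cond:moment} (respectively a Gaussian maximal inequality under Assumption~\ref{assum:innov}~\ref{cond:gauss}) applied to the kernel-smoothed sample ACV, whose serial dependence decays algebraically by Assumptions~\ref{assum:idio}~\ref{cond:idio:coef} and~\ref{assum:common}, with the union bound over the $p^2$ entries and $O(m)$ frequencies producing the $(np)^{2/\nu}$ and $\log(np)$ factors; the $m^{-1}$ term is the Bartlett-kernel bias; and the $\rho_{n, p}/G$ term accounts for the at most $O(\rho_{n, p})$ observations in each estimated segment that originate from a neighbouring regime (whose spectra are bounded), relative to the segment length $\asymp G$. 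The difficulty is to transfer this element-wise bound through the rank-$q_k$ projection without losing powers of $p$: a naive Davis--Kahan bound, in which $\Vert \wh{\bm\Sigma}^{\k}_x - \bm\Sigma^{\k}_x \Vert$ is controlled crudely by $p \vert \cdot \vert_\infty$, would inflate the rate by a factor of $\sqrt p$ once multiplied by the diverging eigenvalues $\mu^{\k}_{\chi, j}(\omega) \asymp p$. The resolution is an entry-wise eigenprojection perturbation analysis exploiting pervasiveness, which separates the eigenvalue contribution (bounded via Weyl and the $O(p^{-1})$ eigenvector entries) from the eigenvector contribution (where the $O(p^{-1})$ sine-angle and the diverging eigenvalues balance), mirroring the stationary treatment of \citet{barigozzi2022fnets}; this returns the common-part error at the same element-wise order $O_p(\vartheta_{n, p} \vee m^{-1} \vee \rho_{n, p}/G)$. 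Collecting the three contributions and taking maxima over $0 \le k \le K_\chi$, $0 \le \ell \le d$ and the Fourier frequencies yields the stated rate on $\mc M^\chi_{n, p}$.
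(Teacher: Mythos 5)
Your proposal follows essentially the same architecture as the paper's proof: restrict to $\mc M^\chi_{n,p}$ so that segments can be matched, control the segment-wise spectral density estimator with exactly the three error sources you name (fluctuation $\vartheta_{n,p}$ via Nagaev/Gaussian tail bounds with a union bound over entries and frequencies, Bartlett-kernel bias $m^{-1}$, and localisation bias $\rho_{n,p}/G$; this is Proposition~\ref{prop:spec:x:k}), then run an entrywise eigen-perturbation analysis under Assumption~\ref{assum:factor:two} to avoid the $\sqrt p$ loss (Lemmas~\ref{lem:dk}--\ref{lem:evec:size}), with the $p^{-1/2}$ term arising from idiosyncratic leakage, and finally pass from spectra to ACVs. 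Two of your choices differ only cosmetically from the paper: you insert the population truncation $P_{q_k}(\bm\Sigma^{\k}_x(\omega))$ as an intermediate comparison point, whereas the paper compares $\wh{\mbf E}^{\k}_{x}\wh{\bm{\mc M}}^{\k}_x(\wh{\mbf E}^{\k}_{x})^*$ directly to $\mbf E^{\k}_{\chi}\bm{\mc M}^{\k}_\chi(\mbf E^{\k}_{\chi})^*$ up to a diagonal rotation $\mbf O_k(\omega)$; and you control the frequency-discretisation error by an exact aliasing identity, $\frac{2\pi}{2m+1}\sum_{l=-m}^m \bm\Sigma^{\k}_\chi(\omega_l)e^{\iota\omega_l\ell} - \bm\Gamma^{\k}_\chi(\ell) = \sum_{j \ne 0}\bm\Gamma^{\k}_\chi(\ell + j(2m+1))$, whereas the paper uses bounded variation of $\omega \mapsto \sigma^{\k}_{\chi, ii'}(\omega)$ (Lemma~\ref{lem:sigma:deriv}); both give $O(m^{-1})$ or better, which suffices.

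There is, however, one ingredient your sketch glosses and which, taken literally, would break: you carry only the $\ell_\infty$ bound $\max_l \vert \wh{\bm\Sigma}^{\k}_x(\omega_l) - \bm\Sigma^{\k}_x(\omega_l)\vert_\infty = O_p(\vartheta_{n,p} \vee m^{-1} \vee \rho_{n,p}/G)$ into the eigen-analysis. But the Weyl and Davis--Kahan steps in your own ``resolution'' (eigenvalue differences and sine-angles of the estimated eigenspaces) require a bound on $p^{-1}\Vert \wh{\bm\Sigma}^{\k}_x(\omega) - \bm\Sigma^{\k}_x(\omega)\Vert$, and converting the $\ell_\infty$ bound via $\Vert\cdot\Vert \le p\vert\cdot\vert_\infty$ is precisely the crude step you rightly rule out. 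The paper resolves this by proving a \emph{second}, Frobenius-type concentration bound (Proposition~\ref{prop:spec:x:k}~\ref{prop:spec:x:k:one}): per-entry $L_2$ control needs no union bound over the $p^2$ entries, hence comes with a smaller rate than $\vartheta_{n,p}$, and summing over entries gives $\vert\cdot\vert_2 \le p \times \text{rate}$, so that $p^{-1}\Vert\cdot\Vert$ is of the right order; this feeds the $\ell_2$ eigenvector and eigenvalue Lemmas~\ref{lem:dk} and~\ref{lem:evals}, while the $\ell_\infty$ bound feeds only the row-wise Lemma~\ref{lem:evec:vec}. Your plan therefore needs both norms of the segment-wise concentration result, not just the $\ell_\infty$ version you state; once that is added, the rest of your argument goes through as in the paper (and as in the stationary treatment of \cite{barigozzi2022fnets} you invoke).
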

It is possible to work under the weaker Assumption~\ref{assum:factor} and trace the effect of weak factors or bound estimation errors measured in different norms.
Corollary~C.16 of \cite{barigozzi2022fnets} derives such results in the stationary setting, where an additional multiplicative factor of $p^{2(1 - \min_k r_{k, q_k})}$ appears in the $O_P$-bound in Theorem~\eqref{thm:common:est}.
We work under the stronger Assumption~\ref{assum:factor:two} as it simplifies the presentation of Theorem~\ref{thm:common:est} which plays an important role in the investigation into Stage~2 of FVARseg, and since only Assumption~\ref{assum:factor:two} is compatible with 
the cross-sectional ordering often being completely arbitrary.

\subsection{Consistency of Stage~2 of FVARseg}
\label{sec:idio:theory}

Suppose that the tuning parameter for the $\ell_1$-regularised Yule-Walker estimation problem in~\eqref{eq:ds}, is set with some constant $M > 0$ and $\vartheta_{n, p}$ and $\rho_{n, p}$ defined in Theorem~\ref{thm:common:est}, as
\begin{align}
\lambda_{n, p} &= M\l(\max_{0 \le k \le K_\xi} \Vert \bm\beta^{\k} \Vert_1 + 1\r)
\l(\vartheta_{n, p} \vee \frac{1}{m} \vee \frac{\rho_{n, p}}{G} \vee \frac{1}{\sqrt p}\r).
\label{eq:lambda}
\end{align}
This choice reflects the error in $\wh{\bm\Gamma}_{\xi, v}(\ell, G)$ estimating the local ACV of $\bm\xi_t$ over all $v$ and $\ell$.

The following assumption imposes conditions on the size of the changes in VAR parameters
and the minimum spacing between the change points.

\begin{assum}
\label{assum:idio:size}
\begin{enumerate}[noitemsep, wide, labelwidth=0pt, labelindent=0pt, label = (\roman*)]
\item \label{cond:idio:jump}
For each $1 \le k \le K_\xi$, 
let ${\bm\Delta}_{\xi, k} = \bbG^{\k}(\bm\beta^{\k} - \bm\beta^{[k - 1]})$.
Then, 
\begin{align*}
\max_{1 \le k \le K_\xi} 
\frac{(1 \vee \Vert \bbG^{\k} (\bbG^{[k - 1]})^{-1} \Vert_1) \lambda_{n, p}}{\vert {\bm\Delta}_{\xi, k} \vert_\infty} 
= o(1).
\end{align*}


\item \label{cond:idio:spacing} 
The bandwidth $G$ fulfils~\eqref{eq:min:spacing}, i.e.\
$\min_{0 \le k \le K_\xi} (\cp_{\xi, k + 1} - \cp_{\xi, k}) \ge 2G$.
\end{enumerate}
\end{assum}

\begin{rem}
\label{rem:cond:idio}
We choose to measure the size of change using $\vert \bm\Delta_{\xi, k} \vert_\infty$.
From Assumption~\ref{assum:idio}~\ref{cond:idio:minspec}, we have ${\bm\Delta}_{\xi, k} = \mbf O$ iff $\bm\beta^{\k} - \bm\beta^{[k - 1]} = \mbf O$.
In the related literature, the $\ell_2$-norm $\vert \bm\beta^{\k} - \bm\beta^{[k - 1]} \vert_2$ scaled by the global sparsity (given by the union of the supports of all $\bm\beta^{\k}, \, 0 \le k \le K_\xi$), is used to measure the size of change where this global sparsity may be much greater than that of $\bm\Delta_{\xi, k}$ when $K_\xi$ is large, see Appedix~\ref{sec:comparison}.
In some instances, we have $\bbG^{\k} (\bbG^{[k - 1]})^{-1} = \mbf I$, e.g.\ when $d = 1$ and $\mbf A_1^{\k} = - \mbf A_1^{[k - 1]}$ such that Assumption~\ref{assum:idio:size}~\ref{cond:idio:jump} becomes $\lambda_{n, p} = o(\min_k \vert \bm\Delta \vert_\infty)$.
More generally, bounding $\Vert \bbG^{\k} (\bbG^{[k - 1]})^{-1} \Vert_1$ implicitly assumes (approximate) sparsity on the second-order structure of $\bm\xi_t$.
When $d = 1$, we have $\bbG^{\k} = \sum_{\ell = 0}^\infty (\mbf A^{\k}_1)^\ell \bm\Gamma^{\k} [(\mbf A^{\k}_1)^\top]^\ell$ such that the boundedness of $\Vert \bbG^{\k} \Vert_1$ and $\Vert (\bbG^{\k})^{-1} \Vert_1$ follows when $\mbf A^{\k}_1$ and $\bm\Gamma^{\k}$ are block diagonal with fixed block size \citep{wang2021robust}.
For general $d \ge 1$, we have $\Vert \bbG^{\k} (\bbG^{[k - 1]})^{-1} \Vert_1$ bounded if $\bbG^{\k}$ are strictly diagonally dominant (see Definition~6.1.9 of \cite{horn1985} and \cite{han2015direct}), which is met e.g.\ when $\mbf A^{\k}_\ell$ are diagonal with their diagonal entries fulfilling $\gamma^{\k}_{\xi, ii}(0) > 2 \sum_{\ell = 1}^{d - 1} \vert \gamma^{\k}_{\xi, ii}(\ell) \vert$ (where $\bm\Gamma^{\k}_\xi(\ell) = [\gamma^{\k}_{\xi, ii'}(\ell)]_{i, i'}$); this trivially holds when $d = 1$.
\end{rem}

\begin{thm}
\label{thm:idio}
Suppose that Assumption~\ref{assum:idio:size} holds
in addition to the assumptions made in Theorem~\ref{thm:common:est}.
With $\lambda_{n, p}$ chosen as in~\eqref{eq:lambda}, we set $\pi_{n, p}$ to satisfy
\begin{align}
& 2\lambda_{n, p} < \pi_{n, p} < \frac{1}{2} \min_{1 \le k \le K_\xi} \vert {\bm\Delta}_{\xi, k} \vert_\infty.
\nn
\end{align}
Then, there exists a set $\mc M^\xi_{n, p}$ with $\p(\mc M^\xi_{n, p}) \to 1$ as $n, p \to \infty$,
such that the following holds for $\wh\Cp_\xi = \{\wh\cp_{\xi, k}, \, 1 \le k \le \wh K_\xi: \,
\wh\cp_{\xi, 1} < \ldots < \wh\cp_{\xi, \wh K_\xi}\}$
returned by Stage~2 of FVARseg,
on $\mc M^\xi_{n, p}$ for large enough $n$:
\begin{enumerate}[noitemsep, wide, labelwidth=0pt, labelindent=0pt, label = (\alph*)]
\item \label{thm:idio:one} $\wh K_\xi = K_\xi$ and 
$\max_{1 \le k \le K_\xi} \vert \wh\cp_{\xi, k} - \cp_{\xi, k} \vert \le \epsilon_0 G$ 
for some $\epsilon_0 \in (0, 1/2)$ with $\eta \in (\epsilon_0, 1]$.

\item \label{thm:idio:two} There exists a constant $c_0 > 0$ such that
for all $1 \le k \le K_\xi$ satisfying
$\{\cp_{\xi, k} - 2G + 1, \ldots, \cp_{\xi, k} + 2G\}  \cap \Cp_\chi = \emptyset$,
we have $\vert \wh\cp_{\xi, k} - \cp_{\xi, k} \vert \le c_0 \varrho^{\k}_{n, p}$, where
\begin{align*}
& \varrho^{\k}_{n, p} = \vert {\bm\Delta}_{\xi, k} \vert_\infty^{-2} \l(1 + \max_{0 \le k \le K_\xi} \Vert \bm\beta^{\k} \Vert_1 \r) \times
\l\{\begin{array}{l}
(G K_\xi p)^{\frac{2}{\nu - 2}} \log^{\frac{3\nu}{\nu - 2}}(p) 
\\
\qquad \text{under Assumption~\ref{assum:innov}~\ref{cond:moment}},
\\
\log(G K_\xi p) \\
\qquad \text{under Assumption~\ref{assum:innov}~\ref{cond:gauss}}.
\end{array}\r.
\end{align*}
\end{enumerate}
\end{thm}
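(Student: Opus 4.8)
The plan is to work on a high-probability event $\mc M^\xi_{n,p} \subseteq \mc M^\chi_{n,p}$ on which Stage~1 succeeds (Theorem~\ref{thm:common}) and, crucially, the local autocovariance estimators of $\bm\xi_t$ are uniformly accurate. Writing $\wh{\bm\Gamma}_{\xi, v}(\ell, G) = \wh{\bm\Gamma}_{x, v}(\ell, G) - \wh{\bm\Gamma}_{\chi, v}(\ell, G)$, I would first combine a concentration bound for the raw local ACV $\wh{\bm\Gamma}_{x, v}(\ell, G)$ of $\mbf X_t$ (via the functional dependence measure under Assumption~\ref{assum:innov}~\ref{cond:moment}, or a Gaussian bound under~\ref{cond:gauss}) with the factor-adjustment error from Theorem~\ref{thm:common:est}, to show that, uniformly over $G \le v \le n$ and $0 \le \ell \le d$, both $\vert \wh{\bbG}_v(G) - \bbG_v(G) \vert_\infty$ and $\vert \wh{\bbg}_v(G) - \bbg_v(G) \vert_\infty$ are of order $\vartheta_{n,p} \vee m^{-1} \vee \rho_{n,p}/G \vee p^{-1/2}$, matching the prefactor of $\lambda_{n,p}$ in~\eqref{eq:lambda}; here $\bbG_v(G)$ and $\bbg_v(G)$ denote the weighted averages of the segment quantities $\bbG^{\k}, \bbg^{\k}$ over $I_v(G)$. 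By $\ell_1$/$\ell_\infty$ duality this immediately yields the deterministic bound $\sup_v \vert T_{\xi, v}(\wh{\bm\beta}, G) - T^*_{\xi, v}(\wh{\bm\beta}, G) \vert \le C(1 + \Vert \wh{\bm\beta} \Vert_1)(\vartheta_{n,p} \vee m^{-1} \vee \rho_{n,p}/G \vee p^{-1/2})$, valid for any (possibly data-dependent) $\wh{\bm\beta}$, which decouples the analysis of the inspection parameter from that of the scan.

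Second, I would analyse the $\ell_1$-regularised Yule-Walker estimator $\wh{\bm\beta}_{v_\circ}(G)$ of~\eqref{eq:ds} computed on a clean window $I_{v_\circ}(G)$ contained in a single segment $k$. On $\mc M^\xi_{n,p}$ the true $\bm\beta^{\k}$ is feasible (the residual of its Yule-Walker equation equals $(\wh{\bbG}_{v_\circ} - \bbG^{\k})\bm\beta^{\k} - (\wh{\bbg}_{v_\circ} - \bbg^{\k})$, bounded by $\lambda_{n,p}$), so minimality gives $\Vert \wh{\bm\beta}_{v_\circ} \Vert_1 \le \Vert \bm\beta^{\k} \Vert_1$ and $\vert \wh{\bbG}_{v_\circ}(\wh{\bm\beta}_{v_\circ} - \bm\beta^{\k}) \vert_\infty \le 2\lambda_{n,p}$; a restricted-eigenvalue argument (using that $\wh{\bbG}_{v_\circ}$ is close to $\bbG^{\k}$, whose minimum eigenvalue is bounded below by Assumption~\ref{assum:idio}~\ref{cond:idio:minspec} and Proposition~\ref{prop:idio:eval}) then controls $\vert \wh{\bm\beta}_{v_\circ} - \bm\beta^{\k}\vert_1$. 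On the population side, the identities $\bbg^{\k} = \bbG^{\k}\bm\beta^{\k}$ show that $T^*_{\xi, v}(\wh{\bm\beta}, G)$ is tent-shaped: it is exactly zero when $\min_k \vert v - \cp_{\xi, k}\vert \ge G$, rises and falls linearly with slope $\asymp \vert {\bm\Delta}_{\xi, k}\vert_\infty / G$ around $\cp_{\xi, k}$, and attains a peak of size $\vert \bbG^{[k - 1]}(\wh{\bm\beta} - \bm\beta^{[k - 1]}) - \bbG^{\k}(\wh{\bm\beta} - \bm\beta^{\k})\vert_\infty$ at $v = \cp_{\xi, k}$. Taking $\wh{\bm\beta} \approx \bm\beta^{[k - 1]}$ and propagating the estimation error through $\bbG^{\k}(\bbG^{[k - 1]})^{-1}$, this peak is at least $(1 - o(1))\vert {\bm\Delta}_{\xi, k}\vert_\infty$ precisely by Assumption~\ref{assum:idio:size}~\ref{cond:idio:jump}, so it exceeds $\pi_{n,p}$ while the statistic stays below $\pi_{n,p}$ away from change points.

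Third, I would prove part~\ref{thm:idio:one} by induction over the sequential scan. The inductive hypothesis is that $\cp_{\xi, 1}, \ldots, \cp_{\xi, k - 1}$ have been detected within $\epsilon_0 G$ and that the current inspection parameter is $\wh{\bm\beta} = \wh{\bm\beta}_{v_\circ}(G)$ with $v_\circ = \wh\cp_{\xi, k - 1} + (\eta + 1)G$; the choice $\eta > \epsilon_0$ together with the spacing condition~\eqref{eq:min:spacing} guarantees that $I_{v_\circ}(G)$ lies strictly inside segment $k - 1$, so $\wh{\bm\beta} \approx \bm\beta^{[k - 1]}$ and the peak at $\cp_{\xi, k}$ is genuine. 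Combining the uniform bridge of the first step with the population tent, no false alarm occurs before $\cp_{\xi, k}$, the first exceedance $\check\cp$ lands within $G$ to the left of $\cp_{\xi, k}$, and the local maximiser over $\{\check\cp, \ldots, \check\cp + G\}$ is within $\epsilon_0 G$ of $\cp_{\xi, k}$; that exactly $K_\xi$ change points are returned follows since the scan reaches the end of the sample after the last one. For the sharp rate in part~\ref{thm:idio:two}, the separation condition $\{\cp_{\xi, k} - 2G + 1, \ldots, \cp_{\xi, k} + 2G\} \cap \Cp_\chi = \emptyset$ ensures the factor-driven component is stationary across all windows used near $\cp_{\xi, k}$, so no extra bias enters $\wh{\bm\Gamma}_{\xi, v}$. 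Balancing the linear population drift, of order $\vert{\bm\Delta}_{\xi, k}\vert_\infty h / G$ at distance $h$ from the peak, against the fluctuation of the detector increment relative to its value at $\cp_{\xi, k}$ --- which involves only the $O(h)$ observations distinguishing the two pairs of windows and therefore concentrates sharply for small $h$ --- and running a peeling argument over dyadic values of $h$ converts the linear drift into the quadratic dependence $\vert{\bm\Delta}_{\xi, k}\vert_\infty^{-2}$ and produces the stated deviation and $\Vert\bm\beta^{\k}\Vert_1$ factors, yielding $\vert \wh\cp_{\xi, k} - \cp_{\xi, k}\vert \lesssim \varrho^{\k}_{n,p}$.

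I expect the main obstacle to be twofold: controlling the increment of the detector process over short distances sharply enough to produce the $\vert{\bm\Delta}_{\xi, k}\vert_\infty^{-2}$ rate under only $\nu > 4$ moments (requiring a Nagaev/Rosenthal-type inequality adapted to the functional dependence measure, applied uniformly over the $p^2 d$ coordinates after $\ell_1$/$\ell_\infty$ duality), and propagating the randomness of the sequentially updated $\wh{\bm\beta}$ through the $K_\xi$ iterations without error accumulation --- which the uniform, $\wh{\bm\beta}$-agnostic bridge of the first step is designed to circumvent.
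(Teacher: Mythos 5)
Your proposal is correct and follows essentially the same route as the paper's proof: a uniform error bound for the factor-adjusted local ACV estimators (the paper's Proposition~\ref{prop:idio:est}), feasibility/minimality of the $\ell_1$-regularised Yule--Walker estimator giving $\vert \bbG^{\k}(\wh{\bm\beta}_{v}(G)-\bm\beta^{\k})\vert_\infty \le 2\lambda_{n,p}$ and $\Vert \wh{\bm\beta}_{v}(G)\Vert_1 \le \Vert \bm\beta^{\k}\Vert_1$ (Proposition~\ref{prop:idio:beta}), a sequential no-false-alarm/detection/tent-shape induction for part~(a), and a drift-versus-increment-fluctuation contradiction for part~(b) in which the separation from $\Cp_\chi$ makes the factor-adjustment terms cancel so that only the $O(h)$ raw observations distinguishing the shifted windows enter. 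The only immaterial deviations are your superfluous restricted-eigenvalue step (the paper never needs $\vert \wh{\bm\beta}_{v_\circ}(G)-\bm\beta^{\k}\vert_1$ itself, only $\bbG^{\k}(\wh{\bm\beta}_{v_\circ}(G)-\bm\beta^{\k})$ in $\ell_\infty$) and your dyadic peeling over $h$, where the paper instead applies a single concentration bound (Lemma~\ref{lem:Qs:max}) uniformly over all admissible displacements at scale $\delta \asymp \varrho^{\k}_{n,p}$.
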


Due to the sequential nature of FVARseg, the success of Stage~2 is conditional on that of Stage~1 which occurs on an asymptotic one-set, see Theorem~\ref{thm:common}.
Theorem~\ref{thm:idio}~\ref{thm:idio:one} establishes that Stage~2 of FVARseg consistently detects all $K_\xi$ change points within the distance of $\epsilon_0 G$ where 
$\epsilon_0$ can be made arbitrarily small as $n, p \to \infty$ under Assumption~\ref{assum:idio:size}~\ref{cond:idio:jump}.
Theorem~\ref{thm:idio}~\ref{thm:idio:two} shows that a further refined localisation rate can be derived for $\cp_{\xi, k}$ when it is sufficiently distanced away from the change points in the factor-driven component.
If, say, $\cp_{\xi, k}$ lies close to $\cp_{\chi, k'}$, a change point in $\bm\chi_t$, the error from estimating the local ACV of $\bm\xi_t$ due to the bias in $\wh\cp_{\chi, k'}$, prevents applying the arguments involved in the refinement to such $\cp_{\xi, k}$.
The refined rate $\varrho^{\k}_{n, p}$ is always tighter than $G$ under Gaussianity.

It is of independent interest to consider the cases where $\bm\chi_t$ is stationary (i.e.\ $K_\chi = 0$) or where we directly observe the piecewise stationary VAR process (i.e.\ $\mbf X_t = \bm\xi_t$). 
Consistency of the Stage~2 of FVARseg readily extends to such settings and the improved localisation rates in Theorem~\ref{thm:idio}~\ref{thm:idio:two} apply to {\it all} the estimators.
Also, further improvement is attained in the heavy-tailed situations (Assumption~\ref{assum:innov}~\ref{cond:moment}) if $\bm\xi_t$ is directly observable.
For the full statement of the results, we refer to Corollary~\ref{cor:idio} in Appendix~\ref{sec:idio:ext} where we also provide a detailed comparison between Stage~2 of FVARseg and existing VAR segmentation methods (that do not take into the possible presence of factors), both theoretically and numerically.

\section{Empirical results}
\label{sec:numeric}


\subsection{Numerical considerations}
\label{sec:numeric:detail}

\paragraph{Multiscale extension.} 
The bandwidth $G$ is required to be large enough to provide a good local estimators of spectral density of $\bm\chi_t$ (Stage~1) and VAR parameters (Stage~2).
However, if $G$ is too large, we may have windows that contain two or more changes when scanning the data for change points, which violates Assumptions~\ref{assum:common:size}~\ref{cond:common:spacing} and~\ref{assum:idio:size}~\ref{cond:idio:spacing}.
\cite{cho2020} note the lack of adaptivity of a single-bandwidth moving window procedure in the presence of multiscale change points (a mixture of large changes over short intervals
and smaller changes over long intervals), and advocates the use of multiple bandwidths.
Accordingly we also propose to apply FVARseg with a range of bandwidths and prune down the outputs using a `bottom-up' method \citep{messer2014, meier2018}.
Let $\wh\Cp(G)$ denote the output from Stage~1 or~2 with a bandwidth $G$. 
Given a set of bandwidths $\mc G = \{G_h, \, 1 \le h \le H: \, G_1 < \ldots < G_H\}$,
we accept all estimators from the finest $G_1$ to the set of final estimators $\wh\Cp$
and sequentially for $h \ge 2$, accept $\wh\cp \in \wh\Cp(G_h)$ iff $\min_{\check\cp \in \wh\Cp} \vert \wh\cp - \check\cp \vert \ge G/2$. 
In simulation studies, we use $\mc G_\chi = \{ [n/10], [n/8], [n/6], [n/4] \}$ for Stage~1, and $\mc G_\xi$ generated as an equispaced sequence between $[2.5p]$ and $[n/4]$ of length~$4$ for Stage~2.
The choice of $\mc G_\xi$ is motivated by the simulation results of \cite{barigozzi2022fnets}
under the stationarity,  where the $\ell_1$-regularised estimator in~\eqref{eq:ds} was observed to performs well when the sample size exceeds $2p$. 
\vspace{-10pt}

\paragraph{Speeding up Stage~1.}
The computational bottleneck of FVARseg is the computation of $T_{\chi, v}(\omega_l, G)$ in Stage~1, which involves singular value decomposition (SVD) of a $p \times p$-matrix
at multiple frequencies and over time.
We propose to evaluate $T_{\chi, v}(\omega_l, G)$ on a grid $v \in \{ G + a b_n: \, 0 \le a \le \lfloor (n - 2G)/b_n \rfloor\}$ with $b_n = \lfloor 2\log(n) \rfloor$.
This may incur additional bias of at most $b_n/2 \le \log(n)$ in change point location estimation
which is asymptotically negligible in view of Theorem~\ref{thm:common},
but reduce the computational load by the factor of $b_n$.
\vspace{-10pt}

\paragraph{Selection of thresholds.}
The theoretically permitted ranges of $\kappa_{n, p}$ and $\pi_{n, p}$ (see Theorems~\ref{thm:common} and~\ref{thm:idio}) depend on constants which are not accessible or difficult to estimate in practice.
This is an issue commonly encountered by data segmentation methods which involve localised testing, and often a reasonable solution is found by large-scale simulations, an approach we also take.
We use simulations to derive a simple rule for selecting the threshold as a function of $n$, $p$ and $G$.
For this, we (i)~propose a scaling for each of the two detector statistics adopted in Stages~1 and~2 which reduces its dependence on the data generating process, and (ii)~fit a linear model for an appropriate percentile of the scaled detector statistics obtained from simulated datasets.
Specifically, we simulate $B = 100$ time series following~\eqref{eq:model} with $K_\chi = K_\xi = 0$
using the models considered in Section~\ref{sec:sim}, and record the maximum of the scaled detector statistics $T^\circ_{\chi, v}(G)$ and $T^\circ_{\xi, v}(G)$ over $v$ on each realisation.
Here, the scaling terms are obtained from the first $G$ observations only, as

\begin{align*}
T^\circ_{\chi, v}(G) = \max_{0 \le l \le m} \frac{T_{\chi, v}(\omega_l, G)}{T_{\chi, G}(\omega_l, G)} \text{ \ and \ }
T^\circ_{\xi, v}(G) = \frac{T_{\xi, v}(\wh{\bm\beta}_G(G), G)}{\max_{0 \le \ell \le d} \vert \wh{\bm\Gamma}_{\xi, [\frac{G}{2}]}(\ell, [\frac{G}{2}]) - \wh{\bm\Gamma}_{\xi, G}(\ell, [\frac{G}{2}]) \vert_\infty}.
\end{align*}
Generating the data with varying $(n, p, q, d)$ and repeating the above procedure with multple choices of $G$, we fit a linear model to the $100(1 - \tau)$th percentile of
$\log(\max_v T^\circ_{\chi, v}(G))$ with $\log\log(n)$ and $\log(G)$ as regressors ($R^2_{\text{adj}} = 0.9651$),
and use the fitted model to derive a threshold for given $n$ and $G$ that is then applied to the similarly scaled $T^\circ_{\chi, v}(\omega_l, G)$.
Analogously, we regress the $100(1 - \tau)$th percentile of $\log(\max_v T^\circ_{\xi, v}(G))$ onto $\log\log(n)$, $\log\log(p)$ and $\log(G)$ ($R^2_{\text{adj}} = 0.985$), and find a threshold applied to the scaled $T^\circ_{\xi, v}(\wh{\bm\beta}, G)$ given $n$, $p$ and $G$ from the fitted model.
The choice of the regressors is motivated by the definitions of $\psi_n$ and $\psi_{n, p}$ which appear in Theorems~\ref{thm:common} and~\ref{thm:idio}.
The high values of $R^2_{\text{adj}}$ indicate the excellent fit of the linear models and consequently, that the threshold selection rule is insensitive to the data generating processes.
When Stage~2 is used as a standalone method for segmenting observed VAR processes, a smaller threshold is recommended which is in line with Corollary~\ref{cor:idio}, and we find that $\pi_{n, p} = 1$ works well with the proposed scaling.
\vspace{-10pt}

\paragraph{Other tuning parameters.} 
While data-adaptive methods exist for selecting the kernel window size $m$ in~\eqref{eq:local:spec} \citep{politis2003adaptive}, we find that setting it simply at $m = \max(1, \lfloor G^{1/3} \rfloor)$ for given $G$, works well for the purpose of data segmentation.
The results are not highly sensitive to the choice of $\eta$ in Stage~1 and use $\eta = 0.5$ throughout. 
In Stage~2, we find that not trimming off the data when estimating the VAR parameters by setting $\eta = 0$, does not hurt the numerical performance.
In factor-adjustment, we select the segment-specific factor number $q_k$ using the IC-based approach of \cite{hallin2007}.
\cite{krampe2021} propose to jointly select the (static) factor number and the VAR order using an IC but generally, the validity of IC is not well-understood for VAR order selection in high dimensions. 
In our simulations, following the practice in the literature on VAR segmentation, we regard $d$ as known but also investigate the sensitivity of FVARseg when $d$ is mis-specified.
In analysing the panel of daily volatilities (Section~\ref{sec:app}), we use $d = 5$ which has the interpretation of the number of trading days per week.
Finally, we select $\lambda_{n, p}$ in~\eqref{eq:ds} via cross validation as in \cite{barigozzi2022fnets}.

\subsection{Simulation studies}
\label{sec:sim}

In the simulations, we consider the cases when the factor-driven component is present ($\bm\chi_t \ne 0$) and when it is not ($\bm\chi_t = 0$).
For the former, we consider two models for generating $\bm\chi_t$ with $q = 2$.
In the first model, referred to as~\ref{c:one}, $\bm\chi_t$ admits a static factor model representation while in the second model~\ref{c:two}, it does not;
empirically, the task of factor structure estimation is observed to be more challenging under~\ref{c:two} \citep{forni2017dynamic, barigozzi2022fnets}.
We generate $\bm\xi_t$ as piecewise stationary Gaussian VAR($d$) processes with $d \in \{1, 2\}$ and a parameter $\beta$ that controls the size of the change (with smaller $\beta$ indicating the smaller change).
We refer to Appendix~\ref{sec:common:dgp} for the full descriptions of simulation models and Table~\ref{table:sim} for an overview of the $24$ data generating processes which also contains information about the sets of change points $\Cp_\chi$ and $\Cp_\xi$;
under each setting, we generate $100$ realisations.
Below we provide a summary of the findings from the simulation studies, and Tables~\ref{Tab:com1}--\ref{Tab:oracle} reporting the results can be found in Appendix~\ref{sec:sim:res}.

\begin{table}[ht] 
\caption{\small Data generating processes for simulation studies.}
\label{table:sim}
\centering
\resizebox{\columnwidth}{!}{\footnotesize
\begin{tabular}{c cccccc} 
\toprule
& $n$ & $p$ & $\bm\chi_t$ & $(d, \beta)$ & $\Cp_\chi$ & $\Cp_\xi$ \\
\cmidrule{1-1}\cmidrule(lr){2-2}\cmidrule(lr){3-3}\cmidrule(lr){4-4}\cmidrule(lr){5-5}\cmidrule(lr){6-6}\cmidrule(lr){7-7}
(M1) & $2000$ & $50, 100, 150$  & \ref{c:one} & $(1, 1)$ &
$\emptyset$, $\{[n/4], [n/2], [3n/4]\}$ & $\{[3n/8], [5n/8]\}$ \\
(M2) & $2000$ & $50, 100, 150$  & \ref{c:two} & $(1, 1)$ & 
$\emptyset$, $\{[n/3], [2n/3]\}$ & $\{[n/3], [2n/3]\}$ \\
(M3) & $2000$ & $50, 100, 150$ & $\mbf 0$ & $(1, 0.6)$, $(2, 0.8)$ & 
$\emptyset$ & $\emptyset$, $\{[3n/8], [5n/8]\}$ \\
\bottomrule
\end{tabular}} 
\end{table} 
To the best of our knowledge, there does not exist a methodology that comprehensively addresses the change point problem under the model~\eqref{eq:model}.
Therefore under (M1)--(M2), we compare the Stage~1 of FVARseg with a method proposed in \cite{barigozzi2018}, referred to as BCF hereafter, on their performance at detecting changes in $\bm\chi_t$.
While BCF has a step for detecting change points in the remainder component, it does so nonparametically unlike the Stage~2 of FVARseg, which may lead to unfair comparison.
Hence we separately consider (M3) with $\mbf X_t = \bm\xi_t$ where we compare the Stage~2 method with VARDetect \citep{safikhani2020b}, a block-wise variant of \cite{safikhani2020}.
\vspace{-10pt}

\paragraph{Results under (M1)--(M2).} Overall, FVARseg achieves good accuracy in estimating the total number and locations of the change points for both $\bm\chi_t$ and $\bm\xi_t$ across different data generating processes. 
Under (M1) adopting the static factor model for generating $\bm\chi_t$, FVARseg shows similar performance as BCF in detecting $\Cp_\chi$ when the dimension is small ($p = 50$), 
but the latter tends to over-estimate the number of change points as $p$ increases.
Also, FVARseg outperforms the binary segmentation-based BCF
in change point localisation.
BCF requires as an input the upper bound on the number of global factors, say $q'$, that includes the ones attributed to the change points, and its performance is sensitive to its choice.
In (M1), we have $q' \le 3 q (K_\chi + 1)$ (which is supplied to BCF) while in (M2), $\bm\chi^{\k}_t$ does not admit a static factor representation and 
accordingly such $q'$ does not exist (we set $q' = 2q$ for BCF).
Accordingly, BCF tends to under-estimate the number of change points under (M2).
Generally, the task of detecting change points in $\bm\xi_t$ is aggravated by the presence of change points in $\bm\chi_t$ due to the sequential nature of FVARseg, and the Stage~2 performs better when $K_\chi = 0$ both in terms of detection and localisation accuracy, which agrees with the observations made in Corollary~\ref{cor:idio}~\ref{cor:idio:station}.

Between~(M1) and~(M2), the latter poses a more challenging setting for the Stage~2 methodology.
This may be attributed to (i)~the difficulty posed by the data generating scenario~\ref{c:two}, which is observed to make the estimation tasks related to the latent VAR process more difficult \citep{barigozzi2022fnets}, and (ii)~that $\Cp_\chi = \Cp_\xi$ where the estimation bias from Stage~1 has a worse effect on the performance of Stage~2 compared to when $\Cp_\chi$ and $\Cp_\xi$ do not overlap, see the discussion below Theorem~\ref{thm:idio}.
\vspace{-10pt}

\paragraph{Results under (M3).} Table~\ref{Tab:oracle} shows that the Stage~2 of FVARseg outperforms VARDetect in all criteria considered, particularly as $p$ increases.
VARDetect struggles to detect any change point  when the change is weak (recall that $\beta = 0.6$ is used when $d = 1$ which makes the size of change at $\cp_{\xi, 2}$ small) or when $d = 2$.
FVARseg is faster than VARDetect in most situations except for when $(d, p, K_\xi) = (1, 50, 0)$, sometimes more than $10$ times e.g.\ when $d = 2$ and there is no change point in the data. 
Additionally, Stage~2 of FVARseg is insensitive to the over-specification of the VAR order ($d = 2$ is used when in fact $d = 1$). When it is under-specified, there is slight loss of detection power as expected.
Compared to the results obtained under (M1)--(M2), the localisation performance of the Stage~2 method improves in the absence of the factor-driven component, even though the size of changes under (M3) tends to be smaller. 
This confirms the theoretical findings reported in Corollary~\ref{cor:idio}~\ref{cor:idio:no} in Appendix~\ref{sec:idio:ext}.
Although not reported here, when the full FVARseg methodology is applied to the data generated under~(M3), the Stage~1 method does not detect any spurious change point estimators as desired.

\subsection{Application: US blue chip data}
\label{sec:app}

We consider daily stock prices from $p = 72$ US blue chip companies 
across industry sectors
between January 3, 2000 and February 16, 2022 ($n = 5568$ days),
retrieved from the Wharton Research Data Services;
the list of companies and their corresponding sectors can be found in Appendix~\ref{app:ticker}.
Following \cite{diebold2014network}, we measure the volatility using 
$\sigma_{it}^2 = 0.361 (p_{it}^{\text{high}} - p_{it}^{\text{low}})^2$
where $p_{it}^{\text{high}}$ (resp.\ $p_{it}^{\text{low}}$) denotes the maximum (resp.\ minimum) log-price of stock $i$ on day $t$, and set $X_{it} = \log(\sigma_{it}^2)$.

We apply FVARseg to detect change points in the panel of volatility measures $\{X_{it}, \, 1 \le i \le p; \, 1 \le t \le n\}$.
With $n_0 = 252$ denoting the number of trading days per year,
we apply Stage~1 with bandwidths chosen as an equispaced sequence between $[n_0/4]$ and $2n_0$ of length $4$, implicitly setting the minimum distance between two neighbouring change points to be three months.
Based on the empirical sample size requirement for VAR parameter estimation (see Section~\ref{sec:numeric:detail}), we apply Stage~2 with bandwidths chosen as an equispaced sequence between $2.5p$ and $2n_0$ of length~$4$.
The VAR order is set at $d = 5$ which corresponds to the number of trading days in each week,
and the rest of the tuning parameters are selected as in Section~\ref{sec:numeric:detail}.
Table~\ref{table:real} reports the segmentation results.

\begin{table}[h!t!b!]
\caption{\small Sets of change point estimators returned by FVARseg.}
\label{table:real}
\resizebox{\columnwidth}{!}{
\scriptsize
\begin{tabular}{cc cc ccc ccc}
\toprule
\multicolumn{4}{c}{$\wh{\Cp}_\chi$ returned by Stage~1} & \multicolumn{6}{c}{$\wh{\Cp}_\xi$ returned by Stage~2} \\
\cmidrule(lr){1-4} \cmidrule(lr){5-10}
2002-06-06 & 2007-12-10 & 2008-09-12 & 2008-12-16 &
2002-02-04 & 2003-03-18 & 2003-11-25 & 2006-06-07 & 2008-03-17 & 2009-07-07 \\
2009-05-11 & 2020-02-18 & 2020-05-20 & &
2011-07-28 & 2013-05-30 & 2015-06-25 & 2017-10-03 & 2020-02-27 & \\
\bottomrule
\end{tabular}}
\end{table}


Stage~1 detects four change points around the Great Financial Crisis between 2007 and 2009, and the last two estimators from Stage~1 correspond to the onset (2020-02-20) and the end (2020-04-07) of the stock market crash brought in by the instability due to the COVID-19 pandemic.
Given the clustering of change points between 2007 and 2009, an alternative approach is to adopt a locally stationary factor model as in \cite{barigozzi2020}.
However, such a model does not allow for the number of factors to vary over time, whereas we observe the contrary to be the case when applying the IC-based method of \cite{hallin2007} to each segment defined by $\wh{\Cp}_\chi$, see Table~\ref{tab:fn}.
This supports that it is more appropriate to model the changes in the factor-driven component of this dataset as abrupt changes rather than as smooth transitions.

\begin{table}[h!t!b!]
\caption{\small Estimated number of factors $\wh q_k$ from $\{\mbf X_t, \, \wh\cp_{\chi, k} + 1 \le t \le \wh\cp_{\chi, k + 1}\}$, $k = 0, \ldots, 7$.}
\label{tab:fn}
    \centering
    {\scriptsize
    \begin{tabular}{c cccccccc}
    \toprule
     Segment $k$ & 0 & 1 & 2 & 3 & 4 & 5 & 6 & 7 \\
     \cmidrule(lr){1-1} \cmidrule(lr){2-9}
     $\wh q_k$ & 3 & 4 & 2 & 7 & 2 & 5 & 1 & 2 \\
     \bottomrule
    \end{tabular}}
\end{table}
The estimators from Stage~2 are spread across the period in consideration. Figure~\ref{fig:ex}~(c)--(f) illustrate how the linkages between different companies vary over the four segments identified between $2003$ and $2011$ particularly at the level of industrial sectors, although this information is not used by FVARseg.

\begin{table}[h!t!b!]
\caption{\small Mean and standard errors of $\text{FE}^{\text{avg}}_t$ and 
$\text{FE}^{\text{max}}_t$ for $t \in \mc T$ where $\vert \mc T \vert = 1600$.}
\label{table:real:forecast}
\centering
{\footnotesize
\begin{tabular}{cc cc cc}
\toprule										
&	&	\multicolumn{2}{c}{$\text{FE}^{\text{avg}}$} &		\multicolumn{2}{c}{$\text{FE}^{\text{max}}$}		\\	
&	Forecasting method &	Mean &	SE &	Mean &	SE	\\	\cmidrule(lr){1-2} \cmidrule(lr){3-4} \cmidrule(lr){5-6} 
\ref{method:one} &	Restricted &	0.7671 &	0.3729 &	0.9181 &	0.1898	\\	
&	Unrestricted &	0.7746 &	0.4123 &	0.9204 &	0.2007	\\	\cmidrule(lr){1-2} \cmidrule(lr){3-4} \cmidrule(lr){5-6} 
\ref{method:two} &	Restricted &	0.7831 &	0.4011 &	0.9217 &	0.1962	\\	
&	Unrestricted &	0.8138 &	0.4666 &	0.9279 &	0.2008	\\	\bottomrule
\end{tabular}}
\end{table}

To further validate the segmentation obtained by FVARseg, we perform a forecasting exercise.
Two approaches, referred to as~\ref{method:one} and~\ref{method:two} below, are adopted to build forecasting models where the difference lies in how a sub-sample of $\{\mbf X_u, \, u \le t - 1\}$, is chosen to forecast $\mbf X_t$.
Simply put, \ref{method:one} uses the observations belonging to the same segment as~$\mbf X_t$ only,
for constructing the forecast of $\bm\chi_t$ (resp. $\bm\xi_t$) according to the segmentation defined by $\wh\Cp_\chi$ (resp. $\wh\Cp_\xi$),
while \ref{method:two} ignores the presence of the most recent change point estimator.
We expect~\ref{method:one} to give more accurate predictions if the data undergoes structural changes at the detected change points.
On the other hand, if some of the change point estimators are spurious, \ref{method:two} is expected to produce better forecasts since it makes use of more observations.
We select $\mc T$, the set of time points at which to perform forecasting, such that each $t \in \mc T$ does not belong to the first two segments (i.e.\ $t \ge \max(\wh\cp_{\chi, 2}, \wh\cp_{\xi, 2}) + 1$), and there are at least $n_0$ of observations to build a forecast model separately for $\bm\chi_t$ and $\bm\xi_t$, respectively.
Denoting by $\wh L_\chi(v) = \max\{0 \le k \le \wh K_\chi: \, \wh\cp_{\chi, k} + 1 \le v\}$ the index of $\wh\cp_{\chi, k}$ nearest to and strictly left of $v$ and similarly defining $\wh L_\xi(v)$, this means that $\min(\wh L_\chi(t), \wh L_\xi(t)) \ge 2$ and $\min(t - \wh\cp_{\chi, \wh L_\chi(t)}, t - \wh\cp_{\xi, \wh L_\xi(t)}) \ge n_0$ for all $t \in \mc T$. We have $\vert \mc T \vert = 1600$.
For such $t \in \mc T$, we obtain $\wh{\mbf X}_t(\mbf N) = 
\wh{\bm\chi}_t(N_1) + \wh{\bm\xi}_t(N_2)$ for some $\mbf N = (N_1, N_2)$, where $\wh{\bm\chi}_t(N_1)$ denotes an estimator of the best linear predictor of $\bm\chi_t$ given $\mbf X_{t - \ell}, \, 1 \le \ell \le N_1$, and $\wh{\bm\xi}_t(N_2)$ is defined analogously.
The difference between the two approaches we take lies in the selection of $\mbf N$.
\begin{enumerate}[noitemsep, wide, labelwidth=0pt, labelindent=0pt, label = (F\arabic*)]
\item \label{method:one} We set $N_1 = t - \wh{K}_{\chi, \wh L_\chi(t)} - 1$ and $N_2 = t - \wh{K}_{\xi, \wh L_\xi(t)} - 1$.

\item \label{method:two} We set $N_1 = t - \wh{K}_{\chi, \wh L_\chi(t) - 1} - 1$ 
and $N_2 = t - \wh{K}_{\xi, \wh L_\xi(t) - 1} - 1$.
\end{enumerate}
\cite{barigozzi2022fnets} propose two methods for estimating the best linear predictors
of $\bm\chi_t$ and $\bm\xi_t$ under a stationary factor-adjusted VAR model,
one based on a more restrictive assumption on the factor structure (`restricted') than the other (`unrestricted');
we refer to the paper for their detailed descriptions.
Both estimators are combined with the two approaches~\ref{method:one} and~\ref{method:two}.
Table~\ref{table:real:forecast} reports the summary of the forecasting errors measured as
$\text{FE}^{\text{avg}}_t = \vert \mbf X_t - \wh{\mbf X}_t(\mbf N) \vert_2^2 / \vert \mbf X_t \vert_2^2$ and
$\text{FE}^{\text{max}}_t = \vert \mbf X_t - \wh{\mbf X}_t(\mbf N) \vert_\infty /
\vert \mbf X_t \vert_\infty$,
obtained from combining different best linear predictors with~\ref{method:one}--\ref{method:two}.
According to all evaluation criteria, \ref{method:one} produces forecasts
that are more accurate than \ref{method:two} regardless of the forecasting methods,
which supports the validity of the change point estimators returned by FVARseg.

%


\bibliographystyle{apalike}
\bibliography{fbib}

\begin{thebibliography}{}

\bibitem[Bai, 2003]{bai2003}
Bai, J. (2003).
\newblock Inferential theory for factor models of large dimensions.
\newblock {\em Econometrica}, 71:135--171.

\bibitem[Bai et~al., 2021]{bai2021multiple}
Bai, P., Bai, Y., Safikhani, A., and Michailidis, G. (2021).
\newblock Multiple change point detection in structured {VAR} models: the
  {VARDetect R Package}.
\newblock {\em arXiv preprint arXiv:2105.11007}.

\bibitem[Bai et~al., 2020]{bai2020}
Bai, P., Safikhani, A., and Michailidis, G. (2020).
\newblock Multiple change points detection in low rank and sparse high
  dimensional vector autoregressive models.
\newblock {\em IEEE Trans. Signal Process.}, 68:3074--3089.

\bibitem[Bai et~al., 2022]{bai2021}
Bai, P., Safikhani, A., and Michailidis, G. (2022).
\newblock Multiple change point detection in reduced rank high dimensional
  vector autoregressive models.
\newblock {\em J. Amer. Statist. Assoc. (to appear)}.

\bibitem[Barigozzi and Cho, 2020]{barigozzi2020consistent}
Barigozzi, M. and Cho, H. (2020).
\newblock Consistent estimation of high-dimensional factor models when the
  factor number is over-estimated.
\newblock {\em Electron. J. Stat.}, 14:2892--2921.

\bibitem[Barigozzi et~al., 2018]{barigozzi2018}
Barigozzi, M., Cho, H., and Fryzlewicz, P. (2018).
\newblock Simultaneous multiple change-point and factor analysis for
  high-dimensional time series.
\newblock {\em J. Econometrics}, 206:187--225.

\bibitem[Barigozzi et~al., 2022]{barigozzi2022fnets}
Barigozzi, M., Cho, H., and Owens, D. (2022).
\newblock {FNETS}: Factor-adjusted network estimation and forecasting for
  high-dimensional time series.
\newblock {\em arXiv preprint arXiv:2201.06110}.

\bibitem[Barigozzi and Hallin, 2017]{barigozzi2017}
Barigozzi, M. and Hallin, M. (2017).
\newblock A network analysis of the volatility of high dimensional financial
  series.
\newblock {\em J. Roy. Statist. Soc. Ser. C}, 66:581--605.

\bibitem[Barigozzi et~al., 2021]{barigozzi2020}
Barigozzi, M., Hallin, M., Soccorsi, S., and von Sachs, R. (2021).
\newblock Time-varying general dynamic factor models and the measurement of
  financial connectedness.
\newblock {\em J. Econometrics}, 222:324--343.

\bibitem[Basu et~al., 2019]{basu2019}
Basu, S., Li, X., and Michailidis, G. (2019).
\newblock Low rank and structured modeling of high-dimensional vector
  autoregressions.
\newblock {\em IEEE Trans. Signal Process.}, 67:1207--1222.

\bibitem[Basu and Michailidis, 2015]{basu2015}
Basu, S. and Michailidis, G. (2015).
\newblock Regularized estimation in sparse high-dimensional time series models.
\newblock {\em Ann. Statist}, 43:1535--1567.

\bibitem[Cai et~al., 2011]{cai2011}
Cai, T., Liu, W., and Luo, X. (2011).
\newblock A constrained $\ell_1$ minimization approach to sparse precision
  matrix estimation.
\newblock {\em J. Amer. Statist. Assoc.}, 106:594--607.

\bibitem[Chen et~al., 2021]{chen2019}
Chen, L., Wang, W., and Wu, W.~B. (2021).
\newblock Inference of breakpoints in high-dimensional time series.
\newblock {\em J. Amer. Statist. Assoc. (to appear)}.

\bibitem[Cho and Kirch, 2022]{cho2020}
Cho, H. and Kirch, C. (2022).
\newblock Two-stage data segmentation permitting multiscale change points,
  heavy tails and dependence.
\newblock {\em Ann. Inst. Stat. Math.}, 74(4):653--684.

\bibitem[Cule et~al., 2011]{cule2011significance}
Cule, E., Vineis, P., and De~Iorio, M. (2011).
\newblock Significance testing in ridge regression for genetic data.
\newblock {\em BMC Bioinform.}, 12(1):1--15.

\bibitem[Diebold and Y{\i}lmaz, 2014]{diebold2014network}
Diebold, F.~X. and Y{\i}lmaz, K. (2014).
\newblock On the network topology of variance decompositions: Measuring the
  connectedness of financial firms.
\newblock {\em J. Econometrics}, 182:119--134.

\bibitem[Eichinger and Kirch, 2018]{eichinger2018}
Eichinger, B. and Kirch, C. (2018).
\newblock {A MOSUM procedure for the estimation of multiple random change
  points}.
\newblock {\em Bernoulli}, 24:526--564.

\bibitem[Fan et~al., 2020]{fan2020factor}
Fan, J., Ke, Y., and Wang, K. (2020).
\newblock Factor-adjusted regularized model selection.
\newblock {\em J. Econometrics}, 216:71--85.

\bibitem[Fan et~al., 2013]{fan2013large}
Fan, J., Liao, Y., and Mincheva, M. (2013).
\newblock Large covariance estimation by thresholding principal orthogonal
  complements.
\newblock {\em J. R. Stat. Soc. Ser. B Stat. Methodol.}, 75(4):603--680.

\bibitem[Fan et~al., 2021]{fan2021bridging}
Fan, J., Masini, R., and Medeiros, M.~C. (2021).
\newblock Bridging factor and sparse models.
\newblock {\em arXiv preprint arXiv:2102.11341}.

\bibitem[Forni et~al., 2009]{forni2009opening}
Forni, M., Giannone, D., Lippi, M., and Reichlin, L. (2009).
\newblock Opening the black box: Structural factor models with large cross
  sections.
\newblock {\em Econometric Theory}, 25:1319--1347.

\bibitem[Forni et~al., 2000]{FHLR00}
Forni, M., Hallin, M., Lippi, M., and Reichlin, L. (2000).
\newblock The {G}eneralized {D}ynamic {F}actor {M}odel: identification and
  estimation.
\newblock {\em Rev. Econ. Stat.}, 82:540--554.

\bibitem[Forni et~al., 2015]{forni2015}
Forni, M., Hallin, M., Lippi, M., and Zaffaroni, P. (2015).
\newblock Dynamic factor models with infinite-dimensional factor spaces:
  {O}ne-sided representations.
\newblock {\em J. Econometrics}, 185:359--371.

\bibitem[Forni et~al., 2017]{forni2017dynamic}
Forni, M., Hallin, M., Lippi, M., and Zaffaroni, P. (2017).
\newblock Dynamic factor models with infinite-dimensional factor space:
  {A}symptotic analysis.
\newblock {\em J. Econometrics}, 199:74--92.

\bibitem[Forni and Lippi, 2001]{fornilippi01}
Forni, M. and Lippi, M. (2001).
\newblock The {G}eneralized {D}ynamic {F}actor {M}odel: {R}epresentation
  theory.
\newblock {\em Econometric Theory}, 17:1113--1141.

\bibitem[Giannone et~al., 2021]{giannone2021economic}
Giannone, D., Lenza, M., and Primiceri, G.~E. (2021).
\newblock Economic predictions with big data: The illusion of sparsity.
\newblock ECB Working Paper 2542, European Central Bank.

\bibitem[Hallin et~al., 2018]{hallin2018optimal}
Hallin, M., H{\"o}rmann, S., and Lippi, M. (2018).
\newblock Optimal dimension reduction for high-dimensional and functional time
  series.
\newblock {\em Stat. Inference Stoch. Process.}, 21:385--398.

\bibitem[Hallin and Li{\v{s}}ka, 2007]{hallin2007}
Hallin, M. and Li{\v{s}}ka, R. (2007).
\newblock Determining the number of factors in the general dynamic factor
  model.
\newblock {\em J. Amer. Statist. Assoc.}, 102:603--617.

\bibitem[Han et~al., 2015]{han2015direct}
Han, F., Lu, H., and Liu, H. (2015).
\newblock A direct estimation of high dimensional stationary vector
  autoregressions.
\newblock {\em J. Mach. Learn. Res.}, 16:3115--3150.

\bibitem[Horn and Johnson, 1985]{horn1985}
Horn, R.~A. and Johnson, C.~R. (1985).
\newblock {\em Matrix Analysis}.
\newblock Cambridge University Press.

\bibitem[Kirch et~al., 2015]{kirch2015eeg}
Kirch, C., Muhsal, B., and Ombao, H. (2015).
\newblock Detection of changes in multivariate time series with application to
  {EEG} data.
\newblock {\em J. Amer. Statist. Assoc.}, 110:1197--1216.

\bibitem[Krampe and Margaritella, 2021]{krampe2021}
Krampe, J. and Margaritella, L. (2021).
\newblock Dynamic factor models with sparse {VAR} idiosyncratic components.
\newblock {\em arXiv preprint arXiv:2112.07149}.

\bibitem[Li et~al., 2022]{li2019detection}
Li, Y.-N., Li, D., and Fryzlewicz, P. (2022).
\newblock Detection of multiple structural breaks in large covariance matrices.
\newblock {\em J. Bus. Econom. Statist. (to appear)}.

\bibitem[L{\"u}tkepohl, 2005]{lutkepohl2005}
L{\"u}tkepohl, H. (2005).
\newblock {\em New Introduction to Multiple Time Series Analysis}.
\newblock Springer Science \& Business Media.

\bibitem[Maeng et~al., 2022]{maeng2022}
Maeng, H., Eckley, I., and Fearnhead, P. (2022).
\newblock {Collective anomaly detection in High-dimensional VAR Models}.
\newblock {\em Statist. Sinica (to appear)}.

\bibitem[Meier et~al., 2021]{meier2018}
Meier, A., Kirch, C., and Cho, H. (2021).
\newblock mosum: A package for moving sums in change-point analysis.
\newblock {\em J. Stat. Softw.}, 97:1--42.

\bibitem[Messer et~al., 2014]{messer2014}
Messer, M., Kirchner, M., Schiemann, J., Roeper, J., Neininger, R., and
  Schneider, G. (2014).
\newblock A multiple filter test for the detection of rate changes in renewal
  processes with varying variance.
\newblock {\em Ann. Appl. Stat.}, 8:2027--2067.

\bibitem[Michailidis and d'Alch\'{e} Buc, 2013]{michailidis2013autoregressive}
Michailidis, G. and d'Alch\'{e} Buc, F. (2013).
\newblock {Autoregressive models for gene regulatory network inference:
  Sparsity, stability and causality issues}.
\newblock {\em Math. Biosci.}, 246:326--334.

\bibitem[Politis, 2003]{politis2003adaptive}
Politis, D.~N. (2003).
\newblock Adaptive bandwidth choice.
\newblock {\em J. Nonparametr. Stat.}, 15:517--533.

\bibitem[Preuss et~al., 2015]{preuss2015}
Preuss, P., Puchstein, R., and Dette, H. (2015).
\newblock Detection of multiple structural breaks in multivariate time series.
\newblock {\em J. Amer. Statist. Assoc.}, 110:654--668.

\bibitem[Safikhani et~al., 2022]{safikhani2020b}
Safikhani, A., Bai, Y., and Michailidis, G. (2022).
\newblock Fast and scalable algorithm for detection of structural breaks in big
  {VAR} models.
\newblock {\em J. Comput. Graph. Statist.}, 31:176--189.

\bibitem[Safikhani and Shojaie, 2022]{safikhani2020}
Safikhani, A. and Shojaie, A. (2022).
\newblock Joint structural break detection and parameter estimation in
  high-dimensional nonstationary {VAR} models.
\newblock {\em J. Amer. Statist. Assoc.}, 117:251--264.

\bibitem[Shojaie and Michailidis, 2010]{shojaie2010discovering}
Shojaie, A. and Michailidis, G. (2010).
\newblock Discovering graphical granger causality using the truncating lasso
  penalty.
\newblock {\em Bioinformatics}, 26:i517--i523.

\bibitem[Stock and Watson, 2002]{stock2002forecasting}
Stock, J.~H. and Watson, M.~W. (2002).
\newblock Forecasting using principal components from a large number of
  predictors.
\newblock {\em J. Amer. Statist. Assoc.}, 97:1167--1179.

\bibitem[Wang and Tsay, 2022]{wang2021robust}
Wang, D. and Tsay, R.~S. (2022).
\newblock Rate-optimal robust estimation of high-dimensional vector
  autoregressive models.
\newblock {\em arXiv preprint arXiv:2107.11002}.

\bibitem[Wang et~al., 2021]{wang2021optimal}
Wang, D., Yu, Y., and Rinaldo, A. (2021).
\newblock Optimal covariance change point localization in high dimensions.
\newblock {\em Bernoulli}, 27:554--575.

\bibitem[Wang et~al., 2019]{wang2019}
Wang, D., Yu, Y., Rinaldo, A., and Willett, R. (2019).
\newblock Localizing changes in high-dimensional vector autoregressive
  processes.
\newblock {\em arXiv preprint arXiv:1909.06359}.

\bibitem[Wu, 2005]{wu2005}
Wu, W.~B. (2005).
\newblock Nonlinear system theory: Another look at dependence.
\newblock {\em Proceedings of the National Academy of Sciences},
  102:14150--14154.

\bibitem[Yu et~al., 2015]{yu2015}
Yu, Y., Wang, T., and Samworth, R.~J. (2015).
\newblock A useful variant of the {D}avis--{K}ahan theorem for statisticians.
\newblock {\em Biometrika}, 102:315--323.

\bibitem[Zhang and Wu, 2021]{zhang2021}
Zhang, D. and Wu, W.~B. (2021).
\newblock Convergence of covariance and spectral density estimates for
  high-dimensional locally stationary processes.
\newblock {\em Ann. Statist}, 49:233--254.

\end{thebibliography}

\clearpage

\appendix

\numberwithin{equation}{section}
\numberwithin{figure}{section}
\numberwithin{table}{section}

\section{Further discussions on Stage~2 of FVARseg}
\label{sec:idio:ext}

\subsection{Extension of Theorem~\ref{thm:idio}}
\label{sec:thm:idio:ext}

We consider the performance of the Stage~2 of FVARseg when applied to some special cases under the model~\eqref{eq:model}
where \ref{cor:idio:station} $\bm\chi_t$ is stationary (i.e.\ $\Cp_\chi = \emptyset$) and
\ref{cor:idio:no} we directly observe $\mbf X_t = \bm\xi_t$ (i.e.\ $\bm\chi_t = \mbf 0$).
Theorem~\ref{thm:common} indicates that in both cases,
the Stage~1 of FVARseg returns $\wh{\Cp}_\chi = \emptyset$.
The results reported in Theorem~\ref{thm:idio} readily extend to such settings.

\begin{cor}
\label{cor:idio}
Suppose that the assumptions of Theorem~\ref{thm:idio} hold,
including Assumption~\ref{assum:idio:size}~\ref{cond:idio:jump}
with $\lambda_{n, p}$ specified below.
Then, with $\varrho^{\k}_{n, p}$ defined as in Theorem~\ref{thm:idio},
i.e.\ 
\begin{align*}
& \varrho^{\k}_{n, p} = \vert {\bm\Delta}_{\xi, k} \vert_\infty^{-2} \l(1 + \max_{0 \le k \le K_\xi} \Vert \bm\beta^{\k} \Vert_1 \r) \times
\l\{\begin{array}{l}
(G K_\xi p)^{\frac{2}{\nu - 2}} \log^{\frac{3\nu}{\nu - 2}}(p) 
\\
\qquad \text{under Assumption~\ref{assum:innov}~\ref{cond:moment}},
\\
\log(G K_\xi p) \\
\qquad \text{under Assumption~\ref{assum:innov}~\ref{cond:gauss}},
\end{array}\r.
\end{align*}
there exist a set $\mc M^\xi_{n, p}$ with $\p(\mc M^\xi_{n, p}) \to 1$ as $n, p \to \infty$
and constants $\epsilon_0, c_0 > 0$ such that on $\mc M^\xi_{n, p}$,
we have 
\begin{align*}
& \wh K_\xi = K_\xi \quad \text{and} \quad 
\l\vert \wh\cp_{\xi, k} - \cp_{\xi, k} \r\vert \le \min\l(\epsilon_0 G, \, c_0 \varrho^{\k}_{n, p}\r)
\text{ for all } 1 \le k \le K_\xi
\end{align*}
for $n$ large enough, in the following situations.
\begin{enumerate}[label = (\alph*)]
\item \label{cor:idio:station} There is no change point in the factor-driven component, 
i.e.\ $K_\chi = 0$, and we set
\begin{align*}
\lambda_{n, p} = M\l(\max_{0 \le k \le K_\xi} \Vert \bm\beta^{\k} \Vert_1 + 1\r)
\l(\vartheta_{n, p} \vee \frac{1}{m} \vee \frac{1}{\sqrt p}\r).
\end{align*}

\item \label{cor:idio:no} We directly observe the piecewise stationary VAR process, 
i.e.\ $\mbf X_t = \bm\xi_t$ for all $t$, and we set
$\lambda_{n, p} = M(\max_{0 \le k \le K_\xi} \Vert \bm\beta^{\k} \Vert_1 + 1) \bar{\vartheta}_{n, p}$ with
\begin{align}
\bar{\vartheta}_{n, p} = \l\{\begin{array}{ll}
\frac{ (np)^{2/\nu} \log^3(p) \log^{2 + 2/\nu}(G) }{G} \vee 
\sqrt{\frac{\log(np)}{G}} 
& \text{under Assumption~\ref{assum:innov}~\ref{cond:moment}},
\\
\sqrt{\frac{\log(np)}{G}} & \text{under Assumption~\ref{assum:innov}~\ref{cond:gauss}.}
\end{array}\r.
\label{eq:bar:vartheta}
\end{align}
\end{enumerate}
\end{cor}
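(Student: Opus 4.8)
The plan is to recognise that both~\ref{cor:idio:station} and~\ref{cor:idio:no} are the special cases of model~\eqref{eq:model} in which $\Cp_\chi = \emptyset$, so that the qualification $\{\cp_{\xi, k} - 2G + 1, \ldots, \cp_{\xi, k} + 2G\} \cap \Cp_\chi = \emptyset$ of Theorem~\ref{thm:idio}~\ref{thm:idio:two} holds vacuously for every $1 \le k \le K_\xi$. As the preceding discussion notes, Theorem~\ref{thm:common} already ensures $\wh\Cp_\chi = \emptyset$ in both settings on the one-set $\mc M^\chi_{n, p}$. The proof therefore reduces to re-running the argument behind Theorem~\ref{thm:idio} while (i) verifying that the local ACV estimator $\wh{\bm\Gamma}_{\xi, v}(\ell, G)$ of $\bm\xi_t$ obeys a sharper uniform deviation bound than in the general case, and (ii) calibrating $\lambda_{n, p}$ and $\pi_{n, p}$ to that sharper rate. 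Since the refinement step behind part~\ref{thm:idio:two} was obstructed only by the Stage~1 localisation bias entering $\wh{\bm\Gamma}_{\xi, v}(\ell, G)$ near points of $\Cp_\chi$, its absence here lets the refined rate $\varrho^{\k}_{n, p}$ apply to all change points simultaneously.

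For case~\ref{cor:idio:station}, with $K_\chi = 0$ the quantity $\rho_{n, p} = \max_{1 \le k \le K_\chi} \min(\epsilon_0 G, \rho^{\k}_{n, p})$ is a maximum over an empty index set and hence vanishes, so the bound of Theorem~\ref{thm:common:est} collapses to $O_p(\vartheta_{n, p} \vee m^{-1} \vee p^{-1/2})$; up to the factor $\max_{0 \le k \le K_\xi} \Vert \bm\beta^{\k} \Vert_1 + 1$, this is exactly the order of the stated $\lambda_{n, p}$. I would then transcribe the argument of Theorem~\ref{thm:idio}: the population detector $T^*_{\xi, v}(\wh{\bm\beta}, G)$ is tent-shaped with a peak of order $\vert \bm\Delta_{\xi, k} \vert_\infty$ at each $\cp_{\xi, k}$ under Assumption~\ref{assum:idio:size}~\ref{cond:idio:jump}, the feasibility and $\ell_1$-error of the Yule--Walker estimator~\eqref{eq:ds} are controlled by $\lambda_{n, p}$, and the sequential screening then yields $\wh K_\xi = K_\xi$ together with the refined localisation for every $k$.

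For case~\ref{cor:idio:no}, the factor-adjustment step is entirely bypassed: since $\bm\chi_t = \mbf 0$ we have $\wh{\bm\Gamma}_{\xi, v}(\ell, G) = \wh{\bm\Gamma}_{x, v}(\ell, G)$, the raw sample ACV of~\eqref{eq:local:acv} on the window $I_v(G)$. This removes both the Stage~1 bias and the kernel-smoothing bias and variance, and the key new ingredient is the deviation bound
\begin{align*}
\max_{G \le v \le n} \max_{0 \le \ell \le d} \l\vert \wh{\bm\Gamma}_{x, v}(\ell, G) - \bm\Gamma_{\xi, v}(\ell, G) \r\vert_\infty = O_p(\bar{\vartheta}_{n, p}),
\end{align*}
where $\bm\Gamma_{\xi, v}(\ell, G)$ is the weighted-average population ACV of $\bm\xi_t$ over $I_v(G)$. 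Under Assumption~\ref{assum:innov}~\ref{cond:gauss} this follows from a Hanson--Wright-type concentration bound for the bilinear forms $\wh{\bm\Gamma}_{x, v}(\ell, G)$, together with a union bound over the $O(np\,d)$ entries and $O(n)$ windows, giving the variance order $\sqrt{\log(np)/G}$; under Assumption~\ref{assum:innov}~\ref{cond:moment} I would use truncation combined with a Nagaev-type inequality for the algebraically decaying functional-dependence-measure sequence supplied by Assumption~\ref{assum:idio}~\ref{cond:idio:coef}, which produces the heavy-tailed term $(np)^{2/\nu}\log^3(p)\log^{2 + 2/\nu}(G)/G$. Both expressions are free of the bandwidth $m$ precisely because no spectral smoothing and inverse transform are performed, which is what sharpens $\vartheta_{n, p}$ to $\bar{\vartheta}_{n, p}$. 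Setting $\lambda_{n, p} = M(\max_{0 \le k \le K_\xi} \Vert \bm\beta^{\k} \Vert_1 + 1)\bar{\vartheta}_{n, p}$ and repeating the Theorem~\ref{thm:idio} argument on the intersection of $\mc M^\chi_{n, p}$ with this deviation one-set delivers the conclusion.

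I expect the main obstacle to be the heavy-tailed deviation bound $O_p(\bar{\vartheta}_{n, p})$ in case~\ref{cor:idio:no}, since it requires a truncation-and-tail analysis that tracks the joint dependence on $n$, $p$, $G$ and the moment order $\nu$ uniformly over all windows and lags. The analogous smoothed bound is already available from the proof of Theorem~\ref{thm:common:est}, so I would adapt that derivation with the kernel-induced factor removed. The remaining steps — the vacuity of the $\Cp_\chi$-distance condition, the vanishing of $\rho_{n, p}$ in case~\ref{cor:idio:station}, and the tent-shape and screening mechanics — are direct specialisations of Theorem~\ref{thm:idio} and carry over without essential change.
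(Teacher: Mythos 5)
Your proposal is correct and takes essentially the same route as the paper: both reduce the corollary to a re-run of the proof of Theorem~\ref{thm:idio} after noting that Stage~1 returns $\wh\Cp_\chi = \emptyset$ (so $\rho_{n,p} = 0$ in case~\ref{cor:idio:station} and the refinement applies to every $k$), with the deviation event for $\wh{\bm\Gamma}_{\xi, v}(\ell, G)$ replaced by one at the correspondingly sharper rate. The unsmoothed ACV bound you propose to establish for case~\ref{cor:idio:no} is precisely the paper's Lemma~\ref{lem:acv:x:max}, already proved in its appendix via the Nagaev-type and Gaussian concentration results of \cite{zhang2021} under the functional dependence measure, so the paper simply re-invokes it where you would re-derive it with equivalent tools.
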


When compared to the methods dedicated to the setting 
corresponding to Corollary~\ref{cor:idio}~\ref{cor:idio:no},
our Stage~2 methodology achieves comparative theoretical performance
in terms of the detection lower bound imposed on the size of changes 
for their detection, and the rate of localisation achieved.
We provide a comprehensive comparison of the Stage~2 methodology
with the existing VAR segmentation methods in the next section,
both on their theoretical and computational properties. 

\subsection{Comparison with the existing VAR segmentation methods}
\label{sec:comparison}

There are a few methods proposed for time series segmentation 
under piecewise stationary, Gaussian VAR models,
a setting that corresponds to Corollary~\ref{cor:idio}~\ref{cor:idio:no} under Gaussianity.
In this setting, we compare the Stage~2 of FVARseg with 
those proposed by \cite{wang2019} and \cite{safikhani2020}.
\begin{table}[htb!]
\caption{\small Comparison of change point methods developed under piecewise stationary VAR models
on their theoretical performance (under Gaussianity) and computational complexity.
Here, $\mathfrak{g} = \max_{1 \le k \le K_\xi} \Vert \bbG^{\k}(\bbG^{[k - 1]})^{-1} \Vert_1$,
$s_\circ = \vert \mc S \vert$ denotes the global sparsity defined 
with $\mc S \in \{1, \ldots, pd\} \times \{1, \ldots, p\}$ where
$[\bm\beta^{\k}]_{i, i'} = 0$ iff $(i, i') \notin \mc S$ for all $k$.
LP($a$) denotes the complexity of solving a linear program with $a$ variables
and Lasso($a$, $b$) that of solving a Lasso problem
with sample size $a$ and dimensionality $b$.}
\label{table:comparison}
\centering
{\small \begin{tabular}{l  c c c}
\toprule
Methods & Separation & Localisation & Complexity \\
\cmidrule(lr){1-1}\cmidrule(lr){2-2}\cmidrule(lr){3-3}\cmidrule(lr){4-4}
Stage~2 of FVARseg & $(1 \vee \mathfrak{g})^2 \log(n\vee p)$ & $\log(n \vee p)$
& $O(G^{-1} n p \;\text{LP}(pd) + np^2)$ \\
\cite{wang2019} & $s_\circ K_\xi \log(n\vee p)$ & $\log(n\vee p)$ & $O(n^2 p \; \text{Lasso}(n, pd))$ \\
\cite{safikhani2020} & $s_\circ^4 K_\xi^2 \log(p)$ & $s_\circ^4 K_\xi \log(p)$ & Not available \\ 
\bottomrule
\end{tabular}}
\end{table}

Table~\ref{table:comparison} summarises the comparative study
in terms of their theoretical and computational properties.
Denoting by $\bar{\Delta}_k$ the size of change between the $(k - 1)$th and the $k$th segments
(measured differently for different methods),
the separation rate refers to some $\nu_{n, p} \to \infty$ such that if 
$\nu_{n, p}^{-1} \; \min_k \bar{\Delta}_k^2 \cdot \min_k (\cp_{\xi, k + 1} - \cp_{\xi, k}) \to \infty$, 
the corresponding method correctly detects all $K_\xi$ change points;
for Stage~2, we set $\bar{\Delta}_k =  \vert {\bm\Delta}_{\xi, k} \vert_\infty$
and for the others, $\bar{\Delta}_k = s_\circ^{-1/2} \vert \bm\beta^{\k} - \bm\beta^{[k - 1]} \vert_2$ (see the caption of Table~\ref{table:comparison} for the definition of $s_\circ$).
The localisation rate refers to some $\varrho_{n, p} \to \infty$ satisfying
$\max_{1 \le k \le K_\xi} w_k^2 \vert \bar{\cp}_k - \cp_{\xi, k} \vert = O_P(\varrho_{n, p})$
for the estimators $\bar{\cp}_k$ returned by respective methods.
For FVARseg, the weights $w_k$ reflect the difficulty associated with locating individual change points, i.e.\ $w_k = \bar{\Delta}_k$,
while for \cite{wang2019} and \cite{safikhani2020}, the weights are global
with $w_k = \min_k \bar{\Delta}_k$ and $w_k = 1$, respectively.
\cite{safikhani2020} further assume that
$\min_k \vert \bm\beta^{\k} - \bm\beta^{[k - 1]} \vert_2$ is bounded away from zero.
We suppose that $\max_k \Vert \bm\beta^{\k} \Vert_1 = O(1)$,
a sufficient condition for the boundedness of $\Vert \bm\Sigma^{\k}_\xi(\omega) \Vert$ for each segment-specific VAR process \citep[Proposition~2.2]{basu2015},
which is required by all the methods in consideration for their theoretical consistency.

Immediate comparison of the theoretical results is difficult due to different definitions of $\bar{\Delta}_k$: Observe that
\begin{align*}
\vert {\bm\Delta}_{\xi, k} \vert_0 \cdot \vert {\bm\Delta}_{\xi, k} \vert_\infty^2
\ge \vert {\bm\Delta}_{\xi, k} \vert_2^2
\ge (2\pi m_\xi)^2 \vert \bm\beta^{\k} - \bm\beta^{[k - 1]} \vert_2^2
\end{align*}
from Assumption~\ref{assum:idio}~\ref{cond:idio:minspec}, where $\vert \cdot \vert_0$ denotes the element-wise $\ell_0$-norm.
Noting that $\vert \bm\Delta_{\xi, k} \vert_0 = \vert \bm\beta^{\k} - \bm\beta^{[k - 1]} \vert_0$, the requirement of Stage~2 of FVARseg may be stronger than that made in \cite{wang2019} if $s_\circ \asymp \vert \bm\beta^{\k} - \bm\beta^{[k - 1]} \vert_0$.
On the other hand, we can have $s_\circ$ much greater than $\vert \bm\beta^{\k} - \bm\beta^{[k - 1]} \vert_0$ if $K_\xi$ is large or when the sparsity pattern of $\bm\beta^{\k}$ varies greatly from one segment to another.
The method proposed by \cite{safikhani2020} is generally worse than the other two both in terms of separation and localisation rates.

The $\ell_1$-regularised Yule-Walker estimation problem in~\eqref{eq:ds}
can be solved in parallel and further, it needs to be performed only $K_\xi + 1$ times with large probability,
which makes the Stage~2 methodology more attractive.
By comparison, the dynamic programming methodology of \cite{wang2019} requires the Lasso estimation to be performed $O(n^2)$ times,
and the multi-stage procedure of \cite{safikhani2020} solves a fused Lasso problem of dimension $np^2d$ to obtain pre-estimators of the change points,
and then exhaustively searches for the final set of estimators which can be NP-hard in the worst case.
In Section~\ref{sec:sim}, we compare the Stage~2 methodology with a blockwise modification of \cite{safikhani2020} that is implemented in the R package {\tt VARDetect} \citep{bai2021multiple}.

Finally, we note that there are methods developed under piecewise stationary extensions
of the low-rank plus sparse VAR($1$) model proposed in \cite{basu2019}, see \cite{bai2021}.
While they additionally permit a low rank structure in the parameter matrices,
the spectrum of $\mbf X_t$ is assumed to be uniformly bounded which rules out
pervasive (serial) correlations in the data
and thus is distinguished from the piecewise stationary factor-adjusted VAR model considered in this paper.

\clearpage

\section{Further information on simulation studies}

\subsection{Data generating processes}
\label{sec:common:dgp}

We provide full details on how the data is generated for numerical experiments reported in Section~\ref{sec:sim}.
Firstly, the factor-driven component $\bm\chi_t$ is generated according to the following two models.
\begin{enumerate}[label = (C\arabic*)]
\item \label{c:one} $\bm\chi^{\k}_t$ admits a static factor model representation, as
\begin{align*}
\chi^{[k]}_{it} = \sum_{j = 1}^q (B^{\k}_{0, ij} + B^{\k}_{1, ij} L + B^{\k}_{2, ij} L^2) u_{jt}, \quad 0 \leq k \leq K_\chi,
\end{align*}
where $u_{jt} \sim_{\iid} \mc N(0, \sigma_j^2)$ 
with $(\sigma_1, \sigma_2) = (1, 0.5)$, 
and the MA coefficients are generated as
$(B^{\k}_{0, ij}, B^{\k}_{1, ij}, B^{\k}_{2, ij}) \sim_{\iid} \mc N_3(\mbf 0, \mbf I)$ 
for all $1 \le i \le p$ and $1 \le j \le q$ when $k = 0$.
Then sequentially for $k = 1, \ldots, K_\chi$, 
we draw $\Pi^{\k}_\chi \subset \{1, \ldots, p\}$ with $\vert \Pi^{\k}_\chi \vert = [0.5 p]$
such that for all $j$,
$(B^{\k}_{0, ij}, B^{\k}_{1, ij}, B^{\k}_{2, ij}) \sim_{\iid} \mc N_3(\mbf 0, \mbf I)$
when $i \in \Pi^{\k}_\chi$ while
$(B^{\k}_{0, ij}, B^{\k}_{1, ij}, B^{\k}_{2, ij}) 
= (B^{[k - 1]}_{0, ij}, B^{[k - 1]}_{1, ij}, B^{[k - 1]}_{2, ij})$
when $i \notin \Pi^{\k}_\chi$.

\item \label{c:two} $\bm\chi^{\k}_t$ does not admit a static factor model representation, as
\begin{align*}
  \chi^{[k]}_{it} = \sum_{j=1}^q \big\{a_{ij} (1-\alpha^{\k}_{ij}L)^{-1} \big\} u_{jt}, \quad 0 \leq k \leq K_\chi,
\end{align*}
where $u_{jt} \sim_{\iid} \mc N(0, 1)$ and
the coefficients $a_{ij}$ are drawn uniformly as $a_{ij} \sim_{\iid} \mc U[-1, 1]$ with $\mc U[a, b]$ denoting a uniform distribution.
The AR coefficients are generated as 
$\alpha^{\k}_{ij} \sim_{\iid} \mc U[-0.8, 0.8]$ when $k = 0$
and then sequentially for $k = 1, \ldots, K_\chi$, 
we draw $\Pi^{\k}_\chi \subset \{1, \ldots, p\}$ with $\vert \Pi^{\k}_\chi \vert = [0.5 p]$
such that for all $j$, we have
$\alpha^{\k}_{ij} = -\alpha^{[k - 1]}_{ij}$ when $i \in \Pi^{\k}_\chi$
and $\alpha^{\k}_{ij} = \alpha^{[k - 1]}_{ij}$ when $i \notin \Pi^{\k}_\chi$.
\end{enumerate}

For generating the piecewise stationary VAR($d$) process $\bm\xi_t$, we consider $\bm\Gamma^{\k} = \mbf I$, $\bm\vep_t \sim_{\iid} \mc N_p(\mbf 0, \mbf I)$ and $d \in \{1, 2\}$.
When $d = 1$,  we generate $\mc N = (\mc V, \mc E)$, a directed Erd\"{o}s-R\'{e}nyi random graph on the vertex set $\mc V = \{1, \ldots, p\}$ with the link probability $1/p$, set the entries of $\bm A_{1}^{[0]}$ as $A_{1, ii'}^{[0]} = 0.4$ if $(i, i') \in \mc E$ and $A_{1, ii'}^{[0]} = 0$ otherwise, then rescale it such that $\Vert \bm A_{1}^{[0]} \Vert = 1$.
When $d = 2$, we rescale the thus-generated $\bm A_{1}^{[0]}$ to have $\Vert \bm A_{1}^{[0]} \Vert = 0.5$ and similarly generate $\mbf A_2^{[0]}$ with $\Vert \bm A_{2}^{[0]} \Vert = 0.5$.
Then sequentially for $k = 1, \ldots, K_\xi$, we set $\bm A_{\ell}^{\k} = - \beta^k \bm A_{\ell}^{[k-1]}$ for $1 \le \ell \le d$ and some $\beta \in (0, 1]$.

\subsection{Complete simulation results}
\label{sec:sim:res}

Tables~\ref{Tab:com1} and~\ref{Tab:oracle} report the complete results obtained for the simulation studies described in Section~\ref{sec:sim}.
In particular, Table~\ref{Tab:com1} compares the performance of FVARseg against BCF \citep{barigozzi2018} on datasets generated as in (M1)--(M2) of Table~\ref{table:sim},
and Table~\ref{Tab:oracle} compares the Stage~2 methodology of FVARseg (i.e.\ Algorithm~\ref{alg:two} applied with $\wh{\Cp}_\chi = \emptyset$ and $\wh{\bm\Gamma}_{\xi, v}(\ell, G) = \wh{\bm\Gamma}_{x, v}(\ell, G)$), against VARDetect \citep{safikhani2020, bai2021multiple} on datasets generated under (M3) in Table~\ref{table:sim}.
All tuning parameters are selected as described in Section~\ref{sec:numeric:detail}.

Denoting by $\wh\Cp$ and $\Cp$ the sets of estimated and true change points, respectively, we report the distributions of $\wh K - K$ (with $\wh K = \vert \wh\Cp \vert$ and $K = \vert\Cp \vert$) and the (scaled) Hausdorff distance between $\wh\Cp$ and $\Cp$,
\begin{align}
\label{eq:hausdorff}
d_H(\wh\Cp, \Cp) = \frac{1}{n} \max\l\{
\max_{\wh\cp \in \wh\Cp} \min_{\cp \in \Cp} \vert \wh\cp - \cp \vert,
\max_{\cp \in \Cp} \min_{\wh\cp \in \wh\Cp} \vert \wh\cp - \cp \vert
\r\}
\end{align}
averaged over $100$ realisations, as well as the average computation time (in seconds) in Table~\ref{Tab:oracle}.

\begin{table}[htbp]
\caption{\small (M1)--(M2): Distributions of $\wh{K}_\chi - K_\chi$ and $\wh{K}_\xi - K_\xi$ 
and the average Hausdorff distance $d_H(\wh\Cp_\chi, \Cp_\chi)$ 
and $d_H(\wh\Cp_\xi, \Cp_\xi)$
returned by FVARseg and BCF \citep{barigozzi2018},
over $100$ realisations. 
We have $K_\xi = 2$ under both (M1) and (M2).}
\label{Tab:com1}
\centering
{ \small
\resizebox{\columnwidth}{!}{
\begin{tabular}{c c c c ccccc ccccc cc}
\toprule
&&& &  \multicolumn{5}{c}{$\wh{K}_\chi - K_\chi$} & \multicolumn{5}{c}{$\wh{K}_\xi - K_\xi$} & \multicolumn{2}{c}{$d_H$} \\ 
& $p$ & $K_\chi$ & Method & $\leq -2$ & $-1$ &  \textbf{0} & 1 & $\geq 2$ & $\leq -2$ & $-1$ &  \textbf{0} & 1 & $\geq 2$ & $\chi$ & $\xi$ \\ 
\cmidrule(lr){1-4}\cmidrule(lr){5-9} \cmidrule(lr){10-14}\cmidrule(lr){15-16}
\multirow{12}{*}{(M1)} & \multirow{4}{*}{$50$}
& \multirow{2}{*}{$0$} & FVARseg & 0 & 0 & \textbf{100} & 0 & 0 & 0 & 0 & \textbf{98} & 2 & 0 & 0.000 & 0.018 \\ 
& & & BCF & 0 & 0 & \textbf{95} & 5 & 0 &  &  &  &  &  & 0.007 &  \\ 
\cmidrule(lr){3-4}\cmidrule(lr){5-9} \cmidrule(lr){10-14}\cmidrule(lr){15-16}

& & \multirow{2}{*}{$3$} & FVARseg &  5 & 15 & \textbf{80} & 0 & 0 & 0 & 6 & \textbf{86} & 8 & 0 & 0.057 & 0.049 \\ 
& && BCF &  0 & 0 & \textbf{91} & 8 & 1 &  &  &  &  &  &  0.010 &  \\ 
\cmidrule(lr){2-4}\cmidrule(lr){5-9} \cmidrule(lr){10-14}\cmidrule(lr){15-16}

& \multirow{4}{*}{$100$}
& \multirow{2}{*}{$0$} & FVARseg & 0 & 0 & \textbf{100} & 0 & 0 & 0 & 0 & \textbf{100} & 0 & 0 & 0.000 & 0.018 \\ 
& & & BCF & 0 & 0 & \textbf{95} & 4 & 1 &  &  &  &  &  & 0.012 &  \\ 
\cmidrule(lr){3-4}\cmidrule(lr){5-9} \cmidrule(lr){10-14}\cmidrule(lr){15-16}

& & \multirow{2}{*}{$3$} & FVARseg & 4 & 10 & \textbf{86} & 0 & 0 & 0 & 2 & \textbf{90} & 6 & 2 & 0.041 & 0.032 \\ 
& & & BCF & 0 & 0 & \textbf{50} & 30 & 20 &  &  &  &  &  & 0.040 &  \\ 
\cmidrule(lr){2-4}\cmidrule(lr){5-9} \cmidrule(lr){10-14}\cmidrule(lr){15-16}

& \multirow{4}{*}{$150$}
& \multirow{2}{*}{$0$} & FVARseg &  0 & 0 & \textbf{100} & 0 & 0 & 0 & 0 & \textbf{100} & 0 & 0 & 0.000 & 0.018 \\ 
& & & BCF &  0 & 0 & \textbf{92} & 7 & 1 &  &  &  &  &  & 0.011 &  \\ 
\cmidrule(lr){3-4}\cmidrule(lr){5-9} \cmidrule(lr){10-14}\cmidrule(lr){15-16}

& & \multirow{2}{*}{$3$} & FVARseg & 4 & 10 & \textbf{86} & 0 & 0 & 0 & 2 & \textbf{93} & 5 & 0 & 0.046 & 0.030 \\ 
& & & BCF & 0 & 0 & \textbf{33} & 28 & 39 &  &  &  &  &  & 0.056 &  \\ 

\cmidrule(lr){1-4}\cmidrule(lr){5-9} \cmidrule(lr){10-14}\cmidrule(lr){15-16}
\multirow{12}{*}{(M2)} & \multirow{4}{*}{$50$}
& \multirow{2}{*}{$0$} & FVARseg & 0 & 0 & \textbf{98} & 2 & 0 & 1 & 2 & \textbf{81} & 14 & 2 & 0.006 & 0.054 \\ 
& & & BCF & 0 & 0 & \textbf{94} & 5 & 1 &  &  &  &  &  & 0.012 &  \\ 
\cmidrule(lr){3-4}\cmidrule(lr){5-9} \cmidrule(lr){10-14}\cmidrule(lr){15-16}

& & \multirow{2}{*}{$2$} & FVARseg &  0 & 0 & \textbf{99} & 1 & 0 & 7 & 20 & \textbf{57} & 12 & 4 & 0.006 & 0.141 \\ 
& && BCF &  0 & 1 & \textbf{91} & 8 & 0 &  &  &  &  &  &  0.013 &  \\ 
\cmidrule(lr){2-4}\cmidrule(lr){5-9} \cmidrule(lr){10-14}\cmidrule(lr){15-16}

& \multirow{4}{*}{$100$}
& \multirow{2}{*}{$0$} & FVARseg & 0 & 0 & \textbf{99} & 1 & 0 & 0 & 4 & \textbf{87} & 8 & 1 & 0.003 & 0.044 \\
& & & BCF & 0 & 0 & \textbf{94} & 5 & 1 &  &  &  &  &  & 0.009 &  \\ 
\cmidrule(lr){3-4}\cmidrule(lr){5-9} \cmidrule(lr){10-14}\cmidrule(lr){15-16}

& & \multirow{2}{*}{$2$} & FVARseg & 0 & 0 & \textbf{99} & 1 & 0 & 3 & 10 & \textbf{67} & 20 & 0 & 0.004 & 0.101 \\ 
& & & BCF & 0 & 0 & \textbf{90} & 10 & 0 &  &  &  &  &  & 0.013 &  \\ 
\cmidrule(lr){2-4}\cmidrule(lr){5-9} \cmidrule(lr){10-14}\cmidrule(lr){15-16}

& \multirow{4}{*}{$150$}
& \multirow{2}{*}{$0$} & FVARseg & 0 & 0 & \textbf{99} & 0 & 1 & 0 & 4 & \textbf{91} & 5 & 0 & 0.003 & 0.042 \\ 
& & & BCF & 0 & 0 & \textbf{92} & 7 & 1 &  &  &  &  &  & 0.009 &  \\ 
\cmidrule(lr){3-4}\cmidrule(lr){5-9} \cmidrule(lr){10-14}\cmidrule(lr){15-16}

& & \multirow{2}{*}{$2$} & FVARseg & 0 & 0 & \textbf{100} & 0 & 0 & 3 & 9 & \textbf{78} & 9 & 1 & 0.005 & 0.086 \\ 
& & & BCF & 0 & 0 & \textbf{91} & 8 & 1 &  &  &  &  &  & 0.011 &  \\ 
\bottomrule
\end{tabular}
}}
\end{table}

\begin{table}[htbp]
\caption{\small (M3): Distribution of $\wh{K}_\xi - K_\xi$ 
and the average Hausdorff distance $d_H(\wh\Cp_\xi, \Cp_\xi)$ 
returned by the Stage~2 of FVARseg and VARDetect \citep{bai2021multiple}, over $100$ realisations. 
We also report the average computation time (in seconds) from $10$ cores of Apple M1 Max with $16$GB of RAM on mac OS.} 
\label{Tab:oracle}
\centering
{\footnotesize
\begin{tabular}{cccc ccccc c c}
\toprule
&&& &   \multicolumn{5}{c}{$\wh{K}_\xi - K_\xi$} & & \\ 
$d$ & $p$ & $K_\xi$ & Method & $\leq -2$ & $-1$ &  \textbf{0} & 1 & $\geq 2$ & $d_H$ & time \\ 
\cmidrule(lr){1-4} \cmidrule(lr){5-9} \cmidrule(lr){10-11}
\multirow{18}{*}{1} & \multirow{6}{*}{$50$}
& \multirow{3}{*}{$0$} & FVARseg ($d = 1$) & 0 & 0 & \textbf{99} & 1 & 0 & 0.001 & 11.14 \\ 
& &  & FVARseg ($d = 2$) & 0 & 0 & \textbf{95} & 5 & 0 & 0.013 & 12.92 \\  
& & & VARDetect & 0 & 0 & \textbf{95} & 1 & 4 & 0.015 & 7.11 \\ 
\cmidrule(lr){3-4} \cmidrule(lr){5-9} \cmidrule(lr){10-11}
& & \multirow{3}{*}{$2$} & FVARseg ($d = 1$) & 0 & 0 & \textbf{98} & 2 & 0 & 0.012 & 20.20 \\ 
& &  & FVARseg ($d = 2$) & 0 & 0 & \textbf{91} & 8 & 1 & 0.019 & 22.55 \\  
& & & VARDetect & 59 & 33 & \textbf{6} & 2 & 0 & 0.307 & 16.50 \\ 
\cmidrule(lr){2-4} \cmidrule(lr){5-9} \cmidrule(lr){10-11}
& \multirow{6}{*}{$100$}
& \multirow{3}{*}{$0$} & FVARseg ($d = 1$) & 0 & 0 & \textbf{100} & 0 & 0 & 0.00 & 26.98 \\ 
& &  & FVARseg ($d = 2$) & 0 & 0 & \textbf{100} & 0 & 0 & 0.00 & 37.96 \\ 
& & & VARDetect & 0 & 0 & \textbf{89} & 5 & 6 & 0.03 & 78.87 \\ 
\cmidrule(lr){3-4} \cmidrule(lr){5-9} \cmidrule(lr){10-11}
& & \multirow{3}{*}{$2$} & FVARseg ($d = 1$) & 0 & 1 & \textbf{98} & 1 & 0 & 0.013 & 47.71 \\ 
& &  & FVARseg ($d = 2$) & 0 & 1 & \textbf{98} & 1 & 0 & 0.015 & 62.78 \\ 
& &  & VARDetect & 90 & 9 & \textbf{0} & 0 & 1 & 0.362 & 96.08 \\ 
\cmidrule(lr){2-4} \cmidrule(lr){5-9} \cmidrule(lr){10-11}
& \multirow{6}{*}{$150$}
& \multirow{3}{*}{$0$} & FVARseg ($d = 1$) & 0 & 0 & \textbf{100} & 0 & 0 & 0.00 & 55.82 \\ 
& &  & FVARseg ($d = 2$) & 0 & 0 & \textbf{100} & 0 & 0 & 0.00 & 87.33 \\   
& &  & VARDetect & 0 & 0 & \textbf{88} & 6 & 6 & 0.036 & 305.95 \\ 
\cmidrule(lr){3-4} \cmidrule(lr){5-9} \cmidrule(lr){10-11}
& & \multirow{3}{*}{$2$} & FVARseg ($d = 1$) & 0 & 1 & \textbf{98} & 1 & 0 & 0.014 & 98.96 \\ 
& &  & FVARseg ($d = 2$) & 0 & 2 & \textbf{97} & 1 & 0 & 0.017 & 146.40 \\  
& &  & VARDetect & 89 & 11 & \textbf{0} & 0 & 0 & 0.361 & 335.79 \\ 
\bottomrule
\multirow{18}{*}{2} & \multirow{6}{*}{$50$}
& \multirow{3}{*}{$0$} & FVARseg ($d = 2$) & 0 & 0 & \textbf{82} & 18 & 0 & 0.054 & 13.52 \\ 
& &  & FVARseg ($d = 1$) & 0 & 0 & \textbf{92} & 8 & 0 & 0.026 & 11.18 \\  
& &  & VARDetect & 0 & 0 & \textbf{96} & 2 & 2 & 0.012 & 37.26 \\ 
\cmidrule(lr){3-4} \cmidrule(lr){5-9} \cmidrule(lr){10-11}
& & \multirow{3}{*}{$2$} & FVARseg ($d = 2$) & 0 & 4 & \textbf{88} & 8 & 0 & 0.031 & 20.90 \\ 
& &  &  FVARseg ($d = 1$) & 0 & 10 & \textbf{86} & 4 & 0 & 0.045 & 17.56 \\ 
& &  & VARDetect & 81 & 6 & \textbf{9} & 1 & 3 & 0.326 & 30.31 \\ 
\cmidrule(lr){2-4} \cmidrule(lr){5-9} \cmidrule(lr){10-11}
& \multirow{6}{*}{$100$}
& \multirow{3}{*}{$0$} & FVARseg ($d = 2$) & 0 & 0 & \textbf{98} & 2 & 0 & 0.008 & 37.96 \\ 
& &  &  FVARseg ($d = 1$) & 0 & 0 & \textbf{97} & 3 & 0 & 0.009 & 27.05 \\  
& &  & VARDetect & 0 & 0 & \textbf{90} & 8 & 2 & 0.021 & 334.24 \\
\cmidrule(lr){3-4} \cmidrule(lr){5-9} \cmidrule(lr){10-11}
& & \multirow{3}{*}{$2$} & FVARseg ($d = 2$) & 0 & 12 & \textbf{87} & 1 & 0 & 0.044 & 57.77 \\ 
& &  & FVARseg ($d = 1$) & 0 & 27 & \textbf{73} & 0 & 0 & 0.08 & 39.57 \\  
& &  & VARDetect & 95 & 1 & \textbf{3} & 0 & 1 & 0.365 & 137.59 \\ 
\cmidrule(lr){2-4} \cmidrule(lr){5-9} \cmidrule(lr){10-11}
& \multirow{6}{*}{$150$}
& \multirow{3}{*}{$0$} & FVARseg ($d = 2$) & 0 & 0 & \textbf{97} & 3 & 0 & 0.01 & 89.06 \\ 
& &  & FVARseg ($d = 1$) & 0 & 0 & \textbf{100} & 0 & 0 & 0.00 & 56.14 \\  
& &  & VARDetect & 0 & 0 & \textbf{93} & 3 & 4 & 0.016 & 1063.33 \\ 
\cmidrule(lr){3-4} \cmidrule(lr){5-9} \cmidrule(lr){10-11}
& & \multirow{3}{*}{$2$} & FVARseg ($d = 2$) & 0 & 15 & \textbf{85} & 0 & 0 & 0.051 & 136.81 \\ 
& &  & FVARseg ($d = 1$) & 1 & 28 & \textbf{71} & 0 & 0 & 0.085 & 86.51 \\ 
& &  & VARDetect & 97 & 1 & \textbf{0} & 0 & 2 & 0.371 & 389.86 \\ 
\bottomrule
\end{tabular}}
\end{table}

\clearpage

\section{Pseudocodes for FVARseg}
\label{sec:alg}

Algorithms~\ref{alg:one} and~\ref{alg:two} provide pseudocodes for Stages~1 and~2 of FVARseg.

\begin{algorithm}[htbp]
\caption{Stage~1 of FVARseg}
\label{alg:one}
\DontPrintSemicolon
\SetAlgoLined
\KwIn{Data $\{\mbf X_t\}_{t = 1}^n$, 
lag window size $m$, 
Bartlett kernel $K(\cdot)$,
bandwidth $G$, 
$\eta \in (0, 1]$,
threshold $\kappa_{n, p}$}

\BlankLine
{\bf Step 0:} Set $\wh\Cp_\chi \leftarrow \emptyset$.

\BlankLine
{\bf Step 1:} At $\omega_l, \, 0 \le l \le m$, compute $T_{\chi, v}(\omega_l, G), \, G \le v \le n - G$,
in~\eqref{eq:common:test:stat} and identify 
$\mc I = \{G, \ldots, n - G\} \setminus \{v: \, \max_{0 \le l \le m} T_{\chi, v}(\omega_l, G) \le \kappa_{n, p}\}$.
 
\BlankLine
{\bf Step 2:} Let $\wh\cp = \arg\max_{v \in \mc I} \max_l T_{\chi, v}(\omega_l, G)$~and~$\omega(\wh\cp) = \arg\max_{\omega_l: \, 0 \le l \le m} T_{\wh\cp}(\omega_l, G)$.
If $T_{\chi, \wh\cp}(\omega(\wh\cp), G) \ge \max_{\wh\cp - \eta G < v \le \wh\cp + \eta G} T_{\chi, v}(\omega(\wh\cp), G)$, 
update $\wh\Cp_\chi \leftarrow \wh\Cp_\chi \cup \{\wh\cp\}$.

\BlankLine
{\bf Step 3:} Update $\mc I \leftarrow \mc I \setminus \{\wh\cp - G + 1, \ldots, \wh\cp + G\}$.

\BlankLine
{\bf Step 4:} Repeat Steps~2--3 until $\mc I$ is empty.

\BlankLine
\KwOut{$\wh{\Cp}_\chi$}
\end{algorithm}

\begin{algorithm}[htbp]
\caption{Stage~2 of FVARseg}
\label{alg:two}
\DontPrintSemicolon
\SetAlgoLined
\KwIn{Data $\{\mbf X_t\}_{t = 1}^n$, 
change point estimators from the common component $\wh\Cp_\chi$, 
$\lambda_{n, p}$ for~\eqref{eq:ds},
bandwidth $G$, $\eta \in (0, 1]$,
threshold $\pi_{n, p}$}

\BlankLine
{\bf Step 0:} Set $\wh\Cp_\xi \leftarrow \emptyset$ and $v_\circ \leftarrow G$.

\BlankLine
{\bf Step 1:} With $\wh{\bm\beta} = \wh{\bm\beta}_{v_\circ}(G)$,
scan $T_{\xi, v}(\wh{\bm\beta}, G)$ for $v \ge v_\circ$
and identify $\check\cp = \min\{v: \, v_\circ \le v \le n - G \text{ and }
T_{\xi, v}(\wh{\bm\beta}, G) > \pi_{n, p}\}$.
 
\BlankLine
{\bf Step 2:} Find $\wh\cp = \arg\max_{\check\cp \le v \le \min(\check\cp + G, n - G)} 
T_{\xi, v}(\wh{\bm\beta}, G)$ and update $\wh\Cp_\xi \leftarrow \wh\Cp_\xi \cup \{\wh\cp\}$.

\BlankLine
{\bf Step 3:} Update $v_\circ \leftarrow \min(\check\cp + 2G, \wh\cp + (\eta + 1) G)$.

\BlankLine
{\bf Step 4:} Repeat Steps~1--3 until $v_\circ > n - G$.

\BlankLine
\KwOut{$\wh{\Cp}_\xi$}
\end{algorithm}

\clearpage

\section{Generalised dynamic factor model}
\label{app:gdfm}

\subsection{GDFM as a representation}

\cite{fornilippi01} show that the necessary and sufficient condition for any $p$-dimensional, weakly stationary time series to admit the generalised dynamic factor model (GDFM) representation, is to have a finite number of the eigenvalues of its spectral density matrix diverge with $p$ (as in Assumption~\ref{assum:factor}) while the remaining ones are bounded for all $p$. In other words, GDFM itself (without the VAR model imposed on $\bm\xi_t$ as in this paper) can be regarded as a representation of high-dimensional time series rather than a model. Overall, GDFM provides the most general framework for high-dimensional time series factor modelling and it encompasses other factor models found in the literature such as static factor models \citep{forni2009opening}.


Static factor models are popularly adopted 
in both stationary \citep{stock2002forecasting, bai2003, fan2013large, barigozzi2020consistent}
and piecewise stationary \citep{barigozzi2018, li2019detection} time series modelling in high dimensions.
Under stationary factor models,
the factor-driven component permits a representation
$\chi_{it} = \bm\lambda_i^\top \mbf f_t$
with some finite-dimensional vector processes $\mbf f_t \in \R^r$ as the common factors;
here, `static' refers to that $\chi_{it}$ loads $\mbf f_t$ contemporaneously 
and does not preclude serial dependence therein.
The model in~\eqref{eq:gdfm} includes such a static factor model
by representing 
$\mbf f_t = \mc B_f(L) \mbf u_t$ with $\mc B_f(L) = \sum_{\ell = 0}^\infty \mbf B_{f, \ell} L^\ell$, 
$\mbf B_{f, \ell} \in \R^{r \times q}$ for some $r \ge q$
(see Remark~R of \cite{forni2009opening}).
On the other hand, some models that have a finite number of factors under~\eqref{eq:gdfm}
cannot be represented with $\mbf f_t$ of finite dimension,
the simplest example being the case where
$\chi_{it} = a_i (1 - b_i L)^{-1} u_t$ for some $b_i \in (-1, 1)$ \citep{forni2015}; see also~\ref{c:two} in Section~\ref{sec:sim}.

\cite{hallin2018optimal} observe that
principal component analysis (PCA), typically accompanying static factor models as an estimation tool,
does not enjoy the optimality property that guarantees their success in the i.i.d.\ case
in the presence of serial correlations,
unlike the dynamic PCA adopted for estimation under GDFM (see Section~\ref{sec:common:post}).

\subsection{VAR representation of GDFM}

For notational simplicity, let $q_k = q$ for all $0 \le k \le K_\chi$.
Suppose that each $(i, j)$th element of the filter $\mc B^{\k}(L)$ in~\eqref{eq:model}, 
say $\mc B^{\k}_{ij}(L) = \sum_{\ell = 0}^\infty B^{\k}_{\ell, ij} L^\ell$,
is a ratio of finite-order polynomials in $L$ such that
for some finite $s_1, s_2 \in \N$,
\begin{align*}
\mc B^{\k}_{ij}(L) = \frac{\mc B^{[k, 1]}_{ij}(L)}{\mc B^{[k, 2]}_{ij}(L)} \quad \text{with} \quad
\mc B^{[k, l]}_{ij}(L) = \sum_{\ell = 0}^{s_l} B^{[k, l]}_{\ell, ij} L^\ell, \, l = 1, 2,
\end{align*}
for all $1 \le j \le q_k$ and $0 \le k \le K_\chi$. 
Furthermore, assume the followings.
\begin{enumerate}[label = (\alph*)]
\item There exists $M_\chi > 0$ such that
\begin{align*}
\max_{0 \le k \le K_\chi} \max_{1 \le i \le p} \max_{1 \le j \le q} \max_{0 \le \ell \le s_1} 
\vert B^{[k, 1]}_{\ell, ij} \vert \le M_\chi.
\end{align*}

\item For all $0 \le k \le K_\chi$, $1 \le i \le p$ and $1 \le j \le q$, we have
$\mc B^{[k, 2]}_{ij}(z) \ne 0$ for all $\vert z \vert \le 1$.
\end{enumerate}

Under such assumptions, Section~4 of \cite{forni2015} establishes that
for generic values of the parameters $B^{[k, 1]}_{\ell, ij}$ and $B^{[k, 2]}_{\ell, ij}$
(outside a countable union of nowhere dense subsets),
$\bm\chi^{\k}_t$ admits a block-wise singular VAR representation 
\begin{align}
\label{eq:gdfm:var}
\bmx
\mc A_\chi^{[k, 1]}(L) & \mbf O & \ldots & \mbf O \\
\mbf O & \mc A_\chi^{[k, 2]}(L) & \ldots & \mbf O \\
& & \ddots & \\
\mbf O & \mbf O & \ldots & \mc A_\chi^{[k, N]}(L) 
\emx \; 
\bmx \bm\chi^{[k, 1]}_t \\ \bm\chi^{[k, 2]}_t \\ \vdots \\ \bm\chi^{[k, N]}_t \emx
= \bmx \mbf R^{[k, 1]} \\ \mbf R^{[k, 2]} \\ \vdots \\ \mbf R^{[k, N]} \emx \mbf u^{\k}_t 
= \mbf R^{\k} \mbf u^{\k}_t,
\end{align}
where $\bm\chi^{[k, h]}_t = (\chi^{\k}_{(q + 1)(h - 1) + i, t}, \, 1 \le i \le q + 1)^\top$
and $\mbf R^{\k} \in \R^{p \times q}$ is of rank $q_k$;
for convenience, we assume that $p = N(q + 1)$ for some $N \in \N$.
Here, each $\bm\chi^{[k, h]}_t$ admits a {\it finite-order} VAR representation determined by
$\mc A_\chi^{[k, h]}(L) = \mbf I - \sum_{\ell = 1}^{s} \mbf A_{\chi, \ell}^{[k, h]} L^\ell$
with its degree $s \le q s_1 + q^2 s_2$,
and $\det(\mc A_\chi^{[k, h]}(z)) \ne 0$ for all $\vert z \vert \le 1$.

The representation~\eqref{eq:gdfm:var} gives
the piecewise stationary factor-adjusted VAR model in~\eqref{eq:model}
the interpretation of decomposing high-dimensional time series
into two latent VAR processes with time-varying parameter matrices,
one of low rank (singular) accounting for dominant dependence
and the other modelling individual interdependence between the variables unaccounted for by the former.

\clearpage

\section{Information on the real dataset}
\label{app:ticker}

Table~\ref{table:data:info} provides the list of the $72$ companies included in 
the application presented in Section~\ref{sec:app}
along with their tickers and industry classifications .

\begin{table}[htbp]
\caption{Tickers and industry classifications of the $72$ companies.}
\label{table:data:info}
\centering
{\scriptsize
\begin{tabular}{lll lll}
\toprule
Ticker	&	Company name	&	Sector	&	Ticker	&	Company name	&	Sector	\\	
\cmidrule(lr){1-3} \cmidrule(lr){4-6}
AMZN	&	Amazon.com	&	Cons. Disc. 	&	AMGN	&	Amgen	&	Health Care 	\\	
CMCSA	&	Comcast	&	Cons. Disc. 	&	BAX	&	Baxter International 	&	Health Care 	\\	
DIS	&	Walt Disney	&	Cons. Disc. 	&	BMY	&	Bristol-Myers Squibb 	&	Health Care 	\\	
F	&	Ford Motor	&	Cons. Disc. 	&	JNJ	&	Johnson \& Johnson 	&	Health Care 	\\	
HD	&	Home Depot 	&	Cons. Disc. 	&	LLY	&	Lilly (Eli) \& Co. 	&	Health Care 	\\	
LOW	&	Lowes	&	Cons. Disc. 	&	MDT	&	Medtronic	 &	Health Care 	\\	
MCD	&	McDonalds	&	Cons. Disc. 	&	MRK	&	Merck \& Co. 	&	Health Care 	\\	
SBUX	&	Starbucks	&	Cons. Disc. 	&	PFE	&	Pfizer	&	Health Care 	\\	
TGT	&	Target	&	Cons. Disc. 	&	UNH	&	United Health	&	Health Care 	\\	
CL	&	Colgate-Palmolive	&	Cons. Stap. 	&	BA	&	Boeing Company	&	Industrials 	\\	
COST	&	Costco	&	Cons. Stap. 	&	CAT	&	Caterpillar	&	Industrials 	\\	
CVS	&	CVS Caremark	&	Cons. Stap. 	&	EMR	&	Emerson Electric	&	Industrials 	\\	
PEP	&	PepsiCo	&	Cons. Stap. 	&	FDX	&	FedEx	&	Industrials 	\\	
PG	&	Procter \& Gamble	&	Cons. Stap. 	&	GD	&	General Dynamics	&	Industrials 	\\	
WMT	&	Wal-Mart Stores	&	Cons. Stap. 	&	GE	&	General Electric	&	Industrials 	\\	
APA	&	Apache	&	Energy 	&	HON	&	Honeywell Intl	&	Industrials 	\\	
COP	&	ConocoPhillips	&	Energy 	&	LMT	&	Lockheed Martin	&	Industrials 	\\	
CVX	&	Chevron	&	Energy 	&	MMM	&	3M Company	&	Industrials 	\\	
HAL	&	Halliburton	&	Energy 	&	NSC	&	Norfolk Southern	&	Industrials 	\\	
NOV	&	National Oilwell Varco 	&	Energy 	&	UNP	&	Union Pacific	&	Industrials 	\\	
OXY	&	Occidental Petroleum 	&	Energy 	&	UPS	&	United Parcel Service	&	Industrials 	\\	
SLB	&	Schlumberger Ltd.	&	Energy 	&	DD	&	Du Pont	&	Materials 	\\	
XOM	&	Exxon Mobil	&	Energy 	&	FCX	&	Freeport-McMoran	&	Materials 	\\	
AIG	&	AIG	&	Financials 	&	CSCO	&	Cisco Systems	&	Technology	\\	
ALL	&	Allstate	&	Financials 	&	EBAY	&	eBay	&	Technology	\\	
AXP	&	American Express Co 	&	Financials 	&	AAPL	&	Apple	&	Technology 	\\	
BAC	&	Bank of America	&	Financials 	&	HPQ	&	Hewlett-Packard	&	Technology 	\\	
BK	&	Bank of New York	&	Financials 	&	IBM	&	IBM	&	Technology 	\\	
C	&	Citigroup	&	Financials 	&	INTC	&	Intel	&	Technology 	\\	
COF	&	Capital One Financial 	&	Financials 	&	MSFT	&	Microsoft	&	Technology 	\\	
GS	&	Goldman Sachs	&	Financials 	&	ORCL	&	Oracle	&	Technology 	\\	
JPM	&	JPMorgan Chase	&	Financials 	&	QCOM	&	QUALCOMM	&	Technology 	\\	
SPG	&	Simon Property	&	Financials 	&	T	&	AT\&T	&	Technology 	\\	
USB	&	U.S. Bancorp	&	Financials 	&	VZ	&	Verizon	&	Technology 	\\	
WFC	&	Wells Fargo 	&	Financials 	&	AEP	&	American Electric Power 	&	Utilities 	\\	
ABT	&	Abbott Laboratories 	&	Health Care 	&	EXC	&	Exelon 	&	Utilities 	\\	\bottomrule
\end{tabular}}
\end{table}

\section{Proofs}
\label{sec:pf}

%

\subsection{Preliminary lemmas}

In the following lemmas, we operate 
under Assumptions~\ref{assum:factor}, \ref{assum:idio}, \ref{assum:common}
and~\ref{assum:innov}.
For notational convenience, we assume that for each $k$, the filters $\mc B^{\k}(L)$ have $\mbf B^{\k}_\ell \in \R^{p \times q}$ with appropriate zero columns such that we can write $\bm\chi^{\k}_t = \mc B^{\k}(L) \mbf u_t$ even when $q_k < q$.

Recall that
\begin{align}
\label{eq:common:tv:spec}
\bm\Sigma_{\chi, v}(\omega, G) = \frac{1}{G} \sum_{k = L_\chi(v - G + 1)}^{L_\chi(v)} 
\{(\cp_{\chi, k + 1} \wedge v) - (\cp_{\chi, k} \vee (v - G))\} \bm\Sigma_\chi^{\k}(\omega),
\end{align}
with $L_\chi(v) = \max\{0 \le k \le K_\chi: \, \cp_{\chi, k} + 1 \le v\}$
denoting the index of the change point nearest to and strictly left of a time point $v$,
and we define $\bm\Sigma_{\xi, v}(\omega, G)$ and $L_\xi(v)$ analogously.
Then, the local spectral density matrix of $\mbf X_t$ is defined as
$\bm\Sigma_{x, v}(\omega, G) = \bm\Sigma_{\chi, v}(\omega, G) + \bm\Sigma_{\xi, v}(\omega, G)$.
Similarly, with $\bm\Gamma^{\k}_\chi(\ell) = \E(\bm\chi^{\k}_{t - \ell} (\bm\chi^{\k}_t)^\top)$
and $\bm\Gamma^{\k}_\xi(\ell) = \E(\bm\xi^{\k}_{t - \ell} (\bm\xi^{\k}_t)^\top)$,
we define the local ACV matrix of $\bm\chi_t$ as
\begin{align*}
\bm\Gamma_{\chi, v}(\ell, G) = \frac{1}{G} \sum_{k = L_\chi(v - G + 1)}^{L_\chi(v)} 
\{(\cp_{\chi, k + 1} \wedge v) - (\cp_{\chi, k} \vee (v - G))\} \bm\Gamma_\chi^{\k}(\ell),
\end{align*}
and analogously define $\bm\Gamma_{\xi, v}(\ell, G)$.
Then we define $\bm\Gamma_{x, v}(\ell, G) = \bm\Gamma_{\chi, v}(\ell, G) + \bm\Gamma_{\xi, v}(\ell, G)$.

\cite{zhang2021} extend the functional dependence measure 
introduced in \cite{wu2005} for high-dimensional, locally stationary time series. 
Denote by $\mc F_t = \{(\mbf u_v, \bm\vep_v), \, v \le t\}$
and $\mc G_\chi^{\k}(\cdot) = (g^{\k}_{\chi, 1}(\cdot), \ldots, g^{\k}_{\chi, p}(\cdot))^\top$ 
and $\mc G_\xi^{\k}(\cdot) = (g^{\k}_{\xi, 1}(\cdot), \ldots, g^{\k}_{\xi, p}(\cdot))^\top$ 
$\R^p$-valued measurable functions such that 
$\bm\chi^{\k}_t = \mc G_\chi^{\k}(\mc F_t)$ for $0 \le k \le K_\chi$,
and $\bm\xi^{\k}_t = \mc G_\xi^{\k}(\mc F_t)$ for $0 \le k \le K_\xi$.
Then, $\mbf X_t = \mc G(t/n, \mc F_t) = \mc G^{[L_\chi(t)]}_\chi(\mc F_t) + \mc G^{[L_\xi(t)]}_\xi(\mc F_t)$
and $X_{it} = g_i(t/n, \mc F_t) = g^{[L_\chi(t)]}_{\chi, i}(\mc F_t) + g^{[L_\xi(t)]}_{\xi, i}(\mc F_t)$.
Also let $\mc F_{t, \{0\}} = \{\ldots, (\mbf u_{-1}, \bm\vep_{-1}), (\mbf u^\prime_0, \bm\vep^\prime_0), 
(\mbf u_1, \bm\vep_1)^\top, \ldots, (\mbf u_t, \bm\vep_t)\}$ 
denote a coupled version of $\mc F_t$ with an independent copy
$(\mbf u^\prime_0, \bm\vep^\prime_0)$ replacing $(\mbf u_0, \bm\vep_0)$.
Then, the element-wise functional dependence measure is defined as
\begin{align*}
\delta_{t, \nu, i} = \sup_{z \in [0, 1]} \l\Vert g_i(z, \mc F_t) - g_i(z, \mc F_{t, \{0\}}) \r\Vert_\nu,
\end{align*}
the uniform functional dependence measure as
\begin{align*}
\delta_{t, \nu} = \sup_{z \in [0, 1]} 
\l\Vert \vert \mc G(z, \mc F_t) - \mc G(z, \mc F_{t, \{0\}}) \vert_\infty \r\Vert_\nu,
\end{align*}
the dependence adjusted norms as
\begin{align*}
\Vert \mbf X_{i \cdot} \Vert_{\nu, \alpha} 
= \sup_{\ell \ge 0} \, (\ell + 1)^\alpha \sum_{t = \ell}^\infty \delta_{t, \nu, i}
\quad \text{and} \quad
\Vert \vert \mbf X_{\cdot} \vert_\infty \Vert_{\nu, \alpha} 
= \sup_{\ell \ge 0} \, (\ell + 1)^\alpha \sum_{t = \ell}^\infty \delta_{t, \nu},
\end{align*}
and the overall and the uniform dependence adjusted norms as
\begin{align*}
\Psi_{\nu, \alpha} = \l( \sum_{i = 1}^p \Vert \mbf X_{i \cdot} \Vert_{\nu, \alpha}^{\nu/2} \r)^{2/\nu}
\quad \text{and} \quad
\Phi_{\nu, \alpha} = \max_{1 \le i \le p} \Vert \mbf X_{i \cdot} \Vert_{\nu, \alpha}.
\end{align*}

\begin{lem}
\label{lem:func:dep}
Let $\alpha \le \varsigma - 1$.
\begin{enumerate}[label = (\alph*)]
\item  
Under Assumption~\ref{assum:innov}~\ref{cond:moment}, we have
\begin{align*}
\Psi_{\nu, \alpha} \le C_{\nu, \Xi, \varsigma} M_\vep^{1/2} p^{2/\nu} \mu_\nu^{1/\nu}
\quad \text{and} \quad
\Vert \vert \mbf X_{\cdot} \vert_\infty \Vert_{\nu, \alpha} 
\le C_{\nu, \Xi, \varsigma} M_\vep^{1/2} \log^{1/2}(p) p^{1/\nu} \mu_\nu^{1/\nu}
\end{align*}
for some constant $C_{\nu, \Xi, \varsigma} > 0$ depending only on its subscripts
(varying from one occasion to another).
\item Under Assumption~\ref{assum:innov}~\ref{cond:moment}--\ref{cond:gauss}, 
we have $\Phi_{\nu, \alpha} \le C_{\nu, \Xi, \varsigma} M_\vep^{1/2} \mu_\nu^{1/\nu}$
for any $\nu$ for which $\Vert u_{jt} \Vert_\nu$ and $\Vert \vep_{it} \Vert_\nu$ exist.
\end{enumerate}
\end{lem}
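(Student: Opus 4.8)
The plan is to read off the functional dependence measures directly from the moving-average and Wold representations, and then sum the resulting algebraically decaying tails.

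\textbf{Step 1 (explicit coupling difference).} Since $\mc F_{t, \{0\}}$ replaces only the innovation pair $(\mbf u_0, \bm\vep_0)$, in the representations $\bm\chi^{\k}_t = \sum_{\ell \ge 0} \mbf B^{\k}_\ell \mbf u_{t - \ell}$ and $\bm\xi^{\k}_t = \sum_{\ell \ge 0} \mbf D^{\k}_\ell (\bm\Gamma^{\k})^{1/2} \bm\vep_{t - \ell}$ only the lag-$\ell = t$ term is altered. Hence for $t \ge 0$, writing $k_\chi = L_\chi(t)$ and $k_\xi = L_\xi(t)$, the $i$th coordinate of $\mc G(z, \mc F_t) - \mc G(z, \mc F_{t, \{0\}})$ equals $\sum_{j = 1}^q B^{[k_\chi]}_{t, ij}(u_{j0} - u_{j0}') + \sum_{j = 1}^p [\mbf D^{[k_\xi]}_t (\bm\Gamma^{[k_\xi]})^{1/2}]_{ij}(\vep_{j0} - \vep_{j0}')$, a finite linear combination of independent, mean-zero variables whose $\nu$th moments are bounded by $2^\nu \mu_\nu$ and whose variances are controlled by the spectral bound $M_\vep$ on $\bm\Gamma^{\k}$. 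Taking the supremum over $z$ amounts to taking the worst case over segment indices, which is uniformly controlled by the $\max_k$ appearing in Assumptions~\ref{assum:idio} and~\ref{assum:common}.

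\textbf{Step 2 (element-wise norms; the $\Phi$ and $\Psi$ bounds).} By Rosenthal's inequality for sums of independent mean-zero variables (valid for $\nu \ge 2$, using $\ell_\nu \le \ell_2$ on the coefficients), $\delta_{t, \nu, i} \le C_\nu \mu_\nu^{1/\nu}(|\mbf B^{[k_\chi]}_{t, i \cdot}|_2 + M_\vep^{1/2} |\mbf D^{[k_\xi]}_{t, i \cdot}|_2)$. Assumption~\ref{assum:common} bounds the first term by $\Xi(1 + t)^{-\varsigma}$, while Assumption~\ref{assum:idio}~\ref{cond:idio:coef} together with $\max_i \sqrt{\sum_j C_{ij}^2} \le \Xi$ bounds the second by $\Xi(1 + t)^{-\varsigma}$, so $\delta_{t, \nu, i} \le C_{\nu, \Xi, \varsigma} M_\vep^{1/2} \mu_\nu^{1/\nu}(1 + t)^{-\varsigma}$ uniformly in $i$ and in the segment. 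Since $\alpha \le \varsigma - 1$ and $\varsigma > 2$, we have $\sum_{t \ge \ell}(1 + t)^{-\varsigma} \le C(1 + \ell)^{-(\varsigma - 1)}$, so $(\ell + 1)^\alpha \sum_{t \ge \ell} \delta_{t, \nu, i}$ is bounded uniformly in $\ell$. This yields $\Phi_{\nu, \alpha} = \max_i \Vert \mbf X_{i \cdot} \Vert_{\nu, \alpha} \le C_{\nu, \Xi, \varsigma} M_\vep^{1/2} \mu_\nu^{1/\nu}$, which is part~(b) (and, under Assumption~\ref{assum:innov}~\ref{cond:gauss}, the same with $\mu_\nu$ a finite Gaussian moment). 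Summing the uniform per-coordinate bound over $i$ gives $\Psi_{\nu, \alpha} = (\sum_i \Vert \mbf X_{i \cdot} \Vert_{\nu, \alpha}^{\nu/2})^{2/\nu} \le p^{2/\nu} \max_i \Vert \mbf X_{i \cdot} \Vert_{\nu, \alpha}$, the first bound in part~(a).

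\textbf{Step 3 (uniform norm).} For $\Vert |\mbf X_\cdot|_\infty \Vert_{\nu, \alpha}$ we must bound $\delta_{t, \nu} = \sup_z \Vert \max_i | \cdot | \Vert_\nu$. The factor contribution is harmless: as $q$ is fixed, $\max_i |\sum_j B^{[k_\chi]}_{t, ij}(u_{j0} - u_{j0}')| \le \sum_{j = 1}^q |\mbf B^{[k_\chi]}_{t, \cdot j}|_\infty |u_{j0} - u_{j0}'|$, and the second bound in Assumption~\ref{assum:common} with Cauchy--Schwarz produces an $O(\sqrt q\, \mu_\nu^{1/\nu}(1 + t)^{-\varsigma})$ term carrying no dimension dependence. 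The VAR contribution $\max_{1 \le i \le p} |\sum_{j = 1}^p [\mbf D^{[k_\xi]}_t (\bm\Gamma^{[k_\xi]})^{1/2}]_{ij}(\vep_{j0} - \vep_{j0}')|$ is a maximum over $p$ correlated sums of $p$ independent terms, and here a Nagaev/maximal moment inequality is needed: its Gaussian bulk, union-bounded over $p$ coordinates, contributes a factor $\sqrt{\log p}$ multiplying the per-row Euclidean coefficient norm $\max_i |[\mbf D^{[k_\xi]}_t (\bm\Gamma^{[k_\xi]})^{1/2}]_{i \cdot}|_2 \le M_\vep^{1/2} \Xi(1 + t)^{-\varsigma}$, while the polynomial tail contributes a factor $p^{1/\nu}$ multiplying the per-row $L_\nu$ norm (bounded via Rosenthal as in Step~2). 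Bounding the sum of these two contributions by their product (both factors exceed $1$) gives $\delta_{t, \nu} \le C_{\nu, \Xi, \varsigma} M_\vep^{1/2} \log^{1/2}(p)\, p^{1/\nu} \mu_\nu^{1/\nu}(1 + t)^{-\varsigma}$; summing the tail as in Step~2 yields the second bound in part~(a).

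The main obstacle is Step~3: establishing the maximal moment inequality for the VAR part with the correct $\log^{1/2}(p)\, p^{1/\nu}$ scaling, splitting the maximum into a Gaussian-type bulk (handled by a Bernstein/union-bound argument contributing $\sqrt{\log p}$) and a heavy polynomial tail (handled through the $\nu$th moments contributing $p^{1/\nu}$), and verifying that the row-wise $\ell_2$ and $\ell_\nu$ coefficient controls following from Assumption~\ref{assum:idio}~\ref{cond:idio:coef} and the spectral bound $M_\vep$ hold uniformly across all segments. Everything else reduces to Rosenthal's inequality and the summation of an algebraically decaying sequence.
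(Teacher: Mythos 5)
Your proposal is correct and its skeleton coincides with the paper's proof: the paper likewise isolates the single perturbed innovation via Minkowski's inequality, bounds the coordinatewise measure $\delta_{t,\nu,i}$ by a Rosenthal-type inequality (Lemma~D.3 of \citealp{zhang2021}) applied to $\mbf B^{\k}_{t,i\cdot}\mbf u_0$ and $\mbf D^{\lll}_{t,i\cdot}(\bm\Gamma^{\lll})^{1/2}\bm\vep_0$, absorbs $(\bm\Gamma^{\lll})^{1/2}$ through $\Vert\bm\Gamma^{\lll}\Vert\le M_\vep$, sums the algebraic tail using $\alpha\le\varsigma-1$, and passes from $\Phi_{\nu,\alpha}$ to $\Psi_{\nu,\alpha}$ by the same $p^{2/\nu}$ aggregation. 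Where you genuinely diverge is the uniform norm, and the division of labour is exactly reversed. The paper obtains the $\log^{1/2}(p)$ factor from the \emph{factor-driven} part, applying the maximal form of Zhang's inequality to the maximum over $p$ rows of the $q$-term sums (yielding $\log^{1/2}(p)\,q^{1/\nu}$), and treats the \emph{VAR} part with the one-line crude aggregation $\Vert\max_i|Y_i|\Vert_\nu^\nu\le\sum_{i=1}^p\Vert Y_i\Vert_\nu^\nu\le p\max_i\Vert Y_i\Vert_\nu^\nu$, which gives $p^{1/\nu}$ with no truncation argument at all. You do the opposite: your factor bound (triangle inequality plus Cauchy--Schwarz over the $q$ columns, using the second display of Assumption~\ref{assum:common}) is elementary and carries no $p$-dependence, while you invoke a Nagaev-type bulk/tail splitting for the VAR maximum. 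Your route works, but the step you flag as the ``main obstacle'' is avoidable: since the lemma's target already contains $p^{1/\nu}$, the crude $\ell_\nu$-aggregation suffices for the VAR part, and combined with your own factor bound it even shows the $\log^{1/2}(p)$ factor in the statement is pure slack under your decomposition --- so you are working hardest precisely where the paper's argument is trivial. What your approach buys in exchange is a sharper bound of order $(\sqrt{\log p}+p^{1/\nu})$ for the VAR contribution, which is of independent interest but not needed here.
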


\begin{proof}
By Minkowski inequality,
\begin{align*}
\delta_{t, \nu, i} \le \max_{0 \le k \le K_\chi} \Vert \mbf B^{\k}_{t, i\cdot} \mbf u_0 \Vert_\nu 
+ \max_{0 \le l \le K_\xi} \Vert \mbf D^{\lll}_{t, i\cdot} (\bm\Gamma^{\lll})^{1/2} \bm\vep_0 \Vert_\nu,
\quad \text{and}
\\
\delta_{t, \nu} \le \max_{0 \le k \le K_\chi} 
\Vert \vert \mbf B^{\k}_t \mbf u_0 \vert_\infty \Vert_\nu + 
\max_{0 \le l \le K_\xi}
\Vert \vert \mbf D^{\lll}_t (\bm\Gamma^{\lll})^{1/2} \bm\vep_0 \vert_\infty \Vert_\nu
\end{align*}
for all $t$.
Due to independence of $u_{jt}$, Assumption~\ref{assum:common}
and Lemma~D.3 of \cite{zhang2021}, there exists $C_\nu > 0$ that depends only on $\nu$
such that
\begin{align*}
\max_{0 \le k \le K_\chi} \Vert \mbf B^{\k}_{t, i\cdot} \mbf u_0 \Vert_\nu \le& 
\max_k \l\Vert \sum_{j = 1}^q \wt{B}^{\k}_{t, ij} u^{\k}_{j0} \r\Vert_\nu
\le C_\nu \max_k \vert \wt{B}^{\k}_{t, i\cdot} \vert_2 \; \mu_\nu^{1/\nu}
\le C_\nu \Xi (1 + t)^{-\varsigma} \; \mu_\nu^{1/\nu}
\end{align*}
for all $1 \le i \le p$, and
\begin{align*}
\max_{0 \le k \le K_\chi} \Vert \vert \mbf B^{\k}_t \mbf u_0 \vert_\infty \Vert_\nu
\le& C_\nu \log^{1/2}(p)
\max_k \l(\sum_{j = 1}^q \vert \mbf B^{\k}_{t, \cdot j} \vert_\infty^2\r)^{1/2} 
q^{1/\nu} \mu_\nu^{1/\nu}
\\
\le& C_\nu \log^{1/2}(p) \Xi (1 + t)^{-\varsigma} q^{1/\nu} \mu_\nu^{1/\nu}.
\end{align*}
Similarly, from Assumption~\ref{assum:idio} and independence of $\vep_{it}$, we have
\begin{align*}
& \max_{0 \le l \le K_\xi} \Vert \mbf D^{\lll}_{t, i\cdot} (\bm\Gamma^{\lll})^{1/2} \bm\vep_0 \Vert_\nu 
\le
C_\nu \max_l\vert \mbf D^{\lll}_{t, i\cdot} (\bm\Gamma^{\lll})^{1/2}\vert_2 \; \mu_\nu^{1/\nu}
\le 
C_\nu M_\vep^{1/2} \max_l \vert \mbf D^{\lll}_{t, i\cdot} \vert_2 \; \mu_\nu^{1/\nu}
\\
& \le
C_\nu M_\vep^{1/2} \max_l \l(\sum_{j = 1}^p\mbf (D^{\lll}_{t, ij})^2 \r)^{1/2} \mu_\nu^{1/\nu}
\le 
C_\nu M_\vep^{1/2} \Xi (1 + t)^{-\varsigma} \mu_\nu^{1/\nu}
\end{align*}
for all $1 \le i \le p$. Then,
\begin{align*}
\max_{0 \le l \le K_\xi} \Vert \vert \mbf D^{\lll}_t (\bm\Gamma^{\lll})^{1/2} \bm\vep_0 \vert_\infty \Vert_\nu^\nu
\le \max_l \sum_{i = 1}^p \Vert \mbf D^{\lll}_{t, i \cdot} (\bm\Gamma^{\lll})^{1/2} \bm\vep_0 \Vert_\nu^\nu
\le (C_\nu M_\vep^{1/2} \Xi (1 + t)^{-\varsigma})^\nu p \mu_\nu
\end{align*}
such that $\max_l \Vert \vert \mbf D^{\lll}_{t, i\cdot} (\bm\Gamma^{\lll})^{1/2} \bm\vep_0 \vert_\infty \Vert_\nu 
\le C_\nu M_\vep^{1/2} \Xi (1 + t)^{-\varsigma} p^{1/\nu} \mu_\nu^{1/\nu}$.
Then, for some constant $C_{\nu, \Xi} > 0$, we have
\begin{align*}
\delta_{t, \nu, i} \le C_{\nu, \Xi} M_\vep^{1/2} (1 + t)^{-\varsigma} \mu_\nu^{1/\nu},
\quad
\delta_{t, \nu} \le C_{\nu, \Xi} M_\vep^{1/2} \log^{1/2}(p) (1 + t)^{-\varsigma} p^{1/\nu} \mu_\nu^{1/\nu},
\end{align*}
and setting $\alpha \le \varsigma - 1$,
\begin{align*}
& \Phi_{\nu, \alpha}  \le C_{\nu, \Xi, \varsigma} M_\vep^{1/2} \mu_\nu^{1/\nu},
\quad 
\Psi_{\nu, \alpha} \le C_{\nu, \Xi, \varsigma} M_\vep^{1/2} p^{2/\nu} \mu_\nu^{1/\nu} \quad \text{and}
\\
& \Vert \vert \mbf X_{\cdot} \vert_\infty \Vert_{\nu, \alpha} 
\le C_{\nu, \Xi, \varsigma} M_\vep^{1/2} \log^{1/2}(p) p^{1/\nu} \mu_\nu^{1/\nu}.
\end{align*}
\end{proof}

\begin{lem}
\label{lem:decay:two} 
There exist some constants $C_{\Xi, \varsigma}, C_{\Xi, \varsigma, \vep} > 0$
which depend only on $\Xi$, $\varsigma$ and $M_\vep$
defined in Assumptions~\ref{assum:idio} and~\ref{assum:common}, such that
for all $h \ge 0$, 
\begin{align*}
\max_{h + 1 \le t \le n} \max_{1 \le i, i' \le p} \l\vert \E(\chi_{i, t - h} \chi_{i't}) \r\vert
\le C_{\Xi, \varsigma} (1 + h)^{-\varsigma}, \\
\max_{h + 1 \le t \le n} \max_{1 \le i, i' \le p} \l\vert \E(\xi_{i, t - h} \xi_{i't}) \r\vert
\le C_{\Xi, \varsigma, \vep} (1 + h)^{-\varsigma}.
\end{align*}
\end{lem}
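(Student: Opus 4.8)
The plan is to read each autocovariance off the moving-average representations of the two components and to exploit the i.i.d.\ structure of the innovations, which collapses the double sum over lags into a single algebraically decaying series.

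First I would fix $t$ and $h$ with $h + 1 \le t \le n$. Writing out~\eqref{eq:model} together with the Wold decomposition of Assumption~\ref{assum:idio}~\ref{cond:idio:coef}, the entries of interest are
\begin{align*}
\chi_{i, t - h} = \sum_{\ell = 0}^\infty \mbf B^{[L_\chi(t - h)]}_{\ell, i \cdot} \mbf u_{t - h - \ell},
\quad
\chi_{i', t} = \sum_{\ell' = 0}^\infty \mbf B^{[L_\chi(t)]}_{\ell', i' \cdot} \mbf u_{t - \ell'},
\end{align*}
and analogously for $\bm\xi_t$ with $\mbf D^{\k}_\ell (\bm\Gamma^{\k})^{1/2}$ and $\bm\vep$ in place of $\mbf B^{\k}_\ell$ and $\mbf u$, and $L_\xi$ in place of $L_\chi$. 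Since $\E(\mbf u_s \mbf u_{s'}^\top) = \mbf I$ when $s = s'$ and $\mbf O$ otherwise (and likewise for $\bm\vep$), every cross term vanishes except those with $t - h - \ell = t - \ell'$, i.e.\ $\ell' = \ell + h$, leaving
\begin{align*}
\E(\chi_{i, t - h} \chi_{i', t}) = \sum_{\ell = 0}^\infty \mbf B^{[L_\chi(t - h)]}_{\ell, i \cdot} \l(\mbf B^{[L_\chi(t)]}_{\ell + h, i' \cdot}\r)^\top.
\end{align*}

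Next I would bound each summand by Cauchy--Schwarz and invoke the decay conditions. For $\bm\chi_t$, Assumption~\ref{assum:common} gives $\vert \mbf B^{\k}_{\ell, i \cdot} \vert_2 \le \Xi (1 + \ell)^{-\varsigma}$ uniformly in $k$ and $i$, so each summand is at most $\Xi^2 (1 + \ell)^{-\varsigma} (1 + \ell + h)^{-\varsigma}$. For $\bm\xi_t$, I would first note that $\vert \mbf D^{\k}_{\ell, i \cdot} (\bm\Gamma^{\k})^{1/2} \vert_2 \le \Vert (\bm\Gamma^{\k})^{1/2} \Vert \, \vert \mbf D^{\k}_{\ell, i \cdot} \vert_2 \le M_\vep^{1/2} \Xi (1 + \ell)^{-\varsigma}$, using $\Vert \bm\Gamma^{\k} \Vert \le M_\vep$ from Assumption~\ref{assum:idio}~\ref{cond:idio:innov} and $\vert \mbf D^{\k}_{\ell, i \cdot} \vert_2 \le (1 + \ell)^{-\varsigma} \sqrt{\sum_j C_{ij}^2} \le \Xi (1 + \ell)^{-\varsigma}$ from Assumption~\ref{assum:idio}~\ref{cond:idio:coef}. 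Finally, since $(1 + \ell + h)^{-\varsigma} \le (1 + h)^{-\varsigma}$ for every $\ell \ge 0$ and $\sum_{\ell = 0}^\infty (1 + \ell)^{-\varsigma} < \infty$ because $\varsigma > 2$, summing yields $\vert \E(\chi_{i, t - h} \chi_{i', t}) \vert \le C_{\Xi, \varsigma} (1 + h)^{-\varsigma}$ with $C_{\Xi, \varsigma} = \Xi^2 \sum_{\ell \ge 0} (1 + \ell)^{-\varsigma}$, and the $\bm\xi_t$ bound with the extra factor $M_\vep$, i.e.\ $C_{\Xi, \varsigma, \vep} = M_\vep C_{\Xi, \varsigma}$.

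The only point requiring care---rather than a genuine obstacle---is that $t - h$ and $t$ may lie in different stationary segments, so the two filters carry distinct segment indices $L_\chi(t - h)$ and $L_\chi(t)$ (resp.\ $L_\xi$). This is harmless because every decay constant in Assumptions~\ref{assum:common} and~\ref{assum:idio}~\ref{cond:idio:coef} is taken uniformly over $0 \le k \le K_\chi$ (resp.\ $K_\xi$); the resulting $C_{\Xi, \varsigma}$ and $C_{\Xi, \varsigma, \vep}$ therefore do not depend on $t$, $h$, $i$ or $i'$, which is precisely what the uniform maxima in the statement require.
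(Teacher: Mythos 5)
Your proposal is correct and follows essentially the same route as the paper's proof: expand both components via their moving-average/Wold representations, use the i.i.d.\ innovations (shared across segments) to collapse the double sum to the diagonal $\ell' = \ell + h$, apply Cauchy--Schwarz together with the uniform-in-$k$ decay from Assumptions~\ref{assum:common} and~\ref{assum:idio}, and bound $(1+\ell+h)^{-\varsigma} \le (1+h)^{-\varsigma}$ before summing. Your explicit handling of the cross-segment case and of the $\bm\Gamma^{\k}$ factor via $\Vert(\bm\Gamma^{\k})^{1/2}\Vert \le M_\vep^{1/2}$ matches the paper's treatment, so nothing further is needed.
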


\begin{proof}
Suppose that $L_\chi(t - \ell) = k$ and $L_\chi(t) = l$.
Then, for any $h \ge 0$, we have
\begin{align*}
& \l\vert \E(\chi_{i, t - h} \chi_{i't}) \r\vert = \l\vert
\E\l(\sum_{\ell, \ell' = 0}^\infty \sum_{j, j' = 1}^{q} 
B^{\k}_{\ell, ij} B^{\lll}_{\ell', i'j'} u_{j, t - \ell - h}u_{j', t -\ell'} \r) \r\vert
= \sum_{\ell = 0}^\infty \l\vert \sum_{j = 1}^q B^{\k}_{\ell, ij} B^{\k}_{\ell + h, i' j} \r\vert
\nn \\
&\le \sum_{\ell = 0}^\infty \vert \mbf B^{\k}_{\ell, i \cdot} \vert_2 \; \vert \mbf B^{\lll}_{\ell + h, i' \cdot} \vert_2 
\le \sum_{\ell = 0}^\infty \frac{\Xi^2}{(1 + \ell)^{\varsigma} (1 + \ell + h)^{\varsigma}}
\le \sum_{\ell = 0}^\infty \frac{\Xi^2}{(1 + \ell)^{\varsigma} (1 + h)^{\varsigma}}
\le C_{\Xi, \varsigma} (1 + h)^{-\varsigma}
\end{align*}
uniformly in $1 \le i, i' \le p$ and $t$ for some $C_{\Xi, \varsigma} > 0$ 
depending only on $\Xi$ and $\varsigma$, thanks to Assumption~\ref{assum:common}.
Similarly, assuming that $L_\xi(t - \ell) = k$ and $L_\xi(t) = l$, we have
\begin{align*}
\l\vert \E(\xi_{i, t - h} \xi_{i't}) \r\vert =& \l\vert
\E\l(\sum_{\ell, \ell' = 0}^\infty 
(\mbf D^{\k}_{\ell, i \cdot} (\bm\Gamma^{\k})^{1/2} \bm\vep_{t - \ell - h})
(\mbf D^{\lll}_{\ell', i' \cdot} (\bm\Gamma^{\lll})^{1/2} \bm\vep_{t - \ell'})\r)
\r\vert
\\
=& \sum_{\ell = 0}^\infty \l\vert 
\mbf D^{\k}_{\ell, i \cdot} (\bm\Gamma^{\k})^{1/2} (\bm\Gamma^{\lll})^{1/2} (\mbf D^{\lll}_{\ell + h, i' \cdot})^\top
\r\vert
\\
\le& M_\vep \sum_{\ell = 0}^\infty 
\vert \mbf D^{\k}_{\ell, i \cdot} \vert_2 \; \vert \mbf D^{\lll}_{\ell + h, i' \cdot} \vert_2 
\le M_\vep
\sum_{\ell = 0}^\infty \frac{\Xi^2}{(1 + \ell)^{\varsigma} (1 + \ell + h)^{\varsigma}}
\\
\le& M_\vep
\sum_{\ell = 0}^\infty \frac{\Xi^2}{(1 + \ell)^{\varsigma} (1 + h)^{\varsigma}}
\le C_{\Xi, \varsigma, \vep} (1 + h)^{-\varsigma}
\end{align*}
uniformly in $1 \le i, i' \le p$ and $t$
for some $C_{\Xi, \varsigma, \vep} > 0$ 
depending only on $\Xi$, $\varsigma$ and $M_\vep$,
from Assumption~\ref{assum:idio}~\ref{cond:idio:innov} and~\ref{cond:idio:coef}.
\end{proof}

The following lemma is a direct consequence of Lemma~\ref{lem:decay:two}.

\begin{lem}
\label{lem:decay} 
Denote by $\bm\Gamma_{x, v}(\ell, G)
= [\gamma_{x, v, ii'}(\ell, G), \, 1 \le i, i' \le p]$.
Then, there exists some constant $C_{\Xi, \varsigma, \vep}$
depending only on $\Xi$, $\varsigma$ and $M_\vep$
defined in Assumptions~\ref{assum:idio} and~\ref{assum:common}, such that
\begin{align*}
& \max_{G \le v \le n} \max_{1 \le i, i' \le p} \gamma_{x, v, ii'}(\ell, G) \le C_{\Xi, \varsigma, \vep} (1 + \vert \ell \vert)^{-\varsigma} \quad \text{and consequently,}
\\
& \max_{G \le v \le n} \max_{1 \le i, i' \le p} \sum_{\vert \ell \vert > m} 
\gamma_{x, v, ii'}(\ell, G) = O(m^{-\varsigma + 1}) = o(m^{-1}).
\end{align*}
\end{lem}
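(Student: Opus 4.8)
The plan is to exploit that the local ACV matrix $\bm\Gamma_{x, v}(\ell, G) = \bm\Gamma_{\chi, v}(\ell, G) + \bm\Gamma_{\xi, v}(\ell, G)$ is, by construction, a convex combination of the segment-specific \emph{stationary} ACV matrices, and then to transfer the entrywise decay bounds established in Lemma~\ref{lem:decay:two}. First I would note that, from the definition of $\bm\Gamma_{\chi, v}(\ell, G)$ given above, the weights $w_{\chi, k}(v) = G^{-1}\{(\cp_{\chi, k + 1} \wedge v) - (\cp_{\chi, k} \vee (v - G))\}$ are nonnegative and sum to one over $k = L_\chi(v - G + 1), \ldots, L_\chi(v)$, since they record the proportion of the window $I_v(G)$ lying in each segment; the analogous statement holds for $\bm\Gamma_{\xi, v}(\ell, G)$. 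Hence each entry $\gamma_{x, v, ii'}(\ell, G)$ is a convex combination of the corresponding entries of $\bm\Gamma^{\k}_\chi(\ell)$ and $\bm\Gamma^{\lll}_\xi(\ell)$.

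Next I would invoke Lemma~\ref{lem:decay:two}. Taking the stationary specialisation within a single segment (i.e.\ $k = l$ in its proof), the $(i, i')$th entry of $\bm\Gamma^{\k}_\chi(\ell) = \E(\bm\chi^{\k}_{t - \ell}(\bm\chi^{\k}_t)^\top)$ equals $\E(\chi^{\k}_{i, t - \ell}\chi^{\k}_{i', t})$ and is therefore bounded in modulus by $C_{\Xi, \varsigma}(1 + \vert \ell \vert)^{-\varsigma}$, uniformly over $k$, $i$, $i'$ and $t$; the analogous bound with $C_{\Xi, \varsigma, \vep}$ holds for the entries of $\bm\Gamma^{\lll}_\xi(\ell)$. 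For $\ell < 0$ I would use the symmetry $\bm\Gamma^{\k}_\chi(\ell) = (\bm\Gamma^{\k}_\chi(-\ell))^\top$ (and likewise for $\xi$), which is inherited by $\bm\Gamma_{x, v}(\ell, G)$ as in~\eqref{eq:local:acv}, so the $(1 + \vert\ell\vert)^{-\varsigma}$ bound applies to all $\ell \in \Z$. Since a convex combination cannot exceed its largest summand in modulus, combining the two contributions yields $\max_{G \le v \le n}\max_{1 \le i, i' \le p}\vert \gamma_{x, v, ii'}(\ell, G)\vert \le C_{\Xi, \varsigma, \vep}(1 + \vert \ell \vert)^{-\varsigma}$ for a suitably enlarged constant, which is the first claim.

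For the second claim I would sum this bound over $\vert \ell \vert > m$ by comparison with an integral,
\begin{align*}
\sum_{\vert \ell \vert > m} (1 + \vert \ell \vert)^{-\varsigma}
\le 2 \sum_{\ell > m} \ell^{-\varsigma}
\le 2 \int_m^\infty x^{-\varsigma}\, dx
= \frac{2}{\varsigma - 1}\, m^{-\varsigma + 1},
\end{align*}
so that the tail sum is $O(m^{-\varsigma + 1})$; because $\varsigma > 2$ under Assumption~\ref{assum:idio}~\ref{cond:idio:coef} and Assumption~\ref{assum:common}, we have $-\varsigma + 1 < -1$ and hence $O(m^{-\varsigma+1}) = o(m^{-1})$, as required. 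There is no genuine obstacle, as the statement is a direct corollary; the only points demanding a little care are verifying that the window weights genuinely form a convex combination, so that the uniform-in-$v$ decay survives the averaging, and observing that both latent components obey the same decay rate so that a single constant and exponent $\varsigma$ govern the combined bound.
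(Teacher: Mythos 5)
Your proof is correct and follows exactly the route the paper intends: the paper gives no written proof, stating only that the lemma is a direct consequence of Lemma~\ref{lem:decay:two}, and your argument (the local ACV is a convex combination of segment-specific ACVs, each entry of which obeys the $(1+\vert\ell\vert)^{-\varsigma}$ bound from that lemma, followed by integral comparison for the tail sum using $\varsigma > 2$) is precisely the omitted verification. The only addition you make is to spell out the convex-combination and negative-lag symmetry details, which the paper leaves implicit.
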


We adopt the notations 
$\bm\Sigma^{\k}_\chi(\omega) = [\sigma^{\k}_{\chi, ii'}(\omega), \, 1 \le i, i' \le p]$
and $\bm\Sigma^{\k}_\xi(\omega) = [\sigma^{\k}_{\xi, ii'}(\omega), \, 1 \le i, i' \le p]$
to denote the elements of the spectral density matrices,
and similarly
$\bm\Gamma^{\k}_\chi(\ell) = [\gamma^{\k}_{\chi, ii'}(\ell), \, 1 \le i, i' \le p]$
and $\bm\Gamma^{\k}_\xi(\ell) = [\gamma^{\k}_{\xi, ii'}(\ell), \, 1 \le i, i' \le p]$.

\begin{lem}
\label{lem:sigma:bound}
Denote by $\bm\Sigma_{x, v}(\omega, G) = [\sigma_{x, v, ii'}(\omega, G), 1 \le i, i' \le p]$.
Then, there exists $B_\sigma > 0$ such that
$\max_{G \le v \le n - G} \max_{1 \le i, i' \le p} \sup_{\pi \in [-\pi, \pi]} \sigma_{x, v, ii'}(\omega, G) \le B_\sigma$.
\end{lem}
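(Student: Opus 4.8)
**The plan is to bound each entry $\sigma_{x,v,ii'}(\omega,G)$ by expressing it as a kernel-smoothed sum of the time-varying autocovariances $\gamma_{x,v,ii'}(\ell,G)$ and then invoking the algebraic decay established in Lemma~\ref{lem:decay}.**

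First I would write out the definition. The population local spectral density matrix is
\begin{align*}
\bm\Sigma_{x,v}(\omega,G) = \frac{1}{2\pi} \sum_{\ell=-m}^{m} K\!\l(\frac{\ell}{m}\r) \bm\Gamma_{x,v}(\ell,G)\exp(-\iota\ell\omega),
\end{align*}
mirroring the estimator in~\eqref{eq:local:spec} with the sample ACV replaced by its population counterpart $\bm\Gamma_{x,v}(\ell,G)$. Reading off the $(i,i')$th entry and taking moduli, the triangle inequality gives, uniformly in $\omega$,
\begin{align*}
\l\vert \sigma_{x,v,ii'}(\omega,G) \r\vert \le \frac{1}{2\pi} \sum_{\ell=-m}^{m} \l\vert K\!\l(\frac{\ell}{m}\r) \r\vert \, \l\vert \gamma_{x,v,ii'}(\ell,G) \r\vert,
\end{align*}
since $\vert\exp(-\iota\ell\omega)\vert = 1$. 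The Bartlett kernel satisfies $\vert K(\cdot)\vert \le 1$, so the kernel factor drops out.

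Next I would apply Lemma~\ref{lem:decay}, which supplies the uniform bound $\max_{G\le v\le n}\max_{i,i'}\vert\gamma_{x,v,ii'}(\ell,G)\vert \le C_{\Xi,\varsigma,\vep}(1+\vert\ell\vert)^{-\varsigma}$ with $\varsigma>2$. Substituting this in and extending the finite sum to an infinite one only enlarges the bound,
\begin{align*}
\max_{G\le v\le n-G}\max_{i,i'}\sup_{\omega\in[-\pi,\pi]} \l\vert \sigma_{x,v,ii'}(\omega,G) \r\vert \le \frac{C_{\Xi,\varsigma,\vep}}{2\pi} \sum_{\ell=-\infty}^{\infty} (1+\vert\ell\vert)^{-\varsigma} =: B_\sigma,
\end{align*}
where the series converges precisely because $\varsigma>2>1$, yielding a finite constant $B_\sigma$ depending only on $\Xi$, $\varsigma$ and $M_\vep$ (and not on $p$, $n$, $v$ or $\omega$). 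This establishes the claim.

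The argument is essentially routine given the preceding lemmas, so there is no serious obstacle; the only point demanding minor care is ensuring the decay exponent from Lemma~\ref{lem:decay} is summable, which is immediate from $\varsigma>2$, and confirming that the uniformity in $v$, $i$, $i'$ and $\omega$ is inherited directly from the uniform statement of Lemma~\ref{lem:decay} together with the $\omega$-free bound $\vert\exp(-\iota\ell\omega)\vert=1$. I would also note that the same reasoning applies verbatim to $\bm\Sigma^{\k}_\chi(\omega)$ and $\bm\Sigma^{\k}_\xi(\omega)$ via Lemma~\ref{lem:decay:two}, since $\bm\Gamma_{x,v}(\ell,G)$ is a convex combination of the segment-specific ACVs, so no separate treatment of the factor-driven and idiosyncratic parts is needed.
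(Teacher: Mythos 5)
Your proof is correct and follows essentially the same route as the paper's: bound each entry by the sum of the absolute local autocovariances, invoke Lemma~\ref{lem:decay}, and use the summability afforded by $\varsigma > 2$. One slip is worth flagging: the population quantity $\bm\Sigma_{x, v}(\omega, G)$ in this lemma is \emph{not} the kernel-smoothed, lag-truncated transform mirroring~\eqref{eq:local:spec}; it is defined (see~\eqref{eq:common:tv:spec} and the surrounding text) as the convex combination of the segment-specific spectral densities, i.e.\ the full Fourier series $(2\pi)^{-1}\sum_{\ell = -\infty}^{\infty} \bm\Gamma_{x, v}(\ell, G)\, e^{-\iota \ell \omega}$, with no kernel weight and no truncation at $m$. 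This does not invalidate your argument, since your chain of inequalities passes through the bound $(2\pi)^{-1}\sum_{\ell = -\infty}^{\infty} \l\vert \gamma_{x, v, ii'}(\ell, G) \r\vert$, which is precisely what the correct definition requires — the paper's proof is your argument with the kernel step deleted.
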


\begin{proof}
By Lemma~\ref{lem:decay}, we can find $B_\sigma$ that depends only on $\Xi$, $\varsigma$ and $M_\vep$
defined in Assumptions~\ref{assum:idio} and~\ref{assum:common},
such that
\begin{align*}
\max_v \max_{i, i'} \sup_\omega \l\vert \sigma_{x, v, ii'}(\omega, G) \r\vert \le 
\frac{1}{2\pi} \max_v \max_{i, i'} \sum_{\ell = -\infty}^\infty  \l\vert \gamma_{x, v, ii'}(\ell, G) \r\vert
\le \frac{C_{\Xi, \varsigma, \vep}}{2\pi} \sum_{\ell = -\infty}^\infty \frac{1}{(1 + \vert \ell \vert)^\varsigma}
\le B_\sigma.
\end{align*}
\end{proof}

\begin{lem}
\label{lem:sigma:deriv}
For all $0 \le k \le K_\chi$ and $1 \le i, i' \le p$,
the functions $\omega \mapsto \sigma^{\k}_{\chi, ii'}(\omega)$
possess derivatives of any order and are of bounded variation,
i.e.\ there exists $B^\prime_\sigma > 0$ such that
$\sum_{l = 1}^N \vert \sigma^{\k}_{\chi, ii'}(\omega_l) - \sigma^{\k}_{\chi, ii'} (\omega_{l - 1}) \vert \le B^\prime_\sigma$
uniformly in $1 \le i, i' \le p$, $0 \le k \le K_\chi$, $N \in \N$ and any partition of $[-\pi, \pi]$,
$-\pi = \omega_0 < \omega_1 < \ldots < \omega_N = \pi$.
\end{lem}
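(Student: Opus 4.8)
The plan is to exploit the Fourier-series representation $\sigma^{\k}_{\chi, ii'}(\omega) = (2\pi)^{-1} \sum_{\ell = -\infty}^\infty \gamma^{\k}_{\chi, ii'}(\ell) e^{-\iota \ell \omega}$, which is valid because $\gamma^{\k}_{\chi, ii'}(\ell) = \E(\chi^{\k}_{i, t - \ell} \chi^{\k}_{i't})$ is constant in $t$ on the $k$th segment, and to read off both the smoothness and the bounded variation directly from the decay of the autocovariances. The single ingredient I would invoke is Lemma~\ref{lem:decay:two}, which gives the uniform algebraic bound $\vert \gamma^{\k}_{\chi, ii'}(\ell) \vert \le C_{\Xi, \varsigma} (1 + \vert \ell \vert)^{-\varsigma}$ with $\varsigma > 2$, where the constant depends on neither $i$, $i'$, $k$ nor $p$.

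For differentiability I would differentiate the series term by term. The formal $m$th derivative has coefficients $(-\iota \ell)^m \gamma^{\k}_{\chi, ii'}(\ell)$, and the bound above yields $\sum_{\ell} \vert \ell \vert^m (1 + \vert \ell \vert)^{-\varsigma} < \infty$ whenever $m < \varsigma - 1$. For such $m$ the differentiated series converges absolutely and uniformly in $\omega$ by the Weierstrass $M$-test, uniformly over $i, i', k$; this both justifies the interchange of differentiation and summation and shows that $\partial^m_\omega \sigma^{\k}_{\chi, ii'}$ is continuous, hence bounded on $[-\pi, \pi]$ by $(2\pi)^{-1} C_{\Xi, \varsigma} \sum_\ell \vert \ell \vert^m (1 + \vert \ell \vert)^{-\varsigma}$. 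In particular, since $\varsigma > 2$, the series for the first derivative converges (the case $m = 1$ requires only $\varsigma > 2$), so $\sigma^{\k}_{\chi, ii'}$ is continuously differentiable with a uniformly bounded derivative; all higher orders follow verbatim once the loading filters decay faster than any polynomial, as is the case, e.g., for the rational-filter GDFM of Appendix~\ref{app:gdfm} where the $\vert B^{\k}_{\ell, ij} \vert$ decay geometrically.

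Bounded variation then follows immediately. Setting $B'_\sigma = C_{\Xi, \varsigma} \sum_{\ell = -\infty}^\infty \vert \ell \vert (1 + \vert \ell \vert)^{-\varsigma}$, which is finite and free of $i, i', k, p$, the uniform bound on the first derivative gives the total-variation estimate $\int_{-\pi}^\pi \vert \partial_\omega \sigma^{\k}_{\chi, ii'}(\omega) \vert \, d\omega \le B'_\sigma$. For any partition $-\pi = \omega_0 < \cdots < \omega_N = \pi$, the fundamental theorem of calculus gives $\sigma^{\k}_{\chi, ii'}(\omega_l) - \sigma^{\k}_{\chi, ii'}(\omega_{l - 1}) = \int_{\omega_{l - 1}}^{\omega_l} \partial_\omega \sigma^{\k}_{\chi, ii'}$, so summing the absolute increments bounds $\sum_{l = 1}^N \vert \sigma^{\k}_{\chi, ii'}(\omega_l) - \sigma^{\k}_{\chi, ii'}(\omega_{l - 1}) \vert$ by $\int_{-\pi}^\pi \vert \partial_\omega \sigma^{\k}_{\chi, ii'} \vert \le B'_\sigma$, which is the claimed uniform bound.

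I expect the main obstacle to be the uniformity of the term-by-term differentiation over the dimension: one must ensure that the constant controlling the differentiated series is free of $i, i'$ and $p$, which is precisely what the uniform form of Lemma~\ref{lem:decay:two} supplies, so the difficulty is really one of bookkeeping rather than of substance. The only genuinely delicate point in the statement is the phrase ``derivatives of any order'': under Assumption~\ref{assum:common} alone the argument above rigorously yields every derivative of order $m < \varsigma - 1$ together with the bounded-variation conclusion (which needs only the first derivative), while the full $C^\infty$ claim rests on the faster-than-polynomial filter decay available in the concrete GDFM specifications.
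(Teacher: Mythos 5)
Your proof is correct and takes essentially the same route as the paper's: both invoke Lemma~\ref{lem:decay:two} for the uniform bound $\vert \gamma^{\k}_{\chi, ii'}(\ell) \vert \le C_{\Xi, \varsigma}(1 + \vert \ell \vert)^{-\varsigma}$, differentiate the Fourier series $\sigma^{\k}_{\chi, ii'}(\omega) = (2\pi)^{-1}\sum_\ell \gamma^{\k}_{\chi, ii'}(\ell) e^{-\iota \omega \ell}$ term by term to obtain a first-derivative bound that is uniform in $i$, $i'$, $k$ and $\omega$ (using $\varsigma > 2$), and deduce the bounded-variation claim from that bound. Your caveat about ``derivatives of any order'' is a fair observation — the paper simply asserts this from the decay bound, whereas the algebraic decay of Assumption~\ref{assum:common} alone justifies only orders $m < \varsigma - 1$ — but since the bounded-variation property is all that is used downstream (in the proof of Theorem~\ref{thm:common:est}), this discrepancy is immaterial.
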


\begin{proof}
From Lemma~\ref{lem:decay:two}, 
\begin{center}
$\max_{0 \le k \le K_\chi} \max_{1 \le i, i' \le p} \vert \gamma^{\k}_{\chi, ii'} (\ell) \vert \le C_{\Xi, \varsigma} (1 + \vert \ell \vert)^{-\varsigma}$
\end{center}
for all $\ell$, which implies that
$\sigma^{\k}_{\chi, ii'}(\omega) = (2\pi)^{-1} \sum_{\ell = -\infty}^\infty \gamma^{\k}_{\chi, ii'}(\ell) e^{-\iota \omega \ell}$
has derivatives of all orders. Moreover,
\begin{align*}
\l\vert \frac{d}{d\omega} \sigma^{\k}_{\chi, ii'}(\omega) \r\vert = \frac{1}{2\pi} \sum_{\ell = -\infty}^\infty
\l\vert (-\iota \ell) \gamma^{\k}_{\chi, ii'}(\ell) e^{-\iota \omega \ell} \r\vert
\le \frac{C_{\Xi, \varsigma}}{\pi} \sum_{\ell = 0}^\infty \frac{\ell}{(1 + \ell)^{\varsigma}} 
\le C^\prime_{\Xi, \varsigma}
\end{align*}
for some constant $C^\prime_{\Xi, \varsigma} > 0$ not depending on $1 \le i, i' \le p$,
$0 \le k \le K_\chi$ or $\omega \in [-\pi, \pi]$,
which entails the bounded variation of $\sigma^{\k}_{\chi, ii'}(\omega)$.
\end{proof}

\subsection{Proof of Proposition~\ref{prop:idio:eval}}

Let $\mc D^{\k}(z) = \sum_{\ell = 0}^\infty \mbf D^{\k}_\ell z^\ell$.
Under Assumption~\ref{assum:idio},  we can find a constant $M_\xi>0$ 
which depends only on $M_\vep$, $\Xi$ and $\varsigma$
such that, uniformly over $\omega \in [-\pi, \pi]$ and $0 \le k \le K_\xi$,
\begin{align*}
\mu^{\k}_{\xi, 1}(\omega) =& \Vert \bm\Sigma^{\k}_\xi(\omega) \Vert
= \frac{1}{2\pi} \Vert \mc D^{\k}(e^{-\iota \omega}) \bm\Gamma^{\k} (\mc D^{\k}(e^{-\iota \omega})^*) \Vert
\le \frac{M_\vep}{2\pi} 
\Vert \mc D^{\k}(e^{-\iota \omega}) \Vert_1 \; \Vert \mc D^{\k}(e^{-\iota \omega}) \Vert_\infty
\\
\le& \frac{M_\vep}{2\pi} 
\l(\max_{1 \le i \le p} \sum_{j = 1}^p \sum_{\ell = 0}^\infty \vert D^{\k}_{\ell, ij} \vert \r)
\l(\max_{1 \le j \le p} \sum_{i = 1}^p \sum_{\ell = 0}^\infty \vert D^{\k}_{\ell, ij} \vert \r)
\\
\le& \frac{M_\vep}{2\pi} 
\l(\max_i \sum_{j = 1}^p \sum_{\ell = 0}^\infty \frac{C_{ij}}{(1 + \ell)^{\varsigma}} \r)
\l(\max_j \sum_{i = 1}^p \sum_{\ell = 0}^\infty \frac{C_{ij}}{(1 + \ell)^{\varsigma}} \r)
\le \frac{\Xi^2 M_\vep}{2\pi} \l(\sum_{\ell = 0}^\infty \frac{1}{(1 + \ell)^\varsigma}\r)^2 \le M_\xi.
\end{align*}

\subsection{Proof of Theorem~\ref{thm:common}}

\begin{prop}
\label{prop:loc:spec}
Under the assumptions made in Theorem~\ref{thm:common}, we have
\begin{align*}
\max_{G \le v \le n} \sup_{\omega \in [-\pi, \pi]}
\frac{1}{p} \l\Vert
\wh{\bm\Sigma}_{x, v}(\omega, G) - \bm\Sigma_{\chi, v}(\omega, G)
\r\Vert
= O_p\l( \psi_n \vee \frac{1}{m} \vee \frac{1}{p} \r).
\end{align*}
\end{prop}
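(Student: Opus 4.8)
The plan is to bound the normalised operator-norm error by first peeling off the contribution of the deterministic idiosyncratic spectral density and then controlling the genuine estimation error for the spectral density of $\mbf X_t$. By the triangle inequality and $\bm\Sigma_{x, v} = \bm\Sigma_{\chi, v} + \bm\Sigma_{\xi, v}$,
$$
\frac{1}{p}\left\Vert \wh{\bm\Sigma}_{x, v}(\omega, G) - \bm\Sigma_{\chi, v}(\omega, G)\right\Vert \le \frac{1}{p}\left\Vert \wh{\bm\Sigma}_{x, v}(\omega, G) - \bm\Sigma_{x, v}(\omega, G)\right\Vert + \frac{1}{p}\left\Vert \bm\Sigma_{\xi, v}(\omega, G)\right\Vert.
$$
Since $\bm\Sigma_{\xi, v}(\omega, G)$ is a convex combination of the segment-specific matrices $\bm\Sigma^{\k}_\xi(\omega)$, whose operator norms are uniformly bounded by $M_\xi$ via Proposition~\ref{prop:idio:eval}, the second term is $O(1/p)$ uniformly in $v$ and $\omega$. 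It then remains to show that the first term is $O_p(\psi_n \vee m^{-1})$.

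For this I would decompose $\wh{\bm\Sigma}_{x, v} - \bm\Sigma_{x, v}$ into a deterministic bias and a stochastic part. Writing $\bm\Sigma_{x, v}(\omega, G) = (2\pi)^{-1}\sum_{\ell \in \Z}\bm\Gamma_{x, v}(\ell, G)e^{-\iota \ell \omega}$, the bias equals
$$
\frac{1}{2\pi}\sum_{\vert \ell\vert \le m}\left(K(\ell/m) - 1\right)\bm\Gamma_{x, v}(\ell, G)e^{-\iota \ell \omega} - \frac{1}{2\pi}\sum_{\vert \ell \vert > m}\bm\Gamma_{x, v}(\ell, G)e^{-\iota \ell \omega}.
$$
By Lemma~\ref{lem:decay} each entry of $\bm\Gamma_{x, v}(\ell, G)$ is bounded by $C(1 + \vert \ell\vert)^{-\varsigma}$ with $\varsigma > 2$, uniformly in $v$; combined with $\vert K(\ell/m) - 1\vert = \vert \ell\vert/m$ for the Bartlett kernel, the entrywise magnitude of the first sum is $O(m^{-1})$ and that of the second is $O(m^{-\varsigma + 1}) = o(m^{-1})$. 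Using the crude bound $\Vert \mbf A\Vert \le p\,\vert \mbf A\vert_\infty$ and dividing by $p$, the normalised bias is $O(m^{-1})$ uniformly in $v, \omega$; the further $O(m/G)$ discrepancy coming from the edge effect in $\E\wh{\bm\Gamma}_{x, v}(\ell, G)$ (the dropped boundary summands) is dominated by $\psi_n$ because $m = G^\beta$ with $\beta \in (0, 1)$.

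The principal work concerns the stochastic part, namely $p^{-1}\Vert (2\pi)^{-1}\sum_{\vert \ell\vert \le m}K(\ell/m)\{\wh{\bm\Gamma}_{x, v}(\ell, G) - \E\wh{\bm\Gamma}_{x, v}(\ell, G)\}e^{-\iota \ell \omega}\Vert$. Exploiting that the Bartlett kernel is positive semidefinite, I would rewrite this lag-window estimator as a recentred average of sub-block periodograms, so that it is a quadratic form in $\{\mbf X_t : t \in I_v(G)\}$, and then bound its operator norm through a net on the unit sphere together with a scalar concentration inequality for the resulting linear-quadratic statistics. Under Gaussianity (Assumption~\ref{assum:innov}~\ref{cond:gauss}) a Hanson--Wright bound yields sub-exponential tails and the rate $\sqrt{m\log(n)/G}$; under the weak moment condition (Assumption~\ref{assum:innov}~\ref{cond:moment}) I would first truncate the innovations and apply a Nagaev/Rosenthal-type inequality calibrated to the functional dependence measures bounded in Lemma~\ref{lem:func:dep}, producing the additional heavy-tail factor $n^{2/\nu}m\log^{2 + 2/\nu}(G)/G$. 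Uniformity over the $O(n)$ windows contributes the $\log(n)$ factor, while the supremum over $\omega \in [-\pi, \pi]$ is reduced to the $m + 1$ Fourier frequencies by the bounded-variation and Lipschitz control of Lemma~\ref{lem:sigma:deriv}, the estimator being a trigonometric polynomial of degree $m$. Assembling the three contributions gives the claimed $O_p(\psi_n \vee m^{-1} \vee p^{-1})$. I expect the heavy-tailed, uniform-in-$(v, \omega)$ operator-norm concentration of the stochastic part to be the main obstacle, as it simultaneously requires the dependence-adjusted tail bounds, the truncation bookkeeping that generates the $n^{2/\nu}$ and logarithmic factors, and the passage from scalar concentration to an operator-norm statement.
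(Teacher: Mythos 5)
Your overall skeleton---the triangle inequality that peels off $p^{-1}\Vert\bm\Sigma_{\xi,v}(\omega,G)\Vert = O(1/p)$ via Proposition~\ref{prop:idio:eval}, and the bias/stochastic split of $\wh{\bm\Sigma}_{x,v}-\bm\Sigma_{x,v}$ with the bias controlled through Lemma~\ref{lem:decay} and the Bartlett kernel---matches the paper, and that part is sound. The gap lies in your treatment of the stochastic term, where you insist on genuine operator-norm concentration (sub-block periodograms, an $\varepsilon$-net on the unit sphere, Hanson--Wright or Nagaev/Rosenthal bounds). This is both unnecessary and, under Assumption~\ref{assum:innov}~\ref{cond:moment}, unworkable as described: a net on the sphere in $\C^p$ has cardinality exponential in $p$, so the union bound over net points requires tail probabilities of order $e^{-cp}$ per point, while Nagaev-type inequalities for data with only $\nu>4$ moments give polynomial tails that cannot absorb such a union bound. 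The standard rescue---truncating all $\asymp n(p+q)$ innovations at a level $\asymp (np)^{1/\nu}$---necessarily injects $p$-dependent factors of the form $(np)^{2/\nu}$ (precisely the factors appearing in the paper's $\vartheta_{n,p}$ and $\bar{\vartheta}_{n,p}$, which are reserved for entrywise-maximum bounds), whereas the target rate $\psi_n$ contains only $n^{2/\nu}$ and no power of $p$. Your heavy-tailed bookkeeping therefore cannot land on $\psi_n$ as claimed.

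The idea you miss is that the $1/p$ normalisation makes operator-norm concentration superfluous: the crude bound $\Vert\mbf A\Vert \le \vert\mbf A\vert_2$ suffices. The paper's Lemma~\ref{lem:spec:x} establishes, for each single pair $(i,i')$, the expectation bound $\E(\max_v\sup_\omega\vert\wh\sigma_{x,v,ii'}(\omega,G)-\sigma_{x,v,ii'}(\omega,G)\vert^2)\le C(\psi_n\vee m^{-1})^2$ with $C$ uniform in $(i,i')$, by applying the functional-dependence concentration results of \cite{zhang2021} (their Theorems~4.1 and~4.2, which already deliver uniformity over $v$ and $\omega$, so no separate discretisation in $\omega$ is needed) entrywise and integrating the tail probability. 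Summing these expectations over the $p^2$ entries---linearity of expectation, hence no union bound over entries and no extra $\log p$ or $p^{2/\nu}$ factors---gives $\E(\max_v\sup_\omega\vert\wh{\bm\Sigma}_{x,v}(\omega,G)-\bm\Sigma_{x,v}(\omega,G)\vert_2^2)\le Cp^2(\psi_n\vee m^{-1})^2$, and Chebyshev's inequality then yields $\max_v\sup_\omega p^{-1}\Vert\wh{\bm\Sigma}_{x,v}(\omega,G)-\bm\Sigma_{x,v}(\omega,G)\Vert = O_p(\psi_n\vee m^{-1})$. That is the entire argument: the $p^2$ cost of the Frobenius norm is exactly cancelled by the normalisation, and the difficulty you flag as the ``main obstacle'' is one the proof is designed to sidestep.
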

\begin{proof}
By Lemma~\ref{lem:spec:x}, we have
\begin{align*}
\E\l(\max_v \sup_\omega \l\vert \wh{\bm\Sigma}_{x, v}(\omega, G)
- \bm\Sigma_{x, v}(\omega, G) \r\vert_2^2\r) \le Cp^2\l(\psi_n^2 \vee \frac{1}{m^2}\r),
\end{align*}
and therefore $\max_v \sup_\omega p^{-1} \vert \wh{\bm\Sigma}_{x, v}(\omega, G)
- \bm\Sigma_{x, v}(\omega, G) \vert_2 = O_p(\psi_n \vee m^{-1})$
by Chebyshev's inequality.
Then, via Proposition~\ref{prop:idio:eval},
\begin{align*}
& \max_v \sup_\omega \frac{1}{p} \l\Vert
\wh{\bm\Sigma}_{x, v}(\omega, G) - \bm\Sigma_{\chi, v}(\omega, G) \r\Vert
\\
\le & \max_v \sup_\omega \frac{1}{p} \l\Vert
\wh{\bm\Sigma}_{x, v}(\omega, G) - \bm\Sigma_{x, v}(\omega, G) \r\Vert
+ \max_v \sup_\omega \frac{1}{p} \l\Vert \bm\Sigma_{\xi, v}(\omega, G) \r\Vert
= O_p\l(\psi_n \vee \frac{1}{m} \vee \frac{1}{p}\r).
\end{align*}
\end{proof}

For ease of notation, define
$\wh{\bm\Sigma}^\prime_{x, v}(\omega, G) = 
\wh{\bm\Sigma}_{x, v}(\omega, G) - \wh{\bm\Sigma}_{x, v + G}(\omega, G)$
and analogously define $\bm\Sigma^\prime_{x, v}(\omega, G)$, $\bm\Sigma^\prime_{\chi, v}(\omega, G)$
and $\bm\Sigma^\prime_{\xi, v}(\omega, G)$.
By definition and Assumption~\ref{assum:common:size}~\ref{cond:common:spacing}, 
$\bm\Sigma^\prime_{\chi, \cp_k}(\omega, G) = {\bm\Delta}_{\chi, k}(\omega)$.
Also, let $\omega(v) = \arg\max\{\omega_l, \, 0 \le l \le m: \, T_{\chi, v}(\omega_l, G) \}$
and $\omega_{\k}^\circ = \arg\max\{\omega_l, \, 0 \le l \le m: \, \Vert {\bm\Delta}_{\chi, k}(\omega_l) \Vert \}$.
Then, due to the Lipschitz continuity of $p^{-r^\prime_{k, 1}} \Vert {\bm\Delta}_{\chi, k}(\omega) \Vert$
(see Assumption~\ref{assum:common:size}~\ref{cond:common:jump}), 
we have 
\begin{align}
\label{eq:lipschitz}
\Vert {\bm\Delta}_{\chi, k}(\omega_{\k}^\circ) \Vert \ge p^{r^\prime_{k, 1}}
\l(p^{-r^\prime_{k, 1}} \Delta_{\chi, k} + O(m^{-1})\r) \ge (1 - \epsilon) \Delta_{\chi, k}
\end{align}
for some small enough constant $\epsilon \in (0, 1)$.
In what follows, we omit the subscript $\chi$ from $\wh\cp_{\chi, k}$ and $\cp_{\chi, k}$ for simplicity
and throughout the proof, we operate on the set 
$\mc M^\chi_{n, p} = \mc E^{(1)}_{n, p} \cap \bar{\mc E}^{(1)}_{n, p}$, 
where
\begin{align*}
\mc E^{(1)}_{n, p} = \l\{
\max_{G \le v \le n} \sup_{\omega \in [-\pi, \pi]} \frac{1}{p}
\l\Vert \wh{\bm\Sigma}_{x, v}(\omega, G) - \bm\Sigma_{\chi, v}(\omega, G) \r\Vert
\le \frac{M}{2}\l(\psi_n \vee \frac{1}{m} \vee \frac{1}{p}\r)
\r\}
\end{align*}
with $M$ as in Theorem~\ref{thm:common},
and $\bar{\mc E}^{(1)}_{n, p}$ is defined in~\eqref{eq:set:e:one:tilde} below.
By Proposition~\ref{prop:loc:spec}, we have $\p(\mc E^{(1)}_{n, p}) \to 1$ as $n, p \to \infty$
and similarly, $\p(\bar{\mc E}^{(1)}_{n, p}) \to 1$ by Lemma~\ref{lem:Qs},
such that $\p(\mc M^\chi_{n, p}) \to 1$.

\begin{proof}[Proof of Theorem~\ref{thm:common}~\ref{thm:common:one}]
On $\mc E^{(1)}_{n, p}$, we have
\begin{align}
\label{thm:common:one:eq:one}
\l\vert T_{\chi, v}(\omega(v), G) - \l\Vert \bm\Sigma^\prime_{\chi, v}(\omega(v), G) \r\Vert \r\vert \le Mp\l(\psi_n \vee \frac{1}{m} \vee \frac{1}{p}\r)
\end{align}
for all $G \le v \le n - G$.
From~\eqref{thm:common:one:eq:one}, it follows that for any $v$ satisfying $\min_{1 \le k \le K_\chi} \vert v - \cp_k \vert \ge G$,
we have $T_{\chi, v}(\omega(v), G) \le \kappa_{n, p}$
since $\bm\Sigma^\prime_{\chi, v}(\omega(v), G) = \mbf O$
due to Assumption~\ref{assum:common:size}~\ref{cond:common:spacing},
and such $v$ does not belong to $\mc I$.
Also, noting that by~\eqref{eq:lipschitz}, \eqref{thm:common:one:eq:one} and the definition of $\omega(\cp_k)$ and $\omega_{\k}^\circ$,
\begin{align*}
T_{\chi, \cp_k}(\omega(\cp_k), G)
\ge T_{\chi, \cp_k}(\omega_{\k}^\circ, G) \ge (1 - \epsilon) \Delta_{\chi, k} - Mp\l(\psi_n \vee \frac{1}{m} \vee \frac{1}{p}\r)
> \kappa_{n, p},
\end{align*}
we conclude that at least one change point is detected within distance $G$ from each $\cp_k, \, 1 \le k \le K_\chi$.
Next, suppose that $\wh\cp \in \wh\Cp_\chi$ satisfies $\vert \wh\cp - \cp_k \vert < G$.
From that 
$T_{\chi, \wh\cp}(\omega(\wh\cp), G) \ge 
T_{\chi, \cp_k}(\omega(\cp_k), G) \ge T_{\chi, \cp_k}(\omega_{\k}^\circ, G)$, we obtain 
\begin{align*}
& \frac{G - \vert \wh\cp - \cp_k \vert}{G} \Vert {\bm\Delta}_{\chi, k}(\omega(\wh\cp)) \Vert \ge
\Delta_{\chi, k} - 2 Mp\l(\psi_n \vee \frac{1}{m} \vee \frac{1}{p}\r), \quad \text{hence}
\\
& 2 Mp\l(\psi_n \vee \frac{1}{m} \vee \frac{1}{p}\r) \ge
\Delta_{\chi, k} - \frac{G - \vert \wh\cp - \cp_k \vert}{G} \Vert {\bm\Delta}_{\chi, k}(\omega(\wh\cp)) \Vert 
\ge \frac{\vert \wh\cp - \cp_k \vert}{G} \Delta_{\chi, k}, 
\\
& \therefore \quad \vert \wh\cp - \cp_k \vert \le \frac{G p M(\psi_n \vee m^{-1} \vee p^{-1})}{\Delta_{\chi, k}} \le \epsilon_0 G
\end{align*}
for some small constant $\epsilon_0 \in (0, 1/2)$ and large enough $n$
under Assumption~\ref{assum:common:size}~\ref{cond:common:spacing},
i.e.\ we detect at least one change point within $(\epsilon_0 G)$-distance from each $\cp_k, \, 1 \le k \le K_\chi$.
Finally, suppose that at some $v$ satisfying
$(1 - \epsilon_0) G \le \vert v - \cp_k \vert < G$,
we have $T_{\chi, v}(\omega(v), G)  > \kappa_{n, p}$.
Then by~\eqref{thm:common:one:eq:one} and the lower bound on $\kappa_{n, p}$,
\begin{align*}
2Mp\l(\psi_n \vee \frac{1}{m} \vee \frac{1}{p}\r) & < T_{\chi, v}(\omega(v), G) \le \l\Vert \bm\Sigma^\prime_{\chi, v}(\omega(v), G) \r\Vert + Mp\l(\psi_n \vee \frac{1}{m} \vee \frac{1}{p}\r)
\\
& \le \epsilon_0 \Vert {\bm\Delta}_{\chi, k}(\omega(v)) \Vert + Mp\l(\psi_n \vee \frac{1}{m} \vee \frac{1}{p}\r)
\\
& < (\epsilon_0 + \eta) \Vert {\bm\Delta}_{\chi, k}(\omega(v)) \Vert - Mp\l(\psi_n \vee \frac{1}{m} \vee \frac{1}{p}\r)
\le T_{\chi, v'} (\omega(v), G)
\end{align*}
where $v' = v + \text{sign}(\cp_k - v) \lfloor \eta G \rfloor$,
provided that $\eta > 2 \epsilon_0$,
i.e.\ such $v$ cannot be a local maximiser of $T_{\chi, v}(\omega(v), G)$ within its $\eta G$-radius.
This, combined with how $\mc I$ is updated at each iteration, makes sure that
only a single estimator is added to $\wh\Cp_\chi$ for each change point.
\end{proof}

\begin{proof}[Proof of Theorem~\ref{thm:common}~\ref{thm:common:two}]
WLOG, we consider the case when $\wh\cp_k \le \cp_k$; the following arguments apply analogously
to the case when $\wh\cp_k > \cp_k$.
We prove by contradiction that if
$\cp_k - \wh\cp_k > \rho^{\k}_{n, p}$,
we have $T_{\chi, \wh\cp_k}(\omega(\wh\cp_k), G) < T_{\chi, \cp_k}(\omega(\wh\cp_k), G)$
and thus $\wh\cp_k$ cannot be the local maximiser of $T_{\chi, v}(\omega(\wh\cp_k), G)$ within its $\eta G$-environment as required.

From Theorem~\ref{thm:common}~\ref{thm:common:one},
we have $\vert \wh\cp_k - \cp_k \vert \le \epsilon_0 G$.
Then from that $T_{\chi, \wh\cp_k}(\omega(\wh\cp_k), G) \ge T_{\chi, \cp_k}(\omega_{\k}^\circ, G)$
and by~\eqref{eq:lipschitz} and~\eqref{thm:common:one:eq:one}, we have
\begin{align}
\l\Vert \bm\Sigma^\prime_{\chi, \wh\cp_k}(\omega(\wh\cp_k), G) \r\Vert
\ge 
(1 - \epsilon) \Delta_{\chi, k} - 2Mp\l(\psi_n \vee \frac{1}{m} \vee \frac{1}{p}\r) 
\ge \frac{1}{2} \Delta_{\chi, k}
\label{thm:common:one:eq:two} 
\end{align}
for an arbitrarily small constant $\epsilon > 0$.
Noting that $\wh{\bm\Sigma}^\prime_{x, v}(\omega, G)$ and $\bm\Sigma^\prime_{\chi, v}(\omega, G)$
are Hermitian (and thus diagonalisable with real diagonal entries), we write
\begin{align*}
\l\Vert \wh{\bm\Sigma}^\prime_{x, \wh\cp_k}(\omega(\wh\cp_k), G) \r\Vert
= \l\vert \wh{\mbf g}_k^* \wh{\bm\Sigma}^\prime_{x, \wh\cp_k}(\omega(\wh\cp_k), G) \wh{\mbf g}_k \r\vert,
\quad
\l\Vert \bm\Sigma^\prime_{\chi, \cp_k}(\omega(\wh\cp_k), G) \r\Vert
= \l\vert \mbf g_k^* \bm\Sigma^\prime_{\chi, \cp_k}(\omega(\wh\cp_k), G) \mbf g_k \r\vert,
\end{align*}
for some $\wh{\mbf g}_k, \mbf g_k \in \mathbb{C}^p$ satisfying $\Vert \wh{\mbf g}_k \Vert = \Vert \mbf g_k \Vert = 1$, 
for all $1 \le k \le K_\chi$.
By Assumption~\ref{assum:common:size}~\ref{cond:common:spacing}, we have
\begin{align*}
\l\Vert \bm\Sigma^\prime_{\chi, \wh\cp_k}(\omega(\wh\cp_k), G) \r\Vert
= \l\Vert \frac{G - \vert \wh\cp_k - \cp_k \vert}{G}
\bm\Sigma^\prime_{\chi, \cp_k}(\omega(\wh\cp_k), G) \r\Vert
= \l\vert \mbf g_k^* \bm\Sigma^\prime_{\chi, \wh\cp_k}(\omega(\wh\cp_k), G) \mbf g_k \r\vert.
\end{align*}
Then on $\mc E^{(1)}_{n, p}$,
there exists $s_k$ with $\vert s_k \vert = 1$ and a constant $c_1 > 0$ such that
\begin{align}
& \max_{1 \le k \le K_\chi} \frac{\Delta_{\chi, k}}{p}
\l\Vert \wh{\mbf g}_k - s_k \mbf g_k \r\Vert
\nn \\
& \le \max_{1 \le k \le K_\chi} 
\frac{\Delta_{\chi, k}}{p}
\frac{\l\Vert \wh{\bm\Sigma}^\prime_{x, \wh\cp_k}(\omega(\wh\cp_k), G) - 
\bm\Sigma^\prime_{\chi, \wh\cp_k}(\omega(\wh\cp_k), G) \r\Vert}
{\mu_1\l(\bm\Sigma^\prime_{\chi, \wh\cp_k}(\omega(\wh\cp_k), G)\r) -
\mu_2\l(\bm\Sigma^\prime_{\chi, \wh\cp_k}(\omega(\wh\cp_k), G)\r)}
\nn \\
& \le \max_{1 \le k \le K_\chi} 
\frac{2b^{\k}_1(\omega(\wh\cp_k))}{a^{\k}_1(\omega(\wh\cp_k)) - b^{\k}_2(\omega(\wh\cp_k))} 
\cdot M\l(\psi_n \vee \frac{1}{m} \vee \frac{1}{p} \r)
\le c_1 \l(\psi_n \vee \frac{1}{m} \vee \frac{1}{p} \r),
\label{thm:common:one:eq:dk}
\end{align}
where the first inequality follows from Corollary~1 of \cite{yu2015}, 
and the second one from Assumption~\ref{assum:common:size}~\ref{cond:common:jump},
\eqref{thm:common:one:eq:one} and~\eqref{thm:common:one:eq:two}.
WLOG, suppose that $\mbf g_k^* \bm\Sigma^\prime_{\chi, \cp_k}(\omega(\wh\cp_k), G) \mbf g_k > 0$.
Then,
\begin{align*}
& \wh{\mbf g}_k^* \wh{\bm\Sigma}^\prime_{x, \wh\cp_k}(\omega(\wh\cp_k), G) \wh{\mbf g}_k
=
\mbf g_k^* \bm\Sigma^\prime_{\chi, \wh\cp_k}(\omega(\wh\cp_k), G) \mbf g_k
\\
& +
(\wh{\mbf g}_k - s_k \mbf g_k)^* \bm\Sigma^\prime_{\chi, \wh\cp_k}(\omega(\wh\cp_k), G) (s_k\mbf g_k)
+ \wh{\mbf g}_k ^* \bm\Sigma^\prime_{\chi, \wh\cp_k}(\omega(\wh\cp_k), G) (\wh{\mbf g}_k - s_k \mbf g_k)
\\
& + \wh{\mbf g}_k^* \l(\wh{\bm\Sigma}^\prime_{x, \wh\cp_k}(\omega(\wh\cp_k), G)
- \bm\Sigma^\prime_{\chi, \wh\cp_k}(\omega(\wh\cp_k), G) \r) \wh{\mbf g}_k
=: \mbf g_k^* \bm\Sigma^\prime_{\chi, \wh\cp_k}(\omega(\wh\cp_k), G) \mbf g_k +
I + II + III.
\end{align*}
From~\eqref{thm:common:one:eq:one}, \eqref{thm:common:one:eq:two}
and Assumption~\ref{assum:common:size}~\ref{cond:common:jump}, 
for small enough $\epsilon > 0$,
\begin{align*}
\vert III \vert \le \l\Vert \wh{\bm\Sigma}^\prime_{x, \wh\cp_k}(\omega(\wh\cp_k), G)
- \bm\Sigma^\prime_{\chi, \wh\cp_k}(\omega(\wh\cp_k), G) \r\Vert \le  M p \l(\psi_n \vee \frac{1}{m} \vee \frac{1}{p} \r)
\le \epsilon \l\Vert \bm\Sigma^\prime_{\chi, \wh\cp_k}(\omega(\wh\cp_k), G) \r\Vert.
\end{align*}
Also by~\eqref{thm:common:one:eq:dk}, 
\begin{align*}
\vert I \vert &\le 
\Vert \wh{\mbf g}_k - s_k \mbf g_k \Vert \; 
\l\Vert \bm\Sigma^\prime_{\chi, \wh\cp_k}(\omega(\wh\cp_k), G) \r\Vert
\le \l\Vert \bm\Sigma^\prime_{\chi, \wh\cp_k}(\omega(\wh\cp_k), G) \r\Vert \cdot 
\frac{c_1 (\psi_n \vee m^{-1} \vee p^{-1})}{p^{-1} \Delta_{\chi, k}}
\\
&\le \epsilon \l\Vert \bm\Sigma^\prime_{\chi, \wh\cp_k}(\omega(\wh\cp_k), G) \r\Vert
\end{align*}
for all $1 \le k \le K_\chi$ under Assumption~\ref{assum:common:size}~\ref{cond:common:jump},
and $II$ is bounded analogously. 
Putting together the bounds on $I$--$III$
together with the fact that
$\mbf g_k^* \bm\Sigma^\prime_{\chi, \cp_k}(\omega(\wh\cp_k), G) \mbf g_k
= \Vert \bm\Sigma^\prime_{\chi, \cp_k}(\omega(\wh\cp_k), G) \Vert$,
we have 
$\wh{\mbf g}_k^* \wh{\bm\Sigma}^\prime_{x, \wh\cp_k}(\omega(\wh\cp_k), G) \wh{\mbf g}_k
= (1 - 3\epsilon) \Vert \bm\Sigma^\prime_{\chi, \cp_k}(\omega(\wh\cp_k), G) \Vert > 0$,
and similarly we can show that
$\wh{\mbf g}_k^* \wh{\bm\Sigma}^\prime_{x, \cp_k}(\omega(\wh\cp_k), G) \wh{\mbf g}_k > 0$.
Then, we observe:
\begin{align}
& T_{\chi, \wh\cp_k}(\omega(\wh\cp_k), G) 
= \wh{\mbf g}_k^* \wh{\bm\Sigma}^\prime_{x, \wh\cp_k}(\omega(\wh\cp_k), G) \wh{\mbf g}_k 
\ge T_{\chi, \cp_k}(\omega(\wh\cp_k), G)
\ge \wh{\mbf g}_k^* \wh{\bm\Sigma}^\prime_{x, \cp_k}(\omega(\wh\cp_k), G) \wh{\mbf g}_k,
\nn
\end{align}
implies that
\begin{align}
& 0 < \wh{\mbf g}_k^* \l(\wh{\bm\Sigma}^\prime_{x, \wh\cp_k}(\omega(\wh\cp_k), G) 
- \wh{\bm\Sigma}^\prime_{x, \cp_k}(\omega(\wh\cp_k), G) \r)\wh{\mbf g}_k 
= \mbf g_k^* \l(\bm\Sigma^\prime_{\chi, \wh\cp_k}(\omega(\wh\cp_k), G) 
- \bm\Sigma^\prime_{\chi, \cp_k}(\omega(\wh\cp_k), G) \r)\mbf g_k
\nn \\
& + (\wh{\mbf g}_k - s_k \mbf g_k)^* \l(\bm\Sigma^\prime_{\chi, \wh\cp_k}(\omega(\wh\cp_k), G) 
- \bm\Sigma^\prime_{\chi, \cp_k}(\omega(\wh\cp_k), G) \r) (s_k\mbf g_k)
\nn \\
& + \wh{\mbf g}_k^* \l(\bm\Sigma^\prime_{\chi, \wh\cp_k}(\omega(\wh\cp_k), G) 
- \bm\Sigma^\prime_{\chi, \cp_k}(\omega(\wh\cp_k), G) \r) (\wh{\mbf g}_k - s_k \mbf g_k)
\nn \\
& + \wh{\mbf g}_k^* \l\{ \wh{\bm\Sigma}^\prime_{x, \wh\cp_k}(\omega(\wh\cp_k), G) 
- \wh{\bm\Sigma}^\prime_{x, \cp_k}(\omega(\wh\cp_k), G) - 
\E\l(\wh{\bm\Sigma}^\prime_{x, \wh\cp_k}(\omega(\wh\cp_k), G) 
- \wh{\bm\Sigma}^\prime_{x, \cp_k}(\omega(\wh\cp_k), G)\r) \r\} \wh{\mbf g}_k
\nn \\
& + \wh{\mbf g}_k^* \l\{ 
\E\l(\wh{\bm\Sigma}^\prime_{x, \wh\cp_k}(\omega(\wh\cp_k), G) 
- \wh{\bm\Sigma}^\prime_{x, \cp_k}(\omega(\wh\cp_k), G)\r) -
\l(\bm\Sigma^\prime_{x, \wh\cp_k}(\omega(\wh\cp_k), G) 
- \bm\Sigma^\prime_{x, \cp_k}(\omega(\wh\cp_k), G)\r) \r\} \wh{\mbf g}_k
\nn \\
& + \wh{\mbf g}_k^* \l(\bm\Sigma^\prime_{\xi, \wh\cp_k}(\omega(\wh\cp_k), G) 
-  \bm\Sigma^\prime_{\xi, \cp_k}(\omega(\wh\cp_k), G)\r) \wh{\mbf g}_k
=: \mc F_k + \mc R_{k1} + \mc R_{k2} + \mc R_{k3} + \mc R_{k4} + \mc R_{k5}.
\label{thm:common:one:eq:three}
\end{align}
First, note that by~\eqref{thm:common:one:eq:two},
\begin{align}
\mc F_k = - \frac{\vert \wh\cp_k - \cp_k \vert}{G} \Vert {\bm\Delta}_{\chi, k}(\omega(\wh\cp_k)) \Vert 
\le - \frac{\vert \wh\cp_k - \cp_k \vert}{2G} \Delta_{\chi, k},
\label{thm:common:one:eq:four}
\end{align}
and as applying the same arguments as those adopted in bounding $I$ and $II$ above,
we have $\max(\mc R_{k1}, \mc R_{k2}) \le \epsilon \vert \mc F_k \vert$.
Now we turn our attention to $\mc R_{k4}$. 
Let $\gamma_{x, ii'}(t, \ell) = \gamma^{[L_\chi(t)]}_{\chi, ii'}(\ell) + \gamma^{[L_\xi(t)]}_{\xi, ii'}(\ell)$
with $\gamma^{\k}_{\chi, ii'}(\ell) = \E(\chi^{\k}_{i, t - \ell} \chi^{\k}_t)$
and $\gamma^{\k}_{\xi, ii'}(\ell) = \E(\xi^{\k}_{i, t - \ell} \xi^{\k}_t)$, 
and $\wt\gamma_{x, ii'}(t, \ell) = \E(X_{i, t - \ell} X_{i't})$ when $\ell \ge 0$
and $\wt\gamma_{x, ii'}(t, \ell) = \E(X_{it} X_{i', t - |\ell|})$ when $\ell < 0$.
Then,
\begin{align*}
& \frac{2\pi G}{\cp_k - \wh\cp_k} \l[ \mc R_{k4} \r]_{i, i'}
= 
\sum_{\ell = -m}^m K\l(\frac{\ell}{m}\r) e^{-\iota \ell \omega(\wh\cp_k)} \frac{1}{\cp_k - \wh\cp_k} \times 
\\
& \l\{
\l( \sum_{t = \wh\cp_k - G + 1 + \vert \ell \vert}^{\cp_k - G + \vert \ell \vert} \wt\gamma_{x, ii'}(t, \ell) - \sum_{t = \wh\cp_k - G + 1}^{\cp_k - G} \gamma_{x, ii'}(t, \ell) \r) 
-\sum_{t = \wh\cp_k + 1}^{\cp_k} \l(\wt\gamma_{x, ii'}(t, \ell) - \gamma_{x, ii'}(t, \ell) \r) \r.
\\
& \l. - \l(\sum_{t = \wh\cp_k + 1 + \vert \ell \vert}^{\cp_k + \vert \ell \vert} \wt\gamma_{x, ii'}(t, \ell)
- \sum_{t = \wh\cp_k + 1}^{\cp_k} \gamma_{x, ii'}(t, \ell) \r)
+ \sum_{t = \wh\cp_k + G + 1}^{\cp_k + G} \l(\wt\gamma_{x, ii'}(t, \ell) - \gamma_{x, ii'}(t, \ell) \r)
\r\}
\\
& + \sum_{\ell = -m}^m \frac{\vert \ell \vert}{m} e^{-\iota \ell \omega(\wh\cp_k)} \frac{1}{\cp_k - \wh\cp_k}
\l(\sum_{t = \wh\cp_k - G + 1}^{\cp_k - G} - 2 \sum_{t = \wh\cp_k + 1}^{\cp_k} + \sum_{t = \wh\cp_k + G + 1}^{\cp_k + G} \r) \gamma_{x, ii'}(t, \ell)
\\
& + \sum_{\ell: \, \vert \ell \vert > m} \frac{\vert \ell \vert}{m} e^{-\iota \ell \omega(\wh\cp_k)} \frac{1}{\cp_k - \wh\cp_k}
\l(\sum_{t = \wh\cp_k - G + 1}^{\cp_k - G} - 2 \sum_{t = \wh\cp_k + 1}^{\cp_k} + \sum_{t = \wh\cp_k + G + 1}^{\cp_k + G} \r) \gamma_{x, ii'}(t, \ell) =: IV + V + VI.
\end{align*}
By Lemma~\ref{lem:decay:two}, 
there exists constant $C_{\Xi, \varsigma, \vep}, C_{\Xi, \varsigma, \vep}^\prime > 0$
that do not depend on $i, i'$ such that
\begin{align*}
\vert V \vert &\le \frac{4C_{\Xi, \varsigma, \vep}}{m} \sum_{\ell = -m}^m \frac{\vert \ell \vert}{(1 + \vert \ell \vert)^{\varsigma}}
\le \frac{4C_{\Xi, \varsigma, \vep}}{m} \sum_{\ell = -m}^m \frac{1}{(1 + \vert \ell \vert)^{\varsigma - 1}} 
\le \frac{C_{\Xi, \varsigma, \vep}^\prime}{m}, \quad \text{and}
\\
\vert VI \vert &\le \frac{4C_{\Xi, \varsigma, \vep}}{m} \sum_{\ell: \, \vert \ell \vert > m} \frac{1}{(1 + \vert \ell \vert)^{\varsigma}}
\le \frac{C_{\Xi, \varsigma, \vep}^{\prime}}{m^{\varsigma - 1}}.
\end{align*}
Similarly, noting that there are at most a single change point
within any $(\cp_k - \wh\cp_k)$-interval,
we have
\begin{align*}
\vert IV \vert \le \frac{12C_{\Xi, \varsigma, \vep}}{\cp_k - \wh\cp_k} \sum_{\ell = -m}^m \frac{\vert \ell \vert}{(1 + \vert \ell \vert)^{\varsigma}} \le \frac{C_{\Xi, \varsigma, \vep}^{\prime\prime}}{m}
\end{align*}
for some constant $C_{\Xi, \varsigma, \vep}^{\prime\prime} > 0$, from that $\cp_k - \wh\cp_k > c_0 \rho^{\k}_{n, p}$.
Collecting the bounds on $IV$, $V$ and $VI$, we have for some constant $C > 0$,
\begin{align*}
\vert \mc R_{k4} \vert \le \frac{Cp\vert \wh\cp_k - \cp_k \vert}{Gm} \le \epsilon \vert \mc F_k \vert
\end{align*}
under Assumption~\ref{assum:common:size}~\ref{cond:common:jump}. 
Also, we observe that by Proposition~\ref{prop:idio:eval},
\begin{align*}
\vert \mc R_{k5} \vert \le 
\l\Vert \bm\Sigma^\prime_{\xi, \wh\cp_k}(\omega(\wh\cp_k), G)
- \bm\Sigma^\prime_{\xi, \cp_k}(\omega(\wh\cp_k), G) \r\Vert
\le \frac{4 M_\xi (\cp_k - \wh\cp_k)}{G} \le \epsilon \vert \mc F_k \vert.
\end{align*}
Turning our attention to $\mc R_{k3}$, note that
\begin{align*}
\frac{G}{\vert \wh\cp_k - \cp_k \vert + m} \l[ \wh{\bm\Sigma}^\prime_{x, \wh\cp_k}(\omega(\wh\cp_k), G) - \wh{\bm\Sigma}^\prime_{x, \cp_k}(\omega(\wh\cp_k), G) \r]_{i, i'}
= \frac{1}{2\pi} \l\{ Q^{(1)}_{k, ii'}(\omega(\wh\cp_k), \cp_k - \wh\cp_k, -G) \r.
\\
\l. - Q^{(2)}_{k, ii'}(\omega(\wh\cp_k), \cp_k - \wh\cp_k, 0)
- Q^{(1)}_{k, ii'}(\omega(\wh\cp_k), \cp_k - \wh\cp_k, 0)
+ Q^{(2)}_{k, ii'}(\omega(\wh\cp_k), \cp_k - \wh\cp_k, G)\r\},
\end{align*}
where the definitions of $Q^{(r)}_{k, ii'}(\omega, h, H)$, $r = 1, 2$,
can be found in~\eqref{eq:def:Q:one}.
Then by Lemma~\ref{lem:Qs} and Chebyshev's inequality,
there exists some constant $c_1 > 0$ such that
$\p(\bar{\mc E}^{(1)}_{n, p}) \to 1$, where
\begin{align}
\bar{\mc E}^{(1)}_{n, p} = \l\{ \max_{1 \le k \le K_\chi} \max_{h \in I_k} \sup_{\omega \in [-\pi, \pi]} \frac{w_k}{p} \l\Vert 
\wh{\bm\Sigma}^\prime_{x, \cp_k + h}(\omega, G) 
- \wh{\bm\Sigma}^\prime_{x, \cp_k}(\omega, G) \r.\r.
\nn \\
\l.\l. - \E\l(\wh{\bm\Sigma}^\prime_{x, \cp_k + h}(\omega, G) 
- \wh{\bm\Sigma}^\prime_{x, \cp_k}(\omega, G)\r)
\r\Vert \le c_1 \wt\psi(\delta) \r\}
\label{eq:set:e:one:tilde}
\end{align}
for some $1 \le \delta \le G$,
where $w_k = (p^{-1} \Delta_{\chi, k})^{-1}$ and 
$I_k$ is defined in the lemma.
Setting $\delta = c_0w_k^{-2}\rho^{\k}_{n, p}$ (which itself does not depend on $k$), 
we have on $\bar{\mc E}^{(1)}_{n, p}$, 
\begin{align*}
& \vert \mc R_{k3} \vert \le 
\frac{c_1 p(\vert \wh\cp_k - \cp_k \vert + m)}{G} \times
\l\{
\begin{array}{l}
\min\l( \frac{m(G K_\chi)^{2/\nu}}{(c_0\rho^{\k}_{n, p})^{1 - 2/\nu}}, \sqrt{\frac{m\log(G K_\chi)}{c_0\rho^{\k}_{n, p}}} \r) 
\\
\quad \text{under Assumption~\ref{assum:innov}~\ref{cond:moment}},
\\
\sqrt{\frac{m\log(G K_\chi)}{c_0\rho^{\k}_{n, p}}} 
\\
\quad \text{under Assumption~\ref{assum:innov}~\ref{cond:gauss}}
\end{array}\r.
\\
& \le \frac{2 c_1 \vert \wh\cp_k - \cp_k \vert \; \Delta_{\chi, k}}{(c_0 \wedge \sqrt{c_0}) G} < (1 - 4\epsilon) \vert \mc F_k \vert
\end{align*}
for $c_0$ large enough which, combined with the bounds on 
$\mc R_{kl}, \, l = 1, 2, 4, 5$, contradicts the first inequality in~\eqref{thm:common:one:eq:four}.
As these statements are deterministic on $\mc E^{(1)}_{n, p} \cap \bar{\mc E}^{(1)}_{n, p}$,
the above arguments apply to all $1 \le k \le K_\chi$
which concludes the proof.
\end{proof}

\subsubsection{Supporting results}

In what follows, we operate 
under the assumptions made in Theorem~\ref{thm:common}.

\begin{lem}
\label{lem:spec:x} 
Let $\wh{\bm\Sigma}_{x, v}(\omega, G) = [\wh\sigma_{x, v, ii'}(\omega, G), \, 1 \le i, i' \le p]$.
There exists a constant $C > 0$ not dependent on $1 \le i, i' \le p$
such that
\begin{align*}
\E\l( \max_{G \le v \le n} \sup_{\omega \in [-\pi, \pi]} 
\l\vert \wh{\sigma}_{x, v, ii'}(\omega, G) - \sigma_{x, v, ii'}(\omega, G) \r\vert^2 \r)
\le C\l(\psi_n \vee \frac{1}{m}\r)^2
\end{align*}
with $\psi_n$ defined in~\eqref{eq:psi}.               
\end{lem}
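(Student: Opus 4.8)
The plan is to split the error at each $(v,\omega,i,i')$ into a deterministic bias and a centred stochastic fluctuation,
\begin{align*}
\wh\sigma_{x,v,ii'}(\omega,G)-\sigma_{x,v,ii'}(\omega,G)
&= \underbrace{\bigl(\E\,\wh\sigma_{x,v,ii'}(\omega,G)-\sigma_{x,v,ii'}(\omega,G)\bigr)}_{=:\,b_{v,ii'}(\omega)}
\\ &\quad + \underbrace{\bigl(\wh\sigma_{x,v,ii'}(\omega,G)-\E\,\wh\sigma_{x,v,ii'}(\omega,G)\bigr)}_{=:\,Z_{v,ii'}(\omega)},
\end{align*}
and then to show that $\max_v\sup_\omega|b_{v,ii'}(\omega)|=O(m^{-1}\vee(m/G))$ deterministically while $\E(\max_v\sup_\omega|Z_{v,ii'}(\omega)|^2)=O(\psi_n^2)$; the claim follows since $m/G\le\sqrt{m/G}\le\psi_n$.

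For the bias I would use that the Bartlett kernel obeys $K(\ell/m)=1-|\ell|/m$ for $|\ell|\le m$, so that comparing with $\sigma_{x,v,ii'}(\omega,G)=(2\pi)^{-1}\sum_{\ell\in\Z}\gamma_{x,v,ii'}(\ell,G)e^{-\iota\ell\omega}$ gives
\begin{align*}
b_{v,ii'}(\omega) &= -\frac{1}{2\pi}\sum_{|\ell|\le m}\frac{|\ell|}{m}\gamma_{x,v,ii'}(\ell,G)e^{-\iota\ell\omega}
\\ &\quad -\frac{1}{2\pi}\sum_{|\ell|> m}\gamma_{x,v,ii'}(\ell,G)e^{-\iota\ell\omega} + r_{v,ii'}(\omega).
\end{align*}
By Lemma~\ref{lem:decay} the autocovariances satisfy $|\gamma_{x,v,ii'}(\ell,G)|\le C(1+|\ell|)^{-\varsigma}$ with $\varsigma>2$, so $\sum_\ell|\ell|(1+|\ell|)^{-\varsigma}<\infty$ bounds the tapering sum by $O(m^{-1})$ and the same lemma bounds the truncation tail by $O(m^{-\varsigma+1})=o(m^{-1})$, uniformly in $v,\omega,i,i'$. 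The remainder $r_{v,ii'}(\omega)$ collects the gap between $\E\,\wh\gamma_{x,v,ii'}(\ell,G)$ and $\gamma_{x,v,ii'}(\ell,G)$, coming from the truncated summation range and from at most one change point inside $I_v(G)$ (at most one, by Assumption~\ref{assum:common:size}~\ref{cond:common:spacing}); each such effect touches $O(m)$ summands of bounded size divided by $G$, giving $|r_{v,ii'}(\omega)|=O(m/G)$.

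For the stochastic term I would first reduce the supremum over $\omega$: $Z_{v,ii'}(\omega)$ is a trigonometric polynomial of degree $m$ in $\omega$, so its supremum over $[-\pi,\pi]$ is, up to an absolute constant and a $\log m$ penalty, controlled by its values on $O(m)$ Fourier frequencies. Writing $Z_{v,ii'}(\omega)$ as the centred, kernel-weighted band bilinear form $\frac{1}{2\pi G}\sum_{|t_1-t_2|\le m}K((t_1-t_2)/m)e^{-\iota(t_2-t_1)\omega}(X_{i,t_1}X_{i',t_2}-\E[X_{i,t_1}X_{i',t_2}])$ over $t_1,t_2\in I_v(G)$, the task becomes concentration of such a form. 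Under Gaussianity (Assumption~\ref{assum:innov}~\ref{cond:gauss}) the $X_{it}$ are jointly Gaussian, so a Hanson--Wright inequality applied to this band form yields, using that $\Var(Z_{v,ii'}(\omega))=O(m/G)$ and the corresponding operator norm is $O(m/G)$ (both consequences of the bounded spectrum in Lemma~\ref{lem:sigma:bound}), sub-Gaussian-type deviations at scale $\sqrt{m/G}$ over the relevant range; a union bound over the $O(n)$ values of $v$ and the $O(m)$ frequencies contributes the $\log(n)$ factor, matching $\psi_n=\sqrt{m\log(n)/G}$. Under the weaker moment condition (Assumption~\ref{assum:innov}~\ref{cond:moment}) I would instead invoke the concentration results for local spectral density estimators of \cite{zhang2021}, feeding in the element-wise dependence-adjusted control $\Phi_{\nu,\alpha}=O(1)$ supplied by Lemma~\ref{lem:func:dep}; these combine a Gaussian-type piece of order $\sqrt{m\log(n)/G}$ with a Nagaev-type heavy-tail piece of order $n^{2/\nu}m\log^{2+2/\nu}(G)/G$, which together are exactly $\psi_n$. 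In either regime the second moment of the maximum is recovered from these tail/moment bounds via a standard maximal inequality, bounding $\E(\max|Z|^2)$ by $(\text{grid cardinality})^{2/r}\max_{v,\omega}\|Z_{v,ii'}(\omega)\|_r^2$ for a suitable $r$ and converting the grid-to-continuum gap through the degree-$m$ structure.

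The main obstacle is obtaining the sharp $\sqrt{m/G}$ rate (rather than the crude $m/\sqrt{G}$ one would get by pulling the maximum over lags $\ell$ outside the kernel sum) uniformly over $v$ and $\omega$, and, under heavy tails, producing the correct polynomial term $n^{2/\nu}m\log^{2+2/\nu}(G)/G$; this is precisely where the bilinear-form viewpoint, together with the functional-dependence-measure machinery of \cite{zhang2021} and the dependence-adjusted norm bounds of Lemma~\ref{lem:func:dep}, are indispensable.
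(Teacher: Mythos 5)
Your proposal is correct and follows essentially the same route as the paper's proof: the same bias--fluctuation split, the same treatment of the bias (kernel tapering giving the dominant $O(m^{-1})$ term, ACV decay from Lemma~\ref{lem:decay} controlling the truncation tail and the local-average error), and the same use of the uniform concentration results of \cite{zhang2021} fed with the dependence-adjusted norm bounds of Lemma~\ref{lem:func:dep}, followed by a tail-to-moment conversion. The only deviations are immaterial: your remainder bound $O(m/G)$ for the expected-ACV error is cruder than the paper's $O(G^{-1})$ (obtained by exploiting the lag decay via Lemma~\ref{lem:decay:two} together with the at-most-one-change-point-per-window property) but is still dominated by $\psi_n$, and under Gaussianity you re-derive by Hanson--Wright plus a frequency-grid argument essentially the same $nm\exp\l[-C\min\l(Gz^2/m,\, Gz/m\r)\r]$ tail that the paper obtains directly from Theorem~4.2 of \cite{zhang2021}.
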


\begin{proof}
Noting that
\begin{align}
& \E\l( \max_v \sup_{\omega} \l\vert \wh{\sigma}_{x, v, ii'}(\omega, G) - 
\sigma_{x, v, ii'}(\omega, G) \r\vert^2 \r) 
\nn \\
\le & 2 \E\l( \max_v \sup_{\omega} \l\vert \wh{\sigma}_{x, v, ii'}(\omega, G) - \E(\wh\sigma_{x, v, ii'}(\omega, G)) \r\vert^2 \r) 
\nn 
\\
& + 2 \max_v \sup_{\omega} \l\vert \E(\wh{\sigma}_{x, v, ii'}(\omega, G)) - \sigma_{x, v, ii'}(\omega, G) \r\vert^2,
\label{lem:spec:x:eq:one}
\end{align}
we first address the first term in the RHS of \eqref{lem:spec:x:eq:one}.
In Lemma~\ref{lem:func:dep},
for $\varsigma > 2$ (as assumed in 
Assumptions~\ref{assum:idio}~\ref{cond:idio:coef} and~\ref{assum:common}),
we can always set $\alpha = \varsigma - 1 > 1/2 - 2/\nu$.
Then, from the finiteness of $\Phi_{\nu, \alpha}$ shown therein and 
by Theorems 4.1 and~4.2 of \cite{zhang2021},
there exist universal constants $C_1, C_2, C_3 > 0$
and constants $C_\alpha, C_{\nu, \alpha} > 0$ that depend only on their subscripts,
such that for any $z > 0$,
\begin{align*}
& \p\l( \max_v \sup_\omega \l\vert \wh\sigma_{x, v, ii'}(\omega, G) 
- \E(\wh{\sigma}_{x, v, ii'}(\omega, G)) \r\vert \ge z \r) \le
\\
& \l\{\begin{array}{ll}
\frac{C_{\nu, \alpha} nm^{\nu/2} \log^{\nu + 1}(G) \Phi_{\nu, \alpha}^\nu}{(Gz)^{\nu/2}}
+ C_1 m n \exp\l(-\frac{Gz^2}{C_\alpha \Phi_{4, \alpha}^4 m}\r) 
& \text{under Assumption~\ref{assum:innov}~\ref{cond:moment}},
\\
C_2 nm \exp\l[- C_3 \min\l( \frac{G z^2}{m \Phi_{2, 0}^4}, \frac{Gz}{m \Phi_{2, 0}^2} \r) \r] 
& \text{under~Assumption~\ref{assum:innov}~\ref{cond:gauss}.}
\end{array}\r.
\end{align*}
Noting that for any positive random variable $Y$, we have $\E(Y) = \int_0^\infty \p(Y > y) dy$,
we have
$\E( \max_v \sup_\omega \vert \wh{\sigma}_{x, v, ii'}(\omega, G) - 
\E(\wh\sigma_{x, v, ii'}(\omega, G)) \vert^2)
\le C \psi_n^2$ for some constant $C > 0$ independent of $i, i'$, thanks to Lemma~\ref{lem:func:dep}.

Turning our attention to the second term in the RHS of~\eqref{lem:spec:x:eq:one},
let $\wh{\bm\Gamma}_{x, v}(\ell, G) = [\wh\gamma_{x, v, ii'}(\ell, G), \, 1 \le i, i' \le p]$
and define $\wh{\bm\Gamma}_{\chi, v}(\ell, G)$, $\wh\gamma_{\chi, v, ii'}(\ell, G)$,
$\wh{\bm\Gamma}_{\xi, v}(\ell, G)$ and $\wh\gamma_{\xi, v, ii'}(\ell, G)$, analogously. Then,
\begin{align*}
& \l\vert \E(\wh{\gamma}_{x, v, ii'}(\ell, G)) - \gamma_{x, v, ii'}(\ell, G) \r\vert
\le \\
& \underbrace{\l\vert \E(\wh{\gamma}_{\chi, v, ii'}(\ell, G)) - \gamma_{\chi, v, ii'}(\ell, G) \r\vert}_{I}
+
\underbrace{\l\vert \E(\wh{\gamma}_{\xi, v, ii'}(\ell, G)) - \gamma_{\xi, v, ii'}(\ell, G) \r\vert}_{II}.
\end{align*}
Then by Lemma~\ref{lem:decay:two}, for all $\ell$, $1 \le i, i' \le p$ and $G \le v \le n - G$,
\begin{align*}
& I = \frac{1}{G} \l\vert \sum_{t = v - G + 1 + \ell}^v \E(\chi_{i, t - \ell}\chi_{i't}) -
\sum_{k = L_\chi(v - G + 1)}^{L_\chi(v)} \{(\cp_{\chi, k + 1} \wedge v) - (\cp_{\chi, k} \wedge (v - G) )\}
\gamma_{\chi, ii'}^{\k}(\ell) \r\vert
\\
& \le \frac{C_{\Xi, \varsigma} (L_\chi(v) - L_\chi(v - G + 1) + 1) \vert \ell \vert}{G} (1 + \vert \ell \vert)^{-\varsigma}
\le \frac{2 C_{\Xi, \varsigma}}{G} (1 + \vert \ell \vert)^{-\varsigma + 1},
\end{align*}
noting that $L_\chi(v) - L_\chi(v - G + 1) \le 1$ under Assumption~\ref{assum:common:size}~\ref{cond:common:spacing}.
Similarly, we yield $II \le 2 G^{-1} C_{\Xi, \varsigma, \vep} (1 + \vert \ell \vert)^{-\varsigma + 1}$
under Assumption~\ref{assum:idio:size}~\ref{cond:idio:spacing}.
Then,
\begin{align}
& \max_{i, i'} \max_v \sup_\omega 
2\pi \l\vert \E(\wh{\sigma}_{x, v, ii'}(\omega, G)) - \sigma_{x, v, ii'}(\omega, G) \r\vert
\nn \\
& \le 
\max_{i, i'} \max_v \sum_{\ell = -m}^m \l\vert \E(\wh\gamma_{x, v, ii'}(\ell, G)) -
\gamma_{x, v, ii'}(\ell, G) \r\vert + \max_{i, i'} \max_v \sum_{\ell = -m}^m \frac{\vert \ell \vert}{m} \l\vert \gamma_{x, v, ii'}(\ell, G) \r\vert
\nn \\
& 
+ \max_{i, i'} \max_v \sum_{\vert \ell \vert > m} \l\vert \gamma_{x, v, ii'} (\ell, G) \r\vert
=: III + IV + V.
\label{eq:spec:x:bias}
\end{align}
From the bounds on $I$ and $II$ (which hold uniformly over $1 \le i, i' \le p$ and $G \le v \le n$)
and that $\varsigma > 2$, there exists $C^\prime_{\Xi, \varsigma, \vep} > 0$ such that
$III \le C^\prime_{\Xi, \varsigma, \vep} G^{-1} = o(m^{-1})$.
Also from Lemma~\ref{lem:decay} , there exists $C^{\prime\prime}_{\Xi, \varsigma, \vep} > 0$
such that
\begin{align*}
IV \le 2 C_{\Xi, \varsigma, \vep} \sum_{\ell = 1}^m \frac{\ell}{m (1 + \ell)^\varsigma}
\le \frac{2C_{\Xi, \varsigma, \vep}}{m} \sum_{\ell = 1}^m \frac{1}{(1 + \ell)^{\varsigma - 1}}
\le \frac{C^{\prime\prime}_{\Xi, \varsigma, \vep}}{m},
\end{align*}
and $V = O(m^{-\varsigma + 1}) = o(m^{-1})$.
Combining the bounds on $III$--$V$, the proof is complete. 
\end{proof}

For $H \in \{0, \pm G\}$ and $1 \le k \le K_\chi$, define
\begin{align}
Q^{(1)}_{k, ii'}(\omega, h, H) &= \frac{1}{|h| + m} 
\l\{
\sum_{\ell = 0}^m K\l(\frac{\ell}{m}\r) e^{-\iota \ell \omega}
\sum_{t = (\cp_k - h) \wedge \cp_k + H + |\ell| + 1}^{(\cp_k - h) \vee \cp_k + H + |\ell|} X_{i, t - |\ell|} X_{i't} \r.
\nn \\
& + \l. \sum_{\ell = -m}^{-1} K\l(\frac{\ell}{m}\r) e^{-\iota \ell \omega}
\sum_{t = (\cp_k - h) \wedge \cp_k + H + |\ell| + 1}^{(\cp_k - h) \vee \cp_k + H + |\ell|} X_{it} X_{i', t - \vert \ell \vert}\r\},
\nn \\
Q^{(2)}_{k, ii'}(\omega, h, H) &= \frac{1}{|h| + m} \l\{\sum_{\ell = 0}^m K\l(\frac{\ell}{m}\r) e^{-\iota \ell \omega}
\sum_{t = (\cp_k - h) \wedge \cp_k + H + 1}^{(\cp_k - h) \vee \cp_k + H} X_{i, t - |\ell|} X_{i't} \r. 
\nn \\
& + \l. \sum_{\ell = -m}^{-1} K\l(\frac{\ell}{m}\r) e^{-\iota \ell \omega}
\sum_{t = (\cp_k - h) \wedge \cp_k + H + 1}^{(\cp_k - h) \vee \cp_k + H} X_{it} X_{i', t - |\ell|} \r\}.
\label{eq:def:Q:one}
\end{align}

\begin{lem}
\label{lem:Qs}
There exists a constant $C > 0$ not dependent on $1 \le i, i' \le p$ such that 
for some $\delta \in \{m, \ldots, G\}$,
\begin{align*}
& \E\l( \max_{1 \le k \le K_\chi} \max_{1 \le r \le 2} \max_{H \in \{0, \pm G\}} \max_{h \in I_k}
\sup_{\omega \in [-\pi, \pi]} w_k^2 \l\vert Q^{(r)}_{k, ii'}(\omega, h, H) - \E(Q^{(r)}_{k, ii'}(\omega, h, H)) \r\vert^2 \r) \le C \wt\psi(\delta)^2,
\end{align*}
where $w_k = (\Delta_{\chi, k}/p)^{-1}$,
$I_k = \{h: \, w_k^2 \delta \le \vert h \vert \le G\}$ and
\begin{align*}
\wt\psi(\delta) = 
\frac{m(G K_\chi)^{2/\nu}}{\delta^{1 - 2/\nu}} \vee \sqrt{\frac{m\log(G K_\chi)}{\delta}},
\end{align*}
under Assumption~\ref{assum:innov}~\ref{cond:moment}, and
\begin{align*}
\wt\psi(\delta) = \sqrt{\frac{m\log(G K_\chi)}{\delta}}
\end{align*}
and under Assumption~\ref{assum:innov}~\ref{cond:gauss}.
\end{lem}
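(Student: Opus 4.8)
The plan is to prove a maximal concentration inequality for the centered quantities $Q^{(r)}_{k,ii'}(\omega,h,H) - \E(Q^{(r)}_{k,ii'}(\omega,h,H))$ with the correct dependence on the effective sample size, and then integrate the resulting tail bound to control the expectation of the weighted maximum. Fix $1\le i,i'\le p$ throughout. The first observation is that, up to the normalisation $(|h|+m)^{-1}$, each $Q^{(r)}_{k,ii'}(\omega,h,H)$ is a Bartlett-kernel-weighted partial sum, over a time window of length $|h|$ (the range from $(\cp_k-h)\wedge\cp_k$ to $(\cp_k-h)\vee\cp_k$, shifted by $H$), of the bivariate product process $\{X_{i,t-|\ell|}X_{i't}\}$ summed over lags $|\ell|\le m$. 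For fixed $(k,r,H,h)$ it is a trigonometric polynomial of degree at most $m$ in $\omega$, so the supremum over $\omega\in[-\pi,\pi]$ reduces to a maximum over an $O(mn)$-point net by the bounded-derivative argument of Lemma~\ref{lem:sigma:deriv}, contributing only logarithmic factors that are absorbed below; alternatively one may invoke the $\sup_\omega$ bounds of \cite{zhang2021} directly.

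For the concentration step I would proceed exactly as in the proof of Lemma~\ref{lem:spec:x}, but with the window length $|h|$ playing the role previously played by $G$. The functional dependence measure $\Phi_{\nu,\alpha}$ of $\mbf X_t$ is bounded uniformly in $p$ by Lemma~\ref{lem:func:dep}(b), so Theorems~4.1 and~4.2 of \cite{zhang2021} apply. Because $Q^{(r)}$ involves partial sums with a varying endpoint, I would use the \emph{maximal} versions of these inequalities, which build the maximum over the window position and length into the bound; the union over the integer values $h\in I_k$ (at most $2G+1$ of them), together with $r\in\{1,2\}$, $H\in\{0,\pm G\}$ and the at most $K_\chi$ values of $k$, contributes the count $\asymp GK_\chi$ and hence the $\log(GK_\chi)$ (respectively $(GK_\chi)^{2/\nu}$) factors. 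For each fixed index the tail at deviation level $z$ then scales like the bound of Lemma~\ref{lem:spec:x} with $G$ replaced by $|h|$: a polynomial-tail term of order $m(GK_\chi)^{2/\nu}|h|^{-(1-2/\nu)}$ under Assumption~\ref{assum:innov}~\ref{cond:moment}, and a sub-Gaussian term of order $\sqrt{m\log(GK_\chi)/|h|}$ in both regimes (consistent with the variance being at most of order $m/|h|$ since $|h|m/(|h|+m)^2\le m/|h|$).

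The decisive point is the interaction of the weight $w_k=(\Delta_{\chi,k}/p)^{-1}$ with the floor $|h|\ge w_k^2\delta$ defining $I_k$. Since Assumption~\ref{assum:common:size}~\ref{cond:common:jump} gives $\Delta_{\chi,k}\le Cp$ we have $w_k\ge c>0$. Multiplying the sub-Gaussian term by $w_k$ and using $|h|\ge w_k^2\delta$ gives $w_k\sqrt{m\log(GK_\chi)/|h|}\le\sqrt{m\log(GK_\chi)/\delta}$, exactly the second branch of $\wt\psi(\delta)$; multiplying the polynomial-tail term gives $w_k^{\,-1+4/\nu}\,m(GK_\chi)^{2/\nu}\delta^{-(1-2/\nu)}$, which is $O(m(GK_\chi)^{2/\nu}\delta^{-(1-2/\nu)})$ precisely because $\nu>4$ makes the exponent $-1+4/\nu$ negative and $w_k$ is bounded below. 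After weighting, the bound is therefore uniform in $k$ and equals $\wt\psi(\delta)$ in each regime. Converting this uniform tail bound into the stated bound on the expectation of the maximum of the squares is then routine via $\E(Y)=\int_0^\infty\p(Y>y)\,dy$, as in Lemma~\ref{lem:spec:x}.

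I expect the main obstacle to be the maximal inequality over the varying window length $h$ with the \emph{sharp} $|h|^{-(1-2/\nu)}$ (respectively $|h|^{-1/2}$) dependence: a naive union bound over the $O(G)$ values of $h$ at a single threshold would be too crude, since the bound must tighten as $|h|$ grows so that, after weighting by $w_k$, the smallest admissible window $|h|=w_k^2\delta$ is the binding case. Keeping the polynomial-tail exponent negative (this is where $\nu>4$ is used) and verifying the exact cancellation of $w_k$ in the sub-Gaussian term are the two places requiring care; the $\sup_\omega$ reduction and the bias control are comparatively routine given the degree-$m$ trigonometric structure and Lemmas~\ref{lem:decay}--\ref{lem:sigma:deriv}.
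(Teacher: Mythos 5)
Your proposal is correct and takes essentially the same route as the paper's own proof: concentration for the weighted, centred $Q^{(r)}_{k,ii'}$ via the functional-dependence inequalities of \cite{zhang2021} (the paper invokes Proposition~6.2, Theorem~6.3 and Equation~(B.15) of that reference rather than its Theorems~4.1--4.2, but this is the same toolkit), a Bonferroni bound over the $\asymp G K_\chi$ indices $(k, r, H, h)$, cancellation of the weight $w_k$ through the floor $\vert h \vert \ge w_k^2 \delta$ together with $w_k$ bounded below and $\nu > 4$, and finally integration of the tail bound as in Lemma~\ref{lem:spec:x}. I see no gaps in your argument.
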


\begin{proof}
Under Assumption~\ref{assum:innov}~\ref{cond:moment},
Proposition~6.2 of \cite{zhang2021}, combined with the arguments adopted
in the proof of their Theorem~4.1 (most notably, their Equation~(B.15)) 
and Bonferroni correction, obtains that
there exist universal constants $C_1 > 0$ and $C_{\nu, \alpha}, C_\alpha > 0$
that depend only on their subscripts, such that
\begin{align*}
&\p\l(\max_{1 \le k \le K_\chi} \max_{1 \le r \le 2} \max_{H \in \{0, \pm G\}} \max_{h \in I_k}
\sup_{\omega \in [-\pi, \pi]} w_k \l\vert Q^{(r)}_{k, ii'}(\omega, h, H) - \E(Q^{(r)}_{k, ii'}(\omega, h, H)) \r\vert \ge z \r)
\\
& \le \frac{12C_{\nu, \alpha} G K_\chi m^{\nu/2 - 1} (4m + 1) \Psi_{\nu, \alpha}^4}
{\delta^{\nu/2 - 1} z^{\nu/2}}
+ C_1(4m + 1) G K_\chi \exp\l(-\frac{\delta z^2}{C_\alpha m \Psi_{4, \alpha}^4}\r)
\end{align*}
thanks to Lemma~\ref{lem:func:dep}.
Then, as in the proof of Lemma~\ref{lem:spec:x}, we can 
find $C > 0$ independent of $i, i'$ and show the first part of the claim
by Lemma~\ref{lem:func:dep}.
Similarly, under Assumption~\ref{assum:innov}~\ref{cond:gauss},
Lemma~\ref{lem:func:dep} and Theorem~6.3 of \cite{zhang2021} show that
there exists a universal constant $C_2 > 0$ such that
\begin{align*}
&\p\l(\max_{1 \le k \le K_\chi} \max_{1 \le r \le 2} \max_{H \in \{0, \pm G\}} \max_{h \in I_k}
\sup_{\omega \in [-\pi, \pi]} w_k \l\vert Q^{(r)}_{k, ii'}(\omega, h, H) - \E(Q^{(r)}_{k, ii'}(\omega, h, H)) \r\vert \ge z \r)
\\
& \le 24(4m + 1) G K_\chi \exp\l\{- C_2 \min\l( \frac{\delta z^2}{m \Psi_{2, 0}^4},
\frac{\delta z}{m \Psi_{2, 0}^2} \r) \r\},
\end{align*}
which completes the proof.
\end{proof}

\subsection{Proof of Theorem~\ref{thm:common:est}}

We provide a series of supporting results under the assumptions made in Theorem~\ref{thm:common:est}, leading to the proof of the claims.
In what follows, we operate in $\mc M^\chi_{n, p}$.
We define 
\begin{align*}
\check{\psi}_n &= \l\{\begin{array}{ll}
\frac{m (GK_\chi)^{2/\nu}}{G} \vee \sqrt{\frac{m\log(mK_\chi)}{G}} & \text{under Assumption~\ref{assum:innov}~\ref{cond:moment}},
\\
\sqrt{\frac{m\log(mK_\chi)}{G}} & \text{under Assumption~\ref{assum:innov}~\ref{cond:gauss},}
\end{array}\r.
\quad \text{and}
\\
\check{\vartheta}_{n, p} &= \l\{\begin{array}{ll}
\frac{m (K_\chi G p)^{2/\nu} \log^{7/2}(p)}{G} \vee 
\sqrt{\frac{m\log(np)}{G}}
& \text{under Assumption~\ref{assum:innov}~\ref{cond:moment}},
\\
\sqrt{\frac{m\log(np)}{G}}
& \text{under Assumption~\ref{assum:innov}~\ref{cond:gauss}.}
\end{array}\r.
\end{align*}

Also, let $\delta_{\chi, k} = \cp_{\chi, k + 1} - \cp_{\chi, k}$ for $0 \le k \le K_\chi$,
and $\wh\delta_{\chi, k} = \wh\cp_{\chi, k + 1} - \wh\cp_{\chi, k}$
for $0 \le k \le \wh{K}_\chi$,
such that $\wh{\bm\Sigma}_x^{\k}(\omega) = \wh{\bm\Sigma}_{x, \wh\cp_{\chi, k + 1}}(\omega, \wh\delta_{\chi, k})$.

\begin{prop}
\label{prop:spec:x:k}
\begin{enumerate}[label = (\alph*)]
\item \label{prop:spec:x:k:one}
There exists a constant $C > 0$ such that
\begin{align*}
\E\l( \max_{0 \le k \le K_\chi} \sup_{\omega \in [-\pi, \pi]}
\frac{1}{p} \l\vert \wh{\bm\Sigma}_{x, \wh\cp_{\chi, k + 1}}(\omega, \wh\delta_{\chi, k}) - 
\bm\Sigma_{x, \cp_{\chi, k + 1}}(\omega, \delta_{\chi, k} ) \r\vert_2^2 \r)
\le C \l(\check\psi_n \vee \frac{1}{m} \vee \frac{\rho_{n, p}}{G}\r)^2.
\end{align*}
\item \label{prop:spec:x:k:two} Also, we have 
\begin{align*}
\max_{0 \le k \le K_\chi} \sup_{\omega \in [-\pi, \pi]} 
\l\vert \wh{\bm\Sigma}_{x, \wh\cp_{\chi, k + 1}}( \omega, \wh\delta_{\chi, k}) - 
\bm\Sigma_{x, \cp_{\chi, k + 1}}(\omega, \delta_{\chi, k} ) \r\vert_\infty
= O_p\l(\check\vartheta_{n, p} \vee \frac{1}{m} \vee \frac{\rho_{n, p}}{G}\r).
\end{align*}
\end{enumerate}
\end{prop}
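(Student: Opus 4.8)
The plan is to work throughout on the event $\mc M^\chi_{n,p}$ of Theorem~\ref{thm:common}, on which $\wh K_\chi = K_\chi$ and, by Theorem~\ref{thm:common}~\ref{thm:common:two}, $\max_{k} |\wh\cp_{\chi,k} - \cp_{\chi,k}| \le c_0\rho_{n,p}$. Since the process is stationary over a true segment $\{\cp_{\chi,k}+1,\ldots,\cp_{\chi,k+1}\}$, the population weighted average $\bm\Sigma_{x,\cp_{\chi,k+1}}(\omega,\delta_{\chi,k})$ collapses to the segment spectral density $\bm\Sigma_x^{\k}(\omega)$. I would then split the estimation error into an \emph{oracle-segment} part and a \emph{boundary-error} part,
\begin{align*}
& \wh{\bm\Sigma}_{x,\wh\cp_{\chi,k+1}}(\omega,\wh\delta_{\chi,k}) - \bm\Sigma_{x,\cp_{\chi,k+1}}(\omega,\delta_{\chi,k}) \\
& = \l[\wh{\bm\Sigma}_{x,\wh\cp_{\chi,k+1}}(\omega,\wh\delta_{\chi,k}) - \wh{\bm\Sigma}_{x,\cp_{\chi,k+1}}(\omega,\delta_{\chi,k})\r] + \l[\wh{\bm\Sigma}_{x,\cp_{\chi,k+1}}(\omega,\delta_{\chi,k}) - \bm\Sigma_x^{\k}(\omega)\r],
\end{align*}
and bound the two brackets separately before taking the maximum over $0\le k\le K_\chi$.

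For the oracle-segment term (second bracket), I would rerun the argument of Lemma~\ref{lem:spec:x} with the window length $G$ replaced by the segment length $\delta_{\chi,k}\ge 2G$ and the number of candidate windows replaced by the number $K_\chi+1$ of segments. Decomposing into a stochastic part and a bias part as in~\eqref{lem:spec:x:eq:one}, the stochastic part is controlled by the dependence-adjusted norm bounds of Lemma~\ref{lem:func:dep} together with Theorems~4.1 and~4.2 of \cite{zhang2021}; this yields $\check\psi_n$ for the Frobenius bound in part~\ref{prop:spec:x:k:one} and $\check\vartheta_{n,p}$ for the entrywise bound in part~\ref{prop:spec:x:k:two}, the extra $p^{2/\nu}$ and $\log(np)$ factors in the latter arising from a union bound over the $p^2$ entries and the $K_\chi+1$ segments. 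The bias part is $O(1/m)$ by the kernel-truncation and averaging arguments already established in Lemma~\ref{lem:decay} and Lemma~\ref{lem:decay:two}.

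The boundary-error term (first bracket) is where the rate $\rho_{n,p}/G$ enters. On $\mc M^\chi_{n,p}$ I would dominate it by a maximum over all admissible boundary shifts $|h_k|,|h_{k+1}|\le c_0\rho_{n,p}$, expressing the difference of the two sample ACV matrices through the statistics $Q^{(r)}_{k,ii'}$ of~\eqref{eq:def:Q:one} and splitting each into its mean and its centred fluctuation. The mean part alters only $O(\rho_{n,p})$ of the $\ge 2G$ summands and the normalisation by a relative $O(\rho_{n,p}/G)$; since each affected ACV entry is bounded by $C(1+|\ell|)^{-\varsigma}$ with $\varsigma>2$ (Lemma~\ref{lem:decay:two}), summing the kernel-weighted lags gives an $O(\rho_{n,p}/G)$ bound entrywise, hence in the $p^{-1}|\cdot|_2$ norm as well. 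The centred fluctuation is precisely what Lemma~\ref{lem:Qs} controls, applied with $\delta\asymp\rho_{n,p}$, and its contribution $\wt\psi(\rho_{n,p})$ is of strictly smaller order than $\rho_{n,p}/G$ and is therefore absorbed. Combining the two brackets delivers both claimed rates.

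The main obstacle is the data-dependence of the segment endpoints: because $\wh\cp_{\chi,k}$ is random and correlated with the very data used to form $\wh{\bm\Sigma}_x^{\k}(\omega)$, one cannot condition on the boundaries and take expectations naively. The key device is to replace $\wh\cp_{\chi,k}$ by a supremum over fixed shifts within the $\rho_{n,p}$-neighbourhood guaranteed on $\mc M^\chi_{n,p}$, which is exactly the maximum over $h\in I_k$ and over the $K_\chi$ segments that Lemma~\ref{lem:Qs} is designed to handle. A secondary subtlety is to verify that the uniform-in-$h$ fluctuation bound does not overwhelm $\rho_{n,p}/G$, i.e.\ that the choice $\delta\asymp\rho_{n,p}$ keeps $\wt\psi(\rho_{n,p})$ below $\rho_{n,p}/G$ under the bandwidth and size-of-change constraints of Assumption~\ref{assum:common:size}.
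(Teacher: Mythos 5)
Your decomposition differs from the paper's: you split the error as [estimator at estimated endpoints $-$ estimator at true endpoints] $+$ [estimator at true endpoints $-$ population], whereas the paper writes it as [estimator at the (random) estimated window $-$ its expectation] $+$ [expectation $-$ population at the estimated window] $+$ [population at the estimated window $-$ population at the true window], controlling the first term uniformly over window positions via the maximal inequalities of Zhang and Wu (their Proposition~6.2 and Theorems~6.1/6.3 with~(B.15)), the second by the $O(1/m)$ kernel bias, and the third deterministically by $O(\rho_{n,p}/G)$ via Lemma~\ref{lem:decay}. The paper's route never requires concentration for boundary sums; yours does, and that is where the gap lies. Your oracle-segment term and the mean part of your boundary term are fine and mirror pieces of the paper's argument.

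The gap is the centred boundary fluctuation. First, Lemma~\ref{lem:Qs} cannot be applied ``with $\delta \asymp \rho_{n,p}$'': its uniformity is over $h \in I_k = \{h:\, w_k^2\delta \le \vert h \vert \le G\}$, i.e.\ over shifts bounded \emph{below} by $w_k^2 \delta$ --- it is built for the proof by contradiction of Theorem~\ref{thm:common}~\ref{thm:common:two}, where the localisation error is \emph{assumed to exceed} $c_0 \rho^{\k}_{n,p}$ --- whereas you need shifts bounded \emph{above} by $c_0\rho_{n,p}$. With $\delta \asymp \rho_{n,p}$ the threshold $w_k^2\rho_{n,p}$ generally exceeds $c_0\rho_{n,p}$ (recall $w_k = p/\Delta_{\chi,k}$ can diverge for weak changes), so $I_k$ need not even intersect your range; and since smaller windows have larger normalised fluctuations, the lemma's rate cannot simply be extended downwards. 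Second, and more fundamentally, even a bespoke maximal inequality over $\vert h \vert \le c_0\rho_{n,p}$ does not yield your domination claim under Assumption~\ref{assum:innov}~\ref{cond:moment}: whenever some $\rho^{\k}_{n,p} \ge \epsilon_0 G$, only the $\epsilon_0 G$-localisation of Theorem~\ref{thm:common}~\ref{thm:common:one} is available, so shifts up to $\epsilon_0 G$ must be covered, and the Nagaev-type (polynomial-tail) part of the resulting bound is of order $m (G K_\chi)^{2/\nu} (\epsilon_0 G)^{2/\nu} / G$, which exceeds $\check\psi_n$ by a factor of order $G^{2/\nu}$ and is not bounded by $\check\psi_n \vee m^{-1} \vee \rho_{n,p}/G$ under the stated assumptions; ``strictly smaller order than $\rho_{n,p}/G$'' therefore fails in this regime (under Gaussianity your comparison does go through up to constants, but the proposition covers both tail regimes). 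The repair is to avoid isolating raw boundary sums: either adopt the paper's decomposition, or note that your centred boundary difference equals the difference of two \emph{full-window} centred estimators and bound each by the position-uniform full-window fluctuation bound --- but the latter is the paper's argument in disguise.
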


\begin{proof}[Proof of~\ref{prop:spec:x:k:one}]
Under Assumption~\ref{assum:common:size}~\ref{cond:common:spacing},
applying Proposition~6.2 and Theorem~6.3 of \cite{zhang2021} with their~(B.15),
there exist universal constants $C_1, C_2 > 0$ not dependent on $1 \le i, i' \le p$
and constants $C_\alpha, C_{\nu, \alpha} > 0$ that depend only on their subscripts,
such that for any $z > 0$,
\begin{align*}
& \p\l( \max_k \sup_\omega \l\vert \wh\sigma_{x, \wh\cp_{\chi, k + 1}, ii'}(\omega, \wh\delta_{\chi, k}) 
- \E\l(\wh{\sigma}_{x, \wh\cp_{\chi, k + 1}, ii'}(\omega, \wh\delta_{\chi, k})\r) \r\vert \ge z \r) \le
\\
& \l\{\begin{array}{ll}
\frac{C_{\nu, \alpha} K_\chi (4m + 1) m^{\nu/2 - 1} \Phi_{\nu, \alpha}^\nu}{G^{\nu/2 - 1} z^{\nu/2}}
+ C_1 K_\chi (4m + 1) \exp\l(-\frac{Gz^2}{C_\alpha \Phi_{4, \alpha}^4 m}\r) 
& \text{under Assumption~\ref{assum:innov}~\ref{cond:moment}},
\\
2 K_\chi (4m + 1) \exp\l[- C_2 \min\l( \frac{G z^2}{m \Phi_{2, 0}^4}, \frac{Gz}{m \Phi_{2, 0}^2} \r) \r] 
& \text{under~Assumption~\ref{assum:innov}~\ref{cond:gauss}}
\end{array}\r.
\end{align*}
thanks to Lemma~\ref{lem:func:dep},
which leads to
\begin{align*}
\E(\max_k \sup_\omega \vert \wh\sigma_{x, \wh\cp_{\chi, k+1}, ii'}(\omega, \wh\delta_{\chi, k}) -
\E(\wh\sigma_{x, \wh\cp_{\chi, k+1}, ii'}(\omega, \wh\delta_{\chi, k})) \vert^2)
\le C\check\psi_n^2.
\end{align*}
Next, we bound the bias term
\begin{align*}
& \max_k \max_{i, i'} \sup_\omega \l\vert \E\l(\wh\sigma_{x, \wh\cp_{\chi, k+1}, ii'}(\omega, \wh\delta_{\chi, k})\r) - 
\sigma_{x, \cp_{\chi, k+1}, ii'}\l( \omega, \delta_{\chi, k} \r) \r\vert
\\
\le  &
\max_k \max_{i, i'} \sup_\omega 
 \l\vert \E\l(\wh\sigma_{x, \wh\cp_{\chi, k+1}, ii'}(\omega, \wh\delta_{\chi, k})\r) 
- \sigma_{x, \wh\cp_{\chi, k+1}, ii'}( \omega, \wh\delta_{\chi, k} ) \r\vert
\\
& + \max_k \max_{i, i'} \sup_\omega 
\l\vert \sigma_{x, \wh\cp_{\chi, k+1}, ii'}( \omega, \wh\delta_{\chi, k} )  - 
\sigma_{x, \cp_{\chi, k+1}, ii'}( \omega, \delta_{\chi, k} ) \r\vert
=: I + II.
\end{align*}
We can show that $I = O(m^{-1})$ by the arguments analogous to
those adopted in bounding the RHS of~\eqref{eq:spec:x:bias}.
Also on $\mc M^\chi_{n, p}$, we have for fixed $\ell \ge 0$,
\begin{align*}
& \l\vert \gamma_{x, \wh\cp_{\chi, k + 1}, ii'}(\ell, \wh\delta_{\chi, k})
- \gamma_{x, \cp_{\chi, k + 1}, ii'}(\ell, \delta_{\chi, k}) \r\vert
\\
\le& \frac{1}{\wh\delta_{\chi, k}} \l(
\l\vert \sum_{t = \cp_{\chi, k} \wedge (\wh\cp_{\chi, k} + \ell) + 1}^{\cp_{\chi, k} \vee (\wh\cp_{\chi, k} + \ell)} \E(X_{i, t - \ell} X_{i't}) \r\vert +
\l\vert \sum_{t = \cp_{\chi, k + 1} \wedge\wh\cp_{\chi, k + 1} + 1}^{\cp_{\chi, k + 1} \vee \wh\cp_{\chi, k + 1}} \E(X_{i, t - \ell} X_{i't}) \r\vert \r)
\\
& + \l\vert \frac{1}{\wh\delta_{\chi, k}} - \frac{1}{\delta_{\chi, k}} \r\vert \;
\l\vert \sum_{t = \cp_{\chi, k} + 1}^{\cp_{\chi, k}} \E(X_{i, t - \ell} X_{i't}) \r\vert
\le 2C_{\Xi, \varsigma, \vep} (1 + \ell)^{-\varsigma} \cdot 
\frac{\vert \ell \vert + \rho^{\k}_{n, p} + \rho^{[k + 1]}_{n, p}}{G}
\end{align*}
by Lemma~\ref{lem:decay},
from which we obtain $II = O(\rho_{n, p}/G)$.
In summary, we obtain
\begin{align*}
\E\l(\max_k \sup_\omega \l\vert \wh\sigma_{x, \wh\cp_{\chi, k + 1}, ii'}(\omega, \wh\delta_{\chi, k})
- \sigma_{x, \cp_{\chi, k + 1}, ii'}(\omega, \delta_{\chi, k}) \r\vert^2 \r)
\le C\l(\check\psi_n \vee \frac{1}{m} \vee \frac{\rho_{n, p}}{G}\r)^2
\end{align*}
for some constant $C > 0$ that does not depend on $1 \le i, i' \le p$,
from which the conclusion follows.
\end{proof}

\begin{proof}[Proof of~\ref{prop:spec:x:k:two}]
Under Assumption~\ref{assum:common:size}~\ref{cond:common:spacing},
applying Theorems~6.1 and~6.3 of \cite{zhang2021} with their~(B.15),
there exist universal constants $C_1, C_2, C_3 > 0$
and constants $C_\alpha, C_{\nu, \alpha} > 0$ that depend only on their subscripts,
such that for any $z > 0$,
\begin{align*}
& \p\l( \max_k \sup_\omega
\l\vert \wh{\bm\Sigma}_{x, \wh\cp_{\chi, k + 1}}(\omega, \wh\delta_{\chi, k}) - 
\E(\wh{\bm\Sigma}_{x, \wh\cp_{\chi, k + 1}}(\omega, \wh\delta_{\chi, k})) \r\vert_\infty \ge z \r) \le
\\
& \l\{\begin{array}{l}
\frac{C_{\nu, \alpha} K_\chi (4m + 1) m^{\nu/2 - 1} (\log^{7/4}(p) p^{1/\nu})^\nu}{G^{\nu/2 - 1}z^{\nu/2}}
+ C_1 K_\chi(4m + 1) p^2 \exp\l(-\frac{Gz^2}{C_\alpha m \Phi_{4, \alpha}^4}\r) 
\\
\quad \text{under Assumption~\ref{assum:innov}~\ref{cond:moment}},
\\
C_2 K_\chi(4m + 1) p^2 \exp\l[- C_3 \min\l( \frac{G z^2}{m \Phi_{2, 0}^4}, \frac{Gz}{m \Phi_{2, 0}^2} \r) \r] 
\\
\quad \text{under~Assumption~\ref{assum:innov}~\ref{cond:gauss}}
\end{array}\r.
\end{align*}
thanks to Lemma~\ref{lem:func:dep}, such that 
$\vert \wh{\bm\Sigma}_{x, \wh\cp_{\chi, k + 1}}(\omega, \wh\delta_{\chi, k}) - 
\E(\wh{\bm\Sigma}_{x, \wh\cp_{\chi, k + 1}}(\omega, \wh\delta_{\chi, k})) \vert_\infty = O_p(\check\vartheta_{n, p})$.
We can bound the bias term
$\max_k \sup_\omega 
\vert \E(\wh{\bm\Sigma}_{x, \wh\cp_{\chi, k + 1}}(\omega, \wh\delta_{\chi, k}))
- \bm\Sigma_{x, \cp_{\chi, k + 1}}(\omega, \delta_{\chi, k}) \vert_\infty$
as in the proof of~\ref{prop:spec:x:k:one}, and the conclusion follows.
\end{proof}

We denote by $\mbf e^{\k}_{\chi, j}(\omega), \, 1 \le j \le q_k$,
the eigenvectors of $\bm\Sigma^{\k}_\chi(\omega)$
that correspond to $\mu^{\k}_{\chi, j}(\omega)$,
an let $\bm{\mc M}^{\k}_\chi(\omega) = \text{diag}(\mu^{\k}_{\chi, j}(\omega), \, 1 \le j \le q_k)$
and $\mbf E^{\k}_\chi(\omega) = [\mbf e^{\k}_{\chi, j}(\omega), \, 1 \le j \le q_k]$
for all $0 \le k \le K_\chi$.
Similarly, $\wh{\bm{\mc M}}^{\k}_x(\omega_l) \in \R^{q_k \times q_k}$ is a diagonal matrix with the $q_k$ largest eigenvalues of $\wh{\bm\Sigma}_x^{\k}(\omega_l)$ in its diagonal and $\wh{\mbf E}^{\k}_x(\omega_l) \in \R^{p \times q_k}$ consists of the corresponding $q_k$ eigenvectors.
\begin{lem}
\label{lem:dk}
There exists a unitary, diagonal matrix 
$\mbf O_k(\omega) \in \C^{q_k \times q_k}$ for each $\omega \in [-\pi, \pi]$
and $0 \le k \le K_\chi$, such that
\begin{align*}
\max_{0 \le k \le K_\chi} \sup_{\omega \in [-\pi, \pi]} 
\l\vert \wh{\mbf E}^{\k}_{x}(\omega) - \mbf E^{\k}_\chi(\omega) \mbf O_k(\omega) \r\vert_2 = 
O_p\l( \check\psi_n \vee \frac{1}{m} \vee \frac{\rho_{n, p}}{G} \vee \frac{1}{p}\r). 
\end{align*}
\end{lem}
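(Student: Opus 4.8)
The plan is to establish Lemma~\ref{lem:dk} by applying a Davis–Kahan-type eigenvector perturbation bound, using the operator-norm control of $\wh{\bm\Sigma}^{\k}_x(\omega)$ against the population spectral density $\bm\Sigma^{\k}_\chi(\omega)$ that follows from Proposition~\ref{prop:spec:x:k}, together with the eigengap guaranteed by Assumption~\ref{assum:factor:two}. The key point is that under Assumption~\ref{assum:factor:two} (all $r_{k,j} = 1$), the $q_k$ leading eigenvalues $\mu^{\k}_{\chi, j}(\omega)$ each scale like $p$, so the gap between the $q_k$th leading eigenvalue of $\bm\Sigma^{\k}_\chi(\omega)$ and the remaining spectrum is of order $p$ as well; this is what will convert the $\vert\cdot\vert_2$-estimation error of size $p(\check\psi_n \vee m^{-1} \vee \rho_{n,p}/G)$ into the relative rate $\check\psi_n \vee m^{-1} \vee \rho_{n,p}/G$ stated in the conclusion, with the extra $1/p$ term absorbing the bounded idiosyncratic spectrum.

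\textbf{First}, I would bound the operator-norm distance between $\wh{\bm\Sigma}^{\k}_x(\omega)$ and $\bm\Sigma^{\k}_\chi(\omega)$. By Proposition~\ref{prop:spec:x:k}~\ref{prop:spec:x:k:one}, on $\mc M^\chi_{n,p}$ we control $p^{-1}\vert \wh{\bm\Sigma}_{x,\wh\cp_{\chi,k+1}}(\omega, \wh\delta_{\chi,k}) - \bm\Sigma_{x,\cp_{\chi,k+1}}(\omega, \delta_{\chi,k})\vert_2$, uniformly over $k$ and $\omega$, by $\check\psi_n \vee m^{-1} \vee \rho_{n,p}/G$. Since $\vert\cdot\vert_2$ dominates the operator norm $\Vert\cdot\Vert$, this gives the same rate for $p^{-1}\Vert \wh{\bm\Sigma}^{\k}_x(\omega) - \bm\Sigma^{\k}_x(\omega)\Vert$. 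To pass from $\bm\Sigma^{\k}_x(\omega)$ to $\bm\Sigma^{\k}_\chi(\omega)$, I would invoke the decomposition $\bm\Sigma^{\k}_x(\omega) = \bm\Sigma^{\k}_\chi(\omega) + \bm\Sigma^{\k}_\xi(\omega)$ together with Proposition~\ref{prop:idio:eval}, which bounds $\Vert\bm\Sigma^{\k}_\xi(\omega)\Vert \le M_\xi$; dividing by $p$ contributes the $p^{-1}$ term. Hence $p^{-1}\Vert \wh{\bm\Sigma}^{\k}_x(\omega) - \bm\Sigma^{\k}_\chi(\omega)\Vert = O_p(\check\psi_n \vee m^{-1} \vee \rho_{n,p}/G \vee p^{-1})$.

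\textbf{Second}, I would apply the Davis–Kahan $\sin\Theta$ theorem, in the form of Corollary~1 of \cite{yu2015} (already used in the proof of Theorem~\ref{thm:common}~\ref{thm:common:two}), to the rescaled matrices $p^{-1}\wh{\bm\Sigma}^{\k}_x(\omega)$ and $p^{-1}\bm\Sigma^{\k}_\chi(\omega)$. The relevant eigengap is between the $q_k$th and $(q_k+1)$th eigenvalues of $p^{-1}\bm\Sigma^{\k}_\chi(\omega)$: under Assumption~\ref{assum:factor:two}, the former is bounded below by $\alpha^{\k}_{q_k}(\omega) > 0$ uniformly, while the latter is zero since $\bm\Sigma^{\k}_\chi(\omega)$ has rank exactly $q_k$. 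This uniform lower bound on the gap is precisely what makes the perturbation bound yield the asserted rate, producing a unitary alignment matrix $\mbf O_k(\omega)$ (which can be taken diagonal because the leading eigenspaces are one-dimensional once the eigenvalues $\mu^{\k}_{\chi,j}(\omega)$ are distinct, as guaranteed by the strict inequalities in Assumption~\ref{assum:factor}). Taking suprema over $\omega$ and maxima over $k$, and noting that all the bounds in the preceding steps are uniform, completes the argument.

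\textbf{The main obstacle} I anticipate is the uniformity in $\omega$ together with the measurable, consistent choice of the alignment matrices $\mbf O_k(\omega)$: the eigenvectors $\wh{\mbf E}^{\k}_x(\omega)$ and $\mbf E^{\k}_\chi(\omega)$ are only defined up to phase, and one must ensure that a single $\mbf O_k(\omega)$ can be chosen to achieve the bound simultaneously across all frequencies $\omega_l$ without the phase ambiguity inflating the error. This is handled by selecting, for each $(k,\omega)$, the phase that minimises the distance, which is exactly what the Davis–Kahan corollary with optimal orthogonal alignment provides; the uniformity over the finite grid of Fourier frequencies $\omega_l$ then follows from the uniform-in-$\omega$ nature of Proposition~\ref{prop:spec:x:k} and the Lipschitz regularity of the population eigenstructure established in Lemma~\ref{lem:sigma:deriv}.
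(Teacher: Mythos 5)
Your proposal is correct and follows essentially the same route as the paper: its proof likewise combines Proposition~\ref{prop:spec:x:k}~\ref{prop:spec:x:k:one} with Proposition~\ref{prop:idio:eval} to obtain $p^{-1}\Vert \wh{\bm\Sigma}_{x,\wh\cp_{\chi,k+1}}(\omega,\wh\delta_{\chi,k}) - \bm\Sigma^{\k}_\chi(\omega)\Vert = O_p(\check\psi_n \vee m^{-1} \vee \rho_{n,p}/G \vee p^{-1})$ uniformly over $k$ and $\omega$, and then applies a Davis--Kahan bound in which the eigengap $\mu^{\k}_{\chi,q_k}(\omega) \asymp p$ (Assumption~\ref{assum:factor:two}) cancels the factor $p$ in the estimation error. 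The only cosmetic difference is that the paper invokes Theorem~2 of \cite{yu2015} for the full $q_k$-dimensional leading eigenspace, with denominator $\mu^{\k}_{\chi,q_k}(\omega)$, whereas you invoke the single-eigenvector Corollary~1 and align phase by phase; both deliver the stated rate.
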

\begin{proof}
By Propositions~\ref{prop:idio:eval} and~\ref{prop:spec:x:k}~\ref{prop:spec:x:k:one},
we have
\begin{align}
& \max_k \sup_{\omega} \frac{1}{p} \l\Vert 
\wh{\bm\Sigma}_{x, \wh\cp_{\chi, k + 1}}(\omega, \wh\delta_{\chi, k}) 
- \bm\Sigma^{\k}_\chi(\omega) \r\Vert
\le 
\max_k \sup_\omega \frac{1}{p} \l\Vert 
\wh{\bm\Sigma}_{x, \wh\cp_{\chi, k + 1}}(\omega, \wh\delta_{\chi, k}) 
- \bm\Sigma_{x, \cp_{\chi, k + 1}}(\omega, \delta_{\chi, k})\r\Vert
\nn \\
& + \max_k \sup_\omega \frac{1}{p} \l\Vert \bm\Sigma_{\xi, \cp_{\chi, k + 1}}(\omega, \delta_{\chi, k})\r\Vert
= O_p\l( \check\psi_n \vee \frac{1}{m} \vee \frac{\rho_{n, p}}{G} \vee \frac{1}{p} \r).
\label{lem:dk:eq:cov}
\end{align}
Then by Theorem~2 of \cite{yu2015}, 
there exist such $\mbf O_k(\omega)$ satisfying
\begin{align*}
\l\vert \wh{\mbf E}^{\k}_{x}(\omega) - \mbf E^{\k}_\chi(\omega) \mbf O_k(\omega) \r\vert_2 \le 
\frac{2^{3/2} q^{1/2} \l\Vert \wh{\bm\Sigma}_{x, \wh\cp_{\chi, k+1}}(\omega, \wh\delta_{\chi, k}) - \bm\Sigma^{\k}_\chi(\omega) \r\Vert}
{\mu^{\k}_{\chi, q_k}(\omega)}
\end{align*}
for all $\omega$ and $k$ which, 
combined with~\eqref{lem:dk:eq:cov} and Assumption~\ref{assum:factor:two},
concludes the proof.
\end{proof}

\begin{lem}
\label{lem:evals}
\begin{align*}
\max_{0 \le k \le K_\chi} \sup_{\omega \in [-\pi, \pi]}
\l\Vert \l(\frac{\wh{\bm{\mc M}}^{\k}_x(\omega)}{p}\r)^{-1} - 
\l(\frac{\bm{\mc M}^{\k}_\chi(\omega)}{p}\r)^{-1} \r\Vert
&= O_p\l( \check\psi_n \vee \frac{1}{m} \vee \frac{\rho_{n, p}}{G} \vee \frac{1}{p} \r).
\end{align*}
\end{lem}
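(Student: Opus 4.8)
The plan is to reduce the claim to the eigenvalue perturbation bound already obtained in the proof of Lemma~\ref{lem:dk} and then exploit the diagonal structure of the two matrices together with the strong-factor lower bound of Assumption~\ref{assum:factor:two}. Write $\wh{\bm{\mc M}}^{\k}_x(\omega) = \mathrm{diag}(\wh\mu^{\k}_{x, j}(\omega), \, 1 \le j \le q_k)$ and $\bm{\mc M}^{\k}_\chi(\omega) = \mathrm{diag}(\mu^{\k}_{\chi, j}(\omega), \, 1 \le j \le q_k)$, where $\wh\mu^{\k}_{x, j}(\omega)$ is the $j$th largest eigenvalue of $\wh{\bm\Sigma}_x^{\k}(\omega) = \wh{\bm\Sigma}_{x, \wh\cp_{\chi, k+1}}(\omega, \wh\delta_{\chi, k})$. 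Since both scaled matrices are diagonal, so is the difference of their inverses, whence
\begin{align*}
\l\Vert \l(\frac{\wh{\bm{\mc M}}^{\k}_x(\omega)}{p}\r)^{-1} - \l(\frac{\bm{\mc M}^{\k}_\chi(\omega)}{p}\r)^{-1} \r\Vert
= \max_{1 \le j \le q_k} \l\vert \frac{p}{\wh\mu^{\k}_{x, j}(\omega)} - \frac{p}{\mu^{\k}_{\chi, j}(\omega)} \r\vert
= \max_{1 \le j \le q_k} \frac{\l\vert p^{-1}\mu^{\k}_{\chi, j}(\omega) - p^{-1}\wh\mu^{\k}_{x, j}(\omega) \r\vert}{\l(p^{-1}\wh\mu^{\k}_{x, j}(\omega)\r)\l(p^{-1}\mu^{\k}_{\chi, j}(\omega)\r)}.
\end{align*}
It therefore suffices to control the numerator and to bound the denominator uniformly away from zero.

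For the numerator, Weyl's inequality gives $\max_j \vert \wh\mu^{\k}_{x, j}(\omega) - \mu^{\k}_{\chi, j}(\omega) \vert \le \Vert \wh{\bm\Sigma}_x^{\k}(\omega) - \bm\Sigma^{\k}_\chi(\omega) \Vert$, so dividing by $p$ and invoking the bound~\eqref{lem:dk:eq:cov} established in the proof of Lemma~\ref{lem:dk} yields
\begin{align*}
\max_{0 \le k \le K_\chi} \sup_{\omega \in [-\pi, \pi]} \max_{1 \le j \le q_k} \l\vert \frac{\mu^{\k}_{\chi, j}(\omega)}{p} - \frac{\wh\mu^{\k}_{x, j}(\omega)}{p} \r\vert = O_p\l(\check\psi_n \vee \frac{1}{m} \vee \frac{\rho_{n, p}}{G} \vee \frac{1}{p}\r).
\end{align*}
For the denominator, Assumption~\ref{assum:factor:two} (Assumption~\ref{assum:factor} with $r_{k,j} = 1$) forces $p^{-1}\mu^{\k}_{\chi, j}(\omega) \ge \alpha^{\k}_j(\omega) > 0$ for all $p \ge p_0$; as there are finitely many $k$ and the $\alpha^{\k}_j$ are positive on the compact interval $[-\pi, \pi]$, one extracts a uniform constant $c_* > 0$ with $\min_{k, j} \inf_\omega p^{-1}\mu^{\k}_{\chi, j}(\omega) \ge c_*$. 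Combining this with the numerator bound, which tends to zero in probability, shows that $p^{-1}\wh\mu^{\k}_{x, j}(\omega) \ge c_*/2$ uniformly on an event of probability tending to one, so both factors in the denominator are bounded below by a positive constant. The ratio then inherits the rate of the numerator, which is exactly the asserted bound.

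The main subtlety will be guaranteeing the uniform positivity of the denominator: the lower bound $\alpha^{\k}_j(\omega) > 0$ is pointwise in $\omega$, so I must first pass to a uniform constant over $\omega \in [-\pi, \pi]$ and over $0 \le k \le K_\chi$, and then transfer that lower bound from the population eigenvalues $\mu^{\k}_{\chi, j}$ to the empirical $\wh\mu^{\k}_{x, j}$. This transfer relies on the perturbation rate being $o_P(1)$, which holds because each of $\check\psi_n$, $1/m$, $\rho_{n, p}/G$ and $1/p$ vanishes under Assumption~\ref{assum:common:size}. Everything else is a routine consequence of Weyl's inequality and of the fact that the spectral norm of a diagonal matrix is the maximum modulus of its entries.
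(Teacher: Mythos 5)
Your proposal is correct and follows essentially the same route as the paper's proof: Weyl's inequality applied to the bound \eqref{lem:dk:eq:cov} to control the scaled eigenvalue differences, Assumption~\ref{assum:factor:two} to keep the denominators $p^{-1}\mu^{\k}_{\chi,j}(\omega)$ and $p^{-1}\wh\mu^{\k}_{x,j}(\omega)$ bounded away from zero on an event of probability tending to one, and then the entrywise formula for the difference of inverses of diagonal matrices. The only (immaterial) differences are that you use the exact identity that the spectral norm of a diagonal matrix is the maximum modulus of its entries, whereas the paper bounds it by the Frobenius norm and a sum, incurring a harmless factor of $q_k \le q$, and that you are slightly more explicit about making the lower bound on the population eigenvalues uniform over $\omega$ and $k$.
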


\begin{proof}
Let $\wh\mu^{\k}_{x, j}(\omega)$ denote the $j$th largest eigenvalue of 
$\wh{\bm\Sigma}_{x, \wh\cp_{\chi, k + 1}}(\omega, \wh\delta_{\chi, k})$
(and thus the $j$th diagonal element of $\wh{\bm{\mc M}}^{\k}_x(\omega)$).
As a consequence of~\eqref{lem:dk:eq:cov} and Weyl's inequality, 
for all $1 \le j \le q_k$ and $\omega \in [-\pi, \pi]$, 
\begin{align}
\label{eq:first:q:eval}
\max_k \sup_\omega \frac{1}{p} \l\vert \wh\mu^{\k}_{x, j}(\omega) - \mu^{\k}_{\chi, j}(\omega) \r\vert
&\le \max_k \sup_\omega \frac{1}{p} \l\Vert \wh{\bm\Sigma}_{x, \wh\cp_{\chi, k + 1}}(\omega, \wh\delta_{\chi, k}) - \bm\Sigma^{\k}_\chi(\omega) \r\Vert
\nn \\
&= O_p\l( \check\psi_n \vee \frac{1}{m} \vee \frac{\rho_{n, p}}{G} \vee \frac{1}{p} \r).
\end{align}
Also from Assumption~\ref{assum:factor:two}, 
there exists $\alpha^{\k}_{q_k}(\omega)$ such that 
$p^{-1} \mu^{\k}_{\chi, q_k}(\omega) \ge \alpha_{q_k}(\omega)$ and 
thus $p^{-1} \wh{\mu}^{\k}_{x, q_k}(\omega) \ge \alpha_{q_k}(\omega) + 
O_p(\check\psi_n \vee m^{-1} \vee G^{-1}\rho_{n, p} \vee p^{-1})$,
which implies that the matrix $p^{-1}{\bm{\mc M}^{\k}_{\chi}(\omega)}$ 
is invertible and the inverse of $p^{-1}\wh{\bm{\mc M}}^{\k}_x(\omega)$ 
exists with probability tending to one as $n, p \to \infty$. Therefore,
\begin{align*}
\l\Vert \l(\frac{\bm{\mc M}^{\k}_{\chi}(\omega)}{p}\r)^{-1} \r\Vert = \frac{p}{\mu^{\k}_{\chi, q_k}(\omega)}
\quad \text{and} \quad
\l\Vert \l(\frac{\wh{\bm{\mc M}}^{\k}_x(\omega)}{p}\r)^{-1} \r\Vert = 
\frac{1}{p^{-1} \mu^{\k}_{\chi, q_k}(\omega)(1 + o_p(1))}.
\end{align*}
Then from~\eqref{eq:first:q:eval}, we have for all $\omega$,
\begin{align*}
& \l\Vert \l(\frac{\wh{\bm{\mc M}}^{\k}_x(\omega)}{p}\r)^{-1} - 
\l(\frac{\bm{\mc M}^{\k}_\chi(\omega)}{p}\r)^{-1} \r\Vert
= \sqrt{p^2 \sum_{j = 1}^{q_k} 
\l(\frac{1}{\wh\mu_{x, j}^{\k}(\omega)} - \frac{1}{\mu^{\k}_{\chi, j}(\omega)}\r)^2}
\\
\le& \sum_{j = 1}^{q_k} \frac{p^{-1}\vert \wh\mu^{\k}_{x, j}(\omega) - \mu^{\k}_{\chi, j}(\omega)\vert}
{p^{-1}\wh\mu^{\k}_{x, j}(\omega) \cdot p^{-1}\mu^{\k}_{\chi, j}(\omega)}
= O_p\l( q_k\l(\check\psi_n \vee \frac{1}{m} \vee \frac{\rho_{n, p}}{G} \vee \frac{1}{p} \r) \r)
\end{align*}
where $O_p$ holds uniformly over $\omega$ and $k$.
\end{proof}

Let $\bm\varphi_i$ denote a vector whose $i$th element is one
and the rest are set to be zero; 
its length are determined by the context. 
\begin{lem}
\label{lem:evec:vec}
\begin{align*}
\sqrt{p} \max_{0 \le k \le K_\chi} \max_{1 \le i \le p} \sup_{\omega \in [-\pi, \pi]}
\l\vert \bm\varphi_i^\top \l(\wh{\mbf E}^{\k}_{x}(\omega) - 
\mbf E^{\k}_{\chi}(\omega)\mbf O_k(\omega) \r) \r\vert_2
= O_p\l( \check\vartheta_{n, p} \vee \frac{1}{m} \vee \frac{\rho_{n, p}}{G} \vee \frac{1}{\sqrt p}\r).
\end{align*}
\end{lem}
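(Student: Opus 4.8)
The plan is to derive a first-order expansion of the \emph{row-wise} eigenvector error from the eigen-identity $\wh{\bm\Sigma}\wh{\mbf E}^{\k}_x = \wh{\mbf E}^{\k}_x\wh{\bm{\mc M}}^{\k}_x$, and then to bound its three constituent pieces separately, the point being that only the element-wise spectral error (which obeys the sharper rate $\check\vartheta_{n,p}$ of Proposition~\ref{prop:spec:x:k}~\ref{prop:spec:x:k:two}) is allowed to enter the dominant term, while the bounded idiosyncratic spectrum contributes only $O_p(p^{-1})$ and the aggregate eigenvector error of Lemma~\ref{lem:dk} is discounted by a factor $p^{-1/2}$ coming from the incoherence of the population factor eigenvectors. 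Throughout I fix $k$, $i$ and $\omega$, operate on $\mc M^\chi_{n,p}$, and abbreviate $\wh{\bm\Sigma} = \wh{\bm\Sigma}_{x,\wh\cp_{\chi,k+1}}(\omega,\wh\delta_{\chi,k})$, $\bm\Sigma_\chi = \bm\Sigma^{\k}_\chi(\omega) = \mbf E^{\k}_\chi\bm{\mc M}^{\k}_\chi(\mbf E^{\k}_\chi)^*$, and $\bm\Sigma_x = \bm\Sigma_\chi + \bm\Sigma_\xi$ with $\bm\Sigma_\xi = \bm\Sigma_{\xi,\cp_{\chi,k+1}}(\omega,\delta_{\chi,k})$; note $\Vert\bm\Sigma_\xi\Vert\le M_\xi$ by Proposition~\ref{prop:idio:eval} and $\Vert(\wh{\bm{\mc M}}^{\k}_x)^{-1}\Vert = O_p(p^{-1})$ since $p^{-1}\wh\mu^{\k}_{x,q_k}(\omega)$ is bounded away from zero under Assumption~\ref{assum:factor:two} (as shown in the proof of Lemma~\ref{lem:evals}). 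Writing $\wh{\mbf E}^{\k}_x = \wh{\bm\Sigma}\wh{\mbf E}^{\k}_x(\wh{\bm{\mc M}}^{\k}_x)^{-1}$, substituting $\wh{\bm\Sigma} = (\wh{\bm\Sigma}-\bm\Sigma_x)+\bm\Sigma_\xi+\bm\Sigma_\chi$ and using $\bm\Sigma_\chi\wh{\mbf E}^{\k}_x(\wh{\bm{\mc M}}^{\k}_x)^{-1} = \mbf E^{\k}_\chi\bm{\mc M}^{\k}_\chi(\mbf E^{\k}_\chi)^*\wh{\mbf E}^{\k}_x(\wh{\bm{\mc M}}^{\k}_x)^{-1}$ yields
\begin{align*}
\wh{\mbf E}^{\k}_x - \mbf E^{\k}_\chi\mbf O_k
&= \underbrace{(\wh{\bm\Sigma} - \bm\Sigma_x)\wh{\mbf E}^{\k}_x(\wh{\bm{\mc M}}^{\k}_x)^{-1}}_{=: \, \mbf A}
+ \underbrace{\bm\Sigma_\xi\wh{\mbf E}^{\k}_x(\wh{\bm{\mc M}}^{\k}_x)^{-1}}_{=: \, \mbf B} \\
&\quad + \underbrace{\mbf E^{\k}_\chi\l(\bm{\mc M}^{\k}_\chi(\mbf E^{\k}_\chi)^*\wh{\mbf E}^{\k}_x(\wh{\bm{\mc M}}^{\k}_x)^{-1} - \mbf O_k\r)}_{=: \, \mbf C}.
\end{align*}

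The two leading terms are direct. For $\mbf A$, since $\Vert\wh{\mbf E}^{\k}_x\Vert = 1$ and $\vert\bm\varphi_i^\top(\wh{\bm\Sigma}-\bm\Sigma_x)\vert_2 \le \sqrt{p}\,\vert\wh{\bm\Sigma}-\bm\Sigma_x\vert_\infty$, I bound $\vert\bm\varphi_i^\top\mbf A\vert_2 \le \sqrt p\,\vert\wh{\bm\Sigma}-\bm\Sigma_x\vert_\infty\,\Vert(\wh{\bm{\mc M}}^{\k}_x)^{-1}\Vert = O_p(p^{-1/2}(\check\vartheta_{n,p}\vee m^{-1}\vee\rho_{n,p}/G))$ by Proposition~\ref{prop:spec:x:k}~\ref{prop:spec:x:k:two}, uniformly over $i,k,\omega$. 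For $\mbf B$, using only $\Vert\bm\Sigma_\xi\Vert = O(1)$, I get $\vert\bm\varphi_i^\top\mbf B\vert_2 \le \Vert\bm\Sigma_\xi\Vert\,\Vert(\wh{\bm{\mc M}}^{\k}_x)^{-1}\Vert = O_p(p^{-1})$; this is precisely the origin of the $p^{-1/2}$ entry after the overall scaling by $\sqrt p$. Crucially, the bounded (rather than vanishing) idiosyncratic spectrum does no harm here because it is suppressed by $\Vert(\wh{\bm{\mc M}}^{\k}_x)^{-1}\Vert = O_p(p^{-1})$.

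Term $\mbf C$ is the main obstacle, and two ingredients make it work. First is the incoherence of the population eigenvectors: from $\bm\Sigma^{\k}_\chi = \mbf E^{\k}_\chi\bm{\mc M}^{\k}_\chi(\mbf E^{\k}_\chi)^*$ one has $\vert\bm\varphi_i^\top\mbf E^{\k}_\chi\vert_2^2 \le \sigma^{\k}_{\chi,ii}(\omega)/\mu^{\k}_{\chi,q_k}(\omega) = O(p^{-1})$, using the bounded diagonal entries of Lemma~\ref{lem:decay:two} and $\mu^{\k}_{\chi,q_k}(\omega)\asymp p$ from Assumption~\ref{assum:factor:two}; hence $\max_i\vert\bm\varphi_i^\top\mbf E^{\k}_\chi\vert_2 = O(p^{-1/2})$ uniformly. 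Second is that the operator norm $\Vert\bm{\mc M}^{\k}_\chi(\mbf E^{\k}_\chi)^*\wh{\mbf E}^{\k}_x(\wh{\bm{\mc M}}^{\k}_x)^{-1} - \mbf O_k\Vert$ attains the \emph{aggregate} rate $\check\psi_n\vee m^{-1}\vee\rho_{n,p}/G\vee p^{-1}$ of Lemma~\ref{lem:dk}: writing $\mbf H = (\mbf E^{\k}_\chi)^*\wh{\mbf E}^{\k}_x$ and rescaling $\bar{\bm{\mc M}}_\chi = p^{-1}\bm{\mc M}^{\k}_\chi$, $\bar{\wh{\bm{\mc M}}}_x = p^{-1}\wh{\bm{\mc M}}^{\k}_x$, I would exploit that $\mbf O_k$ is unitary \emph{diagonal} (Lemma~\ref{lem:dk}) so it commutes with the diagonal $\bar{\bm{\mc M}}_\chi$, giving the clean split $\bar{\bm{\mc M}}_\chi\mbf H\bar{\wh{\bm{\mc M}}}_x^{-1} - \mbf O_k = \bar{\bm{\mc M}}_\chi(\mbf H-\mbf O_k)\bar{\wh{\bm{\mc M}}}_x^{-1} + \bar{\bm{\mc M}}_\chi\mbf O_k(\bar{\wh{\bm{\mc M}}}_x^{-1}-\bar{\bm{\mc M}}_\chi^{-1})$, whose two pieces are controlled by $\Vert\mbf H-\mbf O_k\Vert\le\vert\wh{\mbf E}^{\k}_x-\mbf E^{\k}_\chi\mbf O_k\vert_2$ (Lemma~\ref{lem:dk}) and by Lemma~\ref{lem:evals}, respectively. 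Multiplying the incoherence bound by this operator norm gives $\vert\bm\varphi_i^\top\mbf C\vert_2 = O_p(p^{-1/2}(\check\psi_n\vee m^{-1}\vee\rho_{n,p}/G\vee p^{-1}))$, which is dominated by the contribution of $\mbf A$ since $\check\vartheta_{n,p}\ge\check\psi_n$ and $p^{-1}\le p^{-1/2}$. Combining $\mbf A$, $\mbf B$, $\mbf C$ and multiplying through by $\sqrt p$ then yields the stated rate $\check\vartheta_{n,p}\vee m^{-1}\vee\rho_{n,p}/G\vee p^{-1/2}$, uniformly over $k$ and $i$ by taking maxima inside each of the invoked bounds.
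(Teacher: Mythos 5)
Your proposal is correct and follows essentially the same route as the paper's proof: both start from the eigen-identity $\wh{\mbf E}^{\k}_x = \wh{\bm\Sigma}\,\wh{\mbf E}^{\k}_x(\wh{\bm{\mc M}}^{\k}_x)^{-1}$, bound the leading term by the row-wise spectral error of Proposition~\ref{prop:spec:x:k}~\ref{prop:spec:x:k:two} times $\Vert(\wh{\bm{\mc M}}^{\k}_x/p)^{-1}\Vert = O_p(1)$, absorb $\bm\Sigma_{\xi}$ via Proposition~\ref{prop:idio:eval} into the $p^{-1/2}$ term, and control the remainder through Lemmas~\ref{lem:dk} and~\ref{lem:evals} together with the commutativity of the diagonal $\mbf O_k(\omega)$ with the diagonal eigenvalue matrices. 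The only cosmetic difference is bookkeeping: you expand $\bm\Sigma^{\k}_\chi = \mbf E^{\k}_\chi\bm{\mc M}^{\k}_\chi(\mbf E^{\k}_\chi)^*$ inside the estimated eigen-identity and let incoherence enter as $\vert\bm\varphi_i^\top\mbf E^{\k}_\chi\vert_2 = O(p^{-1/2})$, whereas the paper keeps the population eigen-identity and uses the equivalent bound $p^{-1/2}\vert\bm\varphi_i^\top\bm\Sigma^{\k}_\chi(\omega)\vert_2 = O(1)$ from Lemma~\ref{lem:sigma:bound}.
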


\begin{proof}
By Propositions~\ref{prop:idio:eval} and~\ref{prop:spec:x:k}~\ref{prop:spec:x:k:two}, we have
\begin{align}
& \frac{1}{\sqrt p} \max_k \max_i \sup_\omega
\l\vert \bm\varphi_i^\top \l(\wh{\bm\Sigma}_{x, \wh\cp_{\chi, k + 1}}(\omega, \wh\delta_{\chi, k}) 
- \bm\Sigma^{\k}_{\chi}(\omega)\r) \r\vert_2
\nn \\
& \le 
\frac{1}{\sqrt p} \max_k \max_i \sup_\omega
\l\vert \bm\varphi_i^\top \l(\wh{\bm\Sigma}_{x, \wh\cp_{\chi, k + 1}}(\omega, \wh\delta_{\chi, k}) 
- \bm\Sigma_{x, \cp_{\chi, k + 1}}(\omega, \delta_{\chi, k}) \r) \r\vert_2
\nn \\
& + \frac{1}{\sqrt p} \max_k \max_i \sup_\omega
\l\Vert \bm\Sigma_{\xi, \cp_{\chi, k + 1}}(\omega, \delta_{\chi, k}) \r\Vert
= O_p\l( \check\vartheta_{n, p} \vee \frac{1}{m} \vee \frac{\rho_{n, p}}{G} \vee \frac{1}{\sqrt p}\r).
\label{lem:evec:vec:eq:one}
\end{align}

Then, by~\eqref{lem:evec:vec:eq:one}, Assumption~\ref{assum:factor:two}
and Lemmas~\ref{lem:sigma:bound}, \ref{lem:dk} and~\ref{lem:evals}, we have 
\begin{align*}
& \sqrt p \max_{k, i} \sup_\omega
\l\vert \bm\varphi_i^\top \l(\wh{\mbf E}^{\k}_{x}(\omega) 
- \mbf E^{\k}_{\chi}(\omega)\mbf O_k(\omega)\r) \r\vert_2 
\\
=& \frac 1 {\sqrt p} \max_{k, i} \sup_\omega \l\vert
\bm\varphi_i^\top \l[ \wh{\bm\Sigma}_{x, \wh\cp_{\chi, k+1}}(\omega, \wh\delta_{\chi, k}) \wh{\mbf E}^{\k}_{x}(\omega) 
\l( \frac{\wh{\bm{\mc M}}^{\k}_x(\omega)}{p} \r)^{-1}
\r. \r.
\\
& \qquad \qquad \qquad \qquad \l.\l. - \bm\Sigma^{\k}_{\chi}(\omega) \mbf E^{\k}_{\chi}(\omega) 
\l( \frac{\bm{\mc M}_{\chi}(\omega)}{p}\r)^{-1} \mbf O_k(\omega) \r] \r\vert_2 
\\
\le& \max_{k, i} \sup_\omega \l\{
\frac 1 {\sqrt p} \l\vert
\bm\varphi_i^\top\l(\wh{\bm\Sigma}_{x, \wh\cp_{\chi, k+1}}(\omega, \wh\delta_{\chi, k}) - 
\bm\Sigma^{\k}_\chi(\omega)\r) \r\vert_2 \,
\l\Vert\l(\frac{\wh{\bm{\mc M}}^{\k}_x(\omega)}{p}\r)^{-1}\r\Vert \r.
\\
& + 
\frac 1 {\sqrt p} \l\vert \bm\varphi_i^\top \bm\Sigma^{\k}_\chi(\omega) \r\vert_2\,
\l\Vert\l(\frac{\wh{\bm{\mc M}}^{\k}_x(\omega)}{p}\r)^{-1} - 
\l(\frac{\bm{\mc M}^{\k}_{\chi}(\omega)}{p}\r)^{-1}\r\Vert 
\\ 
& + \l. \frac 1 {\sqrt p} \l\vert \bm\varphi_i^\top\bm\Sigma^{\k}_\chi(\omega) \r\vert_2 \,
\l\Vert \l(\frac{\bm{\mc M}^{\k}_{\chi}(\omega)}{p} \r)^{-1} \r\Vert\,
\l\Vert \wh{\mbf E}^{\k}_{x}(\omega) - \mbf E^{\k}_{\chi}(\omega)\mbf O_k(\omega)\r\Vert \r\}
\\
&= O_p\l( \check\vartheta_{n, p} \vee \frac{1}{m} \vee \frac{\rho_{n, p}}{G} \vee \frac{1}{\sqrt p}\r).
\end{align*} 
\end{proof}

\begin{lem}
\label{lem:evec:size}
Let $\mbf e^{\k}_{\chi, j}(\omega)$ (resp. $\wh{\mbf e}^{\k}_{x, j}(\omega)$)
the $j$th column of $\mbf E^{\k}_{\chi}(\omega)$ (resp. $\wh{\mbf E}^{\k}_{x}(\omega)$)
and $e^{\k}_{\chi, ij}(\omega)$ (resp. $\wh e^{\k}_{x, ij}(\omega)$) denote its $i$th element.
\begin{enumerate}[label = (\alph*)]
\item \label{lem:evec:size:one} $\max_{0 \le k \le K_\chi} \max_{1 \le j \le q_k} 
\sup_{\omega \in [-\pi, \pi]}
\frac{1}{\sqrt{\mu^{\k}_{\chi, j}(\omega)}} \max_{1 \le i \le p} \vert e^{\k}_{\chi, ij}(\omega) \vert = O(1)$.

\item \label{lem:evec:size:three} If $\check\vartheta_{n, p} \to 0$ and $\rho_{n, p}/G \to 0$ 
as $n, p \to \infty$, we have
\begin{align*}
\max_{0 \le k \le K_\chi} \max_{1 \le j \le q_k} \sup_{\omega \in [-\pi, \pi]}  
\frac{1}{\sqrt{\wh\mu^{\k}_{\chi, j}(\omega)}} \max_{1 \le i \le p} \l\vert \wh{e}^{\k}_{x, ij}(\omega) \r\vert = O_p(1).
\end{align*}
\end{enumerate}
\end{lem}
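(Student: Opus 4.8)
The plan is to handle the two parts separately: part~(a) follows from the exact rank-$q_k$ eigendecomposition of the population spectral density together with a uniform bound on its diagonal, while part~(b) grafts the estimation error bounds onto~(a) through an entrywise perturbation argument. For part~(a), I would start from the identity $\bm\Sigma^{\k}_\chi(\omega) = \sum_{j=1}^{q_k} \mu^{\k}_{\chi, j}(\omega) \mbf e^{\k}_{\chi, j}(\omega) (\mbf e^{\k}_{\chi, j}(\omega))^*$, which holds because $\bm\Sigma^{\k}_\chi(\omega)$ has rank $q_k$ under~\eqref{eq:model}. Reading off the $i$th diagonal entry gives $\sigma^{\k}_{\chi, ii}(\omega) = \sum_{j=1}^{q_k} \mu^{\k}_{\chi, j}(\omega) \vert e^{\k}_{\chi, ij}(\omega) \vert^2$, a sum of nonnegative terms. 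Lemma~\ref{lem:decay:two} (via Assumption~\ref{assum:common}) bounds $\vert \gamma^{\k}_{\chi, ii}(\ell) \vert \le C_{\Xi, \varsigma}(1 + \vert \ell \vert)^{-\varsigma}$, so $\sigma^{\k}_{\chi, ii}(\omega) = (2\pi)^{-1}\sum_\ell \gamma^{\k}_{\chi, ii}(\ell) e^{-\iota \ell \omega}$ is bounded by some $B > 0$ uniformly in $i$, $k$ and $\omega$. Since every summand is nonnegative, each obeys $\mu^{\k}_{\chi, j}(\omega) \vert e^{\k}_{\chi, ij}(\omega)\vert^2 \le B$, i.e.\ $\sqrt{\mu^{\k}_{\chi, j}(\omega)} \, \vert e^{\k}_{\chi, ij}(\omega)\vert \le \sqrt{B}$ uniformly. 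The claim follows at once because under Assumption~\ref{assum:factor:two} we have $\mu^{\k}_{\chi, j}(\omega) \ge p \, \alpha^{\k}_j(\omega)$, eventually bounded below by one, so $(\mu^{\k}_{\chi, j}(\omega))^{-1/2}\max_i\vert e^{\k}_{\chi, ij}(\omega)\vert \le \sqrt{B}/\mu^{\k}_{\chi, j}(\omega) = O(1)$.

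For part~(b), I would compare $\wh e^{\k}_{x, ij}(\omega) = \bm\varphi_i^\top \wh{\mbf e}^{\k}_{x, j}(\omega)$ against the aligned population entry using the unitary diagonal rotation $\mbf O_k(\omega)$ from Lemma~\ref{lem:dk}. Writing $\wh{\mbf e}^{\k}_{x, j}(\omega) = [\mbf E^{\k}_\chi(\omega)\mbf O_k(\omega)]_{\cdot j} + [\wh{\mbf E}^{\k}_x(\omega) - \mbf E^{\k}_\chi(\omega)\mbf O_k(\omega)]_{\cdot j}$ and using that $\mbf O_k(\omega)$ is diagonal with unit-modulus entries, the first term contributes exactly $\vert e^{\k}_{\chi, ij}(\omega)\vert$, while the second is dominated by $\vert \bm\varphi_i^\top (\wh{\mbf E}^{\k}_x(\omega) - \mbf E^{\k}_\chi(\omega)\mbf O_k(\omega))\vert_2$. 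By part~(a) together with Assumption~\ref{assum:factor:two}, $\max_i \vert e^{\k}_{\chi, ij}(\omega)\vert = O(p^{-1/2})$, and by Lemma~\ref{lem:evec:vec} the perturbation term is $O_p(p^{-1/2}(\check\vartheta_{n, p} \vee m^{-1} \vee \rho_{n, p}/G \vee p^{-1/2}))$ uniformly over $i, j, k, \omega$. Under the hypotheses $\check\vartheta_{n, p} \to 0$ and $\rho_{n, p}/G \to 0$ (with $m^{-1}, p^{-1/2} \to 0$ automatic), the perturbation is $o_p(p^{-1/2})$, so $\max_i \vert \wh e^{\k}_{x, ij}(\omega)\vert = O_p(p^{-1/2})$, where $\wh e^{\k}_{x,ij}$ and the retained eigenvalue $\wh\mu^{\k}_{x,j}$ are those written $\wh e^{\k}_{\chi,ij}$, $\wh\mu^{\k}_{\chi,j}$ in the statement. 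Finally, \eqref{eq:first:q:eval} (from Lemma~\ref{lem:evals}) gives $p^{-1}\wh\mu^{\k}_{x, j}(\omega) \ge \alpha^{\k}_j(\omega) + o_p(1)$, whence $\wh\mu^{\k}_{x, j}(\omega) \gtrsim p$ with probability tending to one, and $(\wh\mu^{\k}_{x, j}(\omega))^{-1/2}\max_i\vert \wh e^{\k}_{x, ij}(\omega)\vert = O_p(p^{-1/2}\cdot p^{-1/2}) = O_p(1)$.

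The algebraic identity and the norm estimates are routine; the one place needing care is making every bound uniform over $\omega \in [-\pi, \pi]$ and over $k$ and $j$ simultaneously, which I would secure by invoking Lemmas~\ref{lem:evec:vec} and~\ref{lem:evals} in their stated uniform-in-$\omega$ form and by noting that the constant $B$ in part~(a) and the envelope $\alpha^{\k}_j(\omega)$ are controlled free of $\omega$. The genuine obstacle is the bookkeeping that the probabilistic perturbation from Lemma~\ref{lem:evec:vec} is strictly of smaller order than the deterministic order $p^{-1/2}$ of the leading population entry, so that the estimated eigenvector entries inherit the delocalisation of the population ones; this is precisely what the conditions $\check\vartheta_{n, p}\to 0$ and $\rho_{n, p}/G \to 0$ are designed to guarantee.
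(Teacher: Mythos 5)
Your proposal is correct, and it splits into a part that mirrors the paper and a part that does not. For part~(a) you argue essentially as the paper does: bound the diagonal entries of $\bm\Sigma^{\k}_\chi(\omega)$ uniformly (you do this directly via Lemma~\ref{lem:decay:two}; the paper routes it through Lemma~\ref{lem:sigma:bound}, which is the same computation) and then exploit the rank-$q_k$ identity $\sigma^{\k}_{\chi, ii}(\omega) = \sum_{j = 1}^{q_k} \mu^{\k}_{\chi, j}(\omega) \vert e^{\k}_{\chi, ij}(\omega) \vert^2$ to extract $\sqrt{\mu^{\k}_{\chi, j}(\omega)}\, \vert e^{\k}_{\chi, ij}(\omega) \vert \le \sqrt{B_\sigma}$, from which the stated claim follows. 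For part~(b), however, you take a genuinely different route. The paper simply repeats the diagonal argument on the estimated matrix: since the Bartlett-kernel estimator $\wh{\bm\Sigma}^{\k}_x(\omega)$ is positive semi-definite, one has $\sum_{j=1}^{q_k} \wh\mu^{\k}_{x, j}(\omega) \vert \wh e^{\k}_{x, ij}(\omega) \vert^2 \le \wh\sigma_{x, \wh\cp_{\chi, k+1}, ii}(\omega, \wh\delta_{\chi, k}) \le B_\sigma + O_p(\check\vartheta_{n, p} \vee m^{-1} \vee \rho_{n, p}/G)$ by Proposition~\ref{prop:spec:x:k}~\ref{prop:spec:x:k:two} and Lemma~\ref{lem:sigma:bound}, and then \eqref{eq:first:q:eval} finishes. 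You instead run an eigenvector-perturbation argument: align $\wh{\mbf e}^{\k}_{x, j}(\omega)$ with $\mbf e^{\k}_{\chi, j}(\omega)$ through the diagonal unitary $\mbf O_k(\omega)$ of Lemma~\ref{lem:dk}, control the entrywise error by Lemma~\ref{lem:evec:vec}, the population entries by part~(a), and the eigenvalue by \eqref{eq:first:q:eval}; there is no circularity, since Lemmas~\ref{lem:dk}, \ref{lem:evals} and~\ref{lem:evec:vec} are proved without appeal to this lemma. Comparing the two: the paper's argument is lighter (no Davis--Kahan-type alignment needed, though it does lean implicitly on the psd-ness of the Bartlett estimate) and directly yields the bound $\wh\mu^{\k}_{x, j}(\omega) \vert \wh e^{\k}_{x, ij}(\omega) \vert^2 = O_p(1)$ in the form actually invoked later in the proof of Theorem~\ref{thm:common:est}; your argument requires the hypotheses $\check\vartheta_{n, p} \to 0$ and $\rho_{n, p}/G \to 0$ more essentially, but delivers the sharper delocalisation statement $\max_i \vert \wh e^{\k}_{x, ij}(\omega) \vert = O_p(p^{-1/2})$. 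One small addition you should make: as written, your part~(b) concludes only the weak normalisation appearing literally in the statement; to recover the stronger bound $\sqrt{\wh\mu^{\k}_{x, j}(\omega)} \max_i \vert \wh e^{\k}_{x, ij}(\omega) \vert = O_p(1)$ that the downstream proof uses, note additionally that $\wh\mu^{\k}_{x, j}(\omega) = O_p(p)$, which is immediate from \eqref{eq:first:q:eval} and Assumption~\ref{assum:factor:two}.
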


\begin{proof}
Note that by 
the arguments adopted in the proof of Lemma~\ref{lem:sigma:bound},
$\sigma_{x, \cp_{\chi, k + 1}, ii'}(\omega, \delta_{\chi, k}) = \sigma^{\k}_{\chi, ii'}(\omega) + 
\sigma_{\xi, \cp_{\chi, k + 1}, ii'}(\omega, \delta_{\chi, k})$
and $\max_{k, i, i'} \sup_\omega \sigma_{x, \cp_{\chi, k + 1}, ii'}(\omega, \delta_{\chi, k}) \le B_\sigma < \infty$. 
Then from that
$\sigma^{\k}_{\chi, ii}(\omega)
= \sum_{j = 1}^{q_k} \mu^{\k}_{\chi, j}(\omega) \vert e^{\k}_{\chi, ij}(\omega) \vert^2 \le B_\sigma$,
the claim~\ref{lem:evec:size:one} follows. 
Next, by Proposition~\ref{prop:spec:x:k}~\ref{prop:spec:x:k:two} and Lemma~\ref{lem:sigma:bound},
\begin{align*}
\max_{k, i} \sum_{j = 1}^{q_k} \wh\mu^{\k}_{x, j}(\omega) \vert \wh e^{\k}_{x, ij}(\omega) \vert^2 
\le
\max_{k, i} \sup_\omega \wh\sigma_{x, \wh\cp_{\chi, k + 1}, ii}(\omega, \wh\delta_{\chi, k})   
\le B_\sigma + O_p\l(\check\vartheta_{n, p} \vee \frac{1}{m} \vee \frac{\rho_{n, p}}{G}\r)
\end{align*}
which, combined with~\eqref{eq:first:q:eval}, leads to~\ref{lem:evec:size:three}.
\end{proof} 

\begin{proof}[Proof of Theorem~\ref{thm:common:est}]
First, note that
\begin{align*}
& \max_k \sup_\omega 
\l\vert \wh{\bm\Sigma}^{\k}_{\chi}(\omega) - \bm\Sigma^{\k}_{\chi}(\omega) \r\vert_\infty
=
\max_k \max_{i, i'} \sup_\omega 
\l\vert \bm\varphi_i^\top \l( \wh{\bm\Sigma}^{\k}_{\chi}(\omega) - \bm\Sigma^{\k}_{\chi}(\omega) \r)
\bm\varphi_{i'} \r\vert 
\\
\le & \max_k \max_{i, i'} \sup_\omega 
\l\{
\l\vert \bm\varphi_i^\top \l(\wh{\mbf E}^{\k}_{x}(\omega) - \mbf E^{\k}_{\chi}(\omega) \mbf O_k(\omega) \r)
\wh{\bm{\mc M}}^{\k}_x(\omega) (\wh{\mbf E}^{\k}_{x}(\omega))^* \bm\varphi_{i'} \r\vert
\r.
\\
& +
\l\vert \bm\varphi_i^\top \mbf E^{\k}_{\chi}(\omega) \mbf O_k(\omega)
\l(\wh{\bm{\mc M}}^{\k}_x(\omega) - \bm{\mc M}^{\k}_\chi(\omega)\r) (\wh{\mbf E}^{\k}_{x}(\omega))^* \bm\varphi_{i'} \r\vert
\\
& +
\l.
\l\vert \bm\varphi_i^\top \mbf E^{\k}_{\chi}(\omega) \mbf O_k(\omega)
\bm{\mc M}^{\k}_{\chi}(\omega) \l(\wh{\mbf E}^{\k}_{x}(\omega) - 
\mbf E^{\k}_\chi(\omega) \mbf O_k(\omega)\r)^* \bm\varphi_{i'} \r\vert \r\} =: I + II + III.
\end{align*}
By Lemmas~\ref{lem:evals}, \ref{lem:evec:vec}, \ref{lem:evec:size} and
Cauchy-Schwarz inequality,
\begin{align*}
& I = \max_k \max_{i, i'} \sup_\omega 
\l\vert \sum_{j = 1}^{q_k} \wh{\mu}^{\k}_{x, j}(\omega) (\wh{e}^{\k}_{x, i'j})^* \cdot
\bm\varphi_i^\top \l(\wh{\mbf E}^{\k}_{x}(\omega) - \mbf E^{\k}_{\chi}(\omega) \mbf O_k(\omega) \r) \bm\varphi_j \r\vert
\\
& \le \max_k \sup_\omega 
\sqrt{p} \max_i \l\vert \bm\varphi_i^\top \l(\wh{\mbf E}^{\k}_{x}(\omega) - \mbf E^{\k}_{\chi}(\omega) \mbf O_k(\omega) \r) \r\vert_2 \cdot 
\sqrt{\frac{1}{p} \max_{i'} \sum_{j = 1}^{q_k} 
(\wh{\mu}^{\k}_{x, j}(\omega))^2 \vert \wh{e}^{\k}_{x, i'j}(\omega) \vert^2}
\\
&= O_p\l( \l(\check\vartheta_{n, p} \vee \frac{1}{m} \vee \frac{\rho_{n, p}}{G} \vee \frac{1}{\sqrt p} \r) \cdot 
\max_k \sqrt{\frac{1}{p} \sum_{j = 1}^{q_k} \wh{\mu}^{\k}_{x, j}(\omega)} \r)
= O_p\l( \check\vartheta_{n, p} \vee \frac{1}{m} \vee \frac{\rho_{n, p}}{G} \vee \frac{1}{\sqrt p} \r)
\end{align*}
under Assumption~\ref{assum:factor:two},
and $III$ can be handled analogously.
By~\eqref{eq:first:q:eval} and Lemma~\ref{lem:evec:size},
\begin{align*}
II \le & \max_k \sup_\omega 
\sqrt{p} \max_i \l\vert \bm\varphi_i^\top \mbf E^{\k}_{\chi}(\omega) \r\vert_2 \cdot
\frac{1}{p} \l\Vert \wh{\bm{\mc M}}^{\k}_x(\omega) - \bm{\mc M}^{\k}_\chi(\omega) \r\Vert \cdot
\sqrt{p} \max_{i'} \l\vert \bm\varphi_{i'}^\top \wh{\mbf E}^{\k}_{x}(\omega) \r\vert_2
\\
=& O_p\l( \l(\check\vartheta_{n, } \vee \frac{1}{m} \vee \frac{\rho_{n, p}}{G} \vee \frac{1}{p} \r) 
\cdot 
\sqrt{\sum_{j = 1}^{q_k} \frac{p}{\mu^{\k}_{\chi, j}(\omega)} \cdot
\sum_{j = 1}^{q_k} \frac{p}{\wh\mu^{\k}_{x, j}(\omega)}}\r)
\\
=& O_p\l( q_k\l(\check\vartheta_{n, } \vee \frac{1}{m} \vee \frac{\rho_{n, p}}{G} \vee \frac{1}{p} \r)\r)
\end{align*}
under Assumption~\ref{assum:factor:two}.
In summary, we have
\begin{align}
\label{eq:thm:common:est:one}
\max_{0 \le k \le K_\chi} \sup_{\omega \in [-\pi, \pi]} \l\vert \wh{\bm\Sigma}^{\k}_\chi(\omega) - \bm\Sigma^{\k}_\chi(\omega) \r\vert_\infty
= O_p\l( \check{\vartheta}_{n, p} \vee \frac{1}{m} \vee \frac{\rho_{n, p}}{G} \vee \frac{1}{\sqrt p} \r).
\end{align}
Next, let $\bm\Sigma^{\k}_\chi(\omega) = [\sigma^{\k}_{\chi, ii'}(\omega), \, 1 \le i, i' \le p]$,
$\wh{\bm\Sigma}^{\k}_\chi(\omega) = [\wh{\sigma}^{\k}_{\chi, ii'}(\omega), \, 1 \le i, i' \le p]$,
$\bm\Gamma^{\k}_\chi(\ell) = [\gamma^{\k}_{\chi, ii'}(\ell), \, 1 \le i, i' \le p]$ and
$\wh{\bm\Gamma}^{\k}_\chi(\ell) = [\wh{\gamma}^{\k}_{\chi, ii'}(\ell), \, 1 \le i, i' \le p]$.
Note that
\begin{align*}
& \max_k \max_{i, i'} \max_\ell \l\vert \wh{\gamma}^{\k}_{\chi, ii'}(\ell) - \gamma^{\k}_{\chi, ii'}(\ell) \r\vert
= \l\vert \frac{2\pi}{2m + 1} \sum_{l = -m}^m \wh\sigma^{\k}_{\chi, ii'}(\omega_l) 
e^{\iota \omega_l \ell}
- \int_{-\pi}^\pi \sigma^{\k}_{\chi, ii'}(\omega) e^{\iota \omega \ell} d\omega \r\vert
\\
& \le \max_k \max_{i, i'} \max_\ell
\frac{2\pi}{2m + 1} \sum_{l = -m}^m \l\vert \wh\sigma^{\k}_{\chi, ii'}(\omega_l) - \sigma^{\k}_{\chi, ii'}(\omega_l) \r\vert
\\
& + \max_k \max_{i, i'} \max_\ell
\l\vert \frac{2\pi}{2m + 1} \sum_{l = -m}^m \sigma^{\k}_{\chi, ii'}(\omega_k) e^{\iota \omega_k \ell}
- \int_{-\pi}^\pi \sigma^{\k}_{\chi, ii'}(\omega) e^{\iota \omega \ell d\omega} \r\vert =: III + IV
\end{align*}
where by~\eqref{eq:thm:common:est:one}, 
\begin{align*}
III \le 2\pi \max_k \sup_\omega \l\vert \wh{\bm\Sigma}_{\chi}^{\k}(\omega) - \bm\Sigma^{\k}_\chi(\omega) \r\vert
= O_p\l(\check\vartheta_{n, } \vee \frac{1}{m} \vee \frac{\rho_{n, p}}{G} \vee \frac{1}{p} \r).
\end{align*}
Next, we can find $\{\omega_l^*\}_{l = -m}^{m - 1}$
and $\{\omega_l^\circ\}_{l = -m}^{m - 1}$
with $\omega_l^*, \omega_l^\circ \in [\omega_l, \omega_{l + 1}]$,
such that
\begin{align*}
& \l\vert \frac{2\pi}{2m + 1} \sum_{l = -m}^m \sigma^{\k}_{\chi, ii'}(\omega_k) e^{\iota \omega_k \ell}
- \int_{-\pi}^\pi \sigma^{\k}_{\chi, ii'}(\omega) e^{\iota \omega \ell d\omega} \r\vert  
\\
& \le  
\frac{2\pi}{2m + 1} \sum_{l = -m}^{m - 1} \max_{\omega_l \le \omega \le \omega_{l + 1}}
\l\vert \sigma^{\k}_{\chi, ii'}(\omega_l) e^{\iota \omega_l \ell} - \sigma^{\k}_{\chi, ii'}(\omega) e^{\iota \omega \ell} \r\vert
\\
& \le 
\frac{2\pi}{2m + 1} \sum_{l = -m}^{m - 1} \max_{\omega_l \le \omega \le \omega_{l + 1}}
\l\vert \sigma^{\k}_{\chi, ii'}(\omega_l) - \sigma^{\k}_{\chi, ii'}(\omega) \r\vert
\\
& + 
\frac{2\pi \max_{1 \le i, i' \le p} \sup_\omega \vert \sigma_{\chi, ii'}(\omega) \vert}{2m + 1} 
\sum_{l = -m}^{m - 1} \max_{\omega_l \le \omega \le \omega_{l + 1}}
\l\vert e^{\iota \omega_l \ell} - e^{\iota \omega \ell} \r\vert 
\\
& \le 
\frac{2\pi}{2m + 1} \sum_{l = -m}^{m - 1} 
\l( \l\vert \sigma^{\k}_{\chi, ii'}(\omega_l) - \sigma^{\k}_{\chi, ii'}(\omega_l^*) \r\vert +
\l\vert \sigma^{\k}_{\chi, ii'}(\omega_{l + 1}) - \sigma^{\k}_{\chi, ii'}(\omega_l^*) \r\vert \r) 
\\
& + 
\frac{2\pi B_\sigma}{2m + 1} 
\sum_{l = -m}^{m - 1} \l( \l\vert e^{\iota \omega_l \ell} - e^{\iota \omega_l^\circ \ell} \r\vert 
+ \l\vert e^{\iota \omega_{l + 1} \ell} - e^{\iota \omega_l^\circ \ell} \r\vert \r)
=: V + VI,
\end{align*}
where the last inequality follows from Lemma~\ref{lem:sigma:bound}.
Then by Lemma~\ref{lem:sigma:deriv},
$V = O(m^{-1})$ uniformly over $1 \le i, i' \le p$ and $0 \le k \le K_\chi$.
Also, as the exponential function has bounded variation,
$VI = O(m^{-1})$ uniformly in $0 \le \ell \le d$ for some finite $d$.
Putting together the bounds on $V$ and $VI$ gives the bound on $IV$.
\end{proof}

\subsection{Proof of Theorem~\ref{thm:idio} and Corollary~\ref{cor:idio}}

Recall that
\begin{align}
\label{eq:common:tv:acv:est}
\wh{\bm\Gamma}_{\chi, v}(\ell, G) &= \frac{1}{G} \sum_{k = \wh{L}_\chi(v - G + 1)}^{\wh L_\chi(v)}
\{(\wh\cp_{\chi, k + 1} \wedge v) - (\wh\cp_{\chi, k} \vee (v - G))\} \wh{\bm\Gamma}_\chi^{\k}(\ell) 
\end{align} 
with $\wh L_\chi(v) = \max\{0 \le k \le \wh K_\chi: \, \wh\cp_{\chi, k} + 1 \le v\}$.

\begin{prop}
\label{prop:idio:est}
Under the assumptions made in Theorem~\ref{thm:common:est}, we have on $\mc M^\chi_{n, p}$,
\begin{align*}
\max_{G \le v \le n} \max_{0 \le \ell \le d} \l\vert 
\wh{\bm\Gamma}_{\xi, v}(\ell, G) - \bm\Gamma_{\xi, v}(\ell, G) \r\vert_\infty
= O_p\l(\vartheta_{n, p} \vee \frac{1}{m} \vee \frac{\rho_{n, p}}{G} \vee \frac{1}{\sqrt p}\r).
\end{align*}
\end{prop}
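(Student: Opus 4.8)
The plan is to bound the target by the triangle inequality after writing the error as a difference of two pieces. Since $\wh{\bm\Gamma}_{\xi, v}(\ell, G) = \wh{\bm\Gamma}_{x, v}(\ell, G) - \wh{\bm\Gamma}_{\chi, v}(\ell, G)$ and, by the definition $\bm\Gamma_{x, v}(\ell, G) = \bm\Gamma_{\chi, v}(\ell, G) + \bm\Gamma_{\xi, v}(\ell, G)$, we have
\begin{align*}
\l\vert \wh{\bm\Gamma}_{\xi, v}(\ell, G) - \bm\Gamma_{\xi, v}(\ell, G) \r\vert_\infty
\le \l\vert \wh{\bm\Gamma}_{x, v}(\ell, G) - \bm\Gamma_{x, v}(\ell, G) \r\vert_\infty
+ \l\vert \wh{\bm\Gamma}_{\chi, v}(\ell, G) - \bm\Gamma_{\chi, v}(\ell, G) \r\vert_\infty,
\end{align*}
so it suffices to bound the two terms on the right uniformly over $G \le v \le n$ and $0 \le \ell \le d$. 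Throughout I would work on $\mc M^\chi_{n, p}$, on which $\wh K_\chi = K_\chi$ and $\max_k \vert \wh\cp_{\chi, k} - \cp_{\chi, k} \vert \le \rho_{n, p}$ by Theorem~\ref{thm:common}.

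For the first term I would split into a stochastic part $\wh{\bm\Gamma}_{x, v}(\ell, G) - \E(\wh{\bm\Gamma}_{x, v}(\ell, G))$ and a bias part, exactly as in the proof of Lemma~\ref{lem:spec:x}. For a fixed lag $\ell \le d$, the bias comes only from the edge terms of the sample sum and from the at most one change point that can fall in $I_v(G)$; by Lemma~\ref{lem:decay:two} it is $O(1/G)$, hence $o(1/m)$. The stochastic part is handled entrywise by the concentration inequalities of \cite{zhang2021} fed with the dependence-adjusted norms of Lemma~\ref{lem:func:dep}; a union bound over the $n$ windows, the $(d+1)$ lags and the $p^2$ entries introduces a $\log(np)$ factor under Assumption~\ref{assum:innov}~\ref{cond:gauss} (and the corresponding polynomial factor, including the $(np)^{2/\nu}$ from the $n$ windows, under Assumption~\ref{assum:innov}~\ref{cond:moment}). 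Since this involves a single lag rather than a weighted sum over $2m+1$ lags, no extra $\sqrt m$ appears, and the resulting rate is no larger than $\vartheta_{n, p}$.

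For the second term I would decompose further into a weight-misalignment error plus a segment-ACV estimation error. Writing both $\wh{\bm\Gamma}_{\chi, v}(\ell, G)$ and $\bm\Gamma_{\chi, v}(\ell, G)$ as weighted averages of segment quantities, the error is bounded by $\max_k \vert \wh{\bm\Gamma}^{\k}_\chi(\ell) - \bm\Gamma^{\k}_\chi(\ell) \vert_\infty$ (the weights summing to one) plus the total variation of the weights times $\max_k \vert \bm\Gamma^{\k}_\chi(\ell) \vert_\infty$. The former is controlled directly by Theorem~\ref{thm:common:est}, giving the rate $\vartheta_{n, p} \vee m^{-1} \vee \rho_{n, p}/G \vee p^{-1/2}$. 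For the latter, the minimum-spacing condition (Assumption~\ref{assum:common:size}~\ref{cond:common:spacing}) ensures that $I_v(G)$ contains at most one change point, so on $\mc M^\chi_{n, p}$ the segments intersecting the window are in bijection through $\wh K_\chi = K_\chi$, with each boundary displaced by at most $\rho_{n, p}$; hence each weight changes by at most $O(\rho_{n, p}/G)$. As $\max_k \vert \bm\Gamma^{\k}_\chi(\ell) \vert_\infty$ is uniformly bounded by Lemma~\ref{lem:decay:two}, this contributes $O(\rho_{n, p}/G)$, already dominated by the claimed rate.

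Combining the two bounds yields the assertion. The main obstacle is the \emph{uniform-in-$v$} control demanded here: Theorem~\ref{thm:common:est} is established per segment (at the estimated endpoints $\wh\cp_{\chi, k+1}$), whereas the proposition requires the bound to hold simultaneously over all $n$ windows. This uniformity is what forces the $\vartheta_{n, p}$ rate in the first term (the union over $n$ windows produces the $(np)^{2/\nu}$ and $\log(np)$ factors) and, in the second term, the delicate weight-misalignment bookkeeping that relies essentially on the good event $\mc M^\chi_{n, p}$, i.e.\ on both correct model selection $\wh K_\chi = K_\chi$ and the localisation rate $\rho_{n, p}$. The entrywise concentration in the first term is otherwise routine given the machinery already set up in Lemmas~\ref{lem:func:dep} and~\ref{lem:spec:x}.
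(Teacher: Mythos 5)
Your proposal is correct and follows essentially the same route as the paper: the same triangle-inequality split into the $\mathbf{X}_t$-ACV error and the $\bm\chi_t$-ACV error, with the former handled by Zhang-type concentration plus a bias bound (this is exactly the paper's Lemma~\ref{lem:acv:x:max}, which you effectively re-derive inline), and the latter decomposed into the segment-wise estimation error from Theorem~\ref{thm:common:est} and the weight-misalignment term of order $O(\rho_{n,p}/G)$ on $\mc M^\chi_{n,p}$. Your closing observation about where the uniformity over $v$ actually bites (only in the $\mathbf{X}_t$-term, since the $\bm\chi_t$-term reduces to a maximum over the finitely many segment estimators) matches the structure of the paper's argument.
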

\begin{proof}
By definition, we have
\begin{align*}
\max_{v, \ell} \l\vert \wh{\bm\Gamma}_{\xi, v}(\ell, G) - \bm\Gamma_{\xi, v}(\ell, G) \r\vert_\infty
\le & \max_{v, \ell} \l\vert \wh{\bm\Gamma}_{x, v}(\ell, G) - \bm\Gamma_{x, v}(\ell, G) \r\vert_\infty
\\
&+
\max_{v, \ell} \l\vert \wh{\bm\Gamma}_{\chi, v}(\ell, G) - \bm\Gamma_{\chi, v}(\ell, G) \r\vert_\infty
=:  I + II
\end{align*}
where, from Lemma~\ref{lem:acv:x:max}, we have
$I = O_p(\bar{\vartheta}_{n, p})$ with $\bar{\vartheta}_{n, p}$ defined therein.
Also, 
\begin{align*}
&II \le \max_{v, \ell} \l\vert \frac{1}{G} \sum_{k = \wh L_\chi(v - G + 1)}^{\wh L_\chi(v)}
\l\{(\wh\cp_{\chi, k + 1} \wedge v) - (\wh\cp_{\chi, k} \vee (v - G)) \r\}
\l(\wh{\bm\Gamma}^{\k}_\chi(\ell) - \bm\Gamma^{\k}_\chi(\ell) \r) \r\vert_\infty 
\\
&\qquad + \max_{v, \ell} \frac{1}{G} \l\vert \sum_{k = \wh L_\chi(v - G + 1)}^{\wh L_\chi(v)} 
\l\{(\wh\cp_{\chi, k + 1} \wedge v) - (\wh\cp_{\chi, k} \vee (v - G)) \r\} \bm\Gamma^{\k}_\chi(\ell) - \r.
\\
&\l. \sum_{k = L_\chi(v - G + 1)}^{L_\chi(v)} 
\l\{(\cp_{\chi, k + 1} \wedge v) - (\cp_{\chi, k} \vee (v - G)) \r\} \bm\Gamma^{\k}_\chi(\ell) 
\r\vert_\infty 
= O_p\l(\check{\vartheta}_{n, p} \vee \frac{1}{m} \vee \frac{\rho_{n, p}}{G} \vee \frac{1}{\sqrt p}\r)
\end{align*}
on $\mc M^\chi_{n, p}$, from Theorem~\ref{thm:common:est}.
The conclusion follows by noting that $\check\vartheta_{n, p} \vee \bar{\vartheta}_{n, p} =
O(\vartheta_{n, p})$.
\end{proof}

A consequence of Proposition~\ref{prop:idio:est} is that 
$\p(\mc E^{(2)}_{n, p}) \to 1$, where
\begin{align*}
\mc E^{(2)}_{n, p} = \l\{
\max_{G \le v \le n} \max_{0 \le \ell \le d} 
\l\vert \wh{\bm\Gamma}_{\xi, v}(\ell, G) - \bm\Gamma_{\xi, v}(\ell, G) \r\vert_\infty
\le M\l(\vartheta_{n, p} \vee \frac{1}{m} \vee \frac{\rho_{n, p}}{G} \vee \frac{1}{\sqrt p}\r)
\r\}
\end{align*}
with $M$ as in~\eqref{eq:lambda}.

\begin{prop}
\label{prop:idio:beta}
Under the assumptions made in Theorem~\ref{thm:common:est}, 
with $\lambda_{n, p}$ chosen as in~\eqref{eq:lambda}, 
we have on $\mc M^\chi_{n, p} \cap \mc E^{(2)}_{n, p}$,
\begin{align*}
\l\vert \bbG^{\k} \l(\wh{\bm\beta}_v(G) - \bm\beta^{\k}\r) \r\vert_\infty \le 2 \lambda_{n, p}
\quad \text{and} \quad 
\l\Vert \wh{\bm\beta}_v(G) \r\Vert_1 \le \l\Vert \bm\beta^{\k} \r\Vert_1
\end{align*}
for all $\cp_{\xi, k} + G \le v \le \cp_{\xi, k + 1}$ and $0 \le k \le K_\xi$.
\end{prop}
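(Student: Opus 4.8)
The plan is to exploit the constrained $\ell_1$-minimisation structure of \eqref{eq:ds} through a Dantzig-selector-type argument resting on two pillars: the feasibility of the true parameter $\bm\beta^{\k}$ in the optimisation problem, and a basic inequality from the triangle inequality. The crucial preliminary observation is that when $\cp_{\xi, k} + G \le v \le \cp_{\xi, k + 1}$, the interval $I_v(G) = \{v - G + 1, \ldots, v\}$ lies entirely within the $k$th stationary segment of $\bm\xi_t$, since $v - G + 1 \ge \cp_{\xi, k} + 1$ and $v \le \cp_{\xi, k + 1}$. Consequently the weighted averages collapse to a single segment, giving $\bbG_v(G) = \bbG^{\k}$ and $\bbg_v(G) = \bbg^{\k}$ with no contamination from neighbouring segments; here Assumption~\ref{assum:idio:size}~\ref{cond:idio:spacing} guarantees that no change point of $\bm\xi_t$ falls inside $I_v(G)$ (and that the stated $v$-range is non-empty). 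Throughout I would work deterministically on $\mc M^\chi_{n, p} \cap \mc E^{(2)}_{n, p}$, invoking the element-wise error bound built into $\mc E^{(2)}_{n, p}$.

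First I would establish feasibility of $\bm\beta^{\k}$. Using the Yule--Walker identity $\bbG^{\k} \bm\beta^{\k} = \bbg^{\k}$ together with the reduction $\bbG_v(G) = \bbG^{\k}$, $\bbg_v(G) = \bbg^{\k}$, write
\begin{align*}
\wh{\bbG}_v(G) \bm\beta^{\k} - \wh{\bbg}_v(G)
= \l(\wh{\bbG}_v(G) - \bbG^{\k}\r) \bm\beta^{\k} - \l(\wh{\bbg}_v(G) - \bbg^{\k}\r).
\end{align*}
Since $\wh{\bbG}_v(G)$ and $\wh{\bbg}_v(G)$ are merely block arrangements of $\wh{\bm\Gamma}_{\xi, v}(\ell, G)$, the element-wise error bound of $\mc E^{(2)}_{n, p}$ transfers directly, and the sub-multiplicativity $\vert \mbf E \bm\beta \vert_\infty \le \vert \mbf E \vert_\infty \Vert \bm\beta \Vert_1$ then yields
\begin{align*}
\l\vert \wh{\bbG}_v(G) \bm\beta^{\k} - \wh{\bbg}_v(G) \r\vert_\infty
\le M\l(\vartheta_{n, p} \vee \frac{1}{m} \vee \frac{\rho_{n, p}}{G} \vee \frac{1}{\sqrt p}\r) \l(\Vert \bm\beta^{\k} \Vert_1 + 1\r) \le \lambda_{n, p},
\end{align*}
by the very definition of $\lambda_{n, p}$ in \eqref{eq:lambda}, so $\bm\beta^{\k}$ satisfies the constraint in \eqref{eq:ds}.

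With feasibility in hand, the second claim follows by noting that both the constraint $\vert \wh{\bbG}_v(G) \bm\beta - \wh{\bbg}_v(G) \vert_\infty \le \lambda_{n, p}$ and the objective $\vert \bm\beta \vert_1$ decouple across the $p$ columns of $\bm\beta$; solving \eqref{eq:ds} is therefore equivalent to solving $p$ independent vector-valued problems, so column-wise $\ell_1$-optimality and the feasibility of each $\bm\beta^{\k}_{\cdot j}$ give $\vert \wh{\bm\beta}_{v, \cdot j}(G) \vert_1 \le \vert \bm\beta^{\k}_{\cdot j} \vert_1$ for every $j$, and taking the maximum over $j$ delivers $\Vert \wh{\bm\beta}_v(G) \Vert_1 \le \Vert \bm\beta^{\k} \Vert_1$. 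For the first claim I would decompose
\begin{align*}
\bbG^{\k}\l(\wh{\bm\beta}_v(G) - \bm\beta^{\k}\r)
= \l(\bbG^{\k} - \wh{\bbG}_v(G)\r) \wh{\bm\beta}_v(G)
+ \l(\wh{\bbG}_v(G) \wh{\bm\beta}_v(G) - \wh{\bbg}_v(G)\r)
+ \l(\wh{\bbg}_v(G) - \bbg^{\k}\r),
\end{align*}
and bound the three pieces in $\vert \cdot \vert_\infty$: the middle term by the constraint satisfied by the minimiser itself ($\le \lambda_{n,p}$), and the two outer terms by the $\mc E^{(2)}_{n, p}$ error bound combined with $\Vert \wh{\bm\beta}_v(G) \Vert_1 \le \Vert \bm\beta^{\k} \Vert_1$, which again sum to at most $\lambda_{n, p}$; the total is $2\lambda_{n, p}$.

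The main obstacle I anticipate is not any single estimate but the careful bookkeeping needed to align the constants: the $M$ in the definition of $\mc E^{(2)}_{n, p}$ must coincide with the one in \eqref{eq:lambda} so that the feasibility bound closes exactly at $\lambda_{n, p}$, and the induced-norm inequality $\Vert \wh{\bm\beta}_v(G) \Vert_1 \le \Vert \bm\beta^{\k} \Vert_1$ (rather than the element-wise $\vert \cdot \vert_1$ that the objective literally minimises) must be secured through the column-wise separability before it can be fed into the matrix-product bound in the basic inequality. Everything else reduces to a routine application of the triangle inequality and the bound $\vert \mbf E \bm\beta \vert_\infty \le \vert \mbf E \vert_\infty \Vert \bm\beta \Vert_1$.
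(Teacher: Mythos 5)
Your proposal is correct and follows essentially the same route as the paper's own proof: feasibility of $\bm\beta^{\k}$ via the Yule--Walker identity, the bound $\vert \mbf E \bm\beta \vert_\infty \le \vert \mbf E \vert_\infty \Vert \bm\beta \Vert_1$ on the event $\mc E^{(2)}_{n,p}$, column-wise separability of \eqref{eq:ds} to get $\Vert \wh{\bm\beta}_v(G) \Vert_1 \le \Vert \bm\beta^{\k} \Vert_1$, and then the identical three-term decomposition of $\bbG^{\k}(\wh{\bm\beta}_v(G) - \bm\beta^{\k})$ yielding the $2\lambda_{n,p}$ bound. The only (welcome) addition is that you make explicit the fact that for $\cp_{\xi,k} + G \le v \le \cp_{\xi,k+1}$ the local population quantities collapse to the segment-specific $\bbG^{\k}$, $\bbg^{\k}$, which the paper uses implicitly.
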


\begin{proof}
We first note that solving~\eqref{eq:ds} is equivalent to solving the problem column-wise, i.e.\
\begin{align*}
\wh{\bm\beta}_{v, \cdot j}(G) = {\arg\min}_{\bm\beta \in \R^{pd}} \vert \bm\beta \vert_1
\quad \text{subject to} \quad
\l\vert \wh{\bbG}_v(G) \bm\beta - \wh{\bbg}_{v, \cdot j}(G) \r\vert_\infty \le \lambda_{n, p}
\quad \text{for } 1 \le j \le p,
\end{align*}
(see e.g.\ Lemma~1 of \cite{cai2011}),
where $\bm\beta_{\cdot j}$ denotes the $j$th column of any $\bm\beta \in \R^{(dp) \times p}$.
Next, we show for any $\cp_{\xi, k} + G \le v \le \cp_{\xi, k + 1}$, 
$\bm\beta^{\k}$ is a feasible solution to~\eqref{eq:ds}, for all $0 \le k \le K_\xi$. This follows from that
\begin{align*}
\l\vert \wh{\bbG}_v(G) \bm\beta^{\k} - \wh{\bbg}_v(G) \r\vert_\infty
= \l\vert \l(\wh{\bbG}_v(G) -\bbG^{\k}\r) \bm\beta^{\k} - 
\l(\wh{\bbg}_v(G) - \bbg^{\k}\r) \r\vert_\infty
\\
\le \l\Vert \bm\beta^{\k} \r\Vert_1 \; \l\vert \wh{\bbG}_v(G) -\bbG^{\k} \r\vert_\infty 
+ \l\vert \wh{\bbg}_v(G) - \bbg^{\k} \r\vert_\infty 
\le \lambda_{n, p}
\end{align*}
on $\mc E^{(2)}_{n, p}$.
Then, $\vert \wh{\bm\beta}_{v, \cdot j}(G) \vert_1 \le \vert \bm\beta^{\k}_{\cdot j} \vert_1$
for $\cp_{\xi, k} + G \le v \le \cp_{\xi, k + 1}$ and consequently,
$\Vert \wh{\bm\beta}_v(G) \Vert_1 \le \Vert \bm\beta^{\k} \Vert_1$.
From this, we have
\begin{align*}
&\max_k \max_v \l\vert \bbG^{\k} \l(\wh{\bm\beta}_v(G) - \bm\beta^{\k}\r) \r\vert_\infty
\le 
\max_k \max_v \l\vert \l(\wh{\bbG}_v(G) \wh{\bm\beta}_v(G) - \wh{\bbg}_v(G) \r) \r.
\\
& \qquad \l.
+ \l(\bbG^{\k} - \wh{\bbG}_v(G)\r) \wh{\bm\beta}_v(G)
+ \l(\wh{\bbg}_v(G) - \bbg^{\k} \r) \r\vert_\infty
\le 2 \lambda_{n, p}.
\end{align*}
\end{proof}

In the remainder of this section, we omit $\xi$ from $\cp_{\xi, k}$ and $\wh\cp_{\xi, k}$ for simplicity.
In what follows, we operate on $\mc M^\chi_{n, p} \cap \mc E^{(2)}_{n, p} \cap \bar{\mc E}^{(2)}_{n, p}$ 
with $\bar{\mc E}^{(2)}_{n, p}$ defined in~\eqref{eq:set:e:two:tilde} below
which, due to Theorem~\ref{thm:common}, Proposition~\ref{prop:idio:est} and Lemma~\ref{lem:Qs:max},
satisfies $\p(\mc M^\chi_{n, p} \cap \mc E^{(2)}_{n, p} \cap \bar{\mc E}^{(2)}_{n, p}) \to 1$.

\begin{proof}[Proof of Theorem~\ref{thm:idio}~\ref{thm:idio:one}]
In the first iteration of Algorithm~\ref{alg:two} with $v_\circ = G$, the estimator 
$\wh{\bm\beta} = \wh{\bm\beta}_{v_\circ}(G)$ satisfies 
\begin{align}
\label{eq:thm:idio:pf:beta}
\l\vert \bbG^{[0]}\l(\wh{\bm\beta} - \bm\beta^{[0]}\r) \r\vert_\infty \le 2 \lambda_{n, p}
\end{align}
and $\Vert \wh{\bm\beta} \Vert_1 \le \Vert \bm\beta^{[0]} \Vert_1$,
due to Proposition~\ref{prop:idio:beta}.
Then for all $v \le \cp_1 - G$, we have
\begin{align}
T_{\xi, v}(\wh{\bm\beta}, G) & \le 
\l\vert \l(\wh{\bbG}_v - \bbG^{[0]}\r) \wh{\bm\beta} \r\vert_\infty
+ \l\vert  \wh{\bbg}_v - \bbg^{[0]} \r\vert_\infty 
+ \l\vert \l(\wh{\bbG}_{v + G} - \bbG^{[0]}\r) \wh{\bm\beta} \r\vert_\infty
+ \l\vert  \wh{\bbg}_{v + G} - \bbg^{[0]} \r\vert_\infty
\nn \\
& \le 2\lambda_{n, p} < \pi_{n, p}.
\label{eq:thm:idio:pf:no}
\end{align}
On the other hand, we have
\begin{align}
& T_{\xi, \cp_1}(\wh{\bm\beta}, G) \ge 
\l\vert \bbG^{[1]} \l(\bm\beta^{[1]} - \bm\beta^{[0]}\r) \r\vert_\infty 
- \l\{\l\vert \l(\wh{\bbG}_{\cp_1} - \bbG^{[0]}\r) \wh{\bm\beta} \r\vert_\infty
+ \l\vert  \wh{\bbg}_{\cp_1} - \bbg^{[0]} \r\vert_\infty
+ \l\vert \bbG^{[0]} \l(\wh{\bm\beta} - \bm\beta^{[0]}\r) \r\vert_\infty \r.
\nn \\
& + \l. \l\vert \l(\wh{\bbG}_{\cp_1 + G} - \bbG^{[1]}\r) \wh{\bm\beta} \r\vert_\infty
+ \l\vert  \wh{\bbg}_{\cp_1 + G} - \bbg^{[1]} \r\vert_\infty
+ \l\vert \bbG^{[1]} (\bbG^{[0]})^{-1} \cdot \bbG^{[0]} \l(\wh{\bm\beta} - \bm\beta^{[0]}\r) \r\vert_\infty \r\}
\nn \\
& \ge \l\vert {\bm\Delta}_{\xi, k} \r\vert_\infty - 
2\l(2 + \Vert \bbG^{[1]}(\bbG^{[0]})^{-1}\Vert_1\r) \lambda_{n, p} > \pi_{n, p}
\label{eq:thm:idio:pf:first}
\end{align}
under Assumption~\ref{assum:idio:size}.
The above~\eqref{eq:thm:idio:pf:no}--\eqref{eq:thm:idio:pf:first} 
guarantee that in the first iteration, $\check\cp$ satisfies $\cp_1 - G < \check\cp \le \cp_1$,
which in turn leads to $\vert \wh\cp_1 - \cp_1 \vert < G$.

Next, we consider the case $\wh\cp_1 \le \cp_1$.
For some $v$ satisfying $\cp_1 - G < v \le \cp_1$, we have
\begin{align}
& T_{\xi, v}(\wh{\bm\beta}, G) = \l\vert
\frac{G - \vert v - \cp_1 \vert}{G} \bbG^{[1]} \l(\bm\beta^{[1]} - \bm\beta^{[0]} \r) + 
\l( \wh{\bbG}_v - \bbG^{[0]}\r) \wh{\bm\beta}
- \l(\wh{\bbg}_v - \bbg^{[0]}\r)
+ \bbG^{[0]}\l(\wh{\bm\beta} - \bm\beta^{[0]}\r) \r.
\nn \\
& - \l( \wh{\bbG}_{v + G} - \frac{\vert v - \cp_1 \vert}{G}\bbG^{[0]} - \frac{G - \vert v - \cp_1 \vert}{G} \bbG^{[1]}\r) \wh{\bm\beta}
+ \l( \wh{\bbg}_{v + G} - \frac{\vert v - \cp_1 \vert}{G}\bbg^{[0]} - \frac{G - \vert v - \cp_1 \vert}{G} \bbg^{[1]}\r)
\nn \\
& \l. - \l(\frac{\vert v - \cp_1 \vert}{G} + \frac{G - \vert v - \cp_1 \vert}{G} \bbG^{[1]}(\bbG^{[0]})^{-1}\r) \bbG^{[0]} \l(\wh{\bm\beta} - \bm\beta^{[0]}\r)
\r\vert_\infty.
\label{eq:thm:idio:pf:decomp}
\end{align}
From~\eqref{eq:thm:idio:pf:beta}, \eqref{eq:thm:idio:pf:decomp}
and Proposition~\ref{prop:idio:est}, it follows that
\begin{align}
& T_{\xi, \wh\cp_1}(\wh{\bm\beta}, G) 
\le \frac{G - \vert \wh\cp_1 - \cp_1 \vert}{G}\l\vert {\bm\Delta}_{\xi, k} \r\vert_\infty + 
\l(6 + 2\Vert \bbG^{[1]} (\bbG^{[0]})^{-1} \Vert_1\r) \lambda_{n, p}, 
\nn \\
& T_{\xi, \cp_1}(\wh{\bm\beta}, G) 
\ge \l\vert {\bm\Delta}_{\xi, k} \r\vert_\infty -
\l(6 + 2\Vert \bbG^{[1]} (\bbG^{[0]})^{-1} \Vert_1\r) \lambda_{n, p}.
\label{eq:thm:idio:pf:loc}
\end{align}
By definition of $\wh\cp_1$, we have
$T_{\xi, \wh\cp_1}(\wh{\bm\beta}, G) \ge T_{\xi, \cp_1}(\wh{\bm\beta}, G)$ such that
\begin{align*}
\frac{\vert \wh\cp_1 - \cp_1 \vert}{G} \l\vert {\bm\Delta}_{\xi, k} \r\vert_\infty \le 
4\l(3 + \Vert \bbG^{[1]} (\bbG^{[0]})^{-1} \Vert_1\r) \lambda_{n, p},
\end{align*}
i.e.\ $\vert \wh\cp_1 - \cp_1 \vert \le \epsilon_0 G$ for some small constant $\epsilon_0 \in (0, 1/2)$
and $n$ large enough under Assumption~\ref{assum:idio:size}~\ref{cond:idio:jump}.
When $\wh\cp_1 > \cp_1$, in place of~\eqref{eq:thm:idio:pf:decomp},
we have the following alternative decomposition of $T_{\xi, v}(\wh{\bm\beta}, G)$
for any $v$ satisfying $\cp_1 < v \le \cp_1 + G$:
\begin{align*}
& T_{\xi, v}(\wh{\bm\beta}, G) = \l\vert
- \frac{G - \vert v - \cp_1 \vert}{G} \bbG^{[1]} \l(\bm\beta^{[1]} - \bm\beta^{[0]} \r) + 
\l( \wh{\bbG}_v - \frac{G - \vert v - \cp_1 \vert}{G} \bbG^{[0]} - \frac{\vert v - \cp_1 \vert}{G} \bbG^{[1]}\r) \wh{\bm\beta} \r.
\\
&- \l(\wh{\bbg}_v - \frac{G - \vert v - \cp_1 \vert}{G} \bbg^{[0]} - \frac{\vert v - \cp_1 \vert}{G} \bbg^{[1]}\r)
+ \l(\frac{G - \vert v - \cp_1 \vert}{G} \bbG^{[0]} + \frac{\vert v - \cp_1 \vert}{G} \bbG^{[1]}\r)
\l(\wh{\bm\beta} - \bm\beta^{[0]}\r) 
\\
& \l. - \l( \wh{\bbG}_{v + G} - \bbG^{[1]}\r) \wh{\bm\beta}
+ \l( \wh{\bbg}_{v + G} - \bbg^{[1]}\r)
- \bbG^{[1]}(\bbG^{[0]})^{-1} \cdot \bbG^{[0]}\l(\wh{\bm\beta} - \bm\beta^{[0]}\r)
\r\vert_\infty
\end{align*}
using which, analogous arguments apply.

After the first iteration, we update $v_\circ$ as
$v_\circ = \min(\check\cp + 2G, \wh\cp_1 + (\eta + 1) G)$ with $\eta > \epsilon_0$ such that
$\cp_1 + G \le v_\circ \le \cp_2$, which ensures that
$\vert \bbG^{[1]}(\wh{\bm\beta}_{v_\circ} - \bm\beta^{[1]}) \vert_\infty \le 2\lambda_{n, p}$ by Proposition~\ref{prop:idio:beta}.
Repeatedly applying the same arguments as those adopted for $\wh\cp_1$,
the conclusion follows.
\end{proof}

\begin{proof}[Proof of Theorem~\ref{thm:idio}~\ref{thm:idio:two}]
For some $1 \le k \le K_\xi$ satisfying the condition in~\ref{thm:idio:two},
suppose that $\wh\cp_k \le \cp_k$;
the following arguments apply analogously
to the case when $\wh\cp_k > \cp_k$.
In what follows, $\wh{\bm\beta}$ denotes the estimator of $\bm\beta^{[k - 1]}$ used at the iteration where $\wh\cp_k$ is added to $\wh\Cp_\xi$
which, by construction, satisfies
\begin{align}
\label{eq:thm:idio:two:pf:beta}
\l\vert \bbG^{[k - 1]} \l(\wh{\bm\beta} - \bm\beta^{[k - 1]} \r) \r\vert_\infty \le 2 \lambda_{n, p}
\quad \text{and} \quad
\l\Vert \wh{\bm\beta} \r\Vert_1 \le \l\Vert \bm\beta^{[k - 1]} \r\Vert_1,
\end{align}
see Proposition~\ref{prop:idio:beta}.
By definition, we can find $\bm\varphi_l \in \R^{pd}$ and $\bm\varphi_r \in \R^p$,
each a vector of zeros except for a single element set to be one, such that
\begin{align*}
T_{\xi, \wh\cp_k}(\wh{\bm\beta}, G) = \l\vert
\bm\varphi_l^\top \l( \wh{\bbG}_{\wh\cp_k} \wh{\bm\beta} - \wh{\bbg}_{\wh\cp_k}
- \wh{\bbG}_{\wh\cp_k + G} \wh{\bm\beta} + \wh{\bbg}_{\wh\cp_k + G}
\r) \bm\varphi_r \r\vert.
\end{align*}
Then, the first statement in~\eqref{eq:thm:idio:pf:loc} can be re-written as
\begin{align}
& T_{\xi, \wh\cp_k}(\wh{\bm\beta}, G) 
\le \frac{G - \vert \wh\cp_k - \cp_k \vert}{G}
\l\vert \bm\varphi_l^\top {\bm\Delta}_{\xi, k} \bm\varphi_r \r\vert + 
\l(6 + 2\Vert \bbG^{[k]} (\bbG^{[k - 1]})^{-1} \Vert_1\r) \lambda_{n, p}, \quad \text{such that}
\nn \\
& \frac{G - \vert \wh\cp_k - \cp_k \vert}{G}\l\vert \bm\varphi_l^\top {\bm\Delta}_{\xi, k} \bm\varphi_r \r\vert 
\ge \l\vert {\bm\Delta}_{\xi, k} \r\vert_\infty - 4\l(3 + \Vert \bbG^{\k} (\bbG^{[k - 1]})^{-1} \Vert_1\r) \lambda_{n, p}, \nn \\
& \therefore \quad \l\vert \bm\varphi_l^\top {\bm\Delta}_{\xi, k} \bm\varphi_r \r\vert 
\ge \frac{1}{2} \l\vert {\bm\Delta}_{\xi, k} \r\vert_\infty
\label{eq:thm:idio:two:pf:size}
\end{align}
for $n$ large enough.
WLOG, suppose that $\bm\varphi_l^\top {\bm\Delta}_{\xi, k} \bm\varphi_r > 0$.
Then from that $\vert \wh\cp_k - \cp_k \vert \le \epsilon_0 G$,
\begin{align*}
\bm\varphi_l^\top \l( \wh{\bbG}_{\wh\cp_k} \wh{\bm\beta} - \wh{\bbg}_{\wh\cp_k}
- \wh{\bbG}_{\wh\cp_k + G} \wh{\bm\beta} + \wh{\bbg}_{\wh\cp_k + G}
\r) \bm\varphi_r
\ge \frac{1}{2} \bm\varphi_l^\top {\bm\Delta}_{\xi, k} \bm\varphi_r
- \l(6 + 2\Vert \bbG^{\k} (\bbG^{[k - 1]})^{-1} \Vert_1\r) \lambda_{n, p} > 0
\end{align*}
and similarly, 
\begin{align*}
\bm\varphi_l^\top \l( \wh{\bbG}_{\cp_k} \wh{\bm\beta} - \wh{\bbg}_{\cp_k}
- \wh{\bbG}_{\cp_k + G} \wh{\bm\beta} + \wh{\bbg}_{\cp_k + G}
\r) \bm\varphi_r > 0.
\end{align*}
Observing that
\begin{align*}
T_{\xi, \wh\cp_k}(\wh{\bm\beta}, G)
& = \bm\varphi_l^\top \l( \wh{\bbG}_{\wh\cp_k} \wh{\bm\beta} - \wh{\bbg}_{\wh\cp_k}
- \wh{\bbG}_{\wh\cp_k + G} \wh{\bm\beta} + \wh{\bbg}_{\wh\cp_k + G}
\r) \bm\varphi_r
\\
&  \ge T_{\xi, \cp_k}(\wh{\bm\beta}, G)
\ge \bm\varphi_l^\top \l( \wh{\bbG}_{\cp_k} \wh{\bm\beta} - \wh{\bbg}_{\cp_k}
- \wh{\bbG}_{\cp_k + G} \wh{\bm\beta} + \wh{\bbg}_{\cp_k + G}
\r) \bm\varphi_r,
\end{align*}
we obtain
\begin{align}
& \mc F_k := \frac{\vert \wh\cp_k - \cp_k \vert}{G}  \bm\varphi_l^\top {\bm\Delta}_{\xi, k} \bm\varphi_r \le 
\nn \\
& \qquad \bm\varphi_l^\top 
\l(\wh{\bbG}_{\wh\cp_k} - \wh{\bbG}_{\cp_k} - 
\wh{\bbG}_{\wh\cp_k + G} + \wh{\bbG}_{\cp_k + G}
+ \frac{\vert \wh\cp_k - \cp_k \vert}{G}(\bbG^{\k} - \bbG^{[k - 1]})
\r) \wh{\bm\beta} \; \bm\varphi_r
\nn \\
& \qquad  - \bm\varphi_l^\top \l(\wh{\bbg}_{\wh\cp_k} - \wh{\bbg}_{\cp_k} - 
\wh{\bbg}_{\wh\cp_k + G} + \wh{\bbg}_{\cp_k + G}
+ \frac{\vert \wh\cp_k - \cp_k \vert}{G}(\bbg^{\k} - \bbg^{[k - 1]})
\r) \bm\varphi_r
\nn \\
& \qquad - \frac{\vert \wh\cp_k - \cp_k \vert}{G} 
\bm\varphi_l^\top\l(\bbG^{[k - 1]} - \bbG^{\k}\r)
\l(\wh{\bm\beta} - \bm\beta^{[k - 1]}\r) \bm\varphi_r
=: \mc R_{k1} + \mc R_{k2} + \mc R_{k3}.
\label{eq:thm:idio:two:contradict}
\end{align}
We adopt the proof by contradiction: Supposing that
$\vert \wh\cp_k - \cp_k \vert > c_0 \varrho^{\k}_{n, p}$,
we show that
the above inequality in~\eqref{eq:thm:idio:two:contradict} does not hold and consequently,
it cannot hold that
$T_{\xi, \wh\cp_k}(\wh{\bm\beta}, G) \ge T_{\xi, \cp_k}(\wh{\bm\beta}, G)$. 
By~\eqref{eq:thm:idio:two:pf:beta}, we have
\begin{align*}
\vert \mc R_{k3} \vert \le \frac{2}{G} \vert \wh\cp_k - \cp_k \vert 
(1 + \Vert \bbG^{\k}(\bbG^{[k - 1]})^{-1} \Vert_1) \lambda_{n, p}
\le \epsilon \mc F_k
\end{align*}
for an arbitrarily small constant $\epsilon \in (0, 1)$
due to Assumption~\ref{assum:idio:size}~\ref{cond:idio:jump}.
In order to control $\mc R_{k1}$ and $\mc R_{k2}$,
we note that since $\{\cp_{\xi, k} - 2G + 1, \ldots, \cp_{\xi, k} + 2G\} \cap \Cp_\chi = \emptyset$
(and therefore $\{\cp_{\xi, k} - \lfloor 3G/2 \rfloor + 1, \ldots, \cp_{\xi, k} + \lfloor 3G/2 \rfloor \} \cap \wh\Cp_\chi = \emptyset$
on $\mc M^\chi_{n, p}$),
we have
\begin{align*}
& \l\vert \wh{\bm\Gamma}_{\xi, \wh\cp_k + G}(\ell, G) - \wh{\bm\Gamma}_{\xi, \cp_k + G}(\ell, G) - \frac{\vert \wh\cp_k - \cp_k \vert}{G}\l(\bm\Gamma^{[k - 1]}_\xi(\ell) - \bm\Gamma^{\k}_\xi(\ell)\r) \r\vert_\infty
\\
=&  \l\vert \wh{\bm\Gamma}_{x, \wh\cp_k + G}(\ell, G) - \wh{\bm\Gamma}_{x, \cp_k + G}(\ell, G)
- \frac{\vert \wh\cp_k - \cp_k \vert}{G}\l(\bm\Gamma_{x, \wh\cp_k + G}(\ell, G) - \bm\Gamma_{x, \cp_k + G}(\ell, G)\r) \r\vert_\infty
\\
\le&  \l\vert \wh{\bm\Gamma}_{x, \wh\cp_k + G}(\ell, G) - \wh{\bm\Gamma}_{x, \cp_k + G}(\ell, G) -
\E\l(\wh{\bm\Gamma}_{x, \wh\cp_k + G}(\ell, G) - \wh{\bm\Gamma}_{x, \cp_k + G}(\ell, G)\r) \r\vert_\infty 
\\
& + \l\vert \E\l(\wh{\bm\Gamma}_{x, \wh\cp_k + G}(\ell, G) - \wh{\bm\Gamma}_{x, \cp_k + G}(\ell, G)\r) - \frac{\vert \wh\cp_k - \cp_k \vert}{G}\l(\bm\Gamma_{x, \wh\cp_k + G}(\ell, G) - \bm\Gamma_{x, \cp_k + G}(\ell, G)\r) \r\vert_\infty
\\
\le&  \frac{\vert \wh\cp_k - \cp_k \vert}{G} \l\{\l\vert \mbf Q^{(1)}_k(\ell, \cp_k - \wh\cp_k, 0) - \E\l(\mbf Q^{(1)}_k(\ell, \cp_k - \wh\cp_k, 0)\r) \r\vert_\infty \r.
\\
& +  \l\vert \mbf Q^{(2)}_k(\ell, \cp_k - \wh\cp_k, G) - \E\l(\mbf Q^{(2)}_k(\ell, \cp_k - \wh\cp_k, G)\r) \r\vert_\infty 
\\
& + \l\vert \E\l(\mbf Q^{(1)}_k(\ell, \cp_k - \wh\cp_k, 0)\r) - \bm\Gamma_{x, \cp_k + \ell}(\ell, \cp_k - \wh\cp_k) \r\vert_\infty
\\
& + \l. \l\vert \E\l(\mbf Q^{(2)}_k(\ell, \cp_k - \wh\cp_k, G)\r) - \bm\Gamma_{x, \cp_k + G}(\ell, \cp_k - \wh\cp_k) \r\vert_\infty
\r\} =: \mc R_{k4} + \mc R_{k5} + \mc R_{k6} + \mc R_{k7},
\end{align*}
with $\mbf Q^{(r)}_k(\ell, h, H)$ defined in~\eqref{eq:def:Q:two}.
It is easily seen that $\mc R_{k7} = 0$, while
\begin{align*}
\frac{G}{\vert \wh\cp_k - \cp_k \vert} \mc R_{k6} \le  \frac{\vert \ell \vert}{\vert \wh\cp_k - \cp_k \vert} \l\vert \bm\Gamma_{x, \cp_k}(\ell, G) \r\vert_\infty \le \frac{C_{\Xi, \varsigma, \vep} (1 + \vert \ell \vert)^{-\varsigma + 1}}{\vert \wh\cp_k - \cp_k \vert} \le \frac{c_1}{\vert \wh\cp_k - \cp_k \vert}
\end{align*}
from Lemma~\ref{lem:decay:two} for some constant $c_1 > 0$.
Also by Lemma~\ref{lem:Qs:max}, we have $\p(\bar{\mc E}^{(2)}_{n, p}) \to 1$ where, with 
$w_k = \vert {\bm\Delta}_{\xi, k} \vert_\infty$, 
$I_k = \{h: \, w_k^2 \delta \le \vert h \vert \le G\}$ and $\wt{\vartheta}(\delta)$ defined in the lemma,
\begin{align}
\bar{\mc E}^{(2)}_{n, p} = \l\{
\max_{1 \le k \le K_\xi} \max_{1 \le r \le 2} \max_{H \in \{0, \pm G\}} \max_{h \in I_k}  \max_{0 \le \ell \le d} 
w_k \l\vert \mbf Q^{(r)}_k(\ell, h, H) - \E(\mbf Q^{(r)}_k(\ell, h, H)) \r\vert_\infty \le c_2 \wt{\vartheta}(\delta) \r\}
\label{eq:set:e:two:tilde}
\end{align}
for some $c_2 > 0$, such that we obtain
\begin{align*}
\max\l(\mc R_{k4}, \mc R_{k5} \r)
\le \frac{\vert \wh\cp_k - \cp_k \vert}{G} \cdot
c_2 \vert {\bm\Delta}_{\xi, k} \vert_\infty \wt{\vartheta}(\delta)
\end{align*}
on $\bar{\mc E}^{(2)}_{n, p}$. 
We can similarly show that
\begin{align*}
\l\vert \wh{\bm\Gamma}_{\xi, \wh\cp_k}(\ell, G) - \wh{\bm\Gamma}_{\xi, \cp_k}(\ell, G) \r\vert_\infty
\le \frac{\vert \wh\cp_k - \cp_k \vert}{G} \cdot
c_2 \vert {\bm\Delta}_{\xi, k} \vert_\infty \wt{\vartheta}(\delta).
\end{align*}
Setting $\delta = c_0 w_k^{-2} \varrho^{\k}_{n, p}$ and
putting together the bounds on $\mc R_{kr}, \, 4 \le r \le 7$, 
we can choose a large enough $c_0$ such that,
\begin{align*} 
\vert \mc R_{k1} \vert + \vert \mc R_{k2} \vert \le 
\frac{\vert \wh\cp_k - \cp_k \vert}{G} \l(1 + \Vert \bm\beta^{[k - 1]} \Vert_1\r) \l( \frac{c_1}{\vert \wh\cp_k - \cp_k \vert} + 2c_2 \vert {\bm\Delta}_{\xi, k} \vert_\infty \wt{\vartheta}(\delta) \r) 
\\
\le \frac{\vert \wh\cp_k - \cp_k \vert}{G}
\l(\frac{c_1}{c_0 \varrho^{\k}_{n, p}} + \frac{2c_2 \vert {\bm\Delta}_{\xi, k} \vert_\infty}{\min(c_0, \sqrt{c_0})} \r)
< (1 - \epsilon) \mc F_k
\end{align*}
from~\eqref{eq:thm:idio:two:pf:size}.
This, together with the bound on $\vert \mc R_{k3} \vert$, 
shows that the inequality in~\eqref{eq:thm:idio:two:contradict} does not hold,
and thus we prove the claim.
Since all the arguments are conditional on $\mc E^{(2)}_{n, p} \cap \bar{\mc E}^{(2)}_{n, p}$, 
which in turn are formulated uniformly over $1 \le k \le K_\xi$, the proof is complete.
\end{proof}

\begin{proof}[Proof of Corollary~\ref{cor:idio}]
For the proof of~\ref{cor:idio:station}, we first note that
under the stationarity of $\bm\chi_t$, 
we have $\wh K_\chi = 0$ on $\mc M^\chi_{n, p}$ such that $\rho_{n, p} = 0$.
Therefore, we have $\p(\mc E^{(2)\prime}_{n, p}) \to 1$ where
\begin{align*}
\mc E^{(2)\prime}_{n, p} = \l\{
\max_{G \le v \le n} \max_{0 \le \ell \le d} 
\l\vert \wh{\bm\Gamma}_{\xi, v}(\ell, G) - \bm\Gamma_{\xi, v}(\ell, G) \r\vert_\infty
\le M\l(\vartheta_{n, p} \vee \frac{1}{m} \vee \frac{1}{\sqrt p}\r) \r\}.
\end{align*}
Operating on $\mc M^\chi_{n, p} \cap \mc E^{(2)\prime}_{n, p} \cap \bar{\mc E}^{(2)}_{n, p}$,
analogous arguments as those adopted in Theorem~\ref{thm:idio} apply.

For the proof of~\ref{cor:idio:no}, we proceed similarly as in the case of~\ref{cor:idio:station}
except that now we have $\p(\mc E^{(2)\prime\prime}_{n, p}) \to 1$ thanks to Lemma~\ref{lem:acv:x:max},
where
\begin{align*}
\mc E^{(2)\prime\prime}_{n, p} = \l\{
\max_{G \le v \le n} \max_{-d \le \ell \le d} 
\l\vert \wh{\bm\Gamma}_{\xi, v}(\ell, G) - \bm\Gamma_{\xi, v}(\ell, G) \r\vert_\infty
\le M \bar{\vartheta}_{n, p} \r\}.
\end{align*}
\end{proof}

\subsubsection{Supporting results}

In what follows, we operate 
under the assumptions made in Theorem~\ref{thm:idio}.
We define $\Gamma_{\chi, v}(\ell, G)$ analogously as $\bm\Gamma_{\xi, v}(\ell, G)$
with $\cp_{\chi, k}$ in place of $\cp_{\xi, k}$ and let
$\bm\Gamma_{x, v}(\ell, G) = \bm\Gamma_{\chi, v}(\ell, G) + \bm\Gamma_{\xi, v}(\ell, G)$.

\begin{lem}
\label{lem:acv:x:max}
Recall the definition of $\bar{\vartheta}_{n, p}$ in~\eqref{eq:bar:vartheta}. Then,
\begin{align*}
\max_{G \le v \le n} \max_{0 \le \ell \le d}
\l\vert \wh{\bm\Gamma}_{x, v}(\ell, G) - \bm\Gamma_{x, v}(\ell, G) \r\vert_\infty
= O_p\l(\bar{\vartheta}_{n, p}\r).
\end{align*}
\end{lem}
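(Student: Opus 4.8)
The plan is to split the estimation error into a centred stochastic part and a deterministic bias part,
\begin{align*}
\left\vert \wh{\bm\Gamma}_{x,v}(\ell,G) - \bm\Gamma_{x,v}(\ell,G) \right\vert_\infty
\le \left\vert \wh{\bm\Gamma}_{x,v}(\ell,G) - \E\wh{\bm\Gamma}_{x,v}(\ell,G) \right\vert_\infty
+ \left\vert \E\wh{\bm\Gamma}_{x,v}(\ell,G) - \bm\Gamma_{x,v}(\ell,G) \right\vert_\infty,
\end{align*}
and to bound the two terms separately before taking the maximum over $G \le v \le n$ and $0 \le \ell \le d$. This mirrors the treatment of $\wh\sigma_{x,v,ii'}(\omega,G)$ in Lemma~\ref{lem:spec:x} and of $\vert \wh{\bm\Sigma}_{x,\wh\cp_{\chi,k+1}}(\omega,\wh\delta_{\chi,k}) - \E(\cdot) \vert_\infty$ in Proposition~\ref{prop:spec:x:k}~\ref{prop:spec:x:k:two}, the crucial simplification being that no kernel smoothing is involved here, so the lag-window parameter $m$ does not enter and $\bar\vartheta_{n,p}$ carries no factor of $m$.

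First I would dispose of the bias. Writing $\E\wh\gamma_{x,v,ii'}(\ell,G) = G^{-1}\sum_{t=v-G+1+\ell}^v \E(X_{i,t-\ell}X_{i't})$ and comparing with the segment-weighted average defining $\gamma_{x,v,ii'}(\ell,G)$, the only discrepancies are the $\ell$ boundary summands omitted by the shifted summation index and the lagged products $X_{i,t-\ell}X_{i't}$ whose two factors straddle a change point. Under the minimum-spacing conditions (Assumptions~\ref{assum:common:size}~\ref{cond:common:spacing} and~\ref{assum:idio:size}~\ref{cond:idio:spacing}), each window of length $G$ meets at most one change point of $\bm\chi_t$ and one of $\bm\xi_t$, so by the algebraic decay of Lemma~\ref{lem:decay:two} each such entry is $O((1+\vert\ell\vert)^{-\varsigma+1}/G)$ uniformly in $i,i',v$, exactly as in the bound on the terms $I$ and $II$ in the proof of Lemma~\ref{lem:spec:x}. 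Since $\ell \le d$ is finite and $\bar\vartheta_{n,p} \ge \sqrt{\log(np)/G} \gg 1/G$, the bias is $o(\bar\vartheta_{n,p})$ and is absorbed.

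For the stochastic term I would apply the maximal concentration inequalities of \cite{zhang2021} for sample autocovariances, namely Theorem~6.1 in the heavy-tailed regime (Assumption~\ref{assum:innov}~\ref{cond:moment}) and Theorem~6.3 in the Gaussian regime (Assumption~\ref{assum:innov}~\ref{cond:gauss}), together with their~(B.15), feeding in the dependence-adjusted norms from Lemma~\ref{lem:func:dep}. The entrywise maximum over the $p^2$ coordinates is controlled inside these theorems through $\Psi_{\nu,\alpha}\le Cp^{2/\nu}$ and $\Vert \vert \mbf X_{\cdot} \vert_\infty \Vert_{\nu,\alpha}\le C\log^{1/2}(p)p^{1/\nu}$; the union over the at most $n$ running windows supplies the $n^{2/\nu}$ together with a $\log^{\nu+1}(G)$ factor, the $d+1$ lags contribute only constants. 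Converting the resulting tail bound directly into an $O_p$ rate (choosing $z$ a large multiple of the target rate, as in Proposition~\ref{prop:spec:x:k}~\ref{prop:spec:x:k:two}) and using $\log(np^2 d)\asymp\log(np)$, the sub-exponential branch yields $\sqrt{\log(np)/G}$ while the polynomial Nagaev branch yields $(np)^{2/\nu}\log^3(p)\log^{2+2/\nu}(G)/G$, whose maximum is $\bar\vartheta_{n,p}$.

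The delicate point, and the step I expect to be the main obstacle, is securing the sharp $G^{-1}$ scaling in the heavy-tailed polynomial term rather than the weaker $G^{-(1-2/\nu)}$ that a single-window Nagaev bound followed by a crude union over $v$ would produce. This is exactly what the running-maximum inequality of \cite{zhang2021} delivers through its $(Gz)^{\nu/2}$ denominator (its polynomial numerator scales with the \emph{number} of windows, not with $G$), so the care lies in invoking the maximal version uniformly over $v$ and in tracking the induced powers of $\log(p)$ and $\log(G)$; matching these against $\bar\vartheta_{n,p}$ is then routine bookkeeping.
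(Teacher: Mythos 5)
Your proposal is correct and follows essentially the same route as the paper: the same split into a centred stochastic term and a bias term, the bias handled exactly as in Lemma~\ref{lem:spec:x} via Lemma~\ref{lem:decay:two} and the spacing conditions, and the stochastic term controlled by the maximal autocovariance concentration inequalities of \cite{zhang2021} fed with the dependence-adjusted norms of Lemma~\ref{lem:func:dep}. The only discrepancy is bibliographic (the paper invokes Theorems~3.1 and~3.2 of \cite{zhang2021} rather than Theorems~6.1 and~6.3), and your bookkeeping of the resulting rate, including the sharp $G^{-1}$ scaling from the running-maximum form of the Nagaev bound, matches $\bar{\vartheta}_{n,p}$ exactly.
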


\begin{proof}
By Theorems~3.1 and~3.2 of \cite{zhang2021},
there exist universal constants $C_1, C_2 > 0$
and constants $C_\alpha, C_{\nu, \alpha} > 0$ that depend only on their subscripts,
such that for any $z > 0$,
\begin{align*}
& \p\l( \max_v \max_\ell \l\vert \wh{\bm\Gamma}_{x, v}(\ell, G) 
- \E(\wh{\bm\Gamma}_{x, v}(\ell, G)) \r\vert_\infty \le z \r) \le
\\
& \l\{\begin{array}{l}
\frac{C_{\nu, \alpha} n d^{\nu/4} \log^{\nu + 1}(G) (\log^{3/2}(p) p^{1/\nu})^\nu}{(Gz)^{\nu/2}}
+ C_1 np^2 \exp\l(-\frac{Gz^2}{C_\alpha m \Phi_{4, \alpha}^4}\r) \\
\quad \text{under Assumption~\ref{assum:innov}~\ref{cond:moment}}, 
\\
2 np^2d \exp\l[- C_2 \min\l( \frac{G z^2}{\Phi_{2, 0}^4}, \frac{Gz}{\Phi_{2, 0}^2} \r) \r] 
\\
\quad \text{under~Assumption~\ref{assum:innov}~\ref{cond:gauss},}
\end{array}\r.
\end{align*}
such that $\max_v \max_\ell \vert \wh{\bm\Gamma}_{x, v}(\ell, G) 
- \E(\wh{\bm\Gamma}_{x, v}(\ell, G)) \vert_\infty = O_p(\bar{\vartheta}_{n, p})$,
thanks to Lemma~\ref{lem:func:dep}.
As for the bias term, applying the arguments adopted in the proof of Lemma~\ref{lem:spec:x}, 
it is shown that
\begin{align*}
\max_v \max_\ell \l\vert \E(\wh{\bm\Gamma}_{x, v}(\ell, G)) - \bm\Gamma_{x, v}(\ell, G) \r\vert_\infty
= O\l(\frac{(1 + \vert \ell \vert)^{-\varsigma + 1}}{G}\r) = o(\bar{\vartheta}_{n, p}),
\end{align*}
which completes the proof.
\end{proof}

For $1 \le k \le K_\xi$, $H \in \{0, \pm G\}$ and $\ell \ge 0$, define
\begin{align}
\mbf Q^{(1)}_k(\ell, h, H) &= \frac{1}{|h|} 
\sum_{t = (\cp_{\xi, k} - h) \wedge \cp_{\xi, k} + H + \ell + 1}^{(\cp_{\xi, k} - h) \vee \cp_{\xi, k} + H + \ell} 
\mbf X_{t - \ell} \mbf X_t^\top,
\nn \\
\mbf Q^{(2)}_k(\ell, h, H) &= \frac{1}{|h|} 
\sum_{t = (\cp_{\xi, k} - h) \wedge \cp_{\xi, k} + H + 1}^{(\cp_{\xi, k} - h) \vee \cp_{\xi, k} + H} 
\mbf X_{t - \ell} \mbf X_t^\top.
\label{eq:def:Q:two}
\end{align}
\begin{lem}
\label{lem:Qs:max}
For some fixed $d \in \Z$ and $\delta \in \{d, \ldots, G\}$,
\begin{align*}
& \max_{1 \le k \le K_\xi} \max_{1 \le r \le 2} \max_{H \in \{0, \pm G\}} 
\max_{h \in I_k} \max_{0 \le \ell \le d} 
w_k \l\vert \mbf Q^{(r)}_k(\ell, h, H) - \E\l(\mbf Q^{(r)}_k(\ell, h, H)\r) \r\vert_\infty 
= O_p(\wt{\vartheta}(\delta)),
\end{align*}
where $w_k = \vert {\bm\Delta}_{\xi, k} \vert_\infty^{-1}$, 
$I_k = \{h: \, w_k^2 \delta \le \vert h \vert \le G\}$ and
\begin{align*}
\wt{\vartheta}(\delta) = \l\{\begin{array}{ll}
\frac{(K_\xi G)^{2/\nu} p^{2/\nu} \log^3(p)}{\delta^{1 - 2/\nu}} \vee
\sqrt{\frac{\log(G K_\xi p)}{\delta}} 
& \text{under Assumption~\ref{assum:innov}~\ref{cond:moment}},
\\
\sqrt{\frac{\log(G K_\xi p)}{\delta}} 
& \text{under Assumption~\ref{assum:innov}~\ref{cond:gauss}.}
\end{array}\r.
\end{align*}
\end{lem}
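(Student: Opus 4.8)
The plan is to treat this as the lag-domain, maximum-norm counterpart of Lemma~\ref{lem:Qs}, combining the weighting device of that lemma with the $\ell_\infty$-norm concentration bounds already used to prove Lemma~\ref{lem:acv:x:max}. Each $\mbf Q^{(r)}_k(\ell, h, H)$ is an average of $\vert h \vert$ outer products $\mbf X_{t - \ell} \mbf X_t^\top$, so for a \emph{fixed} tuple $(k, r, H, h, \ell)$ I would invoke Theorems~3.1 and~3.2 of \cite{zhang2021} --- exactly as in Lemma~\ref{lem:acv:x:max}, but with effective sample size $\vert h \vert$ in place of $G$ --- to bound the tail of $\vert \mbf Q^{(r)}_k(\ell, h, H) - \E(\mbf Q^{(r)}_k(\ell, h, H)) \vert_\infty$. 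Under Assumption~\ref{assum:innov}~\ref{cond:moment} this tail splits into a polynomial part governed by the max-entry dependence-adjusted norm $\Vert \vert \mbf X_{\cdot} \vert_\infty \Vert_{\nu, \alpha}$ and a sub-exponential part governed by $\Phi_{\nu, \alpha}$; under Assumption~\ref{assum:innov}~\ref{cond:gauss} only the sub-exponential part survives.

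I would then pass to the uniform statement by a Bonferroni correction. There are at most $2 \cdot 3 \cdot (d + 1) \cdot K_\xi G = O(K_\xi G)$ tuples $(k, r, H, h, \ell)$ since $d$ is fixed, while the element-wise $\ell_\infty$-norm contributes a further $p^2$ through the union over matrix entries (already absorbed into the theorem's treatment of the max over coordinates). In the Gaussian regime these events enter the sub-exponential tail only logarithmically, yielding the $\sqrt{\log(G K_\xi p)/\delta}$ rate after the weighting step below. In the heavy-tailed regime the union multiplies the polynomial tail by $O(K_\xi G)$, which on inversion produces the factor $(K_\xi G)^{2/\nu}$; substituting $\Vert \vert \mbf X_{\cdot} \vert_\infty \Vert_{\nu, \alpha} = O(\log^{1/2}(p) p^{1/\nu})$ from Lemma~\ref{lem:func:dep}, together with the theorem's own logarithmic factors, generates the dimensional factor $p^{2/\nu} \log^3(p)$, precisely as in the derivation of $\bar\vartheta_{n, p}$ in Lemma~\ref{lem:acv:x:max}.

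The crucial step is the handling of the weight $w_k = \vert \bm\Delta_{\xi, k} \vert_\infty^{-1}$ in tandem with the restriction $h \in I_k = \{h: w_k^2 \delta \le \vert h \vert \le G\}$. For a window of length $\vert h \vert$ the unweighted sub-exponential rate is of order $\sqrt{\log(\cdot)/\vert h \vert}$, so multiplying by $w_k$ and using $\vert h \vert \ge w_k^2 \delta$ gives $w_k/\sqrt{\vert h \vert} \le 1/\sqrt\delta$ and the weighted rate collapses to the claimed $\delta$-dependence uniformly in $k$; the lower endpoint of $I_k$ is exactly calibrated to this term. The same constraint then controls the weighted polynomial rate of order $w_k (K_\xi G)^{2/\nu}(\cdots)/\vert h \vert^{1 - 2/\nu}$: using $\vert h \vert \ge w_k^2 \delta$, the hypothesis $\nu > 4$ from Assumption~\ref{assum:innov}~\ref{cond:moment}, and the fact that under Assumption~\ref{assum:idio:size}~\ref{cond:idio:jump} the jump sizes $\vert \bm\Delta_{\xi, k} \vert_\infty$ are asymptotically small so that $w_k \ge 1$ for $n$ large, this term reduces to $O(\delta^{-(1 - 2/\nu)})$ times the union and dimensional factors. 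Finally I would convert the uniform tail bound into the stated $O_p(\wt\vartheta(\delta))$ conclusion by Markov's inequality, as in Lemma~\ref{lem:Qs} and Lemma~\ref{lem:acv:x:max}.

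I expect the polynomial-tail bookkeeping in this last step to be the main obstacle: one must check that every factor of $w_k$ is absorbed by the window-length constraint $\vert h \vert \ge w_k^2 \delta$ simultaneously with the emergence of the exponent $1 - 2/\nu$ on $\delta$, the union factor $(K_\xi G)^{2/\nu}$ and the dimensional factor $p^{2/\nu} \log^3(p)$, with the two tail regimes tracked in parallel. The delicate point is that the choice $w_k^2 \delta$ in the definition of $I_k$ is dictated by the sub-exponential term yet must remain sufficient for the polynomial term, which is precisely where $\nu > 4$ is used.
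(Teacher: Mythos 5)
Your overall route is the same as the paper's, which is considerably terser than your write-up: the paper simply applies the fixed-window concentration inequalities of \cite{zhang2021} (their Theorems~6.4 and~6.5, rather than the Theorems~3.1 and~3.2 you cite, which are the versions used in Lemma~\ref{lem:acv:x:max} for maxima over \emph{all} $n$ window positions) with a Bonferroni correction over the $O(K_\xi G d)$ tuples $(k, r, H, h, \ell)$, invokes Lemma~\ref{lem:func:dep} to bound $\Vert \vert \mbf X_\cdot \vert_\infty \Vert_{\nu, \alpha}$ and $\Phi_{4, \alpha}$, and states the resulting two-regime tail bound with $\delta$ already in place of the window length; the $w_k$/$I_k$ calibration and the tail inversion that you spell out are exactly what is implicit in that step. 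So in substance your proposal reconstructs the paper's proof, and your treatment of the sub-exponential regime (where $w_k/\sqrt{\vert h \vert} \le 1/\sqrt{\delta}$ with exact cancellation of $w_k$) is correct.

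There is, however, one genuine flaw in your handling of the polynomial regime. You need $w_k^{4/\nu - 1} = O(1)$, i.e.\ $w_k$ bounded away from zero (equivalently $\vert \bm\Delta_{\xi, k} \vert_\infty = O(1)$), since $4/\nu - 1 < 0$ when $\nu > 4$. You justify this by claiming that Assumption~\ref{assum:idio:size}~\ref{cond:idio:jump} makes the jump sizes ``asymptotically small''. This is backwards: that assumption is a \emph{detection lower bound}, requiring $\vert \bm\Delta_{\xi, k} \vert_\infty \gg (1 \vee \Vert \bbG^{\k} (\bbG^{[k-1]})^{-1} \Vert_1) \lambda_{n, p}$, so it constrains the jumps from below and says nothing about an upper bound; taken literally it is compatible with $\vert \bm\Delta_{\xi, k} \vert_\infty \to \infty$, i.e.\ $w_k \to 0$, which would break your bound. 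The correct justification is structural: $\bm\Delta_{\xi, k} = \bbg^{\k} - \bbG^{\k} \bm\beta^{[k-1]}$, the autocovariance entries are uniformly bounded (Lemma~\ref{lem:decay:two}), and $\vert \bbG^{\k} \bm\beta^{[k-1]} \vert_\infty \le \vert \bbG^{\k} \vert_\infty \Vert \bm\beta^{[k-1]} \Vert_1$ with $\max_k \Vert \bm\beta^{\k} \Vert_1$ bounded, so $\vert \bm\Delta_{\xi, k} \vert_\infty = O(1)$ and the factor $\max_k w_k^{4/\nu - 1}$ is absorbed into the constants of the $O_p$ statement. With that substitution your argument goes through as the paper's does.
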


\begin{proof}
Applying Theorems~6.4 and~6.5 of \cite{zhang2021} with Bonferroni correction, 
there exist universal constant $C_1, C_2 > 0$
and constants $C_\alpha, C_{\nu, \alpha} > 0$
that depend only on their subscripts, such that for any $z > 0$,
\begin{align*}
& \p\l(\max_{1 \le k \le K_\xi} \max_{1 \le r \le 2} \max_{H \in \{0, \pm G\}} 
\max_{h \in I_k} \max_{0 \le \ell \le d} 
w_k \l\vert \mbf Q^{(r)}_k(\ell, h, H) - \E\l(\mbf Q^{(r)}_k(\ell, h, H)\r) \r\vert_\infty \ge z\r) \le 
\\
& \l\{\begin{array}{ll}
\frac{C_{\nu, \alpha} K_\xi G d^{\nu/4} (p^{1/\nu}\log^{3/2}(p))^\nu}
{\delta^{\nu/2 - 1} z^{\nu/2}} + C_1 K_\xi G d p^2 \exp\l(-\frac{\delta z^2}{C_\alpha \Phi_{4, \alpha}^4}\r) & \text{under Assumption~\ref{assum:innov}~\ref{cond:moment}},
\\
24 K_\xi G d p^2 \exp\l( - \frac{C_2\delta z^2}{\Phi_{2, 0}^4} \r)
& \text{under Assumption~\ref{assum:innov}~\ref{cond:gauss}}
\end{array}
\r.
\end{align*}
thanks to Lemma~\ref{lem:func:dep}, 
which completes the proof.
\end{proof}

\end{document}


\begin{table}[htbp]
\caption{(M3): Distribution of $\wh{K}_\xi - K_\xi$ 
and the average Hausdorff distance $d_H(\wh\Cp_\xi, \Cp_\xi)$ 
returned by the Stage~2 of FVARseg and VARDetect \citep{bai2021multiple}, over $100$ realisations. 
We also report the average computation time (in seconds) from 16 cores of an Intel Xeon Gold 6248R 3.00 GHz CPU with 16 GB of RAM on Linux (for $d = 1$) and 10 cores of Apple M1 Max with 64 GB of RAM on mac OS (for $d = 2$).} 
\label{Tab:oracle}
\centering
{\small
\begin{tabular}{cccc ccccc c c}
\toprule
&&& &   \multicolumn{5}{c}{$\wh{K}_\xi - K_\xi$} & & \\ 
$d$ & $p$ & $K_\xi$ & Method & $\leq -2$ & $-1$ &  \textbf{0} & 1 & $\geq 2$ & $d_H$ & time \\ 
\cmidrule(lr){1-4} \cmidrule(lr){5-9} \cmidrule(lr){10-11}
\multirow{12}{*}{1} & \multirow{4}{*}{$50$}
& \multirow{2}{*}{$0$} & FVARseg & 0 & 0 & \textbf{97} & 3 & 0 & 0.002 & 16.06 \\ 
& & & VARDetect & 0 & 0 & \textbf{95} & 1 & 4 & 0.015 & 18.04 \\ 
\cmidrule(lr){3-4} \cmidrule(lr){5-9} \cmidrule(lr){10-11}
& & \multirow{2}{*}{$2$} & FVARseg & 0 & 0 & \textbf{97} & 3 & 0 & 0.012 & 26.62 \\ 
& & & VARDetect & 59 & 33 & \textbf{6} & 2 & 0 & 0.307 & 42.09 \\ 
\cmidrule(lr){2-4} \cmidrule(lr){5-9} \cmidrule(lr){10-11}
& \multirow{4}{*}{$100$}
& \multirow{2}{*}{$0$} & FVARseg & 0 & 0 & \textbf{100} & 0 & 0 & 0.00 & 42.62 \\ 
& & & VARDetect & 0 & 0 & \textbf{89} & 5 & 6 & 0.03 & 165.63 \\ 
\cmidrule(lr){3-4} \cmidrule(lr){5-9} \cmidrule(lr){10-11}
& & \multirow{2}{*}{$2$} & FVARseg & 0 & 1 & \textbf{98} & 1 & 0 & 0.013 & 67.36 \\ 
& &  & VARDetect & 90 & 9 & \textbf{0} & 0 & 1 & 0.362 & 200.54 \\ 
\cmidrule(lr){2-4} \cmidrule(lr){5-9} \cmidrule(lr){10-11}
& \multirow{4}{*}{$150$}
& \multirow{2}{*}{$0$} & FVARseg & 0 & 0 & \textbf{100} & 0 & 0 & 0.00 & 92.17 \\ 
& &  & VARDetect & 0 & 0 & \textbf{88} & 6 & 6 & 0.036 & 587.15 \\ 
\cmidrule(lr){3-4} \cmidrule(lr){5-9} \cmidrule(lr){10-11}
& & \multirow{2}{*}{$2$} & FVARseg & 0 & 1 & \textbf{98} & 1 & 0 & 0.014 & 143.87 \\ 
& &  & VARDetect & 89 & 11 & \textbf{0} & 0 & 0 & 0.361 & 608.97 \\ 
\midrule
\multirow{12}{*}{2} & \multirow{4}{*}{$50$}
& \multirow{2}{*}{$0$} & FVARseg & 0 & 0 & \textbf{81} & 19 & 0 & 0.054 & 10.03 \\ 
& &  & VARDetect & 0 & 0 & \textbf{96} & 2 & 2 & 0.012 & 36.62 \\ 
\cmidrule(lr){3-4} \cmidrule(lr){5-9} \cmidrule(lr){10-11}
& & \multirow{2}{*}{$2$} & FVARseg & 0 & 4 & \textbf{86} & 9 & 1 & 0.033 & 14.80 \\ 
& &  & VARDetect & 81 & 6 & \textbf{9} & 1 & 3 & 0.326 & 29.94 \\ 
\cmidrule(lr){2-4} \cmidrule(lr){5-9} \cmidrule(lr){10-11}
& \multirow{4}{*}{$100$}
& \multirow{2}{*}{$0$} & FVARseg & 0 & 0 & \textbf{98} & 2 & 0 & 0.008 & 32.10 \\ 
& &  & VARDetect & 0 & 0 & \textbf{90} & 8 & 2 & 0.021 & 328.75 \\
\cmidrule(lr){3-4} \cmidrule(lr){5-9} \cmidrule(lr){10-11}
& & \multirow{2}{*}{$2$} & FVARseg & 0 & 12 & \textbf{87} & 1 & 0 & 0.044 & 44.30 \\ 
& &  & VARDetect & 95 & 1 & \textbf{3} & 0 & 1 & 0.365 & 134.09 \\ 
\cmidrule(lr){2-4} \cmidrule(lr){5-9} \cmidrule(lr){10-11}
& \multirow{4}{*}{$150$}
& \multirow{2}{*}{$0$} & FVARseg & 0 & 0 & \textbf{97} & 3 & 0 & 0.01 & 76.59 \\ 
& &  & VARDetect & 0 & 0 & \textbf{93} & 3 & 4 & 0.016 & 1070.15 \\ 
\cmidrule(lr){3-4} \cmidrule(lr){5-9} \cmidrule(lr){10-11}
& & \multirow{2}{*}{$2$} & FVARseg & 0 & 15 & \textbf{85} & 0 & 0 & 0.051 & 108.45 \\ 
& &  & VARDetect & 97 & 1 & \textbf{0} & 0 & 2 & 0.371 & 392.90 \\ 
\bottomrule
\end{tabular}}
\end{table}